\newcommand{\cc}{\mathrm{K}}
\renewcommand{\k}{{\mathbf{k}}}
\renewcommand{\j}{{j}}
\renewcommand{\r}{{r}}
\newcommand{\F}{{\mathbb{F}}}
\newcommand{\R}{{\mathbb{R}}}
\newcommand{\C}{{\mathbb{C}}}
\renewcommand{\H}{{\mathbb{H}}}
\renewcommand{\S}{T}
\newcommand\norm[1]{\lVert#1\rVert}
\renewcommand{\i}{{\rm i}}
\newcommand{\n}{{\rm n}}
\newcommand{\pert}{{{w}}}
\newcommand{\cond}{{\varkappa}}
\newcommand{\ind}{\kappa}
\newcommand{\Ker}{{\rm Ker\ }}
\def\Rds{\mathbb{R}}
\theoremstyle{plain}
\newtheorem{thm}{Theorem}[section]
\newtheorem{prop}[thm]{Proposition}
\newtheorem{coro}[thm]{Corollary}
\newtheorem{lem}[thm]{Lemma}
\newtheorem{defn}[thm]{Definition}
\newtheorem{remark}[thm]{Remark}
\newtheorem{exmp}[thm]{Example}
\journal{Annals of Physics}
\begin{document}

\begin{frontmatter}

\title{\mbox{Wiener-Hopf factorization approach to a bulk-boundary correspondence} 
and stability conditions for topological zero-energy modes}

\author[dartmouth,calgary]{Abhijeet Alase}
\ead{alase.abhijeet@gmail.com}
\author[sunypoly,dartmouth]{Emilio Cobanera}
\author[indiana]{Gerardo Ortiz}
\author[dartmouth]{Lorenza Viola}

\address[dartmouth]{Department of Physics and Astronomy, Dartmouth College, Hanover NH 03755, USA}
\address[calgary]{{Institute for Quantum Science and Technology, and Department of Physics and Astronomy, \\
University of Calgary, Calgary, AB T2N 1N4, Canada}}
\address[sunypoly]{\mbox{Department of Mathematics and Physics, SUNY Polytechnic Institute, Utica NY 13502, USA }}
\address[indiana]{Department of Physics, Indiana University, Bloomington IN 47405, USA}

\vspace{10pt}

\begin{abstract}
Both the physics and applications of fermionic symmetry-protected topological phases rely heavily on a principle known as {\em bulk-boundary correspondence}, which predicts the emergence of protected boundary-localized energy excitations (boundary states) if the bulk is topologically non-trivial. Current theoretical approaches formulate a bulk-boundary correspondence as an equality between a bulk and a boundary topological invariant, where the latter is a property of boundary states. However, such an equality does not offer insight about the stability or the sensitivity of the boundary states to external perturbations. To solve this problem, we adopt a technique known as the Wiener-Hopf factorization of matrix functions. Using this technique, we first provide an elementary proof of the equality of the bulk and the boundary invariants for one-dimensional systems with arbitrary boundary conditions in {\em all} Altland-Zirnbauer symmetry classes. This equality also applies to quasi-one-dimensional systems (e.g., junctions) formed by bulks belonging to the same symmetry class. We then show that only topologically non-trivial Hamiltonians can host {\it stable} zero-energy  edge modes, where stability refers to continuous deformation of zero-energy excitations with external perturbations that preserve the symmetries of the class. By leveraging the Wiener-Hopf factorization, we establish bounds on the sensitivity of such stable zero-energy modes to external perturbations. Our results show that the Wiener-Hopf factorization is a natural tool to investigate bulk-boundary correspondence in quasi-one-dimensional fermionic symmetry-protected topological phases. Our results on the stability and sensitivity of zero modes are especially valuable for applications, including Majorana-based topological quantum computing. 
\end{abstract}

\begin{keyword}
topological insulators and superconductors,
bulk-boundary correspondence, 
Wiener-Hopf factorization,
stability of topological zero-energy modes
\end{keyword}

\end{frontmatter}

\newpage

\tableofcontents


\section{Introduction} 
The bulk-boundary correspondence is arguably the most striking feature of symmetry-pro- tected topological (SPT) phases of mean-field, ``free" fermionic matter. It is a relationship between the topological features of certain vector bundles of Bloch states over the Brillouin zone~\cite{kz16}
and the emergence of boundary-localized quasiparticle modes in a system with terminations; see e.g. 
Refs.~\cite{Chiu16, Prodan, alldridge} for 
in-depth discussions. These boundary modes are essential for identifying SPT phases of free fermions experimentally. They are also the foundation on which all technological applications of many topological materials are predicated.
To mention some paradigmatic examples,
in the context of incompressible liquids,
the boundary modes of the two-dimensional
electron gas in the quantum Hall regime are responsible for the quantization of the transverse Hall conductance of the system
~\cite{thouless82,halperin82, hatsugai93}.
In the context of superconductors,  
Majorana boundary modes~\cite{Kitaev} are key to some topological quantum computing architectures 
that are under intense investigation~\cite{freedman,MMilestones} even as the debate over putative experimental detection of Majorana modes rages on~\cite{retraction}. 
 
How does the non-trivial topology of Bloch states, a piece of information associated to translationally invariant systems, engenders 
boundary modes of a terminated system?
How is the topological classification 
of SPT phases mirrored by the topologically-dictated boundary modes? 
And, last but not least, how robust are the boundary modes against
deviations from the ideally terminated, clean system?
These questions can be turned into sharp mathematical problems, albeit not easy ones. Nonetheless, there is by now a body of knowledge associated to them that can be considered satisfactory even by the standards of mathematical physics. For example, the rigorous connection between the topological invariant of an SPT phase and its boundary modes,  with and without
disorder, was established using various tools, including Green's function~\cite{Essin11}, 
T-duality~\cite{Mathai15, Mathai16, Hannabuss18a, Hannabuss18b}, K-theory~\cite{Prodan,alldridge,Bourne17} 
and others~\cite{Avila13,Essin15, Cedzich16, Sedlamyr17,Bal17, Shapiro17, Cedzich18, Chen18, Rhim18}. From these and many other works a shared picture has emerged: ``the" bulk-boundary correspondence
is a theorem relating topological invariants of systems without boundaries 
to the boundary modes of bulk-disordered, terminated systems. 
We will maintain this picture and call this 
result the standard bulk-boundary 
correspondence \cite{Chiu16, Prodan, alldridge}. 

Including bulk disorder in a mathematically consistent
framework is extraordinarily taxing and makes the rigorous bulk-boundary correspondence one of the great accomplishments of mathematical physics. In this paper, we prove a bulk-boundary correspondence with emphasis on boundary conditions (BCs) rather than bulk disorder. The surface of an actual piece of matter is never ideal, it can undergo
relaxation, or reconstruction, and/or suffer
from specific BCs~\cite{PRB2}. In addition,
a surface is only partially determined by bulk
properties. Within the mean-field approximation, the range of 
surface phenomena is modeled by BCs that can be fairly independent from bulk conditions \cite{ours2}. Hence, it is justified to regard the questions of the robustness of boundary modes against bulk disorder and against non-ideal BCs as conceptually independent.

Zooming in on one-dimensional (1D) systems, one spatial dimension is special because there is no surface metal. The topological boundary modes are always mid-gap {\em zero-energy} quasiparticle spectral modes (zero modes, ZMs, henceforth). As a consequence, the many-body system is {\em fully gapped} for both open (that is, ideally terminated) and periodic BCs. In this context, the standard bulk-boundary correspondence predicts the emergence of boundary ZMs, but not their number nor whether that number is stable against symmetry-preserving perturbations. This feature is a shortcoming because the number of ZMs determines, and is determined by, the ground degeneracy of the gapped many-body fermion system. This same degeneracy is exploited for storing quantum information in 
topological quantum memories~\cite{freedman,dennis}.
So the question arises, is this many-body degeneracy of topological significance? 
And, can the 
invariance of the number of topological ZMs, if any, be predicted from bulk properties of the system alone? 

Going beyond the actual number of topological ZMs, how sensitive are the ZMs themselves to perturbations? 
Beyond its importance for fundamental physics, the answer to this question could prove important for applications. For example, the sensitivity to symmetry-preserving perturbations could be relevant to assess the way disorder deteriorates the quality of sensing in a recent proposal based on topological edge states~\cite{sarkar2022free}.
As compelling as this question is, it highlights the fact that even finding a good measure of sensitivity of a set of topological ZMs to perturbations is a problem in and of itself. Provided one finds a satisfactory solution, then comes the question of whether this sensitivity can be bounded in terms of bulk properties of the system, and how tightly so. Some of these questions have been recently tackled for two specific models of interest, see Refs.~\cite{boutin18,Wang2021}. 

Within the above motivating context, in this paper we solve the following problems associated to quasi-1D SPT phases of free fermions, classified within the framework of the well-known ``tenfold way'' \cite{Kitaev09,Ryu10}. 
Here, a quasi-1D system refers to 
a) a surface-terminated d-dimensional 
system that is periodic along d-1 dimensions; or 
b) systems with hyperplanar interfaces separating clean bulks, that is, a ``junction''; or 
c) a finite multi-ladder system. 
The systems considered are allowed to have multiple fermionic degrees of 
freedom associated with each lattice point present in the effective model
of the system. In this setting, our main contributions are as follows:

\begin{itemize}
    \item\textit{Spectral flattening to ``dimer'' Hamiltonians.---} We describe a systematic method for constructing a  homotopy (``adiabatic deformation") connecting any quasi-1D Hamiltonian to a canonical ``dimerized" flat-band Bloch Hamiltonian  without leaving the appropriate symmetry class. Our method applies under both periodic and semi-open BCs. Thus, in one spatial dimension, the topological
    features of fermionic SPT 
    phases are captured by short-range correlated systems.
    \item \textit{A bulk-boundary correspondence under open boundary conditions.---} 
    We prove a rigorous bulk-boundary correspondence for clean, quasi-1D systems subject to boundary disorder. 
    It covers all ten symmetry classes.
    We also extend our results from wires to junctions under the assumption that the bulks forming the junction as well as the tunnel Hamiltonian satisfy the same set of symmetries.
    \item \textit{Stability of manifolds of 
    topological ZMs.---} We identify a good quantifier of the stability of manifolds of ZMs and prove that only topologically non-trivial Hamiltonians can host some stable (in our sense) ZMs. Nonetheless, not all topological ZMs need be stable and we include examples. The ZMs of topologically trivial wires, 
    such as certain Andre\'{e}v bound modes, are provably 
    fragile. 
\item \textit{Sensitivity of manifolds
    of ZMs to perturbations.---}
    We introduce a quantitative measure of the 
sensitivity of boundary modes to symmetry-preserving perturbations and derive an upper bound for it in terms of a quantity that depends only on the Bloch Hamiltonian. We conclude that the ZMs of flat-band Hamiltonians are the least sensitive to symmetry-preserving perturbations.
\end{itemize}

Our central mathematical tool for analysis is a 
{\em Wiener-Hopf (WH) factorization} of matrix-valued functions
of a single variable. This tool, to our knowledge, has not been exploited in the context of bulk-boundary correspondences before. 
While we will not dive into technicalities right away, 
we remark that, in its most standard form, the WH factorizations~\cite{gohberg2005convolution, Gohberg03} are of enormous value for solving certain functional equations and linear equations in infinite dimensions. In physics, the WH factorization appears often in connection to  
the inverse scattering problem~\cite{Gohberg03}. 

Although the theory of the standard WH factorization 
has been thoroughly explored, there are only a few results available for the canonical WH form of matrix functions which satisfy unitary or anti-unitary constraints, as the Bloch Hamiltonians do, see Refs.~\cite{Ran94, Rodman14, Shelah17, Lancaster95, 
Voronin11, Guo98, Kravchenko08, Ehrhardt, Iftime01, Youla78}. Thus, in this paper we solve a mathematical problem of independent interest:
\begin{itemize}
    \item \textit{Symmetry-constrained 
    WH factorizations.---} The WH factorization is a factorization of matrix functions of one complex variable. In our context, the matrix functions are the Bloch Hamiltonians analytically continued off the Brillouin zone.  We prove that there exist relatives of this standard WH factorization, we called them {\em symmetric WH (SWH) factorizations}, such that the factors of a Bloch Hamiltonian in some symmetry class are themselves consistent with the classifying symmetries. The outcomes are ten distinct but closely related SWH factorizations. They are calculated systematically by way of an iterative algorithm and constitute the foundation of all the results of this paper.
\end{itemize}

We note that some of the above conclusions were included in a preliminary form in Chapter 4 of 
Ref.\,\cite{Springer}.
The present work is a substantially expanded and improved version of the same chapter. It includes several new results and contains full mathematical proofs that were omitted in the thesis chapter, including the proofs for SWH factorizations and for the bulk-boundary correspondence under arbitrary boundary conditions.

Let us conclude this introduction with some additional remarks about our bulk-boundary correspondence. 
For the symmetry class AIII, and this class only, our result was proved in Ref.~\cite{Prodan} as an intermediate step towards the standard bulk-boundary correspondence. For the other symmetry classes, our bulk-boundary correspondence is implicit in later work of Ref.~\cite{alldridge}, heavily-reliant on K theory: Our results should, in principle, emerge as a corollary of \cite{alldridge} by removing the bulk disorder from the systems. Our contribution provides a proof that requires only basic, very concrete algebra. As a result, one 
obtains a very clear and intuitively appealing picture of how topologically mandated ZMs emerge and interplay with arbitrary BCs. 
The price to pay for bypassing K theory is the restriction to quasi-1D. There is no good generalization of the WH factorization for matrix functions of more than one complex variable. 

The organization of this paper is as follows. In the background section, Sec.\,\ref{sec:background}, we review the tenfold way classification of free-fermion SPT phases and discuss some basic results regarding the standard
matrix WH factorization. In section Sec.\,\ref{sec:symwh}, we provide a SWH factorization for each symmetry class of the tenfold way.
In Sec.\,\ref{sec:spectralflattening}, we explain how SWH factorization leads to spectral flattening of 1D Hamiltonians to ``dimerized'' Hamiltonians.
Section\,\ref{sec:bb} is devoted to a self-contained derivation of our bulk-boundary correspondence for 1D non-interacting SPT phases using the SWH factorizations. We also additionally show how a similar derivation applies to certain junctions. 
In Sec.\,\ref{sec:stability}
we derive conditions for the stability of ZMs. 
Finally, in Sec.\,\ref{sec:sensitivity}, we derive bounds on the sensitivity of ZMs. We summarize our results and discuss their implications in Sec.\,\ref{sec:conclusion}.

\section{Background}
\label{sec:background}

In this section we cover a substantial amount of background material, with the goal of making this paper fairly self-contained and accessible to a broad readership. We start with a general description of systems of free fermions in second quantization (i.e., quadratic in creation and annihilation fermionic operators) and the passage down to a first-quantized-like formalism for calculating quasiparticle energies and wave functions (the ``modes") from a Bogoliubov-de Gennes (BdG) Hamiltonian. It is understood that the quadratic form results from some mean-field approximation, 
and we focus on tight binding models of finite range. Our style of presentation and  conceptual emphasis are influenced by the work of Zirnbauer and collaborators over the last few years, see for example Refs.~\cite{kz16, alldridge}, with twists coming from our own work~\cite{PRL,JPA,PRB1,PRB2,Xu20}. 
Next, we zoom in on algebraic aspects of 1D free-fermion systems with and without boundary. The key observation is that
translation symmetry up to a boundary forces the BdG Hamiltonian to be an element of a block-Toeplitz algebra. We also motivate and introduce at this point our model of boundary conditions and the continuation of the Bloch Hamiltonian into a meromorphic matrix function of one complex variable.
After that, we review the main classification schemes of 
free fermions, namely, the Altland-Zirnbauer classification of random matrix ensembles of disordered free-fermion systems~\cite{Zirnbauer} and the tenfold way classification of ground states of translation invariant free-fermion systems
\cite{Kitaev09,Ryu10,kz16}, as well as the standard bulk-boundary correspondence. The emphasis is on accurate but often qualitative ideas and results that inform our work most closely. We also put on record, for future reference, the topological invariants of the five non-trivial tenfold-way classes in one dimension.  We conclude this background section with a brief introduction to the standard WH factorization. 

\subsection{The many-body models and classifying symmetries}
\label{auxBdG}

\subsubsection{The model Hamiltonians}

As mentioned, we shall focus on systems of fermions on a quasi-1D
lattice with $0<N\leq\infty$ sites along the first (the longest) direction. 
The positions of the lattice points in this direction are labeled by $j \in \{0,\dots,N-1\}$. 
There can be internal degrees of freedom for each $j$ labeled generically by $\{m = 1,\dots,d_{\rm int}\}$. 
Coordinates for all other lattice directions (if present) are absorbed in the internal degrees of freedom.
If spin \(s=1/2\) is one of the internal degrees of freedom and there is the need to show it separately, then we use the standard label  $\sigma \in \{\uparrow,\downarrow\}$.
In addition, sometimes it is necessary to isolate a sublattice degree of freedom; we label it $\nu \in \{1,-1\}$. 
The remaining internal degrees of freedom are denoted generically by 
$p \in \{1,\dots,P\}$. 
The operator $c^\dagger_{jm}$ and its adjoint are the creation and annihilation operators of fermions in the states \((j,m)\). They satisfy  the canonical anticommutation relations
\[
[c^{\;}_{jm},c^{\;}_{j'm'}]_+= 0_F,\quad [c^{\;}_{jm},c_{j'm'}^\dagger]_+=\delta_{jm,j'm'}1_F, 
\]
in terms of the anticommutator \([a,b]_+=ab+ba\) and the zero and identity operators of Fock space.
Within the mean-field and tight-binding approximations~\cite{FW2012}, a many-body Hamiltonian for a system of independent fermions is an operator of the form  
\begin{align*}
\widehat{H}&=\widehat{K}+\widehat{\Delta},\\
\widehat{K}&=\sum_{jm,j'm'} K_{jm,j'm'}c_{jm}^\dagger c_{j'm'},\\
\widehat{\Delta}&=\frac{1}{2}\sum_{jm,j'm'}\left(\Delta_{jm,j'm'}c^\dagger_{jm}c^\dagger_{j'm'}+\Delta^*_{jm,j'm'}c^{\;}_{j'm'}c^{\;}_{jm}\right),
\end{align*}
where $^*$ denotes complex conjugation.
Then, \(\widehat{H}\) is Hermitian provided that
\(
K_{jm,j'm'}=K_{j'm',jm}^*,
\)
and, due to the fermionic statistics, 
\( \Delta_{jm,j'm'}=-\Delta_{j'm',jm}. \)
With hindsight, it is useful to write the kinetic energy operator in a more symmetric form at the expense of a constant shift. 
Due to the canonical anticommutation relations, we can redefine
\begin{align}
\nonumber
\widehat{K}& \mapsto \widehat{K} - \frac{1}{2}{\mbox{tr}}(K)\, 1_F, \ \ \mbox{ where } \mbox{tr}(K)=\sum_{jm}K_{jm,jm}, \mbox{ so that } \\
\label{hatK}
\widehat{K}&=\frac{1}{2}\sum_{jm,j'm'}
\left(K_{jm,j'm'}
c_{jm}^\dagger c^{\;}_{j'm'}
-K_{j'm',jm}
c^{\;}_{jm}c^\dagger_{j'm'}\right).
\end{align}
In this way, we can write
\begin{eqnarray}
\widehat{H}=\widehat{K}+\widehat{\Delta} + \text{constant}.
\end{eqnarray}
As the constant term corresponds to an energy shift, we redefine $\widehat{H}$ to be the traceless Hamiltonian
\begin{eqnarray}
\widehat{H} \mapsto \widehat{H} =\widehat{K}+\widehat{\Delta}.
\label{many-body}
\end{eqnarray}

This paper is concerned with the quasiparticles of the system modeled by the Hamiltonian \(\widehat{H}\). They are fully determined by \(\widehat{H}\) and so it will be the main focus of our investigation. We reserve the name {\em quadratic fermionic Hamiltonian} (QFH) for \(\widehat{H}\). A QFH in the mathematical sense need not be interpreted  as a Hamiltonian only. Certain generators of continuous symmetry transformations are QFHs in the mathematical sense.

\subsubsection{The classifying symmetry operations}
\label{symopsonsite}

In quantum mechanics, by Wigner's theorem a symmetry transformation is represented by a unitary or antiunitary operator on the Hilbert space associated to the pure states of the system. If a symmetry operator commutes with the  Hamiltonian of a system, then it is a symmetry of that system. 
In this paper, we will investigate classes of QFH characterized by symmetries shared by all the QFH in a given class. The symmetry classes are the ones identified by Altland and Zirnbauer in their seminal paper Ref.\,\cite{Zirnbauer}. In the Altland-Zirnbauer (AZ) classification scheme, there are ten classes characterized by some subset of the symmetries of particle number, spin rotation, time reversal, and a sublattice symmetry. The topological classification of QFHs, the tenfold way classification, adds translation symmetry to the set of classifying symmetries, and thus space dimensionality as a meaningful label. 

\medskip

\noindent \underline{Spin rotations.}---
The rotations in spin space are generated by the Hermitian operators
\begin{align}
\label{spingen}
\widehat{S}_x&=\frac{1}{2}\sum_{jm}\left(c^\dagger_{j\uparrow m}c_{j\downarrow m}^{\;}+c^\dagger_{j\downarrow m}c_{j\uparrow m}^{\;}\right),\nonumber\\
\widehat{S}_y&=-\frac{i}{2}\sum_{jm}\left(c^\dagger_{j\uparrow m}c_{j\downarrow m}^{\;}-c^\dagger_{j\downarrow m}c_{j\uparrow m}^{\;}\right),\nonumber\\
\widehat{S}_z&=\frac{1}{2}\sum_{jm}\left(c^\dagger_{j\uparrow m}c_{j\uparrow m}^{\;}-c^\dagger_{j\downarrow m}c_{j\downarrow m}^{\;}\right),
\end{align}
satisfying the usual commutation relations of angular momentum operators. 

\bigskip

\noindent\underline{Time reversal.}--- The time-reversal operation, denoted by $\Theta$, is the anti-unitary symmetry induced by the mapping 
\begin{align*}
\Theta\, i1_F\,\Theta^{-1}&=-i1_F,\\
\Theta\, c_{j\uparrow m}\, \Theta^{-1} &= c_{j\downarrow m},\\
\Theta\, c_{j\downarrow m}\, \Theta^{-1} &= -c_{j\uparrow m},
\end{align*}
and the adjoints of these relations (one can check that \((\Theta A \Theta^{-1})^\dagger=\Theta A^\dagger \Theta^{-1}\) for any antiunitary transformation \(\Theta\)). 
It is immediate to check that 
\(\Theta \widehat{S}_{\alpha}
\Theta^{-1}=-\widehat{S}_{\alpha}\) for \(\alpha=x,y,z\).
Since $\Theta^2 = -1_F$, it affords a representation in Fock space of the group $\mathbb{Z}_4$.
\bigskip

\noindent\underline{Particle number.}---
The Hermitian operator
\begin{align}
\widehat{N} =\sum_{jm} c_{jm}^\dagger c_{jm}^{\;}
\end{align}
is the observable associated to the total number of electrons. It generates a representation of the group \(U(1)\). Regarded as a symmetry, it is called the particle-number symmetry operation. 

\bigskip

\noindent\underline{Sublattice symmetry operation.}--- 
It is an antiunitary transformation, denoted by $\Sigma$,  induced by the relations 
\begin{align}
\Sigma\, i1_F\, \Sigma^{-1}&=-i1_F,\\
\Sigma\, c_{j\nu m}\, \Sigma^{-1} &= (-1)^{\nu} c^\dagger_{j\nu m},\\
\Sigma\, c^\dagger_{j\nu m}\, \Sigma^{-1} &= (-1)^{\nu} c_{j\nu m},
\end{align}
where $\nu$ labels the sublattice, e.g. the twp sublattices in graphene or in Su-Schrieffer-Heeger model~\cite{SSH}, which we revisit later.
It follows from this definition that the sublattice symmetry operation is its own inverse. Hence, it amounts to a representation of the group $\mathbb{Z}_2$. Our specific sublattice symmetry may seem overly restrictive as compared to, say, that of Ref.\,\cite{kz16} but it is not so. We will revisit this point later.

The sublattice symmetry operation properly deserves to be called a particle- hole transformation as it exchanges the creation and annihilation operators but, unfortunately, that name is often associated to a different property of QFHs. 
Note that the above definition of 
the sublattice symmetry rests on the fact that 
the exchange \(c \leftrightarrow c^\dagger\) is 
a implementable by a unitary transformation on the fermionic Fock space. The analogous statement for bosons is false; see Ref.\,\cite{Xu20} for a physical discussion of this point and its ramifications. 

\bigskip

\noindent\underline{Lattice translations.}--- A translation to the left is generated by the unitary transformation induced by the relations
\begin{align*}
{\cal U}^{\;}_L\, c_{0m} \,{\cal U}_L^\dagger =c_{N-1,m},\quad
{\cal U}^{\;}_L\, c_{jm} \,{\cal U}_L^\dagger =c_{j-1,m}, \quad j=1,\dots,N-1. 
\end{align*}
The shifts to the right are generated by \({\cal U}^{\;}_R={\cal U}_L^\dagger\).

\subsection{The Bogoliubov-de Gennes representation}
\label{auxBdG}

The QFHs close a real Lie algebra.  What this statement means is that i) given two QFHs \(\widehat{H}_\i\), \(\i=1,2\), there exists a third QFH \(\widehat{H}_3\) such that
\begin{align}
    \label{commQFHs}
[\widehat{H}_1,\widehat{H}_2]=i\widehat{H}_3 ,
\end{align}
and ii) a finite real linear combination of QFHs is another QFH (the emphasis on reality comes from requiring Hermiticity). This fact is consistent with the characterization of the QFH associated to a free-fermion Hamiltonian/symmetry generator: a commutator is necessarily traceless. 

The following analysis is standard. It goes back to Bogoliubov and became hugely popular after the work of de Gennes. 
Our presentation, however, is somewhat unconventional and inspired by the recent work of Zirnbauer \& co.~\cite{alldridge}. Consider the auxiliary complex vector space 
\[
{\mathcal{H}}_{\rm BdG} \equiv \text{Span\,}_\mathbb{C}\{c^{\;}_{jm},c^\dagger_{jm}\ |\ j=0,\dots,N-1\ \mbox{and}\ 
m  = 1,\dots, d_{\rm int} \},
\] 
comprising linear combinations of creation and annihilation operators. Its dimension is twice the number of single-particle (SP) state labels.  We will denote elements of \({\cal H}_{\rm BdG}\) with lower case Greek letters topped with a small hat; e.g., 
\begin{eqnarray}
\label{etahat}
\hat{\eta}
=\sum_{j=0}^{N-1}\sum_{m=1}^{d_{\rm int}}(\eta_{0jm}c_{jm}^\dagger+\eta_{1jm}c^{\;}_{jm}).
\end{eqnarray}
The fundamental reason for introducing the auxiliary BdG space is the way it interplays with QFHs. The commutator of a QFH with a linear combination of creation and annihilation operators yields another linear combination of creation and annihilation operators. Hence, the commutator with a QFH induces a linear transformation \(H:{\cal H}_{\rm BdG}\rightarrow {\cal H}_{\rm BdG}\) via the formula
\[
H[\widehat{\eta}]\equiv[\widehat{H},\hat{\eta}] .
\]
Moreover, if Eq.\,\eqref{commQFHs} holds, then the associated linear transformations of \({\cal H}_{\rm BdG}\)
satisfy \([H_1,H_2]=iH_3\). One can show this as follows. First,
\[
iH_3[\hat{\eta}]=i[\widehat{H}_3,\hat{\eta}]=
[[\widehat{H}_1,\widehat{H}_2],\hat{\eta}] .
\]
By the Jacobi identity and our definitions, 
\[
[[\widehat{H}_1,\widehat{H}_2],\hat{\eta}]=H_1 H_2[\hat{\eta}]-H_2 H_1[\hat{\eta}]=[H_1,H_2][\hat{\eta}] ,
\]
whereby the claim follows. 

In short, the BdG space carries a representation of the Lie algebra of QFHs. One can show that it is a faithful representation, that is,  that the mapping \(\widehat{H}\mapsto H\) is one-to-one. The linear transformation \(H\) is called the BdG Hamiltonian. The advantage of one version of the underlying Lie algebra over the other is the following. The QFHs act on a Hilbert space of dimension that grows exponentially with the number of SP state labels. By contrast, the dimension of the auxiliary BdG space grows only linearly. 

The auxiliary BdG space inherits structure from the fact that it is generated by fermionic creation and annihilation operators. Specifically:
\begin{enumerate}
    \item \textit{A real structure.---} A real structure on a complex vector space is an antilinear map \(\rho\) that is its own inverse. For the BdG space, the real structure is given by the map \(\rho:{\cal H}_{\rm BdG}\rightarrow {\cal H}_{\rm BdG}\) defined as 
 \[
 \rho[\hat{\eta}]\equiv \hat{\eta}^\dagger.
 \]
In other words, a vector of the auxiliary BdG space is regarded as real if the same object, regarded as an operator of the Fock space, is Hermitian.

    \item \textit{ A Hermitian inner product $\langle \cdot | \cdot \rangle$.---} 
 The basic fact is that the anticommutator of two linear combinations of creation and annihilation operators is a multiple of the identity. The Hermitian inner product of two vectors of the BdG space is calculated by way of the formula 
   \[ \langle \hat{\eta}_1| \hat{\eta}_2 \rangle \mathbb{1} =
[\,\hat{\eta}_1^\dagger, \hat{\eta}^{\;}_2\,]_+.
\]
\end{enumerate}

Let us see how these pieces of structure interact with the basic object of interest, the representation of the Lie algebra of QFHs. As for the real structure, one has that
\[
\rho H \rho[\hat{\eta}]= \rho H [\hat{\eta}^\dagger] = 
 -[\widehat{H},\hat{\eta}]=-H[\hat{\eta}].
\]
It follows that \(iH\) is a purely real matrix in some basis. The condition
\[
\rho H\rho=-H ,
\]
satisfied by any BdG Hamiltonian/symmetry generator, is most often called \textit{the particle-hole symmetry} of a superconductor. As this analysis shows, it has nothing to do with any many-body symmetry in the usual, Wigner sense (in connection to the existence of Majorana ZMs, this observation was first made in \cite{Ortiz-Dukelsky, Ortiz-Cobanera}).  
Following Zirnbauer, we call it ``the Fermi constraint''.  

Next let us consider the inner product. As one would hope, a BdG Hamiltonian is Hermitian with respect to it. To check that this is so, notice that
\[
[\hat{\eta}_1,[\widehat{H},\hat{\eta}_2]]_+=[\widehat{H},[\hat{\eta}_1,\hat{\eta}_2]_+]-[[\widehat{H},\hat{\eta}_1],\hat{\eta}_2]_+.
\]
Hence, by our definitions,
\[
\langle \hat{\eta}_1|H[\hat{\eta}_2]\rangle =\langle H[\hat{\eta}_1]|\hat{\eta}_2\rangle ,
\]
and so the BdG Hamiltonian is Hermitian.

Let us conclude with a comment on symmetry operations. The continuous symmetries of rotations in spin space and particle number are generated by operators that are QFHs in the mathematical sense. Hence, they induce symmetry generators in the BdG space. The many-body Hamiltonian commutes with spin or particle number rotations if and only if the BdG Hamiltonian commutes with the corresponding BdG generators of said symmetries.

Discrete symmetry operations are yet to be discussed, but the idea is always the same. Translations, time reversal, and the sublattice symmetry operation map linear combinations of creation and annihilation operators to other linear combinations. Hence, they induce linear or antilinear  maps of the BdG space according to the formulas
\[
\bm{T}[\hat{\eta}]\equiv 
{\cal U}^{\;}_L\,\hat{\eta}\, {\cal U}_L^\dagger,\quad 
{\cal T}[\hat{\eta}]\equiv\Theta\, \hat{\eta}\,\Theta^{-1},\quad {\cal V}[\hat{\eta}]\equiv\Sigma\,\hat{\eta}\,\Sigma^{-1}.
\]

\subsection{From BdG operators to matrices}
\label{sec:bdgtomatrix}

The problem of associating matrices to the BdG Hamiltonian and other linear or antilinear transformations of the BdG space is, in principle, straightforward but needs to be addressed with some care. Challenges arise from the desire to take advantage of the algebra of arrays of objects. 

Let us begin by defining a row array \(\hat \Psi^\dagger\) of creation and annihilation operators and of order \(1\times 2d_{\rm int}N\). The entries of the array follow the dictionary order, that is, \(\hat \Psi^\dagger_{\tau j m}\)  appears  to the left of \(\hat \Psi^\dagger_{\tau'j'm'}\) if \(\tau < \tau'\), or if \(j<j'\) whenever \(\tau=\tau'\), or if \(m<m'\) whenever \(\tau=\tau'\) and \(j=j'\). The entries of the array are 
\[
\hat \Psi^\dagger_{0jm}=c^\dagger_{jm},\quad \hat \Psi^\dagger_{1jm}=c^{\;}_{jm}.
\]
Now consider Eq.\,\eqref{etahat}. We can define a column array \(\eta\) of order \(2d_{\rm int}N\times 1\) with numerical entries \(\eta_{\tau jm} \in \mathbb{C}\). Having done so, one can take advantage of the algebra of arrays and write 
\(
\hat{\eta}=\hat\Psi^\dagger \eta
\)
without further ado. Because the entries of these arrays are labeled by an ordered collection of indices, one can display them as a layered object. More specifically,
\begin{align}
\label{taulayer}
\eta=\begin{bmatrix}\eta_{\tau = 0}\\\eta_{\tau = 1}\end{bmatrix}, 
\quad
\eta_\tau=\begin{bmatrix}\eta_{\tau\,j=0}\\ \vdots\\ \eta_{\tau\,j= N-1}\end{bmatrix}, \quad
 \eta_{\tau j}=
 \begin{bmatrix} \eta_{\tau j\, m=1}\\
\vdots\\
\eta_{\tau j\, m=d_{\rm int}}
\end{bmatrix},\quad \tau=0,1,\; j=0,\dots,N-1.
\end{align}
The order of indices that gives rise to the layers is interchangeable, and it is in fact helpful to pull the indices of 
interest to the beginning or to the end as we shall see later.

Moving on to the QFHs, let us introduce a column array \(\hat \Psi\) of creation and annihilation operators with entries 
\[
\hat \Psi^{\;}_{0 j m}=c^{\;}_{jm},\quad\hat \Psi_{1jm}=c_{jm}^\dagger.
\]
Then one can check that
\[\widehat{H}=\frac{1}{2}\hat \Psi^\dagger H \hat \Psi ,
\]
in terms of the square matrix \(H\) of order \(2d_{\rm int}N\) with entries \(H_{\tau jm, \tau' j' m'}\). Referring back to the defining  Eq.\,\eqref{hatK} and \eqref{many-body}, one concludes that 
\[
H_{0jm,0j'm'}=K_{jm,j'm'},\quad H_{1jm,0j'm'}=\Delta_{jm,j'm'}, 
\]
and so on. The matrix \(H\) has layers just like the vector \(\eta\).  Suppressing the indices \(jm\), one can  write  
\begin{equation}
\label{BdG}
H = \begin{bmatrix}  K & \Delta \\ -\Delta^* & -K^{\rm T}\end{bmatrix},\quad K^\dagger=K,\quad \Delta^{\rm T}=-\Delta, 
\end{equation}
with ${\rm T}$ representing transposition with respect to the composite index \(jm\). 

At this point we have associated \(\widehat{H}\) to both a linear transformation of the BdG space and a matrix, and we have denoted both with the same letter. The reason is that 
\[
H[\hat{\eta}]=[\widehat{H},\hat{\eta}]=\widehat{H\eta}=\hat \Psi^\dagger H\eta.
\]
In words, the matrix of Eq.\,\eqref{BdG} is precisely the matrix of the linear transformation associated to the commutator. Now, we can drop the creation and annihilation operators altogether and work with numerical arrays only.

Consider the matrix representation of some basic results and symmetries: 

\medskip

\noindent\textit{Hermiticity condition.}--- The BdG matrix,  Eq.\,\eqref{BdG}, is Hermitian in the usual sense because the BdG operator is Hermitian with respect to the inner product of the BdG space.

\medskip

\noindent\textit{The Fermi constraint (``particle-hole symmetry").}--- Let us work out the matrix representation  of the reality condition \(\rho H \rho=-H\). Let \(\tau_x,\tau_y,\tau_z\)  be defined so that (recall Eq.\,\eqref{taulayer})  
\[
\begin{bmatrix}[\tau_x\eta]_{\tau = 0}\\
[\tau_x\eta]_{\tau = 1}\end{bmatrix}
=\begin{bmatrix}\eta_{\tau = 1}\\
\eta_{\tau = 0}\end{bmatrix},\quad
\begin{bmatrix}[\tau_z\eta]_{\tau = 0}\\
[\tau_z\eta]_{\tau = 1}\end{bmatrix}
=\begin{bmatrix}\eta_{\tau = 0}\\
-\eta_{\tau = 1}\end{bmatrix}.
\]
Then, one finds that 
\[
\rho[\hat{\eta}]=\hat{\eta}^\dagger=\widehat{\tau_x\eta^*}=\hat \Psi^\dagger\tau_x\eta^*.
\]
It is convenient to write \(\eta^*=\cc\eta\), that is, \(\cc\) is the operation of complex conjugation extended to the numerical column vector \(\eta\). Hence, at the level of matrices, the reality condition reads
\[
\tau_x\cc H\tau_x\cc =\tau_xH^*\tau_x=-H ,
\]
which is precisely the ``particle-hole symmetry" of a superconductor as described in the literature at large. 

\medskip

\noindent\textit{Particle number symmetry.}---
 It is immediate to check that 
\(\widehat{N}=
\widehat{\tau_z}\). Hence, particle number is conserved by some QFH if and only if its BdG Hamiltonian commutes with \(\tau_z\). This condition is equivalent to \(\Delta=0\). 

\medskip

\noindent\textit{Spin rotation symmetry.}--- Setting $\hbar=1$, the action of spin rotation symmetries in Eq.\,\eqref{spingen}
on the BdG space translate to matrices
\begin{align*}
\widehat{S}_x=\frac{1}{2}\widehat{\tau_z\sigma_x},\quad
\widehat{S}_y=\frac{1}{2}\widehat{\sigma_y},\quad 
\widehat{S}_z=\frac{1}{2}\widehat{\tau_z\sigma_z}. \end{align*}
In particular, a free-fermion system is invariant under spin rotations if and only if its BdG Hamiltonian commutes with \(\tau_z\sigma_x\), \(\sigma_y\), \(\tau_z\sigma_z\). Notice that the \(\tau\) and \(\sigma\) matrices commute as they act non-trivially on different indices. 

\medskip

\noindent\textit{Time-reversal symmetry.}---
For a time-reversal transformation one finds that
\[
{\cal T}[\hat{\eta}]=\Theta \hat{\eta}\Theta^{-1}=\hat\Psi^\dagger i\sigma_y \cc \eta.
\]
Hence, a system of time-reversal-invariant free fermions is associated to a BdG Hamiltonian such that 
\[
i\sigma_y\cc H(-i\sigma_y)\cc =\sigma_yH^*\sigma_y=H.
\]
There is a remarkable mathematical interpretation of this condition. As it turns out, by Frobenius theorem \cite{frobenius}, there are exactly three associative real division algebras: the real numbers \(\mathbb{R}\) themselves, the complex numbers \(\mathbb{C}\), and the quaternion algebra \(\mathbb{H}\). The quaternion algebra can be realized inside the algebra \(M_2(\mathbb{C})\) as the real linear combinations of the matrices
\begin{equation}
\label{quat}
\bm{1}_{\H} = \begin{bmatrix}1 & 0 \\ 0 & 1 \end{bmatrix},\ 
\bm{i} = \begin{bmatrix}0 & i \\ i & 0 \end{bmatrix},\ 
\bm{j} = \begin{bmatrix}0 & 1 \\ -1 & 0 \end{bmatrix},\
\bm{k} = \begin{bmatrix}i & 0 \\ 0 & -i \end{bmatrix}.
\end{equation}
These matrices are also a basis, over the complex numbers, of \(M_2(\mathbb{C})\) itself. Hence, given a matrix \(q=q_0\bm{1}_{\H}+q_1\bm{i}+q_2\bm{j}+q_3\bm{k}\), the question arises whether it is a quaternion, that is, whether the coefficients \(q_0,\dots,q_3\) are real numbers. The answer is the affirmative provided that 
\(
q=-\bm{j}q^*\bm{j}.
\)
Hence, the time-reversal symmetry condition \(H=\sigma_yH^*\sigma_y=-\bm{j}H^*\bm{j}\) means, precisely, that the BdG Hamiltonian can be regarded as a matrix over the quaternions in terms of the quaternionic basis \(\bm{i}=i\sigma_x\), \(\bm{j}=i\sigma_y\), and \(\bm{k}=i\sigma_z\). 

\medskip

\noindent\textit{Sublattice symmetry.}--- Let
\(\nu_z\) be the matrix induced by the relation
\[
\begin{bmatrix}[\nu_z\eta]_{\nu = 0}\\
[\nu_z\eta]_{\nu = 1}\end{bmatrix}=\begin{bmatrix}1& 0\\0&-1\end{bmatrix}\begin{bmatrix}\eta_{\nu = 0}\\
\eta_{\nu = 1}\end{bmatrix}.
\]
Then, for sublattice symmetry, one finds that
\[
{\cal V}[\hat{\eta}]=\Sigma\hat{\eta}\Sigma= \hat \Psi^\dagger \tau_x\nu_z\cc \eta.
\]
Hence, a BdG Hamiltonian is sublattice symmetric if 
\[
H=\tau_x\nu_z\cc H\tau_x\nu_z\cc =\nu_z\tau_xH^*\tau_x\nu_z=-\nu_zH\nu_z
\]

\noindent\textit{Translation symmetry.}--- Left translation $\bm{T}$ 
induces the matrix \(V\) such that
\[
\begin{bmatrix} 
[V\eta]_{j=0}\\
[V\eta]_{j=1}\\
\vdots\\
[V\eta]_{j= N-2}\\
[V\eta]_{j=N-1}
\end{bmatrix}=
\begin{bmatrix} 
\eta_{j=1}\\
\eta_{j=2}\\
\vdots\\
\eta_{j= N-1}\\
\eta_{j=0}
\end{bmatrix} .
\]
It can be described as the generator of left shifts. The right shift is generated by \(V^\dagger\). A BdG Hamiltonian is translation symmetric if it commutes with \(V\).

\subsection{Fundamentals of the Altland-Zirnbauer and tenfold way classifications}

The AZ~\cite{Zirnbauer} and the tenfold way~\cite{Kitaev09, Ryu10} classifications are two distinct but related classifications of systems of independent fermions. In this section we recall the fundamentals of each as they relate to the work in this paper. 

\subsubsection{The Altland-Zirnbauer classification}
\label{AZblocks}

The four on-site symmetries of Sec.\,\ref{symopsonsite}, spin rotation, time reversal, particle number and particle-hole or sublattice symmetry, stand out on physical grounds as first emphasized by Altland and Zirnbauer. On the basis of certain combinations of these symmetry operations, and they alone, Altland and Zirnbauer classified random matrix ensembles of free-fermion systems into ten classes \cite{Zirnbauer}.   Table\,\ref{table:az} shows these classes, 
emphasizing the fact that {\em the classifying symmetries are rooted in standard (unitary or antiunitary) many-body symmetries}; see Ref.\,\cite{alldridge} and the Introduction of Ref.\,\cite{kz16} for a comparison between the AZ and the tenfold way classifications). 
Figure\,\ref{fig:AZclasses} summarizes the relationships between the various AZ classes. 

The symmetry operations  of the AZ classification impart additional structure to the BdG Hamiltonian~\cite{agarwala}. 
We now describe the unitary transformations that bring forth these additional structures. For each symmetry class,
we identify a $d \times d$ block of the BdG Hamiltonian (with $d \le 2d_{\rm int}$ in general), that we label by $A$, which contains all information about the 
Hamiltonian $H$. This block $A$ will be the central object in proving the results of this paper.

\begin{table}[t]
\begin{center}
\begin{tabular}{|c | c| c| c| l|}
\hline
{\bf Cartan label} & {\bf T} & {\bf C} & {\bf S} & {\bf Many-body symmetry group} \\
\hline
A & 0& 0& 0& U(1)\\
\textbf{AIII} & 0& 0& 1&U(1) $\times \mathbb{Z}_2$ \\ 
\hline
\textbf{D} & 0&+&0& $\{1\}$ \\
\textbf{DIII} & -&+&1& $\mathbb{Z}_4$ \\ 
AII & -&0&0& $\ \mathbb{Z}_4 \ltimes$ U(1) \\
\textbf{CII} & -&-&1&$(\mathbb{Z}_4 \ltimes$ U(1)) $\times \mathbb{Z}_2$ \\
C & 0&-&0&SU(2) \\
CI & +&-&1&SU(2) $\times \mathbb{Z}_4$ \\
AI & +&0&0& SU(2) $\times (\mathbb{Z}_4 \ltimes$ U(1)) \\
\textbf{BDI} & +&+&1&SU(2) $\times (\mathbb{Z}_4 \ltimes$ U(1)) $\times \mathbb{Z}_2$ \\
\hline
\end{tabular}
\caption{Cartan labels are associated to many-body symmetries (last column) and related  properties of the BdG Hamiltonian (characterized by the entries in the columns T, C, and S, where presence(lack) of the symmetry is labeled with a $1 (0)$ and, when present, a sign +(-) means that the squared of the symmetry is +1(-1)). The symbol $\{1\}$ denotes the trivial group of one element (physically, there are no symmetries).
The group \(U(1)\) is generated by the particle number operator, \(\mathbb{Z}_2\) by the particle-hole operation, \(\mathbb{Z}_4\) by the time reversal operation, and \(SU(2)\) by the spin operators. The semi-direct product of groups \(\ltimes\) arises because time reversal and particle number do not commute. The classes in bold font are the topologically non-trivial classes in one dimension. }
\label{table:az}
\end{center}
\end{table}

\begin{figure}[t]
\begin{center}
\includegraphics[width = 8cm]{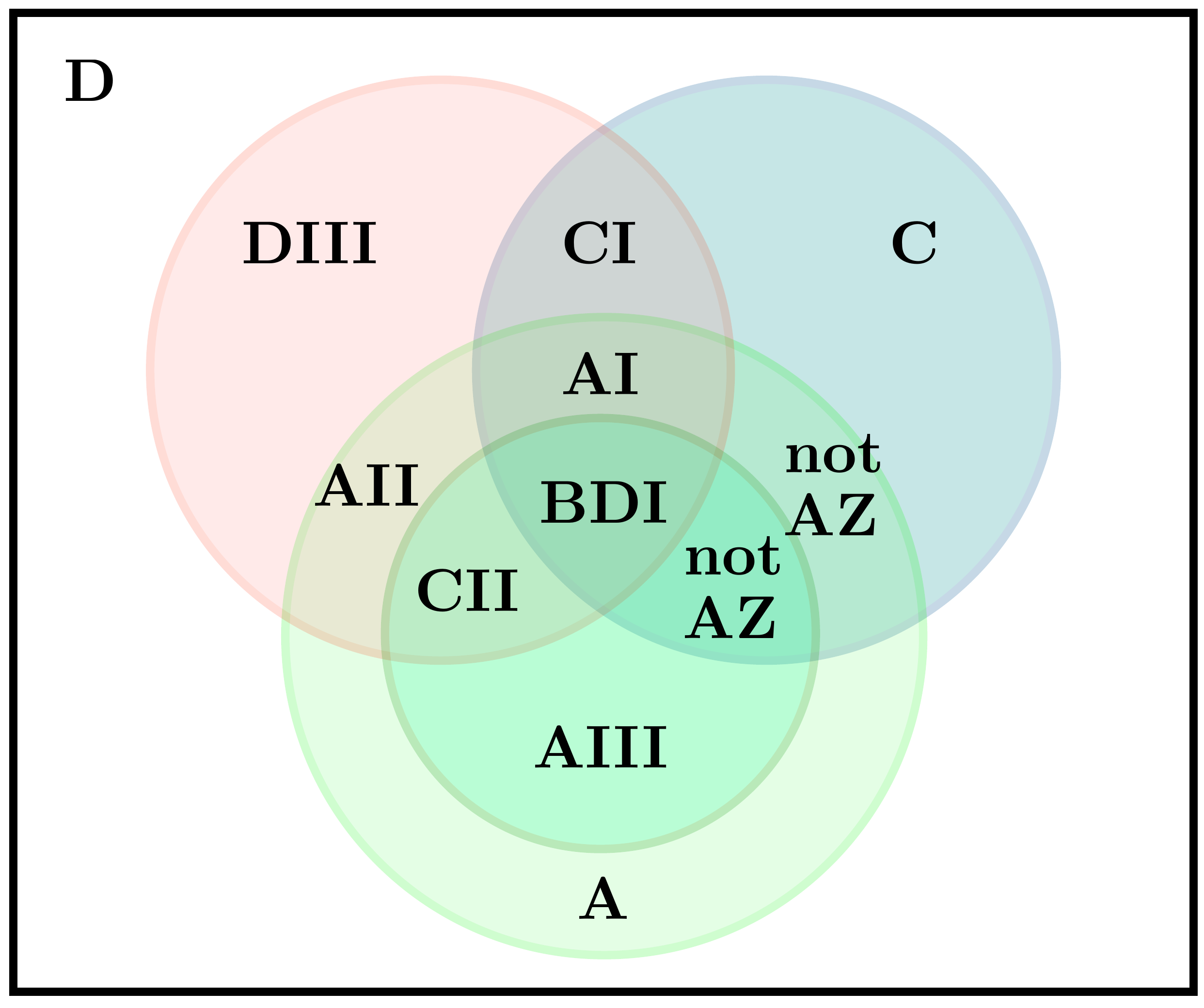}
\caption{Nine out of the ten AZ symmetry classes are properly contained in class D. The three larger circles correspond to time reversal, spin rotation, and particle number symmetry respectively in the clockwise sense. The smaller circle corresponds to the sublattice symmetry. 
The relationships between the classes can be read of this figure. An intersection of two or more AZ classes is a symmetry class with all the symmetries of the classes that meet. It is again an AZ class but for two exceptions: the symmetry class characterized by particle number and spin rotation symmetry and the class characterized by the latter two and, in addition, sublattice symmetry.}
\label{fig:AZclasses}
\end{center}
\end{figure}

\medskip

\underline{Class D},  no symmetries required.--- 
This class is simply the class of BdG Hamiltonians,  that is, matrices that satisfy the Fermi constraint; see Eq.\,\eqref{BdG}.  There is a natural, canonical representation of this class of matrices. The unitary transformation 
\(U^\dagger_{\rm D}\equiv \frac{1}{\sqrt{2}}\begin{bmatrix} 1 & i \\ 1 & -i\end{bmatrix}\), acting on the particle-hole space, that is, as a block transformation, maps a BdG Hamiltonian to its canonical form, that we denote by \(H_{\rm D}\). Explicitly,
\begin{align}
\label{repHD}
H_{\rm D}=U_{\rm D} H U_{\rm D}^\dagger= iA ,
\end{align}
in terms of 
\begin{align*}
A\equiv \begin{bmatrix} K_{\rm Im} +\Delta_{\rm Im} & 
K_{\rm Re} -\Delta_{\rm Re} \\
 -K_{\rm Re} -\Delta_{\rm Re} & K_{\rm Im} -\Delta_{\rm Im}\end{bmatrix}=A^*=-A^T ,
\end{align*}
where the subscripts ${\rm Re}$ and ${\rm Im}$ denote real and imaginary parts. Notice that the matrix \(A\) is a real and antisymmetric matrix of even order. 

The unitary transformation \(U_{\rm D}\) does not depend on the Hamiltonian. Rather, it is a property of the AZ symmetry class, class \(D\) in this case. Its physical meaning is revealed by its action on the creation and annihilation operators. Since
\[
\widehat{H}=\frac{1}{2}\hat{\Psi}^\dagger H\hat{\Psi}=
\frac{i}{2}\hat{\Psi}^\dagger U^\dagger_D A U_{\rm D}\hat{\Psi},
\]
one finds that
\[\begin{bmatrix}
c^\dagger_{jm}&c_{jm}
\end{bmatrix} U_{\rm D}^\dagger
=\begin{bmatrix}
a_{jm} &b_{jm} 
\end{bmatrix} ,\] 
where the \(a_{jm}, b_{jm}\) are Hermitian and satisfy the relations expected of the generators of a complex Clifford algebra of even dimension. They are the Majorana basis popularized by Kitaev \cite{Kitaev}.

\medskip

\underline{Class DIII}, time reversal symmetry.---
This symmetry is equivalent to the pair of constraints
\[
\sigma_yK^*\sigma_y=K,\quad \sigma_y\Delta^*\sigma_y=\Delta
\]
on the blocks of the BdG Hamiltonian. In general, \(K\)
is characterized by three independent blocks in spin space: the complex Hermitian matrices  \(K_{\uparrow\uparrow}\) and \(K_{\downarrow\downarrow}\) 
and the complex matrix 
 \(K_{\downarrow\uparrow}\). 
 Due to time reversal symmetry, \(K\) takes the more constrained form
 \[
 K=
 \begin{bmatrix}
 K_{\uparrow\uparrow} &  K_{\uparrow\downarrow}\\
  -K_{\uparrow\downarrow}^*& K_{\uparrow\uparrow}^*
 \end{bmatrix},\quad K_{\uparrow\uparrow}^\dagger=K_{\uparrow\uparrow},\quad K_{\uparrow\downarrow}^{\rm T}=-K_{\uparrow\downarrow}.
  \]
A similar analysis of the pairing block yields
\[
\Delta =\begin{bmatrix} 
\Delta_{\uparrow\uparrow} & \Delta_{\uparrow\downarrow}\\
-\Delta_{\uparrow\downarrow}^* & \Delta_{\uparrow\uparrow}^*\end{bmatrix},
\quad \Delta_{\uparrow\uparrow}^{\rm T}=-\Delta_{\uparrow\uparrow},\quad \Delta_{\uparrow\downarrow}^\dagger=\Delta_{\uparrow\downarrow}.
\]
The canonical form of a BdG Hamiltonian in this class, call it \(H_{\rm DIII}\), is obtained by way of the unitary transformation 
\[
U_{\rm DIII}^\dagger = \frac{1}{\sqrt{2}}\begin{bmatrix}
0 & -i & -i & 0 \\
1 & 0 & 0 & -1 \\
i & 0 & 0 & i \\
0 & -1 & 1 & 0
\end{bmatrix} ,
\]
acting on the combined Nambu and spin space \({\cal H}_\tau\otimes {\cal H}_\sigma\).
One finds that 
\begin{align}
\label{repDIII}
H_{\rm DIII} = U_{\rm DIII} H U_{\rm DIII}^\dagger=
\begin{bmatrix}
0& A^\dagger\\
A &0\\
\end{bmatrix}, 
\end{align}
in terms of 
\begin{align*}
A \equiv
\begin{bmatrix}
iK_{\uparrow\downarrow} - \Delta_{\uparrow\uparrow}  & K_{\uparrow\uparrow} - i\Delta_{\uparrow\downarrow} \\
 - K_{\uparrow\uparrow}^* + i\Delta_{\uparrow\downarrow}^* &  -iK_{\uparrow\downarrow}^* + \Delta_{\uparrow\uparrow}^* \\
\end{bmatrix}= - A^{\rm T}.
\end{align*}
The matrix \(A\) is a complex and antisymmetric matrix of even order.
\medskip

\underline{Class C}, spin rotation symmetry.--- 
This symmetry condition amounts to requiring that the BdG Hamiltonian should commute with the matrices \(\tau_z\sigma_x\), \(\sigma_y\) and \(\tau_z\sigma_z\). It follows that 
\[K=\begin{bmatrix}
K_{\uparrow\uparrow}& 0\\
0& K_{\uparrow\uparrow}\\
\end{bmatrix}, \quad K_{\uparrow\uparrow}^\dagger=K_{\uparrow\uparrow},
\]
and 
\[
\Delta=\begin{bmatrix}
0& \Delta_{\uparrow\downarrow}\\
-\Delta_{\uparrow\downarrow}& 0\\
\end{bmatrix},
\quad \Delta_{\uparrow\downarrow}^{\rm T}=\Delta_{\uparrow\downarrow}.
\]
The normal form of a BdG Hamiltonians in this class is obtained by way of the unitary transformation
\[
U_{\rm C}=\begin{bmatrix}
1&0&0&0\\
0&0&0&1\\
0&1&0&0\\
0&0&-1&0\\
\end{bmatrix} ,
\]
acting on the combined Nambu and spin space and mixing these two labels. One finds that
\begin{align}
\label{repHC}
H_{\rm C}=U_{\rm C}HU_{\rm C}^\dagger=
\begin{bmatrix}
iA& 0\\
0& iA\\
\end{bmatrix} , 
\end{align}
in terms of 
\begin{align*}
 iA\equiv 
\begin{bmatrix}
K_{\uparrow\uparrow} & \Delta_{\uparrow\downarrow} \\
-\Delta_{\uparrow\downarrow}^* & -K_{\uparrow\uparrow}^*\\
\end{bmatrix}=(iA)^\dagger.
\end{align*}
Now one can check that 
\[
\begin{bmatrix}0& -i\\i&0\end{bmatrix}A^*
\begin{bmatrix}0&-i\\i&0\end{bmatrix}=-A.
\]
Hence, \(A\) is an anti-Hermitian and quaternionic matrix.
\medskip

\underline{Class CI}, spin rotation and time reversal symmetry.--- The additional requirement, as compared to class C, of time reversal
symmetry forces the matrices \(K_{11}\) and \(\Delta_{21}\) above to be real Hermitian and real symmetric respectively.
As a consequence, it becomes possible to put \(H_{11}\) in block off-diagonal form. 
The normal form \(H_{\rm CI}\) of the BdG Hamiltonians in this class
is obtained by way of the unitary transformation
\[
U_{\rm CI}=\frac{1}{\sqrt{2}}
\begin{bmatrix}
1 &-i&0&0\\
-i& 1&0&0\\
0&0&1 &-i\\
0&0&-i &1\\
\end{bmatrix}
U_{\rm C}.
\]
Letting \(H_{\rm CI}=U_{\rm CI}HU_{\rm CI}^\dagger\), one finds that 
\begin{align}
\label{repHCI}
H_{\rm CI}=
\begin{bmatrix}
0& A^\dagger&0&0\\
A&0&0&0\\
0& 0& 0&A^\dagger\\
0&0& A&0\\
\end{bmatrix} ,
\end{align}
in terms of 
\begin{align*}
A\equiv \Delta_{\uparrow\downarrow}-iK_{\uparrow\uparrow}=A^{\rm T}.
\end{align*}
 The matrix \(A\) is a generic complex and symmetric matrix.
\medskip

The remaining six classes all share particle number symmetry. Any unitary symmetry, e.g. spin rotation symmetry, is associated to a block diagonalization of the BdG Hamiltonian by a suitable change of basis. However, particle number is special. The reason is that for particle number conserving free-fermion systems, 
one can diagonalize the many-body Hamiltonian either by diagonalizing the BdG Hamiltonian or its upper left block \(K\) and then constructing the many-body ground state. For particle number symmetric systems, the common practice is to shift the focus from the BdG Hamiltonian to its upper left block. Hence, in the following we will relabel \(K\) as \(H\) and call \(H\) the SP, as opposed to BdG, Hamiltonian. The context should make clear which interpretation of \(H\) is appropriate. 

\underline{Class A}, particle number symmetry.--- 
As explained above, this class is characterized by the SP Hamiltonian  which, as a matrix, is just a generic complex Hermitian matrix, that is,
\begin{align}
\label{repHA}
H_{\rm A} =  H_{\rm A}^\dagger.
\end{align}
For this class, we set $A = H_A$. 
As it should be clear by now, the fact that the BdG/SP 
Hamiltonians are Hermitian matrices is an integral part of the AZ symmetry analysis, every bit as important as the many-body symmetries. The reason we point it explicitly for class A is because it is the only constraint on this class.

\medskip

\underline{Class AI}, particle number, spin rotation, and time reversal symmetry.---
By combining the analysis for classes C and DIII and setting the pairing to vanish (due to particle number symmetry) one concludes that the SP Hamiltonians of this class are block-diagonal. That is, 
\begin{align}
    \label{repHAI}
H_{\rm AI}=\begin{bmatrix}
K_{\uparrow\uparrow}&0\\
0&K_{\uparrow\uparrow}\\
\end{bmatrix}, \quad K_{\uparrow\uparrow}^*=K_{\uparrow\uparrow}=K_{\uparrow\uparrow}^\dagger,
\end{align}
and so the blocks $A = K_{\uparrow\uparrow}$ are generic real Hermitian matrices.
\medskip

\underline{Class AII}, particle number and time reversal symmetry.--- The SP Hamiltonian satisfies the additional, as compared to class A, condition  
\begin{align}
\label{repHAII}    
H_{\rm AII}=\sigma^y{H_{\rm AII}}^*\sigma^y,
\end{align}
because of time reversal symmetry.  Hence, this class consists of quaternionic Hermitian matrices, and we set $A = H_{\rm AII}$.
\medskip

\underline{Class AIII}, particle number and sublattice symmetry.--- The SP Hamiltonian is block off-diagonal, that is,
\begin{align}
\label{repHAIII}
H_{\rm AIII} = \begin{bmatrix} 0 & A^\dagger \\ A & 0\end{bmatrix},
\end{align}
and the blocks are generic complex square matrices.
\medskip

\underline{Class BDI}, 
particle number, spin rotation, time reversal, and sublattice symmetry.---
We are adding sublattice symmetry to AI. Hence,
\(K_{\uparrow\uparrow}\) is a real Hermitian matrix. In addition, it is block off-diagonal due to the sublattice symmetry. In all,
\begin{align}
\label{repHBDI}
H_{\rm BDI}
=\begin{bmatrix}
0& A^\dagger&0&0\\
A& 0&0&0\\
0&0&0&A^\dagger\\
0& 0& A&0\\
\end{bmatrix}, \quad A^*=A,
\end{align}
and so the off-diagonal blocks of the diagonal blocks are generic real matrices. 

\medskip

\underline{Class CII},
particle number, time reversal, and sublattice symmetry.---
The SP Hamiltonian is again block off-diagonal and the blocks are generic quaternionic square matrices due to time reversal. As an equation, 
\begin{align}
\label{repHCII}
H_{\rm CII}=\begin{bmatrix} 0 & A^\dagger \\ A & 0\end{bmatrix},\quad \sigma_yA^*\sigma_y=A.
\end{align}
\medskip

In summary, the BdG or SP Hamiltonians of any of the ten AZ symmetry classes are parameterized by various kinds of special matrices. Table \ref{table:classes} summarizes our work so far. Whether or not a given Hamiltonian enjoys additional symmetries is besides the point. In this respect, the approach we follow here is different from the tenfold way as described in, say, Ref.\,\cite{Ryu10}. Within their framework, the Hamiltonians are classified based on on-site commuting antilinear and anti-commuting symmetries at the SP level.  
If the Hamiltonian under consideration commute with some unitary matrix, then it is block-diagonalized and it is the blocks that are classified. 
In this sense, a time reversal symmetry of a block 
need not be associated to the physical
time reversal operation of spinful fermions. Rather, it can be any antiunitary operation that 
commutes with the  Hamiltonian (block), irrespective of its origin. The same is true about the other symmetries.
In any case, our results rely only on the structure
of the BdG Hamiltonian and therefore  
also hold true within the framework of Ref.\,\cite{Ryu10}.

\begin{table}[t]
\begin{center}
\begin{tabular}{|l|c |  l | }
\hline
{\bf Symmetries} & {\bf Class} & {\bf BdG Hamiltonian}
\\
\hline
none & \textbf{D} & 
$i\times$  real and antisymmetric  of even order\\
\({\cal T}\) & \textbf{DIII} & block off-diagonal, complex antisymmetric \\
& & blocks of even order\\
\hline
\hline
 & & {\bf Diagonal blocks of the BdG Hamiltonian}\\
 \hline
\(SU(2)\) & C &   
$i\times$  quaternionic and anti-Hermitian  \\
\(SU(2)\), \({\cal T}\) & CI & block
off-diagonal, complex symmetric blocks\\
\hline
 \hline
 & & {\bf SP Hamiltonian}\\
 \hline
 \(U(1)\) & A & complex Hermitian\\
 \(U(1)\), \({\cal T}\) & AII & quaternionic Hermitian\\
 \(U(1)\), \({\cal V}\) & \textbf{AIII} & block off-diagonal, complex blocks\\
 \(U(1)\), \({\cal T}\), \({\cal V}\) & \textbf{CII} & block off-diagonal, quaternionic blocks\\
 \hline 
 \hline
 & & {\bf Diagonal blocks of the SP Hamiltonian} \\
 \hline
 \(U(1)\), \(SU(2)\), \({\cal T}\) & AI & real symmetric \\
 \(U(1)\), \(SU(2)\), \({\cal T}\), \({\cal V}\)&
 \textbf{BDI}& block off-diagonal, real blocks\\
 \hline
\end{tabular}
\caption{The symmetries of the AZ classification impart some very specific structure to the BdG Hamiltonian. The unitary symmetries lead to up to two block-diagonalizations. For the particle number symmetry, the upper left block of the BdG Hamiltonian coincides with the SP Hamiltonian. The blocks associated to the spin rotation symmetry do not have a special name. The classes in bold font are the topologically non-trivial classes in one dimension.}
\label{table:classes}
\end{center}
\end{table}

\subsubsection{The tenfold way classification}
\label{fyibulkinvs}

The framework of the tenfold way 
enlarges the set of symmetries of the AZ classification by including translation symmetry. As a consequence, a symmetry class of the tenfold way is labeled by the AZ label and the dimension of space, that is, the number of independent generators of translations. A given, fully-gapped BdG Hamiltonian in a symmetry class of the tenfold way (characterized by an AZ label plus dimensionality) can be placed
in a topological phase by computing the appropriate bulk topological invariant associated to the class (if non-trivial).
A Hamiltonian in a non-trivial symmetry class is said to have  {\em SPT} order if it cannot be connected adiabatically to a trivial, gapped Hamiltonian without explicitly
breaking the symmetries of the class or closing the gap. 
In the following we quote from the literature explicit formulae for the topological (bulk) invariants for the five non-trivial classes in 1D \cite{Chiu16}. The Fermi energy level is assumed to be at zero for all SP Hamiltonians.

The translation-invariant  Hamiltonians of interest are of the form
\begin{eqnarray}
\label{Hamtransinv}
\widehat{H}
= \! \sum_{r,j\in\mathbb{Z}} \!\Big[\hat{\Psi}^\dagger_{j}K_{r}\hat{\Psi}^{\;}_{j+r}
+\frac{1}{2}(\hat{\Psi}^\dagger_{j}\Delta_{r}\hat{\Psi}_{j+r}^\dagger +\text{H.c.})\Big],\quad
\end{eqnarray}
where the integers $j (r)$ represent points (displacements) of the 1D Bravais lattice (isomorphic to) \(\mathbb{Z}\) and $K_r, \Delta_r$ are $d_{\rm int} \times d_{\rm int}$ matrices. We assume throughout
this paper that the summation over \(r\) is finite, that is, we consider only finite range models with $|r| \le R$. In the context of this paper, the infinite Bravais lattice is preferable to periodic BCs because then the crystal momentum is a continuous variable \cite{PRB1}.

For defining the Bloch Hamiltonian and the bulk topological invariants, we need to 
describe a useful tensor factorization of $\mathcal{H}_{\rm BdG}$. In Sec.\,\ref{sec:bdgtomatrix},
we described how operators in $\mathcal{H}_{\rm BdG}$ can be represented as $2Nd_{\rm int}$-dimensional 
complex vectors. In the bra-ket notation, we may adopt the representation
\begin{equation*}
    c_{jm}^\dagger \mapsto \ket{j,\tau = 0, m}, \quad c_{jm} \mapsto \ket{j,\tau = 1,m},
\end{equation*}
where $\tau$ is the Nambu index. In this notation, we can write
\begin{equation*}
    \mathcal{H}_{\rm BdG} = \text{Span}\,\{\ket{j,\tau,m}\}.
\end{equation*}
We may also define the SP Hilbert space to be 
\begin{equation*}
    \mathcal{H}_{\rm SP} = \text{Span}\,\{\ket{j,\tau=0,m}\}.
\end{equation*}
Now it is straightforward to tensor factorize the BdG space as
$\mathcal{H}_{\rm BdG} \cong {\mathcal{H}}_N  \otimes {\mathcal{H}}_{\rm int}$~\cite{PRB1,PRB2} using the mapping
\begin{equation*}
    \ket{j,\tau, m} \mapsto \ket{j}\ket{\tau,m}.
\end{equation*}

Now let 
\[h_r\equiv \begin{bmatrix}
 K_r & \Delta_r\\
 -\Delta_{r}^* & -K_{r}^*
\end{bmatrix}=h_{-r}^\dagger, \quad -R \le r \le R.
\]
Then, we can express the BdG Hamiltonian as~\cite{PRL,PRB1} 
\begin{eqnarray}
\label{boldH}
H=\mathbb{1}\otimes h_0+\sum_{r=1}^R(V^r \otimes h_r+V^{\dagger\, r}\otimes  h_r^\dagger). 
\end{eqnarray}
Here $V$ and $V^\dagger$ denote the left and right shift operators on the infinite lattice,
and hence both are unitary operators.

The simultaneous eigenstates of the translations \(V,V^\dagger\) are 
\[|k\rangle =\sum_{j}e^{ikj}|j\rangle,\quad k\in[-\pi,\pi)\]
For \(|\psi\rangle=|k\rangle|\psi_{\rm int}\rangle\), one finds that 
\(H|\psi\rangle=|k\rangle H_k|\psi_{\rm int}\rangle.\)
The matrix-valued mapping of the unit circle (the Brillouin zone) 
\[H_k=\begin{bmatrix}
K_k & \Delta_k\\
-\Delta_{-k}^* & -K_{-k}^*
\end{bmatrix}
=h_0+\sum_{r=1}^R (e^{ikr}h_r+e^{-ikr}h_r^\dagger)\]
 is the Bloch Hamiltonian of size \(2d_{\rm int} \times 2d_{\rm int}\). 
We can similarly define the symbol $A_k$ of the operator $A = \mathbb{1}\otimes a_0+\sum_{r=1}^R(V^r \otimes a_r+V^{\dagger\, r}\otimes  a_r^\dagger)$,
where $A$ is the block of $H$ as defined in Sec.\,\ref{AZblocks},
and each $a_r \in \mathbb{F}_d$ where 
$\mathbb{F} \in \{\mathbb{R},\mathbb{C},\mathbb{H}\}$ depending on 
the AZ symmetry class.
Note again that the relation of $A$ to $H$ depends on the 
AZ symmetry class of $H$.

For classes AIII and CII, the topological invariant is the winding number of
$\det A_k$, given by
\begin{equation}
\label{invariantchiral}
Q^B(H) \equiv \frac{1}{2\pi i}\int_{k=-\pi}^{\pi} dk \frac{d}{dk}\log \det A_k,
\end{equation}
whereas for class BDI, the invariant is defined to be 
\begin{equation}
\label{invariantchiral2}
Q^B(H) \equiv \frac{1}{\pi i}\int_{k=-\pi}^{\pi} dk \frac{d}{dk}\log \det A_k.
\end{equation}
The factor of $2$ in Eq.\,\eqref{invariantchiral2} accounts for identical spin $\uparrow$ and $\downarrow$ blocks
of the Hamiltonian. Consequently, the invariant for class BDI is always even. We will later see that the bulk invariant 
for class CII is also always even, although this is not reflected
immediately in the formula for the invariant. The even-valuedness of the invariants of classes BDI and CII can be attributed
to the Kramers' degeneracy due to time-reversal symmetry. 
For class D, the bulk invariant is \cite{Kitaev}
\begin{equation}
\label{invariantD}
Q^B(H) \equiv \text{sgn} \left[ \frac{\text{Pf}(A_{k=0})}{\text{Pf}(A_{k=\pi})}\right],
\end{equation}
where $\text{Pf}$ stands for the ``Pfaffian'' \cite{Pfaffian} of an antisymmetric matrix. Likewise, for class DIII, it is \cite{Zhang10}
\begin{equation}
\label{invariantDIII}
Q^B(H) \equiv \left[ \frac{\text{Pf}(A_{k=0})}{\text{Pf}(A_{k=\pi})}
\right]
\exp\left(-\frac{1}{2}\int_{k=0}^{\pi}dk \frac{d}{dk}\log \det A_{k}\right),
\end{equation}
These bulk invariants label the distinct topological phases in each of the five non-trivial AZ symmetry classes in one dimension.

\subsection{The bulk-boundary correspondence}

One of the hallmark predictions associated to the tenfold way classification is the bulk-boundary correspondence. Loosely speaking, it states that a system of independent fermions in an SPT phase should display protected boundary modes. To discuss the bulk-boundary correspondence quantitatively, one needs to 
consider systems terminated on one edge.

\subsubsection{Boundary conditions}
\label{bcs}

Let us consider a half-infinite strip terminated on the left edge.
The sites are labeled by the non-negative integers \(j=0,1,2,\dots,\infty\). 
In this semi-open presentation of the system,
the translation symmetry is broken by the termination only.
We say that the system is subject to \textit{semi-open BCs}. The information about the physical termination of the system can be encoded in a set of shift operators different from the unitary shifts 
\(V, V^\dagger\). 
Let \(T \equiv \sum_{j=0}^\infty |j\rangle\langle j+1|\), so that 
\[
TT^\dagger=\mathbb{1},\quad T^\dagger T=\mathbb{1}-|0\rangle\langle 0|.
\] 
The spectrum of these shift operators consists of the closed unit disk. There are no eigenvectors of \(T^\dagger\). By contrast, the vectors \[
|z\rangle\equiv \sum_{j=0}^\infty z^j|j\rangle,\quad |z|<1,
\]
are eigenvectors of \(T\) since \(T|z\rangle=z|z\rangle.\) 

Next let us consider the Hamiltonian of Eq.\,\eqref{boldH} and replace the invertible shift operators \(V,V^\dagger\)
associated to periodic BCs by the shift operators \(T\) and \(T^\dagger\). Then, the BdG Hamiltonian
\begin{eqnarray}
\widetilde{H}=\mathbb{1}
\otimes h_0+\sum_{r=1}^R(T^r\otimes h_r+T^{\dagger\, r}\otimes  h_r^\dagger)
\end{eqnarray}
models the same system as Eq.\,\eqref{boldH} but with translation symmetry broken by the termination. In this context, \(j=0\) labels the first site of the strip which appears to its left. The BCs are open (also, ``hard-wall") or, more precisely, semi-open since the strip continues to infinity to the right. 

In order to model surface relaxation or reconstruction and/or surface disorder within the mean-field approximation, we allow for BCs other than semi open. Mathematically, it amounts to adding a modification to the BdG Hamiltonian \(\widetilde{H}\) so that 
it becomes
\begin{equation}\label{HBC}
H_{\rm tot}
=\widetilde{H}+W.
\end{equation} 
Bulk disorder can be also be modeled in terms of an additive modification. What makes $W$ a BC is the requirement that it should be a {\em compact  operator} that descends from a QFH \(\widehat{W}\).  
A compact operator can always be described as the limit of a sequence, convergent in the operator norm, of operators acting non-trivially on a finite number of sites only. 
Hence, if \(W\) acts non-trivially away from the edge, it does so in a quickly decreasing fashion as a function of distance from the boundary.
Practically, this assumptions is valid if the length $L$ of the 1D system under consideration is much longer than the length-scale $l$ of $W$ (Fig.\,\ref{fig:boundarydisorder}). 
Note that $l$ is not required to be smaller than or comparable to the penetration depth $\lambda$ of the ZMs (which we will discuss in the next section), or the range $R$ of hopping and pairing that determines the boundary size.  
Under these conditions, the corresponding SP Hamiltonians \({H}_{\rm tot}=
\widetilde{H}+W\) are self-adjoint elements of {\it the block-Toeplitz algebra}, defined as the algebra generated by the shifts \(T, T^\dagger\) with matrix coefficients and closed
in the operator norm. 

\begin{figure}[t]
\begin{center}
\includegraphics[width = 8cm]{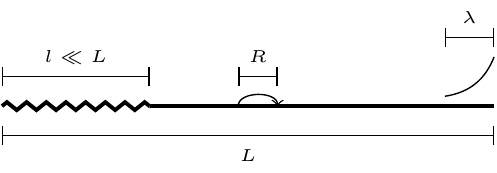}
\caption{The comparison of the length scales of the system, boundary disorder, the ZMs and the range of hopping and pairing. The thick line denotes 
the 1D lattice system under consideration and the zigzag line denotes the 
disorder. The curved arrow shows the farthest hopping on the 
lattice and the curve on the right end denotes the wave function of
the ZM with the largest penetration depth.}\label{fig:boundarydisorder}
\end{center}
\end{figure}

\subsubsection{Boundary invariants in one dimension}

In quantitative terms, the bulk-boundary correspondence states that the bulk topological invariants of the tenfold way classification (see
for example Sec.\,\ref{fyibulkinvs}) match certain ``boundary invariants" associated to terminated systems. 
Taking inspiration from Ref.\,\cite{Prodan}, 
we call an integer-valued quantity a \textit{boundary invariant}{} if it depends only on the eigenvectors of the Hamiltonian associated to  localized zero-energy states
and is provably invariant under compact perturbations obeying the on-site symmetries of the appropriate tenfold way class.
This definition is consistent with the definition of Ref.\,\cite{Prodan} in the case of class AIII in 1D, and also with the definitions in Ref.\,\cite{alldridge}.
Based on the results in the literature, we identify  
\[
Q^\partial(\widetilde{H}+W) \equiv \mathcal{N}_{\nu=0} - \mathcal{N}_{\nu=1} 
\]
to be the boundary invariant for classes AIII, BDI and CII,
where $\mathcal{N}_\pm$ refers to the number of 
zero energy edge states with chirality (eigenvalue of chiral operator) $\pm 1$. 
For classes D and DIII, the boundary invariants are instead identified to be
\[
Q^\partial(\widetilde{H}+W) \equiv (-1)^{\mathcal{N}} \quad \text{and}
\quad Q^\partial(\widetilde{H}+W) \equiv (-1)^{\mathcal{N}/2} ,
\] 
respectively, where $\mathcal{N} = \dim \Ker (\widetilde{H}+W) $ is the number of zero-energy edge states. 
We later use these explicit formulae for boundary invariants to
prove their equality with the bulk invariants
in each of the five non-trivial AZ symmetry classes in 1D.

\subsection{A functional calculus for banded block-Toeplitz operators}
\label{whbasics}

A {\em banded block-Toeplitz} (BBT) operator is an operator of the form
\[ \widetilde{A}(T,T^\dagger)\equiv \sum_{r=p}^q T^\r \otimes a_\r , \] 
with the understanding that 
\(T^{r}=(T^\dagger)^{|r|}\) if \(r<0\) and \(a_r\in\mathbb{C}_d\),
the algebra of complex \(d\times d\) matrices. The lighter notation \(\widetilde{A}\) will be favored when possible. As we explained in Sec.\,\ref{bcs}, the BdG Hamiltonian of a clean system of finite range and subject to semi-open BCs is an example of a BBT operator. The fundamental challenge posed by these systems, as compared to translation-symmetric systems, stems from the fact that it is not possible to diagonalize the shift operators simultaneously or even individually. Nonetheless, there is a powerful functional calculus for BBT operators based on the fact that the shift operators are the one-sided inverse of each other.

\subsubsection{The symbol}
\label{func}
A \textit{ matrix Laurent polynomial}
is a function on the Riemann sphere \(z\in \mathbb{C}\cup\{\infty\}\) of the form  $A(z,z^{-1}) = \sum_{r=p}^q z^\r a_\r$,
with \(p\leq q\) finite integers and coefficients \(a_r\in\mathbb{C}_d\). On the Riemann sphere, see Fig.\,\ref{graphicWH}, the variables \(z,z^{-1}\) are equivalent, meaning that the exchange \(z\leftrightarrow z^{-1}\) induces an automorphism of the algebra of matrix Laurent polynomials. Notice that 
\(z=\infty, 0\) precisely when \(z^{-1}=0,\infty\).
The notation \(A(z,z^{-1})\) is slightly redundant but useful. 

The mapping 
\[
\widetilde{A}(T,T^\dagger)\equiv \sum_{r=p}^q T^\r \otimes a_\r\mapsto s(\widetilde{A})(z,z^{-1})\equiv \sum_{r=p}^q a_rz^r
\]
of BBT operators to matrix Laurent polynomials
identifies the two sets. The restriction of the of \(s(\widetilde{A})\) to the unit circle, effectively a matrix Fourier sum, is the \textit{symbol} of the BBT operator. We will abuse the language slightly and call \(A(z,z^{-1})=s(\widetilde{A})(z,z^{-1})\) the symbol of \(\widetilde{A}\) even if we think of it as a function on the Riemann sphere. There is no danger in doing so because we focus on banded BBT operators with polynomial symbols.

 For \(A(z,z^{-1})=\sum_{r=p}^q a_rz^r\) as usual, let
\begin{align*}
A^\dagger(z,z^{-1})\equiv\sum_{r=p}^qz^{-r}a_r^\dagger,\quad
A^{\rm T}(z,z^{-1})\equiv\sum_{r=p}^qz^{-r}a_r^{\rm T}.
\end{align*}
These operations are defined so that 
\[
s(\widetilde{A}^\dagger)=s(\widetilde{A})^\dagger,\quad 
s(\widetilde{A}^{\rm T})=s(\widetilde{A})^{\rm T}.
\]
The combination
\[
A^*(z,z^{-1})=A^{\dagger\,{\rm T}}(z,z^{-1})=A^{{\rm T}
\, \dagger}(z,z^{-1})\equiv \sum_{r=p}^qz^{r}a_r^*
\]
is important as well. The relation \(s(\widetilde{A}^*)=s(\widetilde{A})^* \) reflects the fact that the matrix elements of the shift operators
in the canonical basis \(|j\rangle|m\rangle\) are real. Notice that, in general, \(A^*(z,z^{-1})\neq A(z,z^{-1})^*\) where, on the right-hand side, the star denotes the complex conjugation of numbers. 

The symbol map is linear but not fully multiplicative. One should expect so, because the Laurent polynomials close an algebra, but the BBT operators do not. The product of two BBT operators is not, in general, another BBT operator because \(\mathbb{1}- TT^\dagger=|0\rangle\langle 0|\). Nonetheless, 
the symbol is multiplicative in a restricted sense. 
Let \(\widetilde{P}(T)=\sum_{r=0}^q T^r\otimes p_r\) and
\(\widetilde{Q}(T^\dagger)=\sum_{r=p}^0 T^r\otimes q_r\)
denote upper and lower triangular 
BBT operators respectively. Then the product
\[
\widetilde{P}(T)\widetilde{A}(T,T^\dagger)\widetilde{Q}(T^\dagger)=\widetilde{B}(T,T^\dagger)
\]
is again a BBT operator and, moreover,
\[P(z)A(z,z^{-1})Q(z^{-1})=B(z,z^{-1}).\]
That is, the symbol of the product, for these kind of products, is equal to the product of the symbols.

\subsubsection{The standard Wiener-Hopf factorization of matrix functions 
and block-Toeplitz operators}

Is there a particularly useful factorization of 
a matrix Laurent polynomial consistent with the functional calculus of Toeplitz operators induced by the symbol map? Not in general. However, for a subclass of symbols, there is indeed one that stands out: the Wiener-Hopf factorization.
There is an associated factorization of the corresponding subclass of BBT operators (the Fredholm BBT operators, as it turns out),
which we also refer to as the WH factorization of BBT operators~\cite{GohbergBook2, Gohberg03}.

\begin{figure}
\begin{center}
\includegraphics[width=.7\columnwidth]{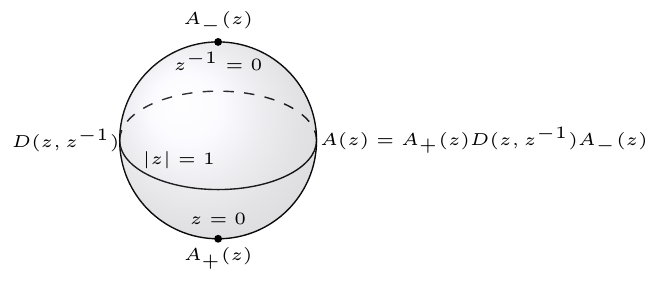}
\end{center}
\vspace*{-5mm}
\caption{The Brillouin zone can be embedded in the Riemann sphere, and the analytic continuation of 
the Bloch Hamiltonian of a fully gapped system defines a matrix Laurent polynomial \(A(z)\) invertible on the unit circle. 
The side factors \(A_+(z), A_-(z)\) in the WH factorization of \(A(z)\) must be invertible in the closed disks (or, equivalently, half spheres) bounded by the Brillouin zone and including \(z=0, z^{-1}=0\) respectively.  }
\label{graphicWH}
\end{figure}

\begin{defn}[Wiener-Hopf factorization]
\label{def:wh}
Let $A(z,z^{-1})$ denote a matrix Laurent polynomial.
A WH factorization of \(A\) is a factorization  of the form
\begin{equation}
\label{wh}
A(z,z^{-1}) = A_+(z) D(z,z^{-1}) A_-(z^{-1}) ,
\end{equation}
which obeys the following additional properties:
\begin{enumerate}
\item
$A_+(z) = \sum_{r= 0}^q  a_{+,r} {z}^\r$ is  invertible for all $|z|\le 1$. 
\item
$A_-(z^{-1}) = \sum_{r=p}^0  a_{-,r} {z}^{-|\r|}$ is invertible for all $|z^{-1}|\leq 1$. 
\item
$D$, the normal form of \(A\), is a diagonal matrix function of the form
$D(z,z^{-1})= \sum_{m=1}^{d}z^{\ind_m}|m\rangle\langle m|$. The set of integers
$\{\ind_m\, |\,  m=1,\dots,d\}$ is ordered so that $\ind_1 \ge \ind_2 \ge \dots \ge \ind_d$.
\end{enumerate}
\end{defn}
\noindent
Figure\,\ref{graphicWH} summarizes some key aspects of the WH factorization. Here, we use the letter $A$ to denote the matrix
Laurent polynomial being factorized. We had previously used the symbol $A$ to denote the blocks of the BdG/SP Hamiltonian 
in a suitable basis depending on its symmetry class. This use is deliberate, as our later analysis will rest on the WH factorization of
the block $A$ of the Hamiltonian. Note that WH factorization restricted to
matrix Laurent polynomials is sometimes referred to as
Birkhoff factorization after its inventor.
We choose to use the name WH factorization 
to make better connection with many associated results in 
the modern literature.

The following well-known proposition~\cite{GohbergBook2, Gohberg03} 
provides hints to why gapped nature of the Hamiltonian is central to our WH factorization-based 
approach to bulk-boundary correspondence. 
\begin{prop}
\label{proof:wh}
The Laurent polynomial \(A(z,z^{-1})\) admits a WH factorization if and only if it is invertible on the unit circle, that is, provided that \(\det A(z,z^{-1})\neq 0\) for all \(z\) with \(|z|=1\).
\end{prop}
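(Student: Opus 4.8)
The statement is an equivalence whose two directions are of very different character. The plan for necessity (the ``only if'') is an immediate computation. Suppose $A$ admits a factorization $A(z,z^{-1})=A_+(z)\,D(z,z^{-1})\,A_-(z^{-1})$ as in Definition~\ref{def:wh}. On the unit circle $|z|=1$ the factor $A_+(z)$ is invertible by property~1 (which holds on the whole closed disk, hence on its boundary), $A_-(z^{-1})$ is invertible by property~2 (since $|z|=1$ forces $|z^{-1}|=1\le 1$), and the diagonal factor $D$ is invertible because each entry $z^{\kappa_m}$ has modulus $1$ and so never vanishes on the circle. A product of invertible matrices is invertible, so $\det A(z,z^{-1})\neq 0$ for all $|z|=1$.

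The substance of the proposition is sufficiency (the ``if''), which is the classical Birkhoff factorization theorem. First I would settle the scalar case $d=1$. Writing $a(z,z^{-1})=z^{p}P(z)$ with $P$ an ordinary polynomial, I factor $P$ over its roots; by hypothesis none lie on $|z|=1$, so they split cleanly into those inside the disk and those outside. Each outside root $\zeta$ (with $|\zeta|>1$) contributes a factor $(z-\zeta)$, nonvanishing on $|z|\le 1$, which is assigned to $a_+$; each inside root $\zeta$ (with $|\zeta|<1$) is rewritten as $(z-\zeta)=z\,(1-\zeta z^{-1})$, whose factor $(1-\zeta z^{-1})$ is nonvanishing on $|z^{-1}|\le 1$ and is assigned to $a_-$, the leftover $z$ being collected into the diagonal power. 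The single index is then $\kappa=p+(\text{number of inside roots})$, which one recognizes as the winding number of $a$ on the circle, i.e.\ the integral of Eq.~\eqref{invariantchiral} with $A_k$ replaced by $a$.

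For general $d$ I would argue by induction on the matrix size, peeling off one partial index at a time. Multiplying by $z^{-p}\mathbb{1}$ reduces the problem to a genuine matrix polynomial (this merely shifts every index by $-p$ and is undone at the end). I then isolate the smallest partial index by choosing, among all column vector polynomials $u(z)$, one of least degree for which $A(z)u(z)$ already has the analytic structure of a ``$-$''-factor times a pure power of $z$. This $u$ supplies one column of a polynomial (hence closed-disk-invertible, after normalization) transformation that splits off a one-dimensional factor carrying a single $z^{\kappa}$, reducing the problem to a $(d-1)\times(d-1)$ Laurent polynomial that is still invertible on the circle, to which the inductive hypothesis applies. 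Finally, a constant permutation orders the indices as $\kappa_1\ge\cdots\ge\kappa_d$.

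The main obstacle is precisely this peeling step: one must show that the reducing transformations can be chosen invertible in the correct closed half-sphere ($A_+$ on $|z|\le 1$, $A_-$ on $|z^{-1}|\le 1$), and that the extremal minimal-degree choice is always available. Conceptually this is the Birkhoff--Grothendieck splitting of the holomorphic vector bundle over the Riemann sphere whose clutching function along the equator $|z|=1$ is $A$: the bundle decomposes as a sum of line bundles $\bigoplus_m \mathcal{O}(\kappa_m)$, and the integers $\kappa_m$ are exactly the partial indices. The nonvanishing of $\det A$ on the circle is what makes $A$ a legitimate clutching function and keeps the relevant plus/minus splittings convergent and invertible; drop it and the indices cease to be well defined. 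Since the result is classical, I would either carry out the induction above in full or simply cite~\cite{GohbergBook2, Gohberg03}.
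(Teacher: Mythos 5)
Your ``only if'' direction and your scalar case are correct and complete, and they match what the paper leaves implicit. The divergence --- and the genuine gap --- is in the matrix case $d>1$. You propose induction on the dimension, peeling off one partial index at a time via a ``minimal-degree'' vector polynomial $u(z)$ for which $A(z)u(z)$ is a minus-factor times a pure power of $z$; but you never establish that such a $u$ exists, that it can be completed to a matrix polynomial invertible on the whole closed unit disk, or that the reduced $(d-1)\times(d-1)$ problem inherits the correct index bookkeeping. You name this ``the main obstacle'' yourself and then discharge it by citing \cite{GohbergBook2,Gohberg03}. Since for $d>1$ that peeling lemma \emph{is} the content of the proposition, your proposal as written is a viable plan plus a citation, not a self-contained proof --- legitimate in spirit (the paper itself calls the result well-known and cites the same sources in the main text), but short of the paper's standard, which supplies a full constructive argument.

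The paper's own proof in \ref{computeWHF} takes a structurally different induction: it keeps $d$ fixed and inducts on the number of zeros $z_\i$ of $\det A$ \emph{inside} the unit circle. At each step it locates a column of $A_+^{(\i)}$ that is linearly dependent on its predecessors at $z_\i$, divides out that single zero by an explicit elementary meromorphic factor $U_\i^{-1}$, pushes the extracted power of $z$ into the diagonal middle factor (with a transposition $\Pi_\i$ to keep the indices ordered), and compensates on the right by a factor $V_\i$ with $\det V_\i(z^{-1}) = 1 - z_\i z^{-1}$, invertible for $|z^{-1}|\le 1$ since $|z_\i|<1$. After finitely many steps every required invertibility holds by construction --- finite, checkable linear algebra in place of your unproven splitting lemma, at the price of obscuring the Birkhoff--Grothendieck geometry you describe (which the paper only gestures at via Fig.~\ref{graphicWH}). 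It is worth noting that your minimal-degree-solution idea is actually closest to the paper's \emph{second} derivation in \ref{app:whproof}: there the nested kernels of the Toeplitz family $\widetilde{A}_\kappa$, whose new directions ${\cal D}_\kappa$ are spanned by exactly such minimal solutions, are used to assemble $A_-$ directly. That route costs Fredholm and Hardy-space machinery but delivers the operator-level factorization simultaneously; if you want to complete your sketch honestly, making the existence and completion properties of your $u(z)$ precise would in effect reproduce that kernel analysis at the level of matrix polynomials.
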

\noindent Clearly, the Hamiltonian $H$ is gapped if and only if $A$ is invertible, and only then can it be subjected to WH factorization.

While the normal form $D$ of \(A\) is unique, the WH factorization itself is {\it not} unique in general because of the side factors. For future reference, we quote here without proof a proposition
that characterizes the degree of non-uniqueness of the WH factorization. For a proof, see Ref.\,\cite{gohberg2005convolution}.

\begin{prop}[Uniqueness of the partial indices \cite{Gohberg03}]
\label{lem:uniquepi}
If $A \in \C_d[z,z^{-1}]$ admits two WH factorizations $A=A_+DA_- = 
A_+'D'A_-'$, then $D = D'$, and
$A_+ = BA_+'$, where $B\in\C_d[z]$
is a unit polynomial matrix, such that
\begin{enumerate}
\item If $\ind_m<\ind_{m'}$, then $B_{mm'} = 0$.
\item If $\ind_m=\ind_{m'}$, then $B_{mm'}$ is a constant.
\item If $\ind_m>\ind_{m'}$, then $B_{mm'}$ is a polynomial in 
$z$ of degree $\ind_{m}-\ind_{m'}$.
\end{enumerate} 
\end{prop}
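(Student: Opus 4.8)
The plan is to compare the two factorizations by transferring all the ambiguity into a single matrix function. Starting from $A_+ D A_- = A_+' D' A_-'$, I would left-multiply by $(A_+')^{-1}$ and right-multiply by $(A_-')^{-1}$ to obtain $B\,D\,E = D'$, where $B := (A_+')^{-1}A_+$ and $E := A_-(A_-')^{-1}$. Since $A_+,A_+'$ are both analytic and invertible on $\{|z|\le 1\}$, so is $B$; likewise $E$ and $F := E^{-1} = A_-' A_-^{-1}$ are analytic and invertible on the closed exterior $\{|z^{-1}|\le 1\}$ including $z=\infty$. Writing the diagonal normal forms as $D=\sum_m z^{\ind_m}|m\rangle\langle m|$ and $D'=\sum_m z^{\ind_m'}|m\rangle\langle m|$, the relation $B\,D = D'\,F$ reads entrywise as $B_{mn} = z^{\ind_m'-\ind_n}\,F_{mn}$.

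The next step is an entrywise support analysis. An entry $B_{mn}$ of a matrix analytic on $\{|z|\le 1\}$ is a power series carrying only nonnegative powers of $z$, whereas $F_{mn}$, analytic on the closed exterior including infinity, carries only nonpositive powers. Hence the identity $B_{mn}=z^{\ind_m'-\ind_n}F_{mn}$ forces $B_{mn}$ to be supported on integer degrees in $[0,\ind_m'-\ind_n]$: in particular $B$ is a genuine \emph{polynomial} matrix, with $B_{mn}=0$ when $\ind_m'<\ind_n$, $B_{mn}$ constant when $\ind_m'=\ind_n$, and $\deg B_{mn}\le \ind_m'-\ind_n$ otherwise. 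Running the identical computation on $B^{-1}=A_+^{-1}A_+'$, which satisfies $B^{-1}D'=D\,E$ and hence $(B^{-1})_{mn}=z^{\ind_m-\ind_n'}E_{mn}$, shows that $B^{-1}$ is polynomial as well; therefore $\det B$ is a polynomial with polynomial inverse, i.e.\ a nonzero constant, so $B\in\C_d[z]$ is a unit polynomial matrix.

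The crux is to upgrade the winding-number equality $\sum_m\ind_m=\sum_m\ind_m'$ (immediate from invariance of $\det A$) to the full equality $D=D'$. For each $t\in\Z$ set $U_t=\{m:\ind_m\ge t\}$ and $U_t'=\{m:\ind_m'\ge t\}$; because the partial indices are sorted decreasingly these are initial segments, and $D=D'$ is equivalent to $|U_t|=|U_t'|$ for all $t$. The vanishing pattern of $B$ gives: if $n\in U_t$ and $B_{mn}\ne 0$, then $\ind_m'\ge\ind_n\ge t$, so $m\in U_t'$. Thus the $|U_t|$ columns of $B$ indexed by $U_t$ have all their nonzero entries confined to the $|U_t'|$ rows indexed by $U_t'$. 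Since $B$ is invertible over $\C(z)$ its columns are $\C(z)$-linearly independent, and $|U_t|$ independent vectors lying in a coordinate subspace of dimension $|U_t'|$ force $|U_t|\le|U_t'|$. Applying the symmetric argument to $B^{-1}$ (whose columns in $U_t'$ are supported in rows $U_t$) yields $|U_t'|\le|U_t|$, so $|U_t|=|U_t'|$ for every $t$ and $D=D'$.

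With $D=D'$ in hand one has $\ind_m'=\ind_m$, and the support statements of the second paragraph become precisely cases (1)–(3), the bound in (3) read off from $\deg B_{mm'}\le \ind_m-\ind_{m'}$. I expect the delicate point to be the third paragraph: the triangular support pattern of $B$ alone only yields $\sum\ind_m=\sum\ind_m'$, and it is essential to combine that pattern for \emph{both} $B$ and $B^{-1}$ — which in turn requires knowing that $B$ is a unit, established in the second paragraph — before the dimension count over $\C(z)$ pins down each individual partial index. Verifying that $B$ and $B^{-1}$ are simultaneously polynomial is therefore the step that carries the real weight.
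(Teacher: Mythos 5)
The paper itself quotes this proposition \emph{without proof}, deferring to Ref.~\cite{gohberg2005convolution}, so there is no in-paper argument to compare against; your proposal has to stand on its own, and it does. Your proof is correct and complete: passing to $B=(A_+')^{-1}A_+$, which is analytic and invertible on a neighborhood of $\{|z|\le 1\}$, and $F=A_-'A_-^{-1}$, analytic and invertible on a neighborhood of $\{|z^{-1}|\le 1\}$ including $z=\infty$, and then matching Laurent coefficients in $B_{mn}=z^{\ind_m'-\ind_n}F_{mn}$ on an annulus around the unit circle, is precisely the classical Liouville-type mechanism behind the cited result. Your counting-function argument --- comparing $|U_t|$ and $|U_t'|$ for every $t$ using the vanishing patterns of both $B$ and $B^{-1}$ --- is a clean way to pin down the full multiset of partial indices, correctly going beyond the winding-number identity $\sum_m\ind_m=\sum_m\ind_m'$, which by itself is insufficient.

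Two remarks. First, what you prove is $A_+=A_+'B$ (right multiplication), not the literal $A_+=BA_+'$ printed in the proposition, and your version is the correct one: the gauge freedom of the factorization acts as $A_+\mapsto A_+M$, $A_-\mapsto D^{-1}M^{-1}DA_-$, and the left multiplier $A_+(A_+')^{-1}=A_+'B(A_+')^{-1}$ need not be polynomial nor respect the triangular pattern. For instance, with $D=\mathrm{diag}(z,1)$, $A_+'=\begin{bmatrix}1&0\\1&1\end{bmatrix}$, $B=\begin{bmatrix}1&z\\0&1\end{bmatrix}$ (so $A_-'=\mathbb{1}$, $A_-=\begin{bmatrix}1&-1\\0&1\end{bmatrix}$ give two factorizations of the same $A$), one computes $A_+(A_+')^{-1}=\begin{bmatrix}1-z&z\\-z&z+1\end{bmatrix}$, which violates both conditions 1 and 2; so your derivation silently corrects a typo in the statement. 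Second, your closing paragraph overstates the role of unimodularity: the step $D=D'$ needs only the vanishing patterns of $B$ and of $B^{-1}$ --- each obtained from the same support analysis, independently of whether the other is polynomial --- together with invertibility of $B(z_0)$ at any single point $|z_0|\le 1$, so the column-independence count can be run over $\C$ pointwise instead of over $\C(z)$. Unimodularity is genuinely needed only for the separate assertion that $B$ is a unit polynomial matrix, where your pairing of the polynomiality of $B$ with that of $B^{-1}$ is exactly right.
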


The ordered set $\{\ind_1\geq \ind_2\geq \dots \geq \ind_d\}$ of integers is the 
set of \textit{partial indices}  of \(A\). It induces an additional important bit of 
structure on the set of labels \({\cal M}\equiv \{1,\dots,d\}\), and, by extension, 
the ordered basis \(\{|m\rangle\}_{m\in{\cal M}}\). That is, it induces the partition  into three disjoint subsets
\[
\mathcal{M}_+ = \{m \,|\, \ind_m > 0\},\quad
\mathcal{M}_0 = \{m \,|\, \ind_m = 0\},\quad  \mathcal{M}_- = \{m \,|\, \ind_m < 0\}, 
\]
so that ${\cal M} = \mathcal{M}_+ \cup \mathcal{M}_- \cup \mathcal{M}_0$. The partial indices of \(A\) and the associated partition of labels will feature repeatedly throughout the rest of the paper.

The WH factorization induces, by way of the symbol map, a factorization 
\[
\widetilde{A}(T,T^\dagger)=\widetilde{A}_+(T)\widetilde{D}(T,T^\dagger)\widetilde{A}_{-}(T^\dagger)\]
of certain BBT operators. (We will shortly clarify which ones.) 
We refer to this factorization as
the {\it WH factorization of the BBT operator}. 

The WH factorization requires strict analytical conditions of the factors and exists only for matrix Laurent polynomials that are invertible on the unit circle. What do these stipulations imply for the factors of the WH factorization and the BBT operator that is being factorized? Let us start with the side factors.
The matrix function $A_+(z)$ ($A_-(z^{-1})$) is invertible inside (outside) the unit circle.
Therefore, its inverse has an absolutely convergent Fourier series inside (outside) the unit circle. 
It follows that $\widetilde{A}_+$ ($\widetilde{A}_-$) is an invertible BBT operator because 
the spectrum of $\S$ ($\S^\dagger$) is contained in the unit disk. The inverses  \(\widetilde{A}_+^{-1}\) and \(\widetilde{A}_-^{-1}\) are block-Toeplitz operators but not banded in general. They are, however, upper and lower triangular respectively.

The fact that the side factors are invertible  operators implies that \(\widetilde{D}\) is equivalent to \(\widetilde{A}\) up to a choice of bases.
Our immediate goal is to determine the kernel and cokernel of \(\widetilde{D}\). For powers of the shift operators they can be deduced from the formulas
\[
T^{\r}|j\rangle = \left\{ 
\begin{array}{lcl}|\j-\r\rangle & \text{if} & \j\ge \r\\
0 & \text{if} & \j<\r \end{array} \right.,\quad
\langle \j' |(T^{\dagger})^\r = \left\{ 
\begin{array}{lcl}\langle \j'-\r| & \text{if} & \j'\ge \r\\
0 & \text{if} & \j'<\r \end{array} \right.,
\]
and so the kernel and cokernel of $\widetilde{D}$ are
\begin{eqnarray*}
\text{Ker}\,\widetilde{D}&=& \text{Span}\,\{|\j\rangle|m\rangle, \
m \in \mathcal{M}_+,\ \j=0,\dots,\ind_m-1\},\\
\text{Coker}\,\widetilde{D} &=& \text{Span}\,\{\langle\j|\langle m|, \
m \in \mathcal{M}_- ,\ \j=0,\dots,-\ind_m-1  \}.
\end{eqnarray*}
Finally, 
the kernel and cokernel of $\widetilde{A}$ are 
\begin{equation}
\label{kernel}
\text{Ker}\,\widetilde{A} = \widetilde{A}_-^{-1}\text{Ker}\,\widetilde{D},\quad
\text{Coker}\,\widetilde{A} = (\text{Coker}\,\widetilde{D})\widetilde{A}_+^{-1}.
\end{equation}
The dimension of the kernel and the cokernel of $\widetilde{A}$,
also called its ``defect numbers'', are determined by the partial indices as 
\begin{eqnarray*}
\text{dim}\,\text{Ker}\,\widetilde{A} = 
\sum_{m \in \mathcal{M}_+}\ind_m,\qquad 
\text{dim}\,\text{Coker}\,\widetilde{A} =  
-\sum_{m \in \mathcal{M}_-}\ind_m.
\end{eqnarray*}
In particular, they are finite-dimensional.

\begin{exmp} 
\label{exmp:wh}
If \(A(z,z^{-1})=\begin{bmatrix} \frac{1}{2} & z\\ 1 & \frac{1}{2}\end{bmatrix}\), then the algorithm
of \ref{computeWHF} yields the WH factorization 
\[
A_+(z)=\begin{bmatrix} 1 &\frac{1}{2}\\ 0 &1\end{bmatrix},\quad
D(z)=\begin{bmatrix} z & 0\\ 0 & 1\end{bmatrix}, \quad 
A_-(z^{-1})=\begin{bmatrix} 0 & 1-\frac{1}{4}z^{-1}\\ 1 & \frac{1}{2}\end{bmatrix} .
\]
The matrix Laurent polynomial above is the symbol of the BBT operator
\[
\widetilde{A}=1\otimes \begin{bmatrix}\frac{1}{2} & 0\\ 1 &\frac{1}{2}\end{bmatrix}+T\otimes\begin{bmatrix} 0& 1\\0 &0 \end{bmatrix}.
\]
Its WH factorization is 
\[
\widetilde{A}_+=1\otimes \begin{bmatrix} 1& \frac{1}{2}\\ 0& 1\end{bmatrix},\]
\[\widetilde{D}=1\otimes\begin{bmatrix} 0& 0\\0&1\end{bmatrix}+T\otimes \begin{bmatrix} 1& 0\\0&0\end{bmatrix}, \quad
\widetilde{A}_-=1\otimes\begin{bmatrix} 0& 1\\1 & \frac{1}{2}\end{bmatrix}-\frac{1}{4}T^\dagger\otimes 
\begin{bmatrix}0& 1\\0 & 0\end{bmatrix} .
\]
The partial indices of \(\widetilde{A}\) are \(\ind_1=1, \ind_2=0\), whwreas 
its kernel  is spanned by $\widetilde{A}_-^{-1}|j=0\rangle|m=1\rangle$. To compute 
$\widetilde{A}_-^{-1}$, we first compute its symbol as
\[
A_-^{-1}(z^{-1}) = -(1-z^{-1}/4)^{-1}\begin{bmatrix}\frac{1}{2} & \frac{z^{-1}}{4}-1\\-1 & 0\end{bmatrix} = 
\sum_{j=0}^{\infty}\frac{z^{-j}}{4^j}\begin{bmatrix}-\frac{1}{2} & -\frac{z^{-1}}{4}+1\\1 & 0\end{bmatrix}.
\]
To obtain $\widetilde{A}_-^{-1}$, it suffices to replace $z^{-1} \mapsto T^\dagger$. Finally, the kernel of \(\widetilde{A}\) is spanned by
\[
\widetilde{A}_-^{-1}|j=0\rangle|m=1\rangle = \sum_{j=0}^{\infty}\frac{1}{4^j}|j\rangle\begin{bmatrix}-\frac{1}{2} \\1 \end{bmatrix}.
\]
\end{exmp}

Operators with a finite-dimensional kernel and cokernel
are called \textit{Fredholm operators}. 
They are special because they are almost, but not quite, invertible. The WH factorization shows that if the symbol \(A\) of  \(\widetilde{A}\) is
invertible on the unit circle, then \(\widetilde{A}\) is a Fredholm operator. As it turns out, the converse is true as well, see \ref{app:whproof}. 
The remarkable proposition ``\(\widetilde{A}\) is a Fredholm operator if and only if its symbol \(A\) is invertible" is a special case of a more general theorem that holds for the full Banach algebra (closed in the operator norm) of block-Toeplitz operators.

The topological structure of the set of Fredholm operators
(as a subset of the normed space of bounded operators) is well understood. It consists of a countable infinity of connected components labelled by the integers. In any one component, the integer label coincides with the {\em Fredholm index} (or ``analytic index'' \cite{Atiyah97}) of any of the operators in it. In other words, the Fredholm inded is an integer-valued continuous function, and 
is computed as the dimension of the kernel minus the dimension of the cokernel of a Fredholm operator.
For our BBT operators, 
\begin{equation}
\label{index1}
\text{index}(\widetilde{A}) \equiv \text{dim}\,\text{Ker}\,\widetilde{A}-\text{dim}\,\text{Coker}\,\widetilde{A}
= \sum_{m=1}^{d}\ind_m.
\end{equation}
A Fredholm operator need not be a block-Toeplitz operator, let alone a banded one. Nonetheless, we see that every connected component of the set of Fredholm operators contains  representative BBT operators. 

The Fredholm index is a topological, more precisely, homotopy invariant. It means that a continuous deformation of some Fredholm operator into another one can only change the Fredholm index if the operator looses its Fredholm status somewhere along the way. In turn, for BBT operators, this means that the induced deformation of the symbol stops being invertible  somewhere along the way. One can think of it as a kind of generalized second-order phase transition. Another homotopy invariant
is the ``secondary index'' of Ref.\,  \cite{Atiyah97} (see  \cite{Schulz-Baldes15} for the physical meaning of this index). It is specifically associated to the antisymmetric Fredholm operators
and it can only take two values conventionally chosen to be \(0\) or \(1\).
For an antisymmetric BBT operator, it can be calculated in terms of the partial indices by the formula
\[
\text{index}_2(\widetilde{A}) \equiv \text{dim}\,\text{Ker}\,\widetilde{A}\; \text{mod}\,2
= \sum_{m \in \mathcal{M}_+}\ind_m\, \text{mod}\,2.
\]

\subsubsection{A factorization of block-Laurent operators}
The Hamiltonian of a quasi-1D and translation-invariant system, see Eq.\,\eqref{boldH},
is an example of a block-Laurent operator. A banded block-Laurent operator is of the form 
\(
A(V,V^\dagger)=\sum_{r=p}^q V^r\otimes a_r.
\)
Note that \(V^{r}=V^{\dagger\, |r|}\)
if \(r\) is negative.
These operators are bounded and their spectral properties can be investigated in great detail with the help of the Fourier transform. The Fourier transform puts a block-Laurent operator into a block diagonal form with finite-dimensional blocks
\(
A(e^{ik},e^{-ik})=\sum_{r=p}^q e^{ikr} a_r
\)
labeled by the real number \(k\in[-\pi,\pi)\). There is then little need for any other factorization as a rule. Nonetheless, the WH factorization of the associated matrix Laurent polynomial \(A(z,z^{-1})\), provided it is invertible on the unit circle, does induce a factorization of the block-Laurent operator of the form 
\(A(V,V^\dagger)
=A_+(V)D(V,V^\dagger)A_-(V^\dagger)\) 
or, equivalently,
\(A(e^{ik},e^{-ik})=
A_+(e^{ik})D(e^{ik},e^{-ik})A_-(e^{-ik})\)
for the ``Bloch Hamiltonian."
The so-factorizable block-Laurent operators are necessarily invertible and so the partial indices do not have the same meaning for them as they do for block-Toeplitz operators. Rather, as we will show, there are formulas for calculating the bulk invariants of topologically non-trivial Hamiltonians solely in terms of the partial indices. The proposition \ref{prop:boundgap}
is another  very different and quite remarkable application of the WH factorization of a block-Laurent operator.

\section{Symmetric Wiener-Hopf factorization of Hamiltonians}
\label{sec:symwh}

Let us consider the problem of combining the framework of the AZ classification with the WH factorization of matrix functions and BBT operators. The WH factorization of a BBT Hamiltonian exists provided that its symbol is invertible on the unit circle. In physical language, the unit circle is the Brillouin zone of a 1D system,  the symbol  is the  Bloch Hamiltonian, and the invertibility condition implies that the quasiparticle energy spectrum is gapped around zero energy. Hence, the WH factorization exists for gapped, clean free-fermion systems and we have ended up squarely in the territory of the tenfold way classification and the bulk-boundary correspondence. 
In one dimension, the bulk-boundary correspondence relates topological invariants of the Bloch Hamiltonian to boundary invariants determined by the ZMs (kernel vectors) of the BBT Hamiltonian. Hence, it seems natural that the WH factorization, essentially a tool for determining the kernel and cokernel of a Fredholm operator, should yield insight into the bulk-boundary correspondence. This is indeed the case, as we show in this paper;  however, there is a twist: to succeed, one must modify the standard WH factorization to account for the symmetries of the AZ classification. 

\begin{table}
\begin{center}
\begin{tabular}{|c | l | c | c|}
\hline
{\bf Class} &
\(\mathbb{F}\) &  {\bf Additional} & {\bf Factorization}  \\
& &  {\bf constraint} & \\
\hline
BDI, AIII, CII, &\(\mathbb{R}\), \( \mathbb{C}\), or \(\mathbb{H}\) & none & Theorem \ref{thm:symwh1} \\
A, AI, AII &\(\R\), \(\C\), or \(\H\) & \(A^\dagger=A\) & Theorem \ref{thm:symwh2} \\
D, DIII &\(\mathbb{R}\) or \(\mathbb{C}\) &  \(A^{\rm T}=-A\) & Theorem \ref{thm:symwh3} \\
C & \(\mathbb{H}\) & \(A^\dagger=-A\) & Theorem \ref{thm:symwh5} \\
CI &\(\mathbb{C}\) & \(A^{\rm T}=A\) & Theorem \ref{thm:symwh4} \\
 \hline
\end{tabular}
\caption{Relationship between the AZ symmetry class and the corresponding
SWH factorization of the symbol of BBT operators in the class. }
\label{table:classes2}
\end{center}
\end{table}

In this section we will need a number of variations of the standard WH factorization. We collect them here as theorems of independent mathematical interest. To the best of our knowledge, they are not available in the literature and so we also provide complete proofs in the  \ref{app:symwhproofs}. There are published results similar in spirit, see for example,  \cite{Ran94, Rodman14, Shelah17, Lancaster95,Voronin11, Guo98,Kravchenko08, Ehrhardt, Iftime01, Youla78}. However, the only published results that are directly useful for our purposes is the factorization of Hermitian matrix polynomials of Ref.\,\cite{Ran94}, along with the standard WH factorization itself. 

In what follows, we will denote the set of square arrays of order \(d\) with entries in \(\mathbb{F}\) as  \(\mathbb{F}_d\) (a shorthand for the more standard notation \(M_d(\F)\)), and the associated algebra of matrix Laurent polynomials as \(\mathbb{F}_{d}[z,z^{-1}]\). The notation \(\mathbb{F}\) will stand for one of the three associative real
division algebra, that is,  \(\mathbb{F}=\mathbb{R}\), \(  \mathbb{C}\), or \(\mathbb{H}\). As mentioned before, 
$\mathbb{R}_d[z,z^{-1}] \subset \mathbb{C}_d[z,z^{-1}]$ and
$\mathbb{H}_d[z,z^{-1}] \subset \mathbb{C}_{2d}[z,z^{-1}]$ if we 
consider the $2\times 2$ complex representation of quaternions.
In this sense, any $A \in \mathbb{F}_d[z,z^{-1}]$ admits a standard WH
factorization if $A$ is invertible on the unit circle. 
For such an $A$, we define the set of {\it symmetric partial indices} 
$\mathcal{K} = \{\ind_1, \dots, \ind_d\}$ to be the same as the set of 
standard partial indices of $A$ 
if \(\mathbb{F}=\mathbb{R}\) or 
\(\mathbb{C}\). If \(\mathbb{F}=\mathbb{H}\), the standard partial indices of a quaternionic matrix Laurent polynomial come in pairs of duplicates.
Therefore, \(\mathbb{F}=\mathbb{H}\), set of standard partial
indices consists of two copies of the set of symmetric partial indices 
$\mathcal{K}$. We denote both the standard and 
symmetric partial indices by symbols $\{\ind_m\}$, as it will be clear from the
context which set of partial indices we are referring to. However, we reserve
the symbol $\mathcal{K}$ for the set of symmetric partial indices.
The partition \(\{1,\dots,d\}=\mathcal{M}_+\cup\mathcal{M}_0\cup\mathcal{M}_-\) associated to \({\cal K}\) is defined as before
for the standard WH factorization, according to the signs of the partial indices.


 \subsection{Classes AIII, BDI, and CII}
 
We already have all the tools we need for factorizing a Hamiltonian in class AIII. A BBT Hamiltonian in this class can be put in off-diagonal form so that 
 \[
 \widetilde{H}_{\rm AIII}=\begin{bmatrix} 0& \widetilde{A}^\dagger\\
 \widetilde{A}& 0\\
 \end{bmatrix}.
 \]
The blocks are generic complex BBT operators (see Table\,\ref{table:classes2}). Technically, this means that they belong to the tensor product of the Toeplitz algebra and the algebra of complex matrices. Hence, the blocks admit a standard WH factorization,  
\(\widetilde{A}=\widetilde{A}_+\widetilde{D}\widetilde{A}_-\) with no loss of information. Hence, 
\begin{align}
\label{WHAIII}
\widetilde{H}_{\rm AIII}=\begin{bmatrix} \widetilde{A}_-^\dagger & 0\\
0& \widetilde{A}_+\\
\end{bmatrix} 
\begin{bmatrix} 0 & \widetilde{D}^\dagger \\
\widetilde{D}& 0\\
\end{bmatrix} 
\begin{bmatrix} \widetilde{A}_- & 0\\
0& \widetilde{A}_+^\dagger\\
\end{bmatrix} 
\equiv \widetilde{H}_+\widetilde{H}_F\widetilde{H}_+^\dagger.
 \end{align}
We call this factorization the WH factorization of a BBT Hamiltonian in class AIII. 
The side factors \(\widetilde{H}_+\) and \(\widetilde{H}_+^\dagger\) are invertible BBT operators,
upper and lower triangular respectively.  The middle factor \(\widetilde{H}_F\) is itself a Hamiltonian in Class AIII. Since
\(
\widetilde{D}=\sum_{r=1}^m T^{\kappa_r}\otimes |r\rangle\langle r|
\),
\(\widetilde{H}_F\) is arguably the simplest possible form that a Hamiltonian can take in this class.  The subscript \(F\) refers to that fact. 

The following theorem is the appropriate tool for extending this analysis to the classes BDI and CII. We will state it in the language of the symbol, with the understanding that it translates to WH factorizations via the functional calculus of BBT operators.

\begin{thm} [The  Wiener-Hopf factorization  over \(\R\), \(\C\), or \(\H\)]
\label{thm:symwh1}
Let $A \in\F_d[z,z^{-1}]$ denote a matrix Laurent polynomial over \(\F=\R\), \(\C\) or \(\H\). If \(A(z,z^{-1})\) is invertible on the unit circle, then there exists a factorization 
$A = A_+DA_-$ such that 
\(A_+ \in \F_d[z]\) is invertible for \(|z|\leq1\), \(
 A_- \in \F_d[z^{-1}]\) is invertible for \(|z^{-1}|\leq 1\), and
 \(D = \sum_{m=1}^{d}z^{\ind_m}|m\rangle\langle m|\bm{1}_\F
\)
for $\{\ind_1,\dots,\ind_d\} = \mathcal{K}$.  
\end{thm}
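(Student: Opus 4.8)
\emph{Sketch of the intended proof.} The plan is to reduce all three cases to the ordinary complex factorization guaranteed by Proposition~\ref{proof:wh}, and then to promote the complex side factors to genuine $\F$-valued factors by exploiting the conjugation symmetry that characterizes $\R$- and $\H$-valued symbols, together with the uniqueness of the normal form (Proposition~\ref{lem:uniquepi}). For $\F=\C$ there is nothing to do: Proposition~\ref{proof:wh} already delivers $A=A_+DA_-$ with complex factors $A_\pm$, and $\mathcal{K}$ is by definition the set of ordinary partial indices. So the entire content is the descent to $\R$ and $\H$.

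For $\F=\R$ I first regard $A$ as an element of $\C_d[z,z^{-1}]$ and factor it as $A=A_+DA_-$. Since every coefficient of $A$ is real, the coefficientwise conjugation of Section~\ref{func} fixes it, $A^*=A$, while $D^*=D$; moreover $A_+^*$ ($A_-^*$) is still invertible on $|z|\le1$ ($|z^{-1}|\le1$) because $\det A_+^*(z)=\overline{\det A_+(\bar z)}$. Hence $A=A_+^*DA_-^*$ is a second WH factorization of the same $A$, and Proposition~\ref{lem:uniquepi} returns the same $D$ together with a structured unit polynomial matrix $B$, adapted to the grading of $\mathcal{K}$, such that $A_+^*=BA_+$. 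Conjugating this relation once more yields $B^*B=\I$. The remaining task, which is the crux, is to manufacture from $B$ a real plus factor: I would seek a structured square root, i.e.\ a $B$-compatible unit polynomial matrix $C$ with $C^*=C^{-1}$ and $C^2=B$, so that $A_+':=C^{-1}A_+$ obeys $(A_+')^*=A_+'$ and is therefore real; the matching $A_-'$ is recovered from $A=A_+'DA_-'$ and is real by the same token.

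For $\F=\H$ I use the embedding $\H_d\subset\C_{2d}$, under which a quaternionic symbol is exactly one fixed by the involution $\Phi(X):=\mathcal{J}X^*\mathcal{J}^{-1}$, with $\mathcal{J}=\bm{j}\otimes\I$ the (real, $\mathcal{J}^2=-\I$) block structure map; one checks $\Phi^2=\mathrm{id}$. Applying $\Phi$ to the complex factorization $A=A_+DA_-$ produces another one with normal form $\Phi(D)=\mathcal{J}D\mathcal{J}^{-1}$, so uniqueness forces $\mathcal{J}D\mathcal{J}^{-1}=D$. Because $\mathcal{J}$ interchanges the two complex basis vectors within each quaternionic block, this commutation forces the ordinary partial indices to coincide in pairs, $\ind_{2i-1}=\ind_{2i}$; this is precisely the statement that the standard indices are two copies of the symmetric set $\mathcal{K}$, and it fixes the form of $D$ claimed in the theorem. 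Exactly as in the real case, $\Phi(A_+)=BA_+$ for a structured $B$ with $\Phi(B)B=\I$, and a $\Phi$-compatible square root $C$ yields a $\Phi$-invariant, hence quaternionic, factor $A_+'=C^{-1}A_+$.

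I expect the genuine obstacle to be the existence of the structured square root $C$ respecting the partition $\mathcal{M}=\mathcal{M}_+\cup\mathcal{M}_0\cup\mathcal{M}_-$ and the polynomial grading dictated by $\mathcal{K}$ in Proposition~\ref{lem:uniquepi}. I would handle it by splitting $B$ into its block-constant part on the $\mathcal{M}_0$ sector, where $B^*B=\I$ makes $B$ a constant unitary and a square root exists by the functional calculus, and its strictly-graded part coming from the blocks with $\ind_m>\ind_{m'}$, where a square root follows from a terminating logarithm/exponential series because that part is nilpotent. Threading these two pieces together while keeping $C$ inside the structured group --- equivalently, realizing the $\mathbb{Z}_2$-descent from $\C$ to $\R$ and from $(\C_{2d},\Phi)$ to $\H_d$ with no obstruction --- is the step I expect to require the most care, and it is where a concrete iterative construction (in the spirit of the algorithm of \ref{computeWHF}) may be preferable to the symmetrization argument sketched above.
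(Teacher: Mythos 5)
Your route---symmetrizing an arbitrary complex factorization through the uniqueness result, Proposition~\ref{lem:uniquepi}, and then splitting the resulting Galois cocycle---is genuinely different from the paper's proof, which never touches the ambiguity group at all: there, $A_-$ is built directly from the kernel vectors $|\psi_{\ind,s}\rangle$ of the Toeplitz family via Proposition~\ref{propAminus}, with the basis of each jump space $\mathcal{D}_\ind$ chosen real over $\R$ and in Kramers pairs $|\psi_{\ind,2s}\rangle = i\sigma_y\cc\,|\psi_{\ind,2s-1}\rangle$ over $\H$, after which $A_+ = AA_-^{-1}D^{-1}$ inherits the structure for free. Your sketch, however, contains one step that is wrong rather than merely incomplete: the claim that uniqueness forces $\mathcal{J}D\mathcal{J}^{-1}=D$ and hence the pairing $\ind_{2i-1}=\ind_{2i}$. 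The matrix $\mathcal{J}D\mathcal{J}^{-1}$ is diagonal with exponents obtained from those of $D$ by the swap $2i-1\leftrightarrow 2i$; this is a permutation of the same multiset, so after reordering by a constant permutation (absorbable into the side factors) its normal form is automatically the original $D$. Proposition~\ref{lem:uniquepi} therefore imposes no constraint whatsoever on the indices, and if they fail to pair then $\Phi(D)\neq D$ and your cocycle relation $\Phi(A_+)=BA_+$ acquires a permutation twist that derails the descent. The evenness of the multiplicities $s_\ind$ is a genuine theorem requiring the quaternionic structure of the kernels: if $|\psi\rangle\in\Ker\widetilde{A}_\ind$ then so is $i\sigma_y\cc\,|\psi\rangle$, linearly independent of it, whence each $\mathcal{D}_\ind$ of Corollary~\ref{coro:inc} is even-dimensional---exactly the paper's argument, which you would need to import before your $\H$-descent can even begin.

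The second gap is the one you flag yourself, the structured splitting of $B$, and two corrections are needed there. First, $B^*B=\I$ says that $B^{-1}$ equals the \emph{coefficientwise conjugate} $B^*$ (a ``circular'' condition), not that the constant equal-index blocks are unitary; what is required is not a square root but a Hilbert-90 coboundary inside the structured group of Proposition~\ref{lem:uniquepi}. (Note also the convention slip: with $C^2=B$ and $C^*=C^{-1}$, the symmetric factor is $A_+'=CA_+$, since $(CA_+)^*=C^{-1}BA_+=CA_+$; your choice $A_+'=C^{-1}A_+$ gives $(A_+')^*=C^{3}A_+$, which is not $A_+'$ unless $B^2=\I$.) Second, the splitting is true but must be proved: on the constant blocks it is Hilbert's Theorem~90 for $\mathrm{Gal}(\C/\R)$---for $\H$, its Noether--Skolem-type extension for the twisted action $X\mapsto \mathcal{J}X^*\mathcal{J}^{-1}$, whose fixed group is the quaternionic general linear group---while the strictly graded part yields to your terminating logarithm; but gluing the two pieces while keeping $C$ a unit polynomial matrix in the structured group is precisely the content you have not supplied. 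None of this is unfixable, but as written the proposal establishes the theorem only for $\F=\C$ and, modulo the coboundary lemma, for $\F=\R$; the quaternionic case rests on an invalid inference, and the paper's kernel-based construction sidesteps both difficulties at once.
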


Let us consider next the class BDI. A BBT Hamiltonian in this class takes the form 
\begin{align*}
\widetilde{H}_{\rm BDI}=
\begin{bmatrix}
0& \widetilde{A}^\dagger&0&0\\
\widetilde{A}& 0&0&0\\
0&0&0&\widetilde{A}^\dagger\\
0& 0& \widetilde{A}&0\\
\end{bmatrix}, \quad \widetilde{A}^*=\widetilde{A}.
\end{align*}
The blocks are generic real BBT operators.
Hence, according to the above theorem and the functional calculus of BBT operators, it follows that 
\[
\widetilde{A}=
\widetilde{A}_+\widetilde{D} \widetilde{A}_- ,
\]
in terms of real BBT side factors; the structure of the middle factor is unchanged with respect to class AIII.  The associated factorization of the Hamiltonian is 
\begin{align}
\label{WHBDI}
&\widetilde{H}_{\rm BDI}=
\widetilde{H}_+\widetilde{H}_F\widetilde{H}_+^{\rm T} , 
\end{align}
in terms of 
\begin{align*}
\widetilde{H}_+\equiv 
\begin{bmatrix} \widetilde{A}_-^{\rm T} & 0 & 0 & 0\\
0& \widetilde{A}_+ & 0 & 0\\
0&0 &\widetilde{A}_-^{\rm T}& 0\\
0& 0& 0& \widetilde{A}_+ \\
\end{bmatrix},\quad
\widetilde{H}_F\equiv
\begin{bmatrix}
0& \widetilde{D}^{\rm T}&0&0\\
\widetilde{D}& 0&0&0\\
0&0&0&\widetilde{D}^{\rm T}\\
0& 0& \widetilde{D}&0\\
\end{bmatrix}.
\end{align*}
The diagonal factor is such that  \(\widetilde{D}^\dagger=\widetilde{D}^{\rm T}\). The choice is one of emphasis when it comes to the classes AIII and BDI. Notice that \(\widetilde{H}_F\) itself belongs to BDI.

Finally, a BBT Hamiltonian in class \(CII\) is of the form
\[
\widetilde{H}_{\rm CII}=
\begin{bmatrix}
0&\widetilde{A}^\dagger\\
\widetilde{A}& 0\\
\end{bmatrix},\quad \sigma_y\widetilde{A}\sigma_y=\widetilde{A}^*.
\]
That is, the blocks are generic quaternionic BBT matrices. The associated factorization is 
\begin{align}
\label{WHCII}
\widetilde{H}_{\rm CII}=\begin{bmatrix} \widetilde{A}_-^\dagger & 0\\
0& \widetilde{A}_+\\
\end{bmatrix} 
\begin{bmatrix} 0 & \widetilde{D}^\dagger \\
\widetilde{D}& 0\\
\end{bmatrix} 
\begin{bmatrix} \widetilde{A}_- & 0\\
0& \widetilde{A}_+^\dagger\\
\end{bmatrix}
\equiv \widetilde{H}_+\widetilde{H}_F\widetilde{H}_+^\dagger ,
 \end{align}
in terms of quaternionic factors. Explicitly,
\(
\sigma_y\widetilde{A}_{\pm}\sigma_y=\widetilde{A}_{\pm}^*\)
and \( \sigma_y\widetilde{D}\sigma_y=\widetilde{D}*
\). It follows that \(\widetilde{H}_F\) itself belongs to the class CII. Visually this factorization is identical to that of the class AIII, but there is a crucial difference: for the class CII, the partial indices all come in identical pairs. It is the manifestation in this context of the phenomenon of Kramers degeneracy.

\subsection{Classes A, AI, and AII}

These classes are characterized by SP Hamiltonians that are generic complex, real, or quaternionic Hermitian matrices. The following theorem is the correct tool for factorizing them. We first provide the following:

\begin{defn}
For a Hermitian matrix Laurent polynomial over 
\(\R\) or \(\C\), its signature \(s\) is the difference between
the number of positive and the negative eigenvalues of $A(z,z^{-1})$
for \(z=1\). For a  Hermitian matrix Laurent polynomial over \(\H\),
the signature is  half of this quantity.
\end{defn}
In fact, the signature is constant over the unit circle \cite{Ran94}. In the following, \(\bar{m}\equiv d+1-m\).

\begin{thm} [A factorization of Hermitian matrix Laurent polynomials] \label{thm:symwh2}
Let the matrix $A=A^\dagger \in\F_d[z,z^{-1}]$ be invertible on the unit circle for \(\F=\R\), \(\C\), or \(\H\).
Then there exists a factorization 
$A = A_+ D_1 
A_+^\dagger$ such that $A_+ \in \F_d[z]$
is invertible for \(|z|\leq 1\)
and
\begin{align*}
D_1
=\sum_{m \in \mathcal{M}_+}(z^{\ind_m} |\bar{m}\rangle\langle{m} | 
+z^{-\ind_m}|m\rangle\langle \bar{m}|)\bm{1}_\F+ \sum_{m \in \mathcal{M}_0}s_m|m\rangle\langle m|\bm{1}_\F
\end{align*}
in terms of $\{\ind_1,\dots,\ind_d\} = \mathcal{K}$ and a set of signs $\{s_m=\pm 1\}_{m\in{\cal M}_0}$ such that  $s_m>s_{m'}$ for $m>m'$. The signs add up to the signature of \(A\), that is, \(\sum_{m\in\mathcal{M}_0}s_m = s$. 
\end{thm}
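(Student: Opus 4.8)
The plan is to obtain the self-adjoint factorization by symmetrizing an ordinary $\F$-Wiener--Hopf factorization, so the starting point is Theorem~\ref{thm:symwh1}: since $A$ is invertible on the unit circle we may write $A = B_+ D B_-$ with $B_+\in\F_d[z]$ invertible for $|z|\le 1$, $B_-\in\F_d[z^{-1}]$ invertible for $|z^{-1}|\le 1$, and $D=\sum_m z^{\ind_m}|m\rangle\langle m|\bm{1}_\F$. Applying $\dagger$ and using $A=A^\dagger$ gives a second factorization $A = B_-^\dagger D^\dagger B_+^\dagger$; here $B_-^\dagger\in\F_d[z]$ is invertible for $|z|\le 1$ while $B_+^\dagger$ is invertible for $|z^{-1}|\le 1$, because $A^\dagger(z)=A(1/\bar z)^\dagger$ swaps the two disks. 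With $R$ the reversal $|m\rangle\mapsto|\bar m\rangle$, the matrix $R D^\dagger R$ is again an admissible normal form, so both expressions are $\F$-WH factorizations of $A$. First I would invoke the uniqueness of the partial indices (Proposition~\ref{lem:uniquepi}) to conclude that $\mathcal K$ is symmetric under negation, i.e. $\ind_{\bar m}=-\ind_m$; hence $\mathcal M_+$ and $\mathcal M_-$ are interchanged by $m\mapsto\bar m$ and $\mathcal M_0$ is centrally symmetric.

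The symmetrization itself amounts to folding the two side factors into one. Writing $G:=(B_-^\dagger)^{-1}B_+$, which is holomorphic and invertible on the closed disk as a product of two such functions, one has $A = B_-^\dagger\,(GD)\,B_-$; setting $M:=GD$ and using $A=A^\dagger$ together with $(B_-^\dagger)^\dagger=B_-$ shows $M=M^\dagger$. Thus it suffices to factor the Hermitian middle term as $M = L D_1 L^\dagger$ with $L$ a polynomial unit (invertible for all $z$, with polynomial inverse), for then $A_+:=B_-^\dagger L\in\F_d[z]$ is invertible for $|z|\le 1$ and $A = A_+ D_1 A_+^\dagger$, exactly as required.

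To produce $D_1$ I would exploit the graded structure that $M=GD$, $M=M^\dagger$ and the symmetry of $\mathcal K$ impose on the entries of $M$: because $G$ carries only non-negative powers of $z$, the entry $M_{mn}$ has lowest degree at least $\ind_n$, while Hermiticity forces it to agree with a series carrying only powers at most $-\ind_m$. Comparing these ranges shows, in particular, that the $\mathcal M_0\times\mathcal M_0$ block of $M$ is a \emph{constant} invertible Hermitian matrix, which a constant $\F$-congruence diagonalizes to $\sum_{m\in\mathcal M_0}s_m|m\rangle\langle m|\bm{1}_\F$; the signs are a congruence invariant, so they add up to the signature $s$ (constant on the circle by the cited result of Ref.~\cite{Ran94}), and a constant permutation puts them in the prescribed order. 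On each pair $\{m,\bar m\}$ with $m\in\mathcal M_+$ the same degree count confines the relevant $2\times2$-over-$\F$ block to a narrow band of powers, and a polynomial unit congruence should clear its diagonal and normalize the antidiagonal to $z^{\ind_m}|\bar m\rangle\langle m|+z^{-\ind_m}|m\rangle\langle\bar m|$, the leading coefficients being invertible because $A$ is invertible on the circle. I would organize these reductions as an induction on $\sum_{m\in\mathcal M_+}\ind_m$ (equivalently on $d$), peeling off one antidiagonal pair at a time and checking that the congruence used at each step leaves the already-normalized blocks untouched.

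The technical heart, and the step I expect to fight with, is this last inductive reduction of the Hermitian middle factor: one must verify that at each stage the available congruence freedom (the unit matrices of the type in Proposition~\ref{lem:uniquepi}, now constrained to respect Hermiticity) is \emph{exactly} enough to annihilate the off-canonical entries while keeping $L$ polynomial, invertible on the closed disk, and defined over $\F$. The quaternionic case carries the extra bookkeeping that Kramers pairing makes all multiplicities even, which is what legitimizes the ``half the signature'' convention and the pairwise appearance of the partial indices; the real and quaternionic statements then follow from the complex one through the embeddings $\R_d\hookrightarrow\C_d$ and $\H_d\hookrightarrow\C_{2d}$, with Theorem~\ref{thm:symwh1} guaranteeing that the constructed factors remain over $\F$.
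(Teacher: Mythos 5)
Your proposal retraces the paper's own proof in all structural respects: start from Theorem~\ref{thm:symwh1} (or the classical factorization for $\F=\C$), apply $\dagger$ and the uniqueness of the partial indices (Proposition~\ref{lem:uniquepi}) to conclude that they pair up as $(\ind,-\ind)$, fold the two factorizations into a single Hermitian middle factor whose graded structure confines the $\mathcal{M}_0\times\mathcal{M}_0$ block to a constant, diagonalize that block by a constant $\F$-congruence (this is Lemma~\ref{lem:constantfact} in the paper) with the signs summing to the signature by the constancy result of Ref.~\cite{Ran94}, and finally normalize the paired part by a polynomial congruence invertible on the closed disk. Your folding via $G=(B_-^\dagger)^{-1}B_+$ is a cosmetic variant: the paper instead invokes Proposition~\ref{lem:uniquepi} a second time to write $A_-^\dagger X=A_+B$ with $B$ a polynomial unit obeying explicit entrywise degree bounds. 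Your $G$ is only disk-holomorphic, but since $M=GD=M^\dagger$ pins each entry $M_{mn}$ into the finite power window $[\ind_n,-\ind_m]$, you recover equivalent information, so this deviation is harmless.

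The substantive shortfall is exactly where you predicted it. The inductive congruence reduction of the Hermitian middle factor is asserted (``should clear its diagonal\dots'') but never executed, and it is not innocuous: $M$ is not block diagonal across the pairs $\{m,\bar m\}$, there are couplings between distinct pairs and between the $\mathcal{M}_\pm$ and $\mathcal{M}_0$ sectors, and your claim that each peeling step ``leaves the already-normalized blocks untouched'' is precisely the unverified crux. The paper sidesteps the induction entirely: it writes down \emph{one} explicit matrix $B_+$ (assembled from the constant factor $G$ of Lemma~\ref{lem:constantfact} together with selected entries of $B$, using the Hermiticity relation $B^*_{mm'}=z^{\ind_{m'}-\ind_m}B_{\bar m'\bar m}$ to halve the off-diagonal data), verifies $BDX=B_+D_1B_+^\dagger$ by direct multiplication, and then proves invertibility of $B_+$ for $|z|\le 1$ by multiplying the identity by $z^{-k}$ with $k$ below the smallest partial index and observing both sides are invertible on the closed disk including $z=0$ --- a step your sketch has no substitute for, since keeping $L$ invertible on the closed disk at every stage of a peeling argument is the delicate point. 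Two smaller items: you assert but do not prove that the constant $\mathcal{M}_0$ block is invertible (fillable, since $\det M$ is a nonzero constant and the vanishing of the $\mathcal{M}_+\mathcal{M}_+$ and $\mathcal{M}_+\mathcal{M}_0$ blocks forces $\det M=\pm\det M_{+-}\,\det M_{00}\,\det M_{-+}$); and your closing appeal to the embeddings $\R_d\hookrightarrow\C_d$, $\H_d\hookrightarrow\C_{2d}$ is unnecessary and inconsistent with your own starting point --- both your argument and the paper's run over $\F$ throughout via Theorem~\ref{thm:symwh1}.
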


As we apply this theorem to Hamiltonians, we will rename the side factor \(A_{+}\) to \(H_+\) and the middle factor \(D_1\) to \(H_F\). Explicitly,
\[
\widetilde{H}_F =\widetilde{D}_1=\sum_{m \in \mathcal{M}_+}(T^{\ind_m} \otimes|\bar{m}\rangle\langle{m} |\otimes \bm{1}_\F 
+h.c.)+ \sum_{m \in \mathcal{M}_0}s_m \widetilde{I} \otimes |m\rangle\langle m|\otimes \bm{1}_\F ,
\]
and \(\widetilde{H}_+\) is an invertible, upper triangular BBT. Then, 
\begin{align}
\label{WHA}
\widetilde{H}_{\rm A}&=\widetilde{H}_+\widetilde{H}_F\widetilde{H}_+^\dagger,\\
\label{WHAI}
\widetilde{H}_{\rm AI}&=\widetilde{H}_+\widetilde{H}_F\widetilde{H}_+^\dagger,\quad \ \ \ \ \ \ \ \widetilde{H}_S^*=\widetilde{H}_S,\quad S=F, +,\\
\label{WHAII}
\widetilde{H}_{\rm AII}&=\widetilde{H}_+\widetilde{H}_F\widetilde{H}_+^\dagger,\quad \sigma_y\widetilde{H}_S^*\sigma_y=\widetilde{H}_S,\quad S=F, +. 
\end{align}
The key point is that \(H_F\) belongs to its respective class in all cases.

\subsection{Classes D and DIII}

These classes are parametrized by real and complex antisymmetric matrices, respectively. The following theorem is the right tool for factorizing Hamiltonians in these classes.

\begin{thm}[A factorization of antisymmetric matrix Laurent polynomials]
\label{thm:symwh3}
Let 
 $A=-A^{\rm T}\in\mathbb{F}_d[z,z^{-1}]$ be invertible on the unit circle for \(\F=\R\) or \(\F=\C\). There exists a
factorization $A = A_+ D_3 
A_+^{\rm T}$ such that $A_+ \in \F_d[z]$
is invertible for \(|z|\leq1\) and
\begin{align*}
&D_3
=-\sum_{m\in \mathcal{M}_+} \big(z^{\ind_m} |\bar{m}\rangle\langle{m} |\;  
- z^{-\ind_m}|m\rangle\langle{\bar{m}} |\big) 
 +  \sum_{\substack{m\in \mathcal{M}_0\\[1pt] m\le d/2}} \big( |\bar{m}\rangle\langle{m} |\;  
- |m\rangle\langle{\bar{m}} |\big) ,
\end{align*}
in terms of $\{\ind_1,\dots,\ind_d\} = \mathcal{K}$.
\end{thm}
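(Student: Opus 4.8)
The plan is to bootstrap the symmetric factorization from the standard Wiener--Hopf factorization of Theorem~\ref{thm:symwh1}, using antisymmetry together with the uniqueness of partial indices (Proposition~\ref{lem:uniquepi}) to pin down the structure. First I note that since $A$ is antisymmetric and invertible on the unit circle, its value at $z=1$ is a nondegenerate alternating form, so $d$ is necessarily even. Applying Theorem~\ref{thm:symwh1} over $\F$ gives a standard factorization $A=A_+DA_-$. Taking the (symbol) transpose and using $A=-A^{\rm T}$ yields a second standard factorization $A=(-A_-^{\rm T})\,D^{\rm T}(A_+^{\rm T})$, in which $-A_-^{\rm T}$ is a legitimate plus-factor and $A_+^{\rm T}$ a legitimate minus-factor, while $D^{\rm T}=\sum_m z^{-\ind_m}|m\rangle\langle m|$. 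Re-ordering $D^{\rm T}$ into canonical decreasing form shows that its partial indices are $\{-\ind_{\bar m}\}$; by the uniqueness of partial indices these must coincide with $\{\ind_m\}$, so $\ind_m=-\ind_{\bar m}$ for all $m$. Hence $\mathcal{K}$ is symmetric about zero, $m\mapsto\bar m$ exchanges $\mathcal{M}_+$ and $\mathcal{M}_-$ and preserves $\mathcal{M}_0$ (which is therefore fixed-point free and even-dimensional).

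The core of the argument is to convert the standard factorization into a congruence $A=B_+D_3B_+^{\rm T}$. I would introduce the connecting matrix $C\equiv (A_-^{\rm T})^{-1}A_+$, which is analytic and invertible on the closed unit disk, and record that antisymmetry of $A$ is equivalent to the intertwining relation $CD=-D^{\rm T}C^{\rm T}$. Reading this off entrywise gives $C_{mn}(z)=-z^{-\ind_m-\ind_n}C_{nm}(z^{-1})$, and comparing the admissible powers of $z$ on the two sides (the left-hand side is analytic at $0$, the right-hand side at $\infty$) forces $C_{mn}=0$ when $\ind_m+\ind_n>0$, $C_{mn}$ to be a constant with $C_{mn}=-C_{nm}$ when $\ind_m+\ind_n=0$, and $C_{mn}$ to be a polynomial of degree at most $-(\ind_m+\ind_n)$ when $\ind_m+\ind_n<0$. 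In other words $C$ is ``anti-lower-triangular'' with respect to the ordering by $\ind$, and its anti-diagonal blocks (those pairing $\mathcal{M}_+$ with $\mathcal{M}_-$ at the same $|\ind|$, together with the whole $\mathcal{M}_0\times\mathcal{M}_0$ block) are constant and, since $C$ is invertible, nondegenerate and alternating.

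It then remains to absorb $C$ into the side factors. I would use the residual gauge freedom of the Wiener--Hopf factorization (Proposition~\ref{lem:uniquepi}), whose transformations act on $C$ by the constrained congruence $C\mapsto D^{\rm T}M^{\rm T}D^{-\rm T}CM$ with $M$ a structured unit polynomial matrix, to strip off the strictly sub-anti-diagonal (genuinely polynomial) part of $C$, reducing it to its constant anti-diagonal blocks. A constant, $D$-block-diagonal congruence then brings those blocks to the canonical alternating ``reversal'' $R=\sum_m\epsilon_m|\bar m\rangle\langle m|$, with $\epsilon_m=-1$ on $\mathcal{M}_+$, $\epsilon_m=+1$ on $\mathcal{M}_-$, and, on $\mathcal{M}_0$, $\epsilon_m=+1$ for $m\le d/2$ and $\epsilon_m=-1$ for $m>d/2$; this $R$ is precisely characterized by $RD=D_3$. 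Having arranged $C=R$ one has $A_+=A_-^{\rm T}R$ and hence $A=A_-^{\rm T}RDA_-=A_-^{\rm T}D_3A_-$, so setting $B_+\equiv A_-^{\rm T}$, a plus-factor in $\F_d[z]$ with $B_+^{\rm T}=A_-$, gives the claimed $A=B_+D_3B_+^{\rm T}$.

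The main obstacle is this last reduction of $C$. The congruence available through the gauge freedom is not free: it is constrained to keep the two side factors mutually transpose, i.e.\ to respect $CD=-D^{\rm T}C^{\rm T}$ at every stage. Executing the reduction therefore amounts to a symmetry-respecting elimination, block by block in the $|\ind|$-grading, culminating in the classification of nondegenerate alternating forms (the symplectic normal form) on the $\mathcal{M}_0$ subspace and on each paired $\mathcal{M}_+$--$\mathcal{M}_-$ block; the bookkeeping needed to carry this out simultaneously on the two negative-transpose-related off-diagonal blocks, while staying within $\F$ (real versus complex), is where the real work lies. An alternative that is probably cleaner to write is a downward induction on $d$, peeling off one alternating block over $\F$ at a time, in the spirit of the Hermitian factorization of Ref.~\cite{Ran94}.
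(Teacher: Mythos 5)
Your strategy coincides, step for step, with the paper's own route: the paper proves Theorem~\ref{thm:symwh3} by transposing its proof of Theorem~\ref{thm:symwh2} (replace $\dagger$ by ${\rm T}$, drop conjugations). There too one starts from a standard class-respecting factorization $A=A_+DA_-$, uses antisymmetry plus uniqueness of the normal form to conclude that the partial indices pair up, $\ind_m=-\ind_{\bar m}$, so that $D^{\rm T}=XDX$ with $X$ the reversal permutation; Proposition~\ref{lem:uniquepi} then yields a structured unit polynomial connecting matrix ($A_-^{\rm T}X=A_+B$ in the paper's variables, which is equivalent, up to $X$ and an inversion, to your $C=(A_-^{\rm T})^{-1}A_+$), whose entries obey exactly the vanishing/constancy/degree constraints you derive from $CD=-D^{\rm T}C^{\rm T}$; and the constant invertible antisymmetric $\mathcal{M}_0\times\mathcal{M}_0$ block is brought to canonical form by the Youla decomposition, item 2 of Lemma~\ref{lem:constantfact} — your symplectic normal form. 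Your preliminary observations ($d$ even, $|\mathcal{M}_0|$ even with the involution $m\mapsto\bar m$ fixed-point free, and $RD=D_3$ with your choice of signs $\epsilon_m$) are all correct and consistent with the paper.

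The one place you stop — reducing $C$ to the constant reversal $R$, which you yourself flag as ``where the real work lies'' — is precisely the step the paper executes, and it does so not by an iterative elimination but in one shot: it writes down an explicit absorbing polynomial matrix $B_+$ (in the Hermitian prototype, the constant $G$ from Lemma~\ref{lem:constantfact} plus correction terms built from the entries $B_{m'm}$, $B_{mm}$ and, with a crucial factor $\tfrac{1}{2}$, $B_{\bar m,m}$ for $m\in\mathcal{M}_-$) and verifies $BDX=B_+D_3B_+^{\rm T}$ by a direct, if long, multiplication; invertibility of $B_+$ inside the unit disk is then checked separately by multiplying the identity by $z^{-k}$ with $k$ below the smallest partial index. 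So the gap in your proposal is one of execution rather than idea: you assert, but do not perform, the symmetry-respecting elimination, and a referee would insist on seeing either the explicit absorbing matrix or the induction carried out. Two points in your favor are worth recording: your gauge-group formulation $C\mapsto D^{\rm T}M^{\rm T}D^{-{\rm T}}CM$, with $M$ ranging over the structured units of Proposition~\ref{lem:uniquepi}, automatically preserves the Wiener--Hopf property at every stage, so the final plus factor is invertible on the closed disk for free and the paper's separate $z^{-k}$ argument becomes unnecessary; and your fallback of a downward induction on $d$ in the spirit of Ref.~\cite{Ran94} is viable — that reference is exactly the paper's acknowledged antecedent for the Hermitian case.
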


Reasoning as before and letting \(H_F\) stand for \(iD_3\), one finds that 
\begin{align}
\label{WHDIII}
\widetilde{H}_{D}=\widetilde{H}_+\widetilde{H}_F\widetilde{H}_+^T ,
\end{align}
in terms of 
\[\widetilde{H}_F^*=\widetilde{H}_F^T=-\widetilde{H}_F, 
\quad \widetilde{H}_+^*=\widetilde{H}_+.
\]
In particular, \(\widetilde{H}_F\) belongs to class D as well. For the class DIII,
\begin{align}
    \label{WHDIII}
    \widetilde{H}_{\rm DIII}=
    \begin{bmatrix}
    \widetilde{A}_-^\dagger & 0\\
    0& \widetilde{A}_+\\
    \end{bmatrix}
    \begin{bmatrix}
    0& \widetilde{D}_3^\dagger\\
    \widetilde{D}_3& 0\\
    \end{bmatrix}
    \begin{bmatrix}
    \widetilde{A}_- & 0\\
    0& \widetilde{A}_+^\dagger\\
    \end{bmatrix}\equiv \widetilde{H}_+\widetilde{H}_F\widetilde{H}_+^\dagger
\end{align}
with \(H_F\) also in the class DIII.

\subsection{Class C}

For the following factorization theorem, notice that \(\bm{k}^\dagger=-\bm{k}\) in the standard matrix realization of the quaternion algebra, see Eq.\,\eqref{quat}.

\begin{thm} [A factorization of quaternionic anti-Hermitian matrix Laurent polynomials] 
\label{thm:symwh4}
Let $A=-A^\dagger \in\H_d[z,z^{-1}]$ be invertible on the unit circle.
There exists a
factorization $A = A_+ D_4
A_+^\dagger$ such that $A_+ \in \H_d[z]$
is invertible for \(|z|\leq 1\) and
\begin{align*}
D_4 
= -\sum_{m\in \mathcal{M}_+} \big(z^{\ind_m} |\bar{m}\rangle\langle{m} |\;  
- z^{-\ind_m}|m\rangle\langle{\bar{m}} |\big)\bm{1}_\H 
+  \sum_{m\in \mathcal{M}_0} \big( |m\rangle\langle m |\big)
\bm{k} ,
\end{align*}
in terms of $\{\ind_1,\dots,\ind_d\} = \mathcal{K}$.
\end{thm}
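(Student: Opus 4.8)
The plan is to manufacture the self-adjoint factorization from an ordinary one (Theorem~\ref{thm:symwh1}) and then to symmetrize it using the constraint $A=-A^\dagger$ together with the uniqueness of partial indices (Proposition~\ref{lem:uniquepi}). First I would dispose of the tempting shortcut that fails: one might hope to write $A=\bm 1_\H\cdot(\text{Hermitian})$ and invoke Theorem~\ref{thm:symwh2}, but multiplication by the complex scalar $i$ leaves $\H_d$ entirely, since in the realization \eqref{quat} the matrix $i\bm 1_\H$ is \emph{not} a quaternion. This is exactly why class C earns its own statement. So I start from a standard factorization $A=B_+DB_-$ and apply the anti-Hermiticity to produce a second factorization $A=-A^\dagger=(-B_-^\dagger)\,D^\dagger\,(B_+^\dagger)$, in which the dagger exchanges the two half-disks: $-B_-^\dagger$ is a ``$+$'' factor and $B_+^\dagger$ a ``$-$'' factor. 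Comparing the two normal forms through Proposition~\ref{lem:uniquepi} forces $\ind_m=-\ind_{\bar m}$ for all $m$; thus the partial indices are symmetric about zero, the indices of $\mathcal{M}_+$ pair with those of $\mathcal{M}_-$ under $m\mapsto\bar m$, and $\mathcal{M}_0$ is stable under the flip.

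Next I would conjugate away the side factor by setting $\Phi\equiv B_+^{-1}A(B_+^\dagger)^{-1}=DG$, where $G\equiv B_-(B_+^\dagger)^{-1}$ is a ``$-$''-type invertible matrix function; anti-Hermiticity descends to $\Phi^\dagger=-\Phi$. Because $G$ carries only nonpositive powers while $D$ carries the (now symmetric) partial indices, the two conditions $\Phi=DG$ and $\Phi^\dagger=-\Phi$ pin $\Phi$ to a near-triangular pattern relative to the partition $\mathcal{M}_+\cup\mathcal{M}_0\cup\mathcal{M}_-$. I would then run the iterative algorithm the paper advertises: at each step multiply the current ``$+$'' factor on the right by a unit upper-triangular polynomial matrix (the freedom catalogued in Proposition~\ref{lem:uniquepi}), chosen to clear the unwanted blocks and to normalize one extreme index-pair. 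Concretely, when $\ind_1>0$ I extract the $\{1,\bar 1\}$ block (with $\bar 1=d$) into the canonical anti-Hermitian form $-\big(z^{\ind_1}|\bar 1\rangle\langle 1|-z^{-\ind_1}|1\rangle\langle\bar 1|\big)\bm 1_\H$, decoupling it from the remaining $(d-2)\times(d-2)$ anti-Hermitian symbol, to which the induction hypothesis on $d$ applies.

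The base of the induction is the all-zero-index case $\mathcal{M}_+=\mathcal{M}_-=\emptyset$. Here the standard normal form is the identity, $A=B_+B_-$, and $-A^\dagger=A$ gives $B_+^{-1}B_-^\dagger=-B_-(B_+^\dagger)^{-1}$; the left side is ``$+$'' and the right side is ``$-$'', so a Liouville argument shows this common value is a constant invertible matrix $\Phi$ with $\Phi^\dagger=-\Phi$, whence $A=B_+(-\Phi)B_+^\dagger$. It then remains to bring the constant anti-Hermitian invertible quaternionic matrix $-\Phi$ to $\bm k\otimes I$ by a constant congruence and to absorb that congruence into $B_+$, producing $A_+$ and the diagonal $\bm k$-entries of $D_4$. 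Note that the absence of any sign freedom in $D_4$ (contrasting with the signs $s_m$ of $D_1$ in Theorem~\ref{thm:symwh2}) reflects the fact that invertible anti-Hermitian quaternionic matrices carry no inertia.

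The hard part is twofold. The quaternion-specific input is the congruence normal form: every invertible anti-Hermitian quaternionic matrix is congruent to $\bm k I$, which ultimately rests on all pure-imaginary units in \eqref{quat} being conjugate (e.g.\ $\bm i\,\bm k\,\bm i^{-1}=-\bm k$), so that no signature survives. More delicate, and where I expect the real work to lie, is keeping the iterative extraction honest: each congruence must stay a polynomial ``$+$'' factor invertible on $|z|\le 1$, must preserve anti-Hermiticity exactly at every step, and must terminate on the coefficient $\bm 1_\H$ rather than some other unit quaternion in the off-diagonal blocks. I anticipate that it is this simultaneous bookkeeping of interior analyticity and the anti-Hermitian pairing, rather than any single clever identity, that constitutes the principal technical burden.
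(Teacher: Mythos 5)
Your skeleton coincides with the paper's in every structural respect: both begin from the quaternionic factorization of Theorem~\ref{thm:symwh1}, use $A=-A^{\dagger}$ together with uniqueness of the normal form (Proposition~\ref{lem:uniquepi}) to force the pairing $\ind_{\bar m}=-\ind_m$, pass to the anti-Hermitian middle object $\Phi=A_+^{-1}A(A_+^{\dagger})^{-1}$ (in the paper this is $BDX$, with $B$ the structured unit polynomial supplied by Proposition~\ref{lem:uniquepi} via $A_-^{\dagger}X=A_+B$), and finish with the congruence normal form for constant invertible quaternionic anti-Hermitian matrices (Lemma~\ref{lem:constantfact}, item 4). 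Your side remarks are also on target: $iA$ indeed leaves $\H_d[z,z^{-1}]$, which is exactly why class C cannot be folded into Theorem~\ref{thm:symwh2}, and the absence of signs in $D_4$ does rest on the conjugacy $\bm{i}\,\bm{k}\,\bm{i}^{\dagger}=-\bm{k}$, which is precisely the $\lambda<0$ identity in the paper's proof of Lemma~\ref{lem:constantfact}. Your Liouville treatment of the all-zero-index base case is correct, though the paper gets it for free: when all indices coincide, Proposition~\ref{lem:uniquepi} makes $B$ a constant matrix, so Lemma~\ref{lem:constantfact} applies directly without any analytic continuation argument.

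The genuine divergence, and the gap, is in how $\Phi$ is brought to the form $D_4$. The paper does not induct on $d$: from $(BDX)^{\dagger}=-BDX$ it extracts the entrywise constraints $B_{mm'}^{*}=z^{\ind_{m'}+\ind_m}B_{\bar m'\bar m}$, applies Lemma~\ref{lem:constantfact} only to the constant $\mathcal{M}_0\times\mathcal{M}_0$ block to produce $G$, and then writes down a \emph{single closed-form} unit polynomial $B_+$ (the quaternionic analogue of the one displayed in the proof of Theorem~\ref{thm:symwh2}, including the $\tfrac{1}{2}$ weights on the pair-diagonal entries) for which $BDX=B_+D_4B_+^{\dagger}$ is verified by one multiplication; invertibility of $B_+$ throughout the closed disk then follows by multiplying this identity by $z^{-k}$ with $k$ below the smallest partial index and reading off invertibility inside the circle. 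Your replacement — iterative peeling of the extreme pair $\{1,\bar 1\}$ by congruences drawn from the freedom in Proposition~\ref{lem:uniquepi} — is plausible, but the step you yourself flag as the ``principal technical burden'' is not a peripheral verification: the existence, at every stage, of a quaternionic, polynomial, disk-invertible congruence that clears the couplings of the extracted pair to the remaining blocks while preserving anti-Hermiticity, and that normalizes the off-diagonal coefficient to $\bm{1}_\H$, is the entire content of the theorem, and in your write-up it is asserted rather than proved. The divisibility facts needed to clear those couplings are exactly the constraints $B_{mm'}^{*}=z^{\ind_{m'}+\ind_m}B_{\bar m'\bar m}$ quoted above; until you derive them and carry the extraction through (or substitute the paper's one-shot $B_+$), your argument has a hole precisely where the paper's explicit construction does the work — though the paper's machinery shows the hole is fillable, so the plan is sound even if the proof is not yet complete.
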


As a consequence, the factorization of a BBT Hamiltonian in class \(C\) is \begin{align}
    \label{WHC}
    \widetilde{H}_{\rm C}=
    \begin{bmatrix}
    \widetilde{A}_+& 0\\
    0& \widetilde{A}_+\\
    \end{bmatrix}
    \begin{bmatrix}
    i\widetilde{D}_4& 0\\
    0&i\widetilde{D}_4\\
    \end{bmatrix}
    \begin{bmatrix}
    \widetilde{A}_+^\dagger& 0\\
    0& \widetilde{A}_+^\dagger\\
    \end{bmatrix}\equiv H_+H_FH_+^\dagger ,
\end{align}
with \(H_F\) also in the class C.

\subsection{Class CI}

A BBT Hamiltonian in this class takes the form
\begin{align*}
\widetilde{H}_{\rm CI}=
\begin{bmatrix}
0& \widetilde{A}^\dagger&0&0\\
\widetilde{A}&0&0&0\\
0& 0& 0&\widetilde{A}^\dagger\\
0&0& \widetilde{A}&0\\
\end{bmatrix},\quad \widetilde{A}^T=\widetilde{A}.
\end{align*}
The blocks are generic complex and symmetric BBT operators.  

\begin{thm} [A factorization of complex symmetric matrix Laurent polynomials ] 
\label{thm:symwh5}
Let $A=A^{\rm T} \in\C_d[z,z^{-1}]$ be invertible on the unit circle. There exists a
factorization $A = A_+ D_2 
A_+^{\rm T}$ such that $A_+ \in \mathbb{C}_d[z]$
is invertible for \(|z|\leq 1\)
and 
\begin{align*}
D_2 
= 
\sum_{m \in \mathcal{M}_+}\big(z^{\ind_m}|\bar{m}\rangle\langle{m} |  
+ z^{-\ind_m}|m\rangle\langle{\bar{m}} |\big) 
+ \sum_{m \in \mathcal{M}_0}|m\rangle\langle m| ,
\end{align*}
in terms of $\{\ind_1,\dots,\ind_d\} = \mathcal{K}$. 
\end{thm}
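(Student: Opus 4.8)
The plan is to bootstrap the claimed symmetric factorization from the ordinary Wiener--Hopf factorization of Proposition \ref{proof:wh}, using the self-duality $A=A^{\rm T}$ to tie the two side factors together. First I would apply Proposition \ref{proof:wh} to write $A=A_+DA_-$ in standard form, with $D=\sum_m z^{\ind_m}|m\rangle\langle m|$ and $\ind_1\ge\cdots\ge\ind_d$. Applying the transpose, an involutive anti-automorphism sending $z\mapsto z^{-1}$, gives a second factorization $A=A^{\rm T}=A_-^{\rm T}D^{\rm T}A_+^{\rm T}$, in which $A_-^{\rm T}$ is now a ``$+$''-type factor (polynomial in $z$, invertible for $|z|\le1$) and $A_+^{\rm T}$ a ``$-$''-type factor. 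Introducing the reversal permutation $P=\sum_m|\bar m\rangle\langle m|$ realigns $D^{\rm T}$ into decreasing order, and one checks $PD^{\rm T}P=\sum_m z^{-\ind_{\bar m}}|m\rangle\langle m|$, so $A=(A_-^{\rm T}P)(PD^{\rm T}P)(PA_+^{\rm T})$ is again a standard factorization. By uniqueness of the normal form (Proposition \ref{lem:uniquepi}) the two middle factors agree, forcing $\ind_m=-\ind_{\bar m}$: the partial indices are symmetric about zero, the pairing $m\leftrightarrow\bar m$ carries $\mathcal{M}_+$ onto $\mathcal{M}_-$ and fixes $\mathcal{M}_0$ setwise, and $PD^{\rm T}P=D$. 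This already explains the index structure appearing in $D_2$.

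The conceptual heart is the base case in which all $\ind_m=0$, so $D=\I$ and $A=A_+A_-$. Transposing gives $A=A_-^{\rm T}A_+^{\rm T}$, hence $U\equiv(A_-^{\rm T})^{-1}A_+=A_+^{\rm T}A_-^{-1}$. The left expression is analytic and invertible for $|z|\le1$ and the right one for $|z|\ge1$ (including $z=\infty$); being equal, $U$ extends to the whole Riemann sphere, so by a Liouville-type argument $U=C$ is a constant invertible matrix, and transposing its definition shows $C=C^{\rm T}$. Then $A_+=A_-^{\rm T}C$ gives $A_-=C^{-1}A_+^{\rm T}$ and therefore $A=A_+C^{-1}A_+^{\rm T}$. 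Finally I would invoke the Autonne--Takagi (congruence) normal form over $\C$: every nondegenerate complex symmetric matrix is equivalent to the identity, so $C^{-1}=MM^{\rm T}$ with $M$ constant invertible, whence $A=(A_+M)\,\I\,(A_+M)^{\rm T}=\widehat A_+\,D_2\,\widehat A_+^{\rm T}$. It is precisely this sign-free congruence over $\C$ that makes $D_2$ carry the bare projector $\sum_{m\in\mathcal{M}_0}|m\rangle\langle m|$ on $\mathcal{M}_0$, in contrast with the signature-laden $D_1$ of Theorem \ref{thm:symwh2} in the Hermitian case.

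For the general case I would reduce to this base case by an iterative procedure adapting the standard Wiener--Hopf algorithm of \ref{computeWHF}: each elementary left column operation used to lower the total degree $\sum_r|r|$ of the symbol is performed simultaneously with its transpose acting on the right, i.e. one passes from $A$ to $E^{-1}A(E^{\rm T})^{-1}$ with $E$ invertible and polynomial in the disk. This keeps every intermediate symbol symmetric and drives it toward a constant symmetric invertible matrix, finished by Autonne--Takagi as above; the accumulated product of the $E$'s becomes the side factor $\widehat A_+$. The pairing of $|m\rangle$ with $|\bar m\rangle$ then turns each $\operatorname{diag}(z^{\ind_m},z^{-\ind_m})$ on $\mathcal{M}_+\cup\mathcal{M}_-$ into the antidiagonal block $z^{\ind_m}|\bar m\rangle\langle m|+z^{-\ind_m}|m\rangle\langle\bar m|$, reproducing the stated $D_2$. (Alternatively, one can avoid the iteration by invoking Proposition \ref{lem:uniquepi} once more to produce the unit block-triangular matrix $B$ with $A_+=BA_-^{\rm T}P$ and then absorb $B$ into the side factor; I expect this route to be less transparent.)

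The hard part will be the inductive/absorption step rather than the base case. The Liouville plus Autonne--Takagi argument on a degenerate block is clean, but verifying that the symmetric congruence $A\mapsto E^{-1}A(E^{\rm T})^{-1}$ can be arranged to strictly decrease the complexity while (i) preserving the polynomial, disk-invertible character of the accumulating factor and (ii) respecting the index-ordering constraints of Proposition \ref{lem:uniquepi} is where the bookkeeping is delicate: the standard algorithm's pivoting on the left must remain compatible with the mirrored operation on the right so that symmetry is never broken and the degree genuinely drops at each stage. Establishing the symmetric partial indices at the outset is what guarantees this reduction terminates with $\mathcal{M}_0$ self-paired and no residual signature to obstruct the final $S=CC^{\rm T}$ step.
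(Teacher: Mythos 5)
Your opening moves are sound and match the paper: transposing the standard factorization, realigning with the reversal permutation, and invoking the uniqueness statement of Proposition \ref{lem:uniquepi} to force $\ind_m=-\ind_{\bar m}$ is exactly how the paper's argument (given in full for Theorem \ref{thm:symwh2} and carried over to Theorem \ref{thm:symwh5} with $\dagger$ replaced by ${\rm T}$) begins. Your base case is also correct, and genuinely different from the paper: when all partial indices vanish, the Liouville argument showing that $U=(A_-^{\rm T})^{-1}A_+=A_+^{\rm T}A_-^{-1}$ extends to the Riemann sphere and is therefore a constant invertible symmetric matrix, followed by Autonne--Takagi to write $C^{-1}=MM^{\rm T}$, is a clean and more elementary route to $A=\widehat{A}_+\widehat{A}_+^{\rm T}$ than the paper's machinery restricted to that special case, and it correctly isolates why no signature appears over $\C$.

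The gap is in the general case. Your iterative plan ``drives it toward a constant symmetric invertible matrix,'' but this is impossible whenever some $\ind_m\neq 0$: a factorization $A=E\,C\,E^{\rm T}$ with $E$ polynomial and invertible in the closed disk and $C$ constant is itself a WH factorization with trivial normal form, contradicting the uniqueness of the partial indices you established in your first paragraph. So the reduction must terminate at the antidiagonal hyperbolic blocks of $D_2$, and showing that the mirrored operations $A\mapsto E^{-1}A(E^{\rm T})^{-1}$ actually reach that form --- preserving symmetry and invertibility on the circle, strictly decreasing a complexity measure, and never getting stuck on a symmetric diagonal entry such as $z^{\ind}+z^{-\ind}$ --- is precisely the content you have not supplied; you concede as much in your final paragraph. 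Moreover, the route you set aside as ``less transparent'' is the paper's actual proof, and it is not a mere absorption into the side factor: from $A_-^{\rm T}X=A_+B$ one gets $A=A_+(BDX)A_+^{\rm T}$ with $B$ trapped between $A_+$ and $DX$ (it does not commute past $DX$), and the paper must (i) extract the entrywise constraints $B_{mm'}=z^{\ind_m+\ind_{m'}}B_{\bar{m}'\bar{m}}$ from the symmetry of $BDX$, (ii) Takagi-factor the constant symmetric $\mathcal{M}_0$ block, (iii) explicitly construct a polynomial $B_+$ that splits the entries of $B$ between $B_+$ and $B_+^{\rm T}$ (with the telltale factors of $\tfrac{1}{2}$ on the anti-diagonal terms) so that $BDX=B_+D_2B_+^{\rm T}$, and (iv) verify that $B_+$ is invertible in the closed disk by multiplying through by $z^{-k}$ with $k$ below the smallest partial index. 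Until you either carry out this construction or give a genuine termination proof for your symmetric reduction, the theorem is established only in the zero-index case.
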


It follows that 
\begin{align}
    \label{WHCI}
    \widetilde{H}_{\rm CI}=\widetilde{H}_+\widetilde{H}_F\widetilde{H}_+^T ,
\end{align}
in terms of
\begin{align*}
    \widetilde{H}_{+}=
\begin{bmatrix}
\widetilde{A}_+^*& 0&0&0\\
0&\widetilde{A}_+&0&0\\
0& 0& \widetilde{A}_+^*&0\\
0&0&0& \widetilde{A}_+\\
\end{bmatrix}, \quad
\widetilde{H}_F=
\begin{bmatrix}
0& \widetilde{D}_2^*&0&0\\
\widetilde{D}_2&0&0&0\\
0& 0& 0&\widetilde{D}_2^*\\
0&0& \widetilde{D}_2&0\\
\end{bmatrix} .
\end{align*}
As usual, \(\widetilde{H}_F\) also belongs to the class CI.

\subsection{Remarks on the factorization results}
\label{sec:wh}

The proofs of the above factorization theorems are collected in the appendices, while Table~\ref{table:classes2} provides an overall summary. In the table one can see the pattern at stake: There are three associative real division algebras and two matrix involutions, transposition and hermitian conjugation,
for a total of fifteen classes of matrices over these three algebras. However, the classes of real hermitian (class AI)/antihermitian and real symmetric/antisymmetric (class D) matrices are the same, respectively. In addition, complex antihermitian matrices can be identified with complex hermitian matrices (class A) because,
If \(A\) is any complex antihermitian matrix, then \(iA\) is complex hermitian and the other way round. Therefore, a suitable factorization of \(A\)
can be found using that of $iA$.
Finally, the ten AZ classes do not include quaternionic symmetric/antisymmetric matrices explicitly. The reason is that these matrices are equivalent to quaternionic antihermitian (class C) and quaternionic hermitian (class AII) matrices respectively. Explicitly,
if \(A\) is a quaternionic matrix such that \(A^\dagger=\pm A\),
then \((jA)^T=\mp jA\).
Therefore, the ten conditions
in Table~\ref{table:classes2} exhaust {\em all} possibilities for matrix Laurent polynomials over $\mathbb{F} = \mathbb{R},\mathbb{C},\mathbb{H}$
that are symmetric, antisymmetric, Hermitian or anti-Hermitian,
or unstructured. This is the basic algebraic justification of 
the special role of the number ten.


\medskip 
Let us conclude this section with a summary of our main results thus far. For a given clean, semi-infinite model in any one of the AZ symmetry class and subject to open BCs, we are interested in a WH factorization of the form
\(
\widetilde{H} = \widetilde{H}_+ \widetilde{H}_F \widetilde{H}_+^\dagger
\)
and consistent with the classifying symmetries. That is, \(\widetilde{H}\) and \(\widetilde{H}_F\) should belong to the same AZ symmetry class. Thanks to the functional calculus of BBT operators, we can focus on the symbol (analytic continuation of the Bloch Hamiltonian) in the following, and drop the tildes.
Depending on the symmetry class of $H$, 
the side and middle factors can be chosen to have additional structure. The middle factor is always denoted by $H_F$, where $F$ stands for flat-band. 
The constraints on the middle and side factors are summarized in Table \ref{table:wh}. 

\begin{table}[t]
	\begin{center}
		\begin{tabular}{|c|c|c|c|}
			\hline
			{\bf Symmetry} & {\bf Constraints on $A$} & {\bf Side factors} & {\bf Middle factor} \\
			{\bf class} &&&\\
			\hline\hline
   AIII &  $A \in \mathbb{C}_{d}[z,z^{-1}]$ &
		$A_+ \in \mathbb{C}_{d}[z],\ A_-\in \mathbb{C}_{d}[z^{-1}]$
		& $D(\mathcal{K})$
		\\[10pt]
		\hline	
		BDI &  $A \in \mathbb{R}_{d}[z,z^{-1}]$ &
		$A_+ \in \mathbb{R}_{d}[z],\ A_-\in \mathbb{R}_{d}[z^{-1}]$
		& $D(\mathcal{K})$
		\\[10pt]
		\hline	
		CII &  $A \in \mathbb{H}_{d}[z,z^{-1}]$ &
		$A_+ \in \mathbb{H}_{d}[z],\ A_-\in \mathbb{H}_{d}[z^{-1}]$
		& $D(\mathcal{K})$
		\\[10pt]
		\hline	
			A & $A = A^\dagger \in \mathbb{C}_{d}[z,z^{-1}]$ & $A_+ = A_-^\dagger \in \mathbb{C}_{d}[z]$ & 
            $D_1(\mathcal{K},s)$ \\ 
            [10pt]\hline
            
			AI & $A = A^{\rm T} \in \mathbb{R}_{d}[z,z^{-1}]$ & $A_+ = A_-^{\rm T} \in \mathbb{R}_{d}[z]$ & $D_1(\mathcal{K},s)$\\[10pt]\hline
			
			AII & $A = A^\dagger \in \mathbb{H}_{d}[z,z^{-1}]$ & $A_+ = A_-^\dagger \in \mathbb{H}_{d}[z]$ & $D_1(\mathcal{K},s)$\\[10pt]\hline
				CI & $A = A^{\rm T} \in \mathbb{C}_{d}[z,z^{-1}]$ & 
		$A_+ = A_-^{\rm T}  \in \mathbb{C}_{d}[z]$ & 
		$D_2(\mathcal{K})$\\[10pt]
		\hline	
							
		DIII & $A = -A^{\rm T} \in \mathbb{C}_{d}[z,z^{-1}]$ & 
		$ A_+ = A_-^{\rm T} \in \mathbb{C}_{d}[z]$
		& 
		$D_3(\mathcal{K})$\\[10pt]
		\hline	
  	D & $A = -A^{\rm T} \in \mathbb{R}_{d}[z,z^{-1}]$ & $A_{+} = A_-^{\rm T}\in \mathbb{R}_{d}[z]$ & $D_3(\mathcal{K})$\\[10pt]\hline
  	C & $A = -A^\dagger \in \mathbb{H}_{d}[z,z^{-1}]$ & $A_{+} = A_-^\dagger \in \mathbb{H}_{d}[z]$ & $D_4(\mathcal{K})$\\[10pt]\hline
	\end{tabular}
		\caption[Symmetric WH factorization of 1D Hamiltonians]{\small 
		The form of the three factors in the factorization of $A$, where $A$ is the relevant block of
		the Hamiltonian defined in Sec. \ref{AZblocks}. 
		\label{table:wh}}
  \end{center}
\end{table}

\bigskip

\section{Spectral flattening using the symmetric Wiener-Hopf factorization}
\label{sec:spectralflattening}

We now explain how the SWH factorizations lead to spectral flattening of 1D systems with periodic and semi-open BCs, respectively. Notably, the spectrally flattened Hamiltonians thus obtained are ``dimerized,'' in the sense we explain below. 

\subsection{Spectral flattening under periodic boundary conditions}

In this section we provide a physical interpretation and a practical application of the middle factor in the SWH factorization of the Bloch Hamiltonian. As it turns out, this factor describes a flat-band, dimer Hamiltonian that is in the same symmetry class and topological phase as the original Hamiltonian. Thanks to this property, we can calculate the bulk topological invariant of a Bloch Hamiltonian from the middle factor of its factorization alone. In this sense, the SWH factorization yields a kind of spectral flattening of the Bloch Hamiltonian. 

The conventional approach to spectral flattening is to assign energy 
eigenvalues $-1$ ($+1$) (in arbitrary units) to all the filled (empty) bands of states without changing the states themselves \cite{Kitaev09}. Assuming that the zero of the energy lies in the band gap, this operation can be described as  
\(
H \mapsto \text{sgn}(H).
\)
In general, the flat-band Hamiltonian obtained as a result of conventional spectral flattening need not be local. By contrast, the SWH spectral flattening map \(H\mapsto H_F\) ensures that the flat-band Hamiltonian is a dimer Hamiltonian. This improvement
comes at the expense of deforming the eigenstates of the original Hamiltonian without changing the symmetry class or topological phase. The SWH spectral flattening has another advantage over the standard one: it allows one to construct a {\it local} adiabatic transformation that transforms the given Hamiltonian into the spectrally flattened one. We will show how to achieve this shortly, thereby providing a physical interpretation of the side factors in the SWH factorization.

Let us justify our physical picture for and mathematical claims about \(H_F\). Notice that the middle factor \(H_F\), like the original Bloch Hamiltonian, is a Hermitian matrix Laurent polynomial. Therefore, it itself represents {\it a} Bloch Hamiltonian. We need to show that 
\begin{enumerate}
\item the eigenvalues of $H_F(e^{ik})$ for $k \in [-\pi,\pi)$ are independent $k$; 
\item in real space, $H_F$ can be expressed as a direct sum of independent local Hamiltonians acting on at most two elements of the ordered basis; 
\item $H_F$ is in the same symmetry class as $H$; and
\item $H_F$ is in the same topological phase as $H$.
\end{enumerate}  
The first three properties are straightforward to see from the form of $H_F$ in each class, see 
\ref{exmp:ssh} for a concrete example. The rest of this 
section is devoted to establishing the fourth property. 

\begin{thm}
\label{thm:adiabatic}
If $H$ be an invertible (fully-gapped at zero energy) Bloch Hamiltonian in the symmetry class X, then then there exists a continuous family of Bloch Hamiltonians $H(t)$ for $t\in[0,1]$ such that 
\begin{enumerate}
\item $H(z,0) = H(z)$ and $H(z,1) = UH_F(z)U^\dagger$ for some internal (position independent) unitary transformation $U$; 
\item $H(z,t)$ is invertible for all $t$ and for all $z$ on the unit circle; and
\item $H(z,t)$ belongs to the symmetry class X for all $t$.
\end{enumerate}
\end{thm}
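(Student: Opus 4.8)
The plan is to build the homotopy directly from the symmetric Wiener--Hopf factorization of Section~\ref{sec:symwh}, deforming the side factors to a constant unitary while leaving the dimerized middle factor untouched. Although several class factorizations are displayed with a transpose at the block level, at the level of the \emph{full} Bloch Hamiltonian each one can be written uniformly as a Hermitian congruence $H(z) = H_+(z)\,H_F(z)\,H_+(z)^\dagger$, where the side factor $H_+\in\F_d[z]$ is a matrix polynomial in $z$ invertible on the closed unit disk $|z|\le 1$ and carrying the field-appropriate structure (real, complex, or quaternionic), and $H_F$ is the middle factor, itself a Bloch Hamiltonian in class X whose bands are flat and nonzero (the eigenvalues are $k$-independent), so that $H_F$ is invertible on the unit circle. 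The whole construction rests on the fact that the map $X\mapsto G\,X\,G^\dagger$, \emph{congruence by a field-appropriate invertible $G$}, sends class-X data to class-X data; this is exactly the structural content that makes the factorization Theorems~\ref{thm:symwh1}--\ref{thm:symwh5} class-preserving.

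\emph{Stage 1 (shrink the side factor to a constant).} I would set $G_t(z)\equiv H_+((1-t)z)=\sum_{r\ge 0}(1-t)^r h_{+,r}\,z^r$ for $t\in[0,1]$ and define $H(z,t)\equiv G_t(z)\,H_F(z)\,G_t(z)^\dagger$. Scaling by the real number $1-t$ preserves polynomiality in $z$, invertibility on the closed disk (for $|z|=1$ the argument $(1-t)z$ still lies in $|w|\le 1$), and the field type of $H_+$; hence each $G_t$ is an admissible side factor, $H(z,t)$ lies in class X, and it is invertible on $|z|=1$ because $G_t(z)$ and $H_F(z)$ are. At the endpoints $H(z,0)=H(z)$ and $H(z,1)=u\,H_F(z)\,u^\dagger$, with $u\equiv H_+(0)$ a constant invertible matrix of the field type.

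\emph{Stage 2 (rotate the constant to a unitary).} Writing the polar decomposition $u = U P$, with $U$ in the appropriate compact group (orthogonal, unitary, or quaternionic-unitary according to $\F$) and $P$ positive-definite Hermitian of the field type, I would take $u(s)\equiv U P^{\,1-s}$ for $s\in[0,1]$ and set $H(z,s)\equiv u(s)\,H_F(z)\,u(s)^\dagger$. Each $u(s)$ is invertible and of the field type, so $H(z,s)$ stays in class X and invertible on the unit circle, and at $s=1$ one reaches $U\,H_F(z)\,U^\dagger$ with the constant, position-independent unitary $U$ supplied by the polar factor of $H_+(0)$. Concatenating the two stages and reparametrizing to $t\in[0,1]$ yields the required continuous family.

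The main obstacle is the class-preservation claim, i.e.\ that congruence by a field-appropriate $G$ maps class X into class X. For the unstructured and plainly Hermitian classes (A, AIII, BDI, CII) this is immediate, but for the classes carrying an additional (anti)unitary constraint (AII, C, DIII, CI, \dots) one must check that the defining conditions of the form $M H = \pm H M$ and $M H^* = \pm H M$ survive the congruence. The resolution is structural rather than computational: membership in $\F=\R$ or $\H$ is \emph{equivalent} to intertwining with the relevant symmetry (e.g.\ a quaternionic matrix is precisely a complex matrix with $\sigma_y G^*\sigma_y = G$), so a field-appropriate $G$ automatically intertwines with the symmetry operator and the congruence preserves the constraint — the very observation underlying the factorization theorems. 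A secondary point worth flagging is connectedness: for the real classes $\mathrm{GL}_d(\R)$ is disconnected, so $u$ cannot in general be contracted all the way to the identity, only to an orthogonal matrix in its own component, which is exactly why the statement permits a nontrivial internal unitary $U$ rather than demanding $H(z,1)=H_F(z)$.
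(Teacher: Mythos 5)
Your proposal is correct and follows essentially the same route as the paper's proof: the paper likewise first shrinks the side factor via $H_+(z(1-2t))$ (your $H_+((1-t)z)$), then contracts the constant $H_+(0)$ to its unitary polar factor via $S_+^{(2-2t)}U_+$ (your $UP^{1-s}$), verifying invertibility by the scaling of the roots of $\det H_+$ and class membership by the field/block structure of the factors, including the quaternionic polar decomposition for classes AII, CII, and C. Your closing remarks on block-diagonality in the chiral classes and the disconnectedness of $\mathrm{GL}_d(\R)$ correctly capture why the statement allows a residual unitary $U$.
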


\begin{proof}
We construct a family of Hamiltonians satisfying the above requirements.
Let $H(z) = H_+(z)H_F H_+^\dagger(z)$ be a SWH factorization of $H$.
Then the family is given by $H(z,t) = H_+(z,t) H_F H_+^\dagger(z,t)$, where
\[
H_+(z,t) = \left\{\begin{array}{lcl}
H_+(z(1-2t))&  \text{if} &
 0\le t {\color{blue} \le} 1/2,\\[5pt]
 S_+^{(2-2t)}U_+ & \text{if} & 1/2 \le t \le 1.
 \end{array}\right.,
\]
where $S_+$ and $U_+$ are the positive definite and unitary matrices
appearing in the polar decomposition $H_+(z,t=1/2)=H_+(z=0) = S_+U_+$
of the left factor $H_+(z,t)$ in the SWH factorization of $H(t)$ at \(t=1/2\). 
The first condition in the Lemma is satisfied by construction. 
The second condition follows from the observation that if
$\{z_\ell\}$ are the roots of $\det H_+(z)=0$, then the roots of 
$\det H_+(z,t) =0$ for $0\le t <1/2$ are given by$\{z_\ell/(1-2t)\}$.
Therefore, they remain outside the unit circle for $0\le t <1/2$,
so that $H_+(z,t)$ remains invertible on the unit circle. In fact,
this construction ensure that $H_+(z,t)$ remains invertible inside the 
unit circle as well, and therefore the factorization 
$H(z,t) = H_+(z,t) H_F H_+^\dagger(z,t)$ is a SWH 
factorization of $H(z,t)$. 

The third condition is easy to verify on a case-by-case basis. For illustration, we consider an arbitrary Hamiltonian in class BDI and see why $H(t)$ is in the same symmetry class for any \(t\). Since $H_+(z,t)$ remains block-diagonal throughout by construction,
$H(z,t)$ is block off-diagonal for all $t$. Note that
$H_+(z,t)$ is a real matrix polynomial for $t\in[0,1/2]$ 
since the factor $(1-2t)$ is real. 
To see that $H_+(z,t)$ is real for $t\in[1/2,1]$, first note that
$H_+(0)$ as dictated by the SWH factorization for class BDI. and therefore
$U_+ = S_+^{-1}H_+(0)$ is also real, and thus the claim follows 
Therefore, $H(z,t)$ belongs to the symmetry class 
BDI for all $t$.  Similar arguments apply to all symmetry classes. For symmetry classes AII, CII, and C, one makes use of the result that for quaternionic matrices, polar decomposition in terms of quaternionic positive definite and quaternionic unitaries exists \cite{Loring12, Feng10}.
\end{proof}

An immediate corollary of Theorem \ref{thm:adiabatic} is that two Hamiltonians in the same symmetry class with the same set of partial indices can be connected adiabatically:
\begin{coro}
If $H_1$ and $H_2$ both belong to
the symmetry class $X$ and have the same set of symmetric
partial indices $\mathcal{K}$, 
then there exists an adiabatic transformation connecting $H_1$ to $H_2$.
\end{coro}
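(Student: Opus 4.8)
The plan is to promote adiabatic transformations, in the sense of Theorem~\ref{thm:adiabatic}, to an equivalence relation on gapped class-$X$ Hamiltonians and then connect $H_1$ and $H_2$ to a common representative. Reflexivity is the constant family; symmetry follows by reversing the parameter $t\mapsto 1-t$; transitivity follows by concatenating two families and reparametrizing, the concatenation again being continuous, invertible on the unit circle, and confined to $X$. Hence it suffices to produce a single Hamiltonian to which both $H_1$ and $H_2$ are adiabatically connected. Applying Theorem~\ref{thm:adiabatic} to each input, there are internal unitaries $U_1,U_2$ and adiabatic transformations $H_i\sim U_i H_{F,i}U_i^\dagger$, where $H_{F,i}$ is the middle factor of a SWH factorization of $H_i$.

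Next I would argue that the two middle factors coincide. Reading off Table~\ref{table:wh}, for classes AIII, BDI, CII, CI, DIII, D and C the canonical dimerized factor is determined entirely by the set $\mathcal{K}$ of symmetric partial indices, so that $H_{F,1}=H_{F,2}\equiv H_F$. In the Hermitian classes A, AI and AII the factor $D_1(\mathcal{K},s)$ carries in addition the signature $s$; since $s$ is the number of positive minus negative eigenvalues of $A_k$, a gap-stable integer, matching $H_{F,1}=H_{F,2}$ there requires also reading ``the same partial indices'' as including this datum. With $H_{F,1}=H_{F,2}\equiv H_F$, the problem collapses to connecting $U_1 H_F U_1^\dagger$ to $U_2 H_F U_2^\dagger$.

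Writing $W\equiv U_1^\dagger U_2$, one has $U_2 H_F U_2^\dagger = U_1\,W H_F W^\dagger\,U_1^\dagger$, so if $W(t)$ is a continuous path of internal unitaries with $W(0)=\mathbb{1}$, $W(1)=W$, all lying in the subgroup $G_X$ of unitaries respecting the classifying symmetries of $X$, then $H(z,t)\equiv U_1 W(t)\,H_F(z)\,W(t)^\dagger U_1^\dagger$ is the desired family. Indeed conjugation preserves the spectrum at every $z$, so the family stays gapped on the unit circle, and $W(t)\in G_X$ keeps it in class $X$. The corollary therefore reduces to joining $\mathbb{1}$ to $W$ inside $G_X$.

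The main obstacle is precisely this reduction, since $G_X$ need not be connected. For classes A and AIII one has $G_X=U(d)$ and $U(d)\times U(d)$, and for AII and C the compact symplectic group; these are path-connected, so the join is immediate. The delicate cases are the real/orthogonal ones (AI, BDI, DIII, D), where $G_X$ has two components separated by a determinant sign. Here I would use that the relevant topological data is \emph{conjugation-invariant}: equal $\mathcal{K}$ forces $U_1 H_F U_1^\dagger$ and $U_2 H_F U_2^\dagger$ into the same phase (for class D, $\text{Pf}(OAO^{\rm T})=\det(O)\,\text{Pf}(A)$ cancels in the ratio entering $Q^B$ in Eq.~\eqref{invariantD}, and the winding-number invariants are likewise insensitive to a constant orthogonal factor), so the extra component may be absorbed by routing the deformation through the degeneracy in the $\mathcal{M}_0$ block of $H_F$ rather than through $G_X$ alone. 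Verifying this absorption class by class is the part I expect to require the most care; everything else is a direct consequence of Theorem~\ref{thm:adiabatic}.
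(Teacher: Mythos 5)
Your backbone coincides with the paper's own proof, which consists of exactly the two moves you lead with: by Theorem~\ref{thm:adiabatic} each Hamiltonian is connected to the canonical dimerized middle factor, and the two paths are then reversed and concatenated. You go beyond the paper in two ways that deserve credit. First, you notice that for the Hermitian classes A, AI, AII the middle factor $D_1(\mathcal{K},s)$ also carries the signature $s$, so equal $\mathcal{K}$ alone does not force $H_{F,1}=H_{F,2}$; this is a genuine caveat the paper's two-line proof passes over (e.g.\ $A_1=\mathbb{1}$ and $A_2=\mathrm{diag}(1,-1)$ share $\mathcal{K}=\{0,0\}$ but cannot be gap-connected, since the signature at $z=1$ is a homotopy invariant). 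Second, you take seriously that the endpoint of Theorem~\ref{thm:adiabatic} is $U_iH_FU_i^\dagger$ rather than $H_F$ itself, whereas the paper silently identifies the two endpoints before concatenating.

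However, your plan for closing that last gap --- contracting $W=U_1^\dagger U_2$ to $\mathbb{1}$ inside $G_X$, with the disconnected real cases ``absorbed by routing through the $\mathcal{M}_0$ degeneracy'' --- is a step that would fail, because the strict statement it targets (a path ending exactly at $H_2$) is false in general. Concretely, in BDI with $d=1$ take $A_1(z)=z$ and $A_2(z)=-z$: both have $\mathcal{K}=\{1\}$, but along any continuous family of gapped real symbols the number $A(1,t)$ is real, nonzero, and continuous, so $\mathrm{sign}\,A(1,t)$ is constant; the two signs differ, hence no gapped BDI path joins them, and here $\mathcal{M}_0=\emptyset$ leaves your absorption nothing to work with. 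The absorption idea fails even when $\mathcal{M}_0$ is everything: in class D with $d=2$, $A_1=|1\rangle\langle 2|-|2\rangle\langle 1|$ and $A_2=-A_1$ share $\mathcal{K}=\{0,0\}$, yet $\mathrm{sign}\,\mathrm{Pf}\,A(1,t)$ is likewise constant along gapped paths and distinguishes them. The correct repair --- and evidently the intended reading, consistent with the endpoint convention of Theorem~\ref{thm:adiabatic} itself --- is to weaken the target rather than to path-connect the unitaries: conjugating the \emph{entire} second path by the constant, class-compatible unitary $U=U_1U_2^\dagger$ preserves gappedness and the symmetry class, so $H_1$ is adiabatically connected to $UH_2U^\dagger$, and no question about the connectivity of $G_X$ ever arises. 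In short, your extra rigor correctly exposed a hole that the paper's proof glosses over, but the final step of your proposal proves too much; the corollary survives only modulo a terminal internal unitary.
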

\begin{proof}
The assumptions guarantee that \(H_1\) and \(H_2\) are both adiabatically connected to the same flat-band Hamiltonian $H_F$ by the explicit construction of the theorem above. By reversing the path from $H_2$ to $H_F$ and appending it to the path from $H_1$ to $H_F$, we obtain adiabatic transformation of $H_1$ into $H_2$.  
\end{proof}

\begin{remark}
Note that the conditions of the Corollary are {\em sufficient} but not necessary for adiabatic equivalence: there exist adiabatically connected Bloch Hamiltonians with distinct sets of partial indices. In other words, individual partial indices are not homotopy invariants. However, as we will see shortly, certain combinations of them are homotopy invariants.
\end{remark}

The second condition in Theorem \ref{thm:adiabatic}
ensures the existence of non-zero energy gap during the adiabatic 
transformation. We can, in fact, prove a lower bound on the energy gap during this transformation.
\begin{prop}
\label{prop:boundgap}
	Let $H = H_+ H_F H_+^\dagger$ be a SWH factorization of a translation-invariant $H$ (a block-Laurent operator).
	Then there exists no eigenvalue of $H$ in the energy window $[-\Delta,\Delta]$,
	where 
	\[
	\Delta = \frac{1}{\norm{H_+^{-1}}^2}.
	\] 
\end{prop}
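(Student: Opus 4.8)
The plan is to reduce the claim to a single point of the Brillouin zone and then invoke a short linear-algebra estimate. Since $H$ is translation invariant it is a block-Laurent operator, and the Fourier transform block-diagonalizes it into the family of Bloch matrices $H(e^{ik})$, $k\in[-\pi,\pi)$, so that the spectrum of $H$ is the union of the spectra of these matrices. The same transform is an isometry on block-Laurent operators, whence the operator norm of any such operator equals the supremum over $k$ of the matrix norm of its symbol. Applied to the side factor this gives $\norm{H_+^{-1}} = \sup_k \norm{H_+(e^{ik})^{-1}}$, and hence $\Delta = \inf_k \norm{H_+(e^{ik})^{-1}}^{-2}$. It therefore suffices to show, for each fixed $k$, that every eigenvalue $\lambda$ of $H(e^{ik})$ satisfies $|\lambda|\ge \norm{H_+(e^{ik})^{-1}}^{-2} \ge \Delta$.

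First I would record the flat-band property of the middle factor. On the unit circle each of the normal forms $D,D_1,D_2,D_3,D_4$ reduces to a direct sum of $2\times 2$ antidiagonal blocks of the type $\bigl[\begin{smallmatrix} 0 & z^{-\ind_m} \\ z^{\ind_m} & 0\end{smallmatrix}\bigr]$ (possibly dressed by $i$ or by a quaternionic unit) together with $\pm1$ diagonal entries; since $|z|=1$ makes $z^{\pm\ind_m}$ a phase, every such block is Hermitian with eigenvalues $\pm1$. Consequently $H_F(e^{ik})$ is Hermitian \emph{and} unitary, i.e. $H_F(e^{ik})^2=\mathbb{1}$ and $\norm{H_F(e^{ik})}=1$; this is exactly the assertion that $H_F$ has flat bands at $\pm1$, already noted as property (1) in the list preceding Theorem \ref{thm:adiabatic}. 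The verification is case-by-case but entirely mechanical.

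Second, I would fix $k$ and abbreviate $M=H_+(e^{ik})$ and $F=H_F(e^{ik})$, so that $N:=H(e^{ik})=MFM^\dagger$ is Hermitian and invertible (invertibility of $M$ holds because $H_+$ is a side factor). For a Hermitian invertible matrix the eigenvalue of smallest modulus equals $\norm{N^{-1}}^{-1}$. Using $F^{-1}=F$ I would write $N^{-1}=(M^{-1})^\dagger F M^{-1}$, and then submultiplicativity together with $\norm{F}=1$ and $\norm{(M^{-1})^\dagger}=\norm{M^{-1}}$ yields $\norm{N^{-1}}\le \norm{M^{-1}}^2$. Hence the smallest-modulus eigenvalue of $N$ is at least $\norm{M^{-1}}^{-2}$, so $|\lambda|\ge \norm{H_+(e^{ik})^{-1}}^{-2}\ge\Delta$ for every eigenvalue. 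Taking the union over $k$ establishes the spectral gap of half-width $\Delta$ about zero energy.

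The linear-algebra estimate is immediate, so the only genuine work lies in the two supporting facts. I expect the main obstacle to be the operator-theoretic identity $\norm{H_+^{-1}}=\sup_k\norm{H_+(e^{ik})^{-1}}$: one must justify that $H_+^{-1}$, although no longer banded, is still a bounded block-Laurent operator whose norm is computed by the Fourier transform, which uses that $H_+$ is invertible on and inside the disk so that $H_+^{-1}$ has an absolutely convergent Fourier series on the circle. The flat-band verification, while requiring the explicit form of $H_F$ in all ten classes, presents no real difficulty.
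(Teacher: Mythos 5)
Your proof is correct, but it takes a genuinely different route from the paper's. The paper stays at the operator level and argues via the resolvent: for $\epsilon\in[-\Delta,\Delta]$ it writes $H-\epsilon = H_+\bigl(H_F-\epsilon\, H_+^{-1}(H_+^{\dagger})^{-1}\bigr)H_+^\dagger$, notes $\norm{\epsilon H_+^{-1}(H_+^{\dagger})^{-1}}\le \Delta\norm{H_+^{-1}}^2\le 1$, and concludes that the middle factor --- a small perturbation of $H_F$, whose eigenvalues are all $\pm 1$ --- remains invertible, so every such $\epsilon$ lies in the gap. You instead invert $H$ once, fiberwise as $N^{-1}=(M^{-1})^\dagger F M^{-1}$, bound $\norm{N^{-1}}\le\norm{M^{-1}}^2$, and invoke the fact that for a Hermitian invertible matrix the distance from zero to the spectrum equals the reciprocal of the inverse norm. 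The two arguments are dual formulations of the same estimate (``$H-\epsilon$ invertible for $|\epsilon|\le 1/\norm{H^{-1}}$'' versus ``$\mathrm{dist}(0,\sigma(H))=1/\norm{H^{-1}}$''), but yours buys a sharper, $k$-resolved bound --- every eigenvalue $\lambda$ of $H(e^{ik})$ satisfies $|\lambda|\ge\norm{H_+(e^{ik})^{-1}}^{-2}$, which is pointwise stronger than the uniform $\Delta$ --- at the cost of two ingredients the paper never needs inside this proof: the Fourier reduction to fibers and the norm identity $\norm{H_+^{-1}}=\sup_k\norm{H_+(e^{ik})^{-1}}$ (which the paper records separately as Eq.\,\eqref{eq:bulksensitivity}, and which you correctly flag as the step needing justification via the absolutely convergent Fourier series of $H_+^{-1}$). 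In fact your fiberwise detour is dispensable: since $H_F$ is Hermitian and unitary as an operator and $H$ is self-adjoint, your inverse-norm estimate runs verbatim at the operator level, $H^{-1}=(H_+^\dagger)^{-1}H_F H_+^{-1}$ giving $\mathrm{dist}(0,\sigma(H))\ge\norm{H_+^{-1}}^{-2}$ in two lines. One shared caveat: your conclusion $|\lambda|\ge\Delta$ strictly excludes only the open interval $(-\Delta,\Delta)$, leaving the endpoints $\pm\Delta$ possibly touching the spectrum; the paper's proof has exactly the same looseness (its inequality $\le 1$ does not force invertibility at the boundary), so you match the strength of what the paper actually establishes.
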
 

\begin{proof}
	We use the fact that $\epsilon$ belongs to the spectral gap of $H$ if and only if $H -\epsilon$ is invertible. Let $\epsilon \in [-\Delta,\Delta]$. Then
	\[
	H - \epsilon = H_+ (H_F - \epsilon H_+^{-1}{H_+^{\dagger}}^{-1})H_+^\dagger,
	\]	
	and $\norm{\epsilon H_+^{-1}{H_+^{\dagger}}^{-1}} \le \Delta \norm{H_+^{-1}}^2 \le 1$.
	Since all eigenvalues of $H_F$ are either $+1$ or $-1$, we find the middle factor
	in the expansion of $H-\epsilon$ is invertible. We therefore obtain that $H-\epsilon$
	is invertible for any $\epsilon \in [-\Delta,\Delta]$, thus proving the lemma.
\end{proof}

\noindent 
Using this lemma, 
it follows that the spectral gap during the adiabatic transformation described in the proof of Theorem \ref{thm:adiabatic} 
is always lower-bounded by $\Delta_\text{min}$, where 
\[
\Delta_\text{min} = \min_{t\in [0,1]} \Delta(t) = \min_{|z| \le 1} \frac{1}{\norm{H_+^{-1}(z)}^2}.
\]

The next proposition, which is one of the main results of this section, states that the spectrally flattened, ``dimerized'' Hamiltonian $H_F$ has the same value for the bulk invariant as $H$ does.

\begin{prop}
\label{thm:bulk}
Let $H = H_+H_FH_+^\dagger$ be a SWH factorization
of $H$ in symmetry class X, where X is any one of the five non-trivial 
symmetry classes in 1D, and let $\mathcal{K} = \{\ind_m,\ m=1,\dots,d\}$ denote the set of symmetric partial indices of $A$. Then,
\[
Q^B(H) = Q^B(H_F)= \left\{ \begin{array}{lcl}
\sum_{m=1}^{d} \ind_m & \text{if} & \text{X} = \text{AIII} \\
2\sum_{m=1}^{d} \ind_m & \text{if} & \text{X} \in \{ \text{BDI, CII} \}\\
(-1)^{\sum_{m \in \mathcal{M}_+} \ind_m}  & \text{if} & \text{X} \in \{\text{D, DIII}\}
\end{array}\right. .
\]
\end{prop}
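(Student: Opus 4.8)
The plan is to read off $Q^B(H)$ directly from the symmetric WH factorization of the relevant block $A$ and observe that every expression on the right-hand side depends only on the symmetric partial indices $\mathcal{K}$. Since the middle factor $H_F$ is precisely the case in which the side factors are trivial (its block is $D$, resp. $D_3$, with the same $\mathcal{K}$), the equality $Q^B(H)=Q^B(H_F)$ follows for free once $Q^B(H)$ has been computed. (One could instead deduce $Q^B(H)=Q^B(H_F)$ from homotopy invariance of $Q^B$ together with Theorem~\ref{thm:adiabatic}, but the direct computation is cleaner and produces the numerical value simultaneously.) The single engine used throughout is that a side factor is a matrix polynomial in $z$ (resp. $z^{-1}$) invertible on the closed unit disk (resp. its complement); hence $\det A_\pm$ is analytic and nowhere vanishing there, so by the argument principle its winding number around the origin is $0$, and along the real segment $[-1,1]$ its sign is constant.

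For the chiral classes AIII, BDI and CII I would substitute $A_k=A_+(e^{ik})\,D(e^{ik})\,A_-(e^{-ik})$ into the winding-number formula and use additivity of $\log\det$. The two side factors contribute zero winding by the argument principle, while $\det D(e^{ik})=e^{ik\sum_m\ind_m}$ contributes winding $\sum_m\ind_m$. For AIII this is $Q^B(H)=\sum_m\ind_m$ outright. For BDI the prefactor $1/\pi i$ in Eq.~\eqref{invariantchiral2} doubles it to $2\sum_m\ind_m$; for CII one recalls that over the $2d$-dimensional complex representation the standard partial indices are two copies of $\mathcal{K}$, so $\det A_k$ has winding $2\sum_m\ind_m$ and Eq.~\eqref{invariantchiral} returns the same value. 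This settles the first two lines.

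For class D the invariant is $\mathrm{sgn}\big[\mathrm{Pf}(A_0)/\mathrm{Pf}(A_\pi)\big]$ with $A_0=A(1)$, $A_\pi=A(-1)$ real antisymmetric. Applying the congruence identity $\mathrm{Pf}(A_+MA_+^{\rm T})=\det(A_+)\,\mathrm{Pf}(M)$ at $z=\pm1$ gives $\mathrm{Pf}(A_{0/\pi})=\det A_+(\pm1)\,\mathrm{Pf}(D_3(\pm1))$. Since $\det A_+$ is a nonvanishing real polynomial on $[-1,1]$, $\mathrm{sgn}\,\det A_+(1)=\mathrm{sgn}\,\det A_+(-1)$, so the side-factor determinants drop out of the sign and $Q^B(H)=\mathrm{sgn}\big[\mathrm{Pf}(D_3(1))/\mathrm{Pf}(D_3(-1))\big]$. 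Reading off the $2\times2$ blocks of $D_3$—each $m\in\mathcal{M}_+$ block contributing the factor $z^{-\ind_m}$ to the Pfaffian and each $\mathcal{M}_0$ block a $z$-independent constant—yields $\mathrm{Pf}(D_3(1))/\mathrm{Pf}(D_3(-1))=(-1)^{\sum_{m\in\mathcal{M}_+}\ind_m}$, as claimed.

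Class DIII is the delicate case and the main obstacle. The idea is to present the invariant of Eq.~\eqref{invariantDIII} as a ratio of Pfaffians normalized by the continuously-continued square root $\sqrt{\det A_k}$ at $k=0,\pi$, the exponential factor being exactly the device that fixes that branch. Under the congruence $A=A_+D_3A_+^{\rm T}$ both the Pfaffian and $\sqrt{\det}$ scale by the same $\det A_+$: from $\det A_k=\det A_+(e^{ik})\det A_+(e^{-ik})$ (using $\det D_3\equiv1$) and the single-valuedness of $\log\det A_+$ on the disk, the continuous square root satisfies $\sqrt{\det A_k}\big|_{k=0}=\det A_+(1)$ and $\sqrt{\det A_k}\big|_{k=\pi}=\det A_+(-1)$ up to one global sign, matching the $\det A_+$ produced by the Pfaffian identity. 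The side-factor contributions therefore cancel and the invariant reduces to $\mathrm{Pf}(D_3(1))/\mathrm{Pf}(D_3(-1))=(-1)^{\sum_{m\in\mathcal{M}_+}\ind_m}=Q^B(H_F)$. The effort here is entirely in tracking the branch of $\sqrt{\det A_k}$ across $[0,\pi]$ so that the Pfaffian signs at the two endpoints are compared consistently; I would pin down the correct sign convention in Eq.~\eqref{invariantDIII} by checking it against the flat case $A=D_3$ first, since there $\det D_3\equiv1$ makes the exponential trivial and the answer transparent.
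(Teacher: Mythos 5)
Your proposal is correct and follows essentially the same route as the paper's proof: substitute the SWH factorization into each invariant, eliminate the side-factor contributions (zero winding of $\det A_\pm$ for the chiral classes; cancellation of $\det A_+$ between the Pfaffian ratio and the branch-tracked exponential for D and DIII, using single-valuedness of $\log\det A_+$ on the closed disk), and evaluate the middle factor explicitly. Your one deviation is a small simplification in class D---deducing $\mathrm{sgn}\,\det A_+(1)=\mathrm{sgn}\,\det A_+(-1)$ from the constant sign of the real, nonvanishing polynomial $\det A_+$ on the segment $[-1,1]\subset\{z : |z|\le 1\}$, where the paper argues instead via reality of $\det A_+$ on the whole unit circle---and it is equally valid.
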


Before providing a formal proof, note that the first equality follows if one recognizes that 
the quantities \(Q^B(H)\) are homotopy invariants, since we have shown that \(H\) and \(H_F\) are homotopic (adiabatically connected) under the conditions of the Theorem. The second equality then follows by direct calculation. 

\begin{proof} We now show by calculation that \(Q^B(H)=Q^B(H_F)\) for each non-trivial symmetry class.

\smallskip

\noindent\underline{AIII}: 
The bulk invariant is the winding number of
$\det A(z)$. Let us express this determinant as
\[
\det A(z) = \det A_+(z)\det D(z) \det A_-(z).
\]
From the definition of SWH factorization, it follows
that $\det A_+(z)$ has no roots and poles inside the unit circle, whereas for $\det A_-(z)$, the number of roots is equal to the number of poles inside the unit circle. Therefore, the winding number of both $\det A_+(z)$ and $\det A_-(z)$ is zero.
We conclude that the winding number of $\det A(z)$ is 
the same as the winding number of $\det D(z)$, which is the
bulk invariant for $H_F$. Hence, the bulk invariant is unchanged as a result of spectral flattening in these three classes. 
The 
bulk invariant may be calculated explicitly:
\begin{multline*}
Q^B(H_F) = \frac{1}{2\pi i}\int_{0}^{2\pi}\!\! dk 
\,\frac{d\det D(e^{ik})/dk}{\det D(e^{ik})} 
= \frac{1}{2\pi i}\int_{0}^{2\pi}\!\! dk 
\,\frac{i(\sum_{m=1}^{d} \ind_m)
e^{ik\sum_{m=1}^{d} \ind_m}}
{e^{ik\sum_{m=1}^{d} \ind_m}} 
= \sum_{m=1}^{d} \ind_m.
\end{multline*}

\smallskip 

\noindent\underline{Class BDI and CII}: The proof is nearly identical as 
that for class AIII. For class BDI, the factor of $2$ comes from the formula for
the bulk invariant, Eq.\,\eqref{invariantchiral2}. 
For class CII, the factor of $2$ is due to the quaternionic nature, 
which results in the set of standard partial indices 
consisting of two copies of the set of symmetric partial indices.

\smallskip

\noindent\underline{Class D}: 
The SWH factorization allows us to write
$A = A_+D_3A_+^{\rm T}$.
The bulk topological invariant for this system is 
\[
Q^B = \text{sgn}\left[\frac{\text{Pf }(A(z=1))}{\text{Pf }(A(z=-1))}\right].
\]
Using a property of the Pfaffians \cite{Pfaffian},
we get
\[
\text{sgn}[\text{Pf }(A(z=\pm1))] = \text{sgn}[\det A_+(z=\pm1) \text{Pf }(D_3(z=\pm 1))],
\]
which leads to the expression
\[
Q^B = \frac{\text{sgn}[\det A_+(z=1)]}{\text{sgn}[\det A_+(z=-1)]}
\frac{\text{sgn}[\text{Pf }(D_3(z=1))]}{\text{sgn}[\text{Pf }(D_3(z=-1))]},
\]

We will now prove that $\text{sgn}[\det A_+(z=1)] = \text{sgn}[\det A_+(z=-1)]$.
To see this, first note that all eigenvalues of $A(z)$ for $|z|=1$ are imaginary
as $H(z) = iA(z)$ in the Majorana operator basis. Therefore,
$\det A(z) = \det A_+(z) \det A_+^T(z^{-1}) \det D_3(z) = (\det A_+(z))^2 \det D_3(z)$ is real on the unit circle $|z|=1$.
Since $\det D_3(z) = (-1)^{d/2} \in \mathbb{R}$, $(\det A_+(z))^2$ is always real and non-zero. Now $(\det A_+(z))^2$ does not change its sign on the unit circle
because $A_+(z)$ is invertible on the unit circle.
Therefore, $\det A_+(z)$ is either always real or always imaginary.
Furthermore, since $A_+(z)$ is a real matrix polynomial,
$\det A_+(z=\pm1) \in \mathbb{R}$. Therefore, on the unit circle,
$\det A_+(z)$ must be real and non-zero. We conclude that $\text{sgn}[\det A_+(z=1)] =\text{sgn}[\det A_+(z=-1)]$. Now we conclude that
\[
Q^B = \frac{\text{sgn}[\text{Pf }(D_3(z=1))]}{\text{sgn}[\text{Pf }(D_3(z=-1))]}.
\]
This is precisely the expression for bulk topological invariant of the flat-band Hamiltonian described by $H_F = iD_3$. Since $D_3$ has an extremely simple form, 
its Pfaffian can be computed explicitly using the standard formula. We obtain
\[
Q^B(D_3) = \text{sgn}\left[\frac{\text{Pf }(D_3(z=1))}{\text{Pf }(D_3(z=-1))}\right]
= (-1)^{\sum_{m \in \mathcal{M}_+} \ind_m}.
\]

\noindent\underline{Class DIII}: 
The SWH factorization of $A$ is $A = A_+D_3A_+^{\rm T}$.
The bulk invariant is
\[
Q^B = \left[ \frac{\text{Pf}(A(z=1))}{\text{Pf}(A(z=-1))}
\right]
\exp\left(-\frac{1}{2}\int_{k=0}^{\pi}dk \frac{d}{dk}\log \det A(e^{ik})\right). 
\]
Using the factorization, the exponent can be shown to be equal to
\[
\exp\left(-\frac{1}{2}\int_{k=0}^{\pi}dk \frac{d}{dk}\log \det A(e^{ik})\right) 
=\frac{\det(A_+(z=-1))}{\det(A_+(z=1))}.
\]
Since $\text{Pf}(A(z=e^{ik})) = \det A_+(k) \text{Pf}(D_3(z=e^{ik}))$
for $k=0,\pi$, we get
\[
Q^B = \frac{\text{Pf }(D_3(z=1))}{\text{Pf }(D_3(z=-1))}.
\]
This is the bulk invariant of the flat-band system $H_F$.
Similar to the last case, we obtain
\[
Q^B (H_F)= \frac{\text{Pf }(D_3(z=1))}{\text{Pf }(D_3(z=-1))} 
= (-1)^{\sum_{m \in \mathcal{M}_+} \ind_m}.
\]
\end{proof}

Accordingly, 
the bulk invariants of 1D systems in all five non-trivial AZ symmetry classes are functions of the partial indices of the Bloch Hamiltonian. Let us look at some examples.

\begin{exmp} 
\label{exmp:ssh}
The Su–Schrieffer–Heeger (SSH) Hamiltonian is a particularly simple two-band Hamiltonian \cite{SSH}, of the form
\[ \widehat{H} = -\sum_{j\,\mathrm{odd}} t_1 \,c_j^\dagger c_{j+1} - 
\sum_{j\,\mathrm{even}} t_2 \, c_j^\dagger c_{j+1} +\mathrm{H.c.}, \]
where $t_1, t_2$ are hopping amplitudes, which we take here to be real. A completely closed-form expression for the SWH factorization can be obtained in this case. The reduced bulk Hamiltonian for the block $\sigma=0$ is
\[
H(z) = -\begin{bmatrix} 0 & t_1 + t_2z \\
t_1 + t_2z^{-1} & 0\end{bmatrix},
\]
where $P=1$ and $\sigma$ stands for the value of spin (since the Hamiltonian being invariant under SU(2), it is block-diagonal in the basis of $\sigma_z$). The factorization of $H(z)$ according to the relevant symmetry class BDI is
\begin{equation}
\label{SSHWH}
H(z) = \left\{ \begin{array}{lcl}
\begin{bmatrix}1 & 0 \\ 0 & -(t_1z+t_2)\end{bmatrix}
\begin{bmatrix}0 & z \\ z^{-1} & 0\end{bmatrix}
\begin{bmatrix}1 & 0 \\ 0 & -(t_1z^{-1}+t_2)\end{bmatrix}
 & \quad\text{if}\quad & t_1<t_2,\\[20pt]
\begin{bmatrix}-(t_1+t_2z) & 0 \\ 0 & 1\end{bmatrix}
\begin{bmatrix}0 & 1 \\ 1 & 0\end{bmatrix}
\begin{bmatrix}-(t_1+t_2z^{-1}) & 0 \\ 0 & 1\end{bmatrix}
 & \quad\text{if}\quad & t_1>t_2.
 \end{array}\right.
\end{equation}
It is immediate to check that $H_{+}(z) = H_{-}^\dagger(z)$ is satisfied in both the parameter regimes, which is a universal requirement irrespective of the symmetry class. Furthermore, $A_+ \in \mathbb{R}[z]$ and $A_- \in \mathbb{R}[z^{-1}]$, as required in the factorization of symmetry class BDI. Finally, the middle factor is off-diagonal as stipulated by the sublattice symmetry.

In both the parameter regimes, we find that $H_F$ has dispersion relation
\[
\det [H_F(e^{ik})-\epsilon]=0 \implies \epsilon^2 = 1. 
\]
Therefore, $H_F$ is a flat-band Hamiltonian. For $t_1>t_2$, the flat-band Hamiltonian takes the simple real-space form
\[
H_F = \sum_{j,\nu}|j\rangle|\nu=0\rangle\langle j|\langle \nu=1| +\mathrm{H.c.},
\]
which is a dimer Hamiltonian. In the regime $t_1<t_2$, the real space Hamiltonian is 
\[
H_F = \sum_{j,\nu}|j\rangle|\nu=0\rangle\langle j+1|\langle \nu=1| +\mathrm{H.c.},
\]
which is once again a dimer Hamiltonian. Finally, $H_F$ is block off-diagonal with real off-diagonal blocks, and therefore satisfies the same symmetries as $H$.

In the parameter regime $t_1<t_2$, the set of symmetric partial 
indices is the singleton set $\{-1\}$, which makes its bulk invariant $Q^B(H) = Q^B(H_F) = -1$.
\end{exmp}

\begin{exmp} 
Kitaev's Majorana chain provides a paradigmatic model for $p$-wave topological superconductivity \cite{Kitaev,PRB1}, and is described by a Hamiltonian of the form
\[ \widehat{H}= -\sum_{j} \mu \,c_j^\dagger c_j - \sum_j \Big( t \,c^\dagger_j c_{j+1} - \Delta \,c_j^\dagger c^\dagger_{j+1} +\mathrm{H.c.}\Big), \]
where $\mu, t, \Delta \in {\mathbb{R}}$ denote chemical potential, hopping and pairing amplitudes, respectively. 
The SWH factorization of this Hamiltonian is tractable in the parameter regime $t = \Delta$, 
as the symbol takes the simple form 
\begin{equation}
H(z) = \begin{bmatrix}
-\mu-t(z+z^{-1}) & t(z-z^{-1}) \\
-t(z-z^{-1}) & \mu+t(z+z^{-1})
\end{bmatrix} .
\end{equation}
A change of basis by the unitary transformation $U_{\rm D}$ brings this symbol
to the canonical form
\begin{equation}
H(z) = i\begin{bmatrix}
0 & -\mu - 2tz\\
\mu + 2tz^{-1} & 0
\end{bmatrix}.
\end{equation}
The SWH factorization is
\begin{equation}
\label{KitaevWH}
H(z) = \left\{ \begin{array}{lcl}
i\begin{bmatrix}0 & -\mu z^{-1}-2t \\ -i & 0\end{bmatrix}
\begin{bmatrix}0 & iz^{-1} \\ -iz & 0\end{bmatrix}
\begin{bmatrix}0 & -i \\ -\mu z-2t & 0\end{bmatrix}
 & 
 \quad\text{if}\quad & 
 |\mu| < 2|t|,\\[20pt]
i\begin{bmatrix} 0 & 1 \\  i(\mu +2tz^{-1})& 0\end{bmatrix}
\begin{bmatrix}0 & i \\ -i & 0\end{bmatrix}
\begin{bmatrix}0 & i(\mu +2tz^{-1}) \\ 1 & 0\end{bmatrix} 
& \quad\text{if}\quad & |\mu| > 2|t|.
 \end{array}\right.
\end{equation}
The multiplier $i = (e^{i\pi/4})^2$ can be distributed over the factors
$H_+(z)$ and $H_-(z)$ without violating the relation $H_+(z) = H_-^{\rm T}(z)$.
The bulk and the boundary invariants can be calculated from the partial indices of $H(z)$.
They take values $Q^B(H) = Q^{\partial}(\widetilde{H})=-1$
for $|\mu| < 2|t|$ and $Q^B(H) = Q^{\partial}(\widetilde{H})=1$
for $|\mu| > 2|t|$, respectively, in agreement with the known values \cite{Kitaev}.
The Hamiltonian is gapless at the topological phase transition point $|\mu| = 2|t|$.
\end{exmp}

\subsection{Spectral flattening under semi-open boundary conditions}
\label{sec:symwh2}

We have discussed in Sec.\,\ref{sec:wh} the connection between partial indices of $A$ and the kernel of a block-Toeplitz operator $\widetilde{A}$. It suffices to recall the formula
\[
\Ker \widetilde{H} = ({\widetilde{H}_+^\dagger})^{-1}\ \Ker \widetilde{H}_F.
\]
Because $\widetilde{H}_F$ is a dimer, flat-band Hamiltonian,
ZMs (kernel vectors), if present, are perfectly localized on the first few sites and have an extremely simple form. We provide a concrete illustration in Example\,\ref{exmp:sshzm} below.

\begin{prop}
\label{thm:boundary}
Let $H = \widetilde{H}_+\widetilde{H}_F\widetilde{H}_+^\dagger$ be a SWH factorization
of $\widetilde{H}$ in symmetry class X, where X is any one of the five non-trivial symmetry classes in one dimension, and let $\{\ind_m,\ m=1,\dots,d\}$ denote the set of symmetric partial indices. 
Then,
\[
Q^\partial(\widetilde{H}) = Q^\partial(\widetilde{H}_F)=\left\{ \begin{array}{lcl}
\sum_{m=1}^{d} \ind_m & \text{if} & \text{X} = \text{AIII} \\
2\sum_{m=1}^{d} \ind_m & \text{if} & \text{X} \in \{\text{BDI, CII}\} \\
(-1)^{\sum_{m \in \mathcal{M}_+} \ind_m}  & \text{if} & \text{X} \in \{\text{D, DIII}\}
\end{array}\right. .
\]
\end{prop}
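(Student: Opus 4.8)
The plan is to exploit the fact that each boundary invariant is a (possibly chirality-graded) dimension of $\Ker\widetilde{H}$, together with the fact that the side factor $\widetilde{H}_+$ is an invertible BBT operator. Applying the kernel formula \eqref{kernel} to the Hamiltonian factorization gives $\Ker\widetilde{H} = (\widetilde{H}_+^\dagger)^{-1}\,\Ker\widetilde{H}_F$ (with $\widetilde{H}_+^\dagger$ read as $\widetilde{H}_+^{\rm T}$ in the real classes), so $(\widetilde{H}_+^\dagger)^{-1}$ is a linear isomorphism $\Ker\widetilde{H}_F \to \Ker\widetilde{H}$. This at once yields $\dim\Ker\widetilde{H} = \dim\Ker\widetilde{H}_F$, which is all that is needed for classes D and DIII, whose invariants depend only on $\mathcal{N}=\dim\Ker\widetilde{H}$. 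For the chiral classes AIII, BDI, and CII, I would additionally observe that in each of the factorizations \eqref{WHAIII}, \eqref{WHBDI}, \eqref{WHCII} the side factor $\widetilde{H}_+$ is block-diagonal with respect to the chirality (sublattice) operator; hence $(\widetilde{H}_+^\dagger)^{-1}$ commutes with it and preserves chirality, so the individual counts $\mathcal{N}_{\nu=0}$ and $\mathcal{N}_{\nu=1}$ --- not merely their sum --- are unchanged. This establishes the first equality $Q^\partial(\widetilde{H}) = Q^\partial(\widetilde{H}_F)$ in all five classes.

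It then remains to evaluate $Q^\partial(\widetilde{H}_F)$ from the explicit dimerized form of $\widetilde{H}_F$, which I would carry out using the kernel and cokernel formulas for the shift-diagonal operators $\widetilde{D}$ (resp.\ $\widetilde{D}_3$) recorded in Sec.~\ref{whbasics}. For AIII, writing $\widetilde{H}_F = \begin{bmatrix} 0 & \widetilde{D}^\dagger \\ \widetilde{D} & 0\end{bmatrix}$, a zero mode of chirality $\nu=0$ lies in $\Ker\widetilde{D}$ and one of chirality $\nu=1$ lies in $\text{Coker}\,\widetilde{D}$, so $Q^\partial = \dim\Ker\widetilde{D} - \dim\text{Coker}\,\widetilde{D} = \text{index}(\widetilde{D}) = \sum_{m=1}^{d}\ind_m$ by Eq.~\eqref{index1}. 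For BDI the four-block structure is two identical copies of the AIII case, which supplies the factor of $2$; for CII the single off-diagonal block is quaternionic, so its complex Fredholm index is twice the sum of the symmetric partial indices, again producing the factor of $2$. For class D, $\widetilde{H}_F = i\widetilde{D}_3$, and counting kernel vectors of $\widetilde{D}_3$ pair-by-pair in the coupled labels $\{m,\bar m\}$ shows that the injectivity of $T^{\dagger\,\ind_m}$ kills the $\bar m$ component while $T^{\ind_m}$ contributes an $\ind_m$-dimensional kernel, so only the $m\in\mathcal{M}_+$ pairs survive and $\mathcal{N} = \sum_{m\in\mathcal{M}_+}\ind_m$, whence $Q^\partial = (-1)^{\mathcal{N}}$. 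For DIII the block off-diagonal form of $\widetilde{H}_F$ doubles this count through the kernel and cokernel of the antisymmetric $\widetilde{D}_3$ (a manifestation of Kramers degeneracy), giving $\mathcal{N} = 2\sum_{m\in\mathcal{M}_+}\ind_m$ and $Q^\partial = (-1)^{\mathcal{N}/2}$, the same value.

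I expect the main obstacle to be the bookkeeping of the chirality grading and the doubling factors rather than any deep analytic difficulty. One must verify carefully, in each of the five explicit factorizations, that $\widetilde{H}_+$ really is block-diagonal in the relevant grading so that the isomorphism $(\widetilde{H}_+^\dagger)^{-1}$ respects the splitting used to define $Q^\partial$, and that the quaternionic (CII, and the Kramers structure of DIII) and spin-doubled (BDI) cases each produce exactly a factor of $2$ and no more. A secondary point requiring care is to confirm that the $\ind_m$-dimensional kernel of $T^{\ind_m}$ and the triviality of the kernel of the positive powers of $T^\dagger$ localize the zero modes so that the counts reproduce the defect-number formulas $\dim\Ker\widetilde{A} = \sum_{m\in\mathcal{M}_+}\ind_m$ and $\dim\text{Coker}\,\widetilde{A} = -\sum_{m\in\mathcal{M}_-}\ind_m$ already derived in Sec.~\ref{whbasics}. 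Once these combinatorial facts are in place, the five cases follow by direct substitution.
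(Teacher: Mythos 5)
Your proposal is correct and follows essentially the same route as the paper's proof: invertibility of the side factor $\widetilde{H}_+$ reduces the kernel (and, in the chiral classes, the chirality-resolved kernels) to those of the dimerized $\widetilde{H}_F$, after which the counts follow from the Fredholm index formula $\sum_m \ind_m$ and the explicit kernel of $\widetilde{D}_3$. Your explicit check that $(\widetilde{H}_+^\dagger)^{-1}$ preserves the chirality grading is handled in the paper implicitly by working directly with the off-diagonal block $\widetilde{A} = \widetilde{A}_+\widetilde{D}\widetilde{A}_-$, but the two arguments are equivalent.
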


\begin{proof}
We will prove the claim by explicit calculation for each symmetry class.

\smallskip

\noindent\underline{AIII}: 
For this class, the boundary invariant is 
$Q^\partial = \mathcal{N}_{\nu=0} - \mathcal{N}_{\nu=1}$,
with $\mathcal{N}_{\nu=0} = \dim \Ker \widetilde{A}$
and $\mathcal{N}_{\nu=1} = \dim \Ker \widetilde{A}^\dagger$.
Because $\widetilde{A}_{+}$ and $\widetilde{A}_{-}$ are invertible, we find that
$ \dim \Ker \widetilde{A} = \dim \Ker\widetilde{D}$ 
and 
$ \dim \Ker \widetilde{A}^\dagger = \dim \Ker\widetilde{D}^\dagger $. Thus, 
\begin{eqnarray*}
Q^\partial(\widetilde{H}_F) = \mathcal{N}_{\nu=0} - \mathcal{N}_{\nu=1}
= \dim \Ker \widetilde{D} - \dim \Ker \widetilde{D}^\dagger. 
\end{eqnarray*}
We identify this to be the analytic
index of $\widetilde{D}$ which, using Eq.\,\eqref{index1}, can be expressed as
\[
Q^\partial(\widetilde{H}_F) = \sum_{m=1}^{d} \ind_m. 
\]
 
\smallskip

\noindent\underline{BDI, CII}:
The proof is nearly identical to that for class A. The factor of $2$ for 
class BDI comes from the Kramers' degeneracy due to time reversal symmetry, which
dictates that every ZM will have a time-reversal partner. This statement holds true for $\widetilde{H}$ as well
as $\widetilde{H}_F$. For class CII, the factor of $2$ is attributed to the
fact that the 
set of standard partial indices consists of two copies of the 
set of symmetric partial indices.

\smallskip

\noindent\underline{D}:  
Since $\widetilde{H}_+$ is invertible, we get
$ \dim \Ker \widetilde{H} = \dim \Ker (i\widetilde{D}_3) = \mathcal{N}$.
For explicit calculation of the boundary invariant, we start with the flat Hamiltonian for class D,
\begin{align}
H_F(z) = -i\sum_{m\in \mathcal{M}_+} \big(z^{\ind_m} |\bar{m}\rangle\langle{m} |\;  
- z^{-\ind_m}|m\rangle\langle{\bar{m}} |\big) 
+  i\sum_{\substack{m\in \mathcal{M}_0\\[1pt] m\le d/2}} \big( |\bar{m}\rangle\langle{m} |\;  
- |m\rangle\langle{\bar{m}} |\big).
\end{align}
The boundary invariant under open BCs is
\[
Q^\partial(\widetilde{H}_F) = (-1)^{\mathcal{N}} = 
(-1)^{\dim \Ker \widetilde{H}_F}  = (-1)^{\sum_{m \in \mathcal{M}_+} \ind_m}. \]

\smallskip

\noindent\underline{DIII}: 
Once again since $\widetilde{H}_+$ is invertible, we get
$ \dim \Ker \widetilde{H} = \dim \Ker \widetilde{H}_F = \mathcal{N}$,
which leads to the equality of the boundary invariant $(-1)^{\mathcal{N}/2}$.
For this class, the flat-band Hamiltonian $H_F$ is off-diagonal, with lower block
\begin{align*}
iD_3(z) = -i\sum_{m\in \mathcal{M}_+} \big(z^{\ind_m} |\bar{m}\rangle\langle{m} |\;  
- z^{-\ind_m}|m\rangle\langle{\bar{m}} |\big) 
 +  i\sum_{\substack{m\in \mathcal{M}_0\\[1pt] m\le d/2}} \big( |\bar{m}\rangle\langle{m} |\;  
- |m\rangle\langle{\bar{m}} |\big).
\end{align*}
The boundary invariant is
\[
Q^\partial(\widetilde{H}_F) = (-1)^{\mathcal{N}/2}= (-1)^{\dim \Ker i\widetilde{D}_3} = 
(-1)^{\sum_{m \in \mathcal{M}_+} \ind_m}.
\]
\end{proof}

\begin{exmp}
\label{exmp:sshzm}
For the SSH Hamiltonian in Example \ref{exmp:ssh}, the factorization in Eq.\,\eqref{SSHWH} immediately allows us to compute the zero-energy edge state for the case $t_1<t_2$. 
The unnormalized kernel of $\widetilde{H}$ is given by 
\[
\Ker \widetilde{H} = ({\widetilde{H}_+^\dagger})^{-1} \Ker \widetilde{H}_F = 
\begin{bmatrix}1 & 0 \\ 0 & -(t_1\S^\dagger+t_2)\end{bmatrix}^{-1}
\begin{bmatrix}0 \\ 1\end{bmatrix}.
\]
Some algebra reveals that the zero-energy edge state $|\psi\rangle$ has 
a wavefunction given by
\[
|\psi\rangle = \sqrt{1-(t_1/t_2)^2} \,
\sum_{j = 0}^{\infty} (-t_1/t_2)^j|j\rangle\begin{bmatrix}0 \\ 1\end{bmatrix}.
\]
We have $\mathcal{N}_{\nu=0} = 0$ and $\mathcal{N}_{\nu=1} = 1$, so that 
$Q^{\partial}(\widetilde{H}) = \mathcal{N}_{\nu=0} - \mathcal{N}_{\nu=1} = -1$ in the parameter 
regime under consideration ($t_1 < t_2$).
\end{exmp}

\section{A bulk-boundary correspondence for boundary-disordered systems}
\label{sec:bb}

In this section, we use SWH factorizations to prove a bulk boundary correspondence for 1D systems subject to arbitrary BCs. We next extend this result to 1D junctions that are formed by two leads in the same symmetry class and for which 
the tunneling term also preserves the symmetries of that class.

\subsection{A bulk-boundary correspondence for arbitrary boundary conditions}

A quick glance at Propositions \ref{thm:bulk} and \ref{thm:boundary} yields the following interesting corollary:
\begin{coro}
\label{coro:bbc}
Let $H$ be the Bloch Hamiltonian of a 1D system in one of the non-trivial AZ symmetry classes and \(\widetilde{H}\) the associated block-Toeplitz Hamiltonian for the same system subject to open BCs. Then, $Q^B(H) = Q^\partial(\widetilde{H})$.
\end{coro}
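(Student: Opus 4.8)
The plan is to exploit the fact that the Bloch Hamiltonian $H$ and the block-Toeplitz Hamiltonian $\widetilde{H}$ of the semi-open system are built from one and the same symbol. Indeed, the relevant block $A(z,z^{-1})$ of the Bloch Hamiltonian, obtained by analytic continuation off the Brillouin zone, is precisely the symbol of the BBT block $\widetilde{A}=\widetilde{A}(T,T^\dagger)$ entering $\widetilde{H}$: passing from periodic to open BCs only replaces the unitary shifts $V,V^\dagger$ by the one-sided shifts $T,T^\dagger$, which leaves the symbol untouched. Since the SWH factorization and the associated symmetric partial indices $\mathcal{K}=\{\ind_m\}$ are defined entirely in terms of $A(z,z^{-1})$ as a matrix Laurent polynomial, the block-Laurent operator $H$ and the BBT operator $\widetilde{H}$ carry exactly the same set $\mathcal{K}$.

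With this observation in hand, the result reduces to a direct comparison. First I would invoke Proposition~\ref{thm:bulk}, which expresses $Q^B(H)$ as a fixed function of $\mathcal{K}$ in each of the five non-trivial classes, together with Proposition~\ref{thm:boundary}, which expresses $Q^\partial(\widetilde{H})$ as a function of the \emph{same} $\mathcal{K}$. Reading the two lists of cases side by side, one checks that they coincide term by term: for AIII the common value is $\sum_{m}\ind_m$; for BDI and CII it is $2\sum_{m}\ind_m$; and for D and DIII it is $(-1)^{\sum_{m\in\mathcal{M}_+}\ind_m}$. Hence $Q^B(H)=Q^\partial(\widetilde{H})$ in every non-trivial class.

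There is essentially no obstacle to overcome here, because the analytical content has already been discharged in the two preceding propositions. The only point that deserves explicit mention is the identification of the partial indices of the Bloch (block-Laurent) operator with those of the Toeplitz operator, which is secured by the functional calculus of Sec.~\ref{whbasics}: the WH data are intrinsic to the symbol and are independent of whether one substitutes the invertible shifts $V,V^\dagger$ or the one-sided shifts $T,T^\dagger$. I would therefore present the argument as a short two-step deduction—same symbol, hence same $\mathcal{K}$, hence equal invariants—rather than as a fresh computation.
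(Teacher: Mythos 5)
Your proposal is correct and follows essentially the same route as the paper, which obtains the corollary directly by combining Propositions~\ref{thm:bulk} and \ref{thm:boundary}, both of which express the respective invariants as the same function of the symmetric partial indices $\mathcal{K}$. Your explicit remark that the symbol (and hence $\mathcal{K}$) is unchanged when the unitary shifts $V,V^\dagger$ are replaced by the one-sided shifts $T,T^\dagger$ is a point the paper leaves implicit in its setup, but it is the same argument.
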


\noindent 
Clearly, this result is an example of a bulk-boundary correspondence, but it is too weak to be of practical importance: from bulk properties it infers surface properties of a strictly clean Hamiltonian, subject specifically to open BCs.
Within the framework of the SWH factorization, one can strengthen
this Corollary to allow for arbitrary BCs or boundary disorder, depending on ones' point of view. It is this strengthened result that we regard as bulk-boundary correspondence for 1D systems of independent fermions.  

To derive such a strengthened result, we need to relate the boundary invariants of semi-infinite systems subject to arbitrary BCs to those subject to semi-open BCs.
As it turns out, one can extend the SWH spectral flattening
technique to semi-infinite systems subject to arbitrary BCs or boundary disorder, and still calculate the boundary invariants from an appropriate flattened Hamiltonian. Consider a Hamiltonian $\widetilde{H} + {W}$, where both $\widetilde{H}$ and ${W}$ satisfy the constraints of the same symmetry class X. The operator $\widetilde{H}$ is a block-Toeplitz operator and, 
as before, it is associated to the matrix Laurent polynomial \(H(z)\). The operator $W$ is of finite rank, 
that is, it is different from the zero operator on a finite-dimensional subspace only.  Then we can write  
\[
\widetilde{H}+{W} \equiv \widetilde{H}_+ (\widetilde{H}_F+{{W}_F}) \widetilde{H}_+^\dagger, \quad {{W}_F} \equiv (\widetilde{H}_+)^{-1}{W}(\widetilde{H}_+^{\dagger})^{-1}.
\]
We refer to the Hamiltonian $\widetilde{H}_F+{{W}_F}$ as the {\em SWH spectral flattening} of \(\widetilde{H}+{W}\). Note, however, that the eigenvectors of $\widetilde{H}_F+{{W}_F}$ that overlap with the support of $W_F$ can have eigenvalues outside the set $\{+1,0,-1\}$. The next proposition relates the boundary invariant of \(\widetilde{H}+{W}\) to that of $\widetilde{H}_F+{{W}_F}$ in each symmetry class.

\begin{prop}
\label{lemmabound}
Let $\widetilde{H}_F+{{W}_F}$ be the SWH spectral flattening of \(\widetilde{H}+{W}\), as defined above. Then
\begin{enumerate}
\item
$W_F$ is a finite-rank operator in the same symmetry class as \(W\).
\item 
\(Q^\partial(\widetilde{H}+{W}) = Q^\partial(\widetilde{H}_F+ {{W}_F})\).
\end{enumerate}
\end{prop}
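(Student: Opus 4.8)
The plan is to read off both claims from the identity $\widetilde{H}+W=\widetilde{H}_+(\widetilde{H}_F+W_F)\widetilde{H}_+^\dagger$ together with two features of the SWH factorization established in Section~\ref{sec:symwh}: the side factor $\widetilde{H}_+$ is an \emph{invertible} block-Toeplitz operator (with bounded, triangular inverse), and it is \emph{symmetry-compatible}. The latter means that the factorization has the form $\widetilde{H}=\widetilde{H}_+\widetilde{H}_F\widetilde{H}_+^\star$, where $\star\in\{\dagger,{\rm T}\}$ is the involution defining class X and $\widetilde{H}_F$ itself lies in X, with the side factors obeying the relations of Table~\ref{table:wh} (e.g.\ $\widetilde{A}_-=\widetilde{A}_+^\dagger$ or $\widetilde{A}_-=\widetilde{A}_+^{\rm T}$, and $\widetilde{H}_+$ block-diagonal in the chiral grading, real, or quaternionic as dictated by the class). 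For the transpose-type classes D and DIII the factor $\widetilde{H}_+$ is real, so $\widetilde{H}_+^{\rm T}=\widetilde{H}_+^\dagger$ and the stated formula $W_F=\widetilde{H}_+^{-1}W(\widetilde{H}_+^\dagger)^{-1}$ coincides with the congruence dictated by the factorization.

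For claim (1), finite rank is immediate: $W$ is finite rank and $\widetilde{H}_+^{-1},(\widetilde{H}_+^\dagger)^{-1}$ are invertible, so $W_F$ has the same rank as $W$. For the symmetry statement I would introduce the real-linear congruence $\Phi(M)\equiv\widetilde{H}_+^{-1}M(\widetilde{H}_+^\dagger)^{-1}$ and show it maps class X into itself, whence $W_F=\Phi(W)\in{\rm X}$ because $W\in{\rm X}$ by hypothesis. The map $\Phi$ manifestly preserves Hermiticity, so it suffices to check the remaining defining constraint of each class. Each such constraint has the form $U M^{\#}U^\dagger=\pm M$ with $\#$ the identity, complex conjugation, or transposition, and the compatibility of $\widetilde{H}_+$ guarantees $\Phi$ intertwines it with itself: for the chiral classes (AIII, BDI, CII, DIII) one uses that $\widetilde{H}_+$ commutes with the chiral operator to get $\nu_z\Phi(M)\nu_z=\Phi(\nu_z M\nu_z)=-\Phi(M)$; for the Fermi constraint in the Majorana basis (class D) one uses that $\widetilde{H}_+$ is real to get $\Phi(M)^*=\Phi(M^*)=-\Phi(M)$; and the time-reversal/quaternionic constraints follow from the reality or quaternionicity of $\widetilde{H}_+$ and the side-factor relation $\widetilde{A}_-=\widetilde{A}_+^{\rm T}$. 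Each of these is a one-line computation carried out class by class.

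For claim (2), since $\widetilde{H}_+$ is invertible the factorization yields the kernel isomorphism $(\widetilde{H}_+^\dagger)^{-1}:\Ker(\widetilde{H}_F+W_F)\to\Ker(\widetilde{H}+W)$; both kernels are finite-dimensional because $\widetilde{H}_F$ has invertible (flat-band) symbol and is therefore Fredholm, while $W_F$ is compact, so $\widetilde{H}_F+W_F$ remains Fredholm. For classes D and DIII, where $Q^\partial$ depends only on $\mathcal{N}=\dim\Ker$ through $(-1)^{\mathcal{N}}$ and $(-1)^{\mathcal{N}/2}$ respectively, the equality of kernel dimensions immediately gives $Q^\partial(\widetilde{H}+W)=Q^\partial(\widetilde{H}_F+W_F)$ (the parity of $\mathcal{N}$, and its evenness in DIII, being preserved). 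For the chiral classes AIII, BDI, CII, where $Q^\partial=\mathcal{N}_{\nu=0}-\mathcal{N}_{\nu=1}$ is the signed count of zero modes graded by chirality, I would observe that $\widetilde{H}_+$, and hence $(\widetilde{H}_+^\dagger)^{-1}$, is block-diagonal in the chiral grading, so the kernel isomorphism preserves each chirality sector separately; thus $\mathcal{N}_{\nu=0}$ and $\mathcal{N}_{\nu=1}$ are individually invariant and so is their difference.

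The main obstacle I anticipate is claim (1): there is no single uniform argument that $\Phi$ preserves the antiunitary (Fermi and time-reversal) constraints, because this rests on the precise reality, quaternionicity, and block-diagonal structure of $\widetilde{H}_+$ furnished by the specific factorization theorem for each class (Theorems~\ref{thm:symwh1}--\ref{thm:symwh5} and Table~\ref{table:wh}). The verification is therefore organized as a short case analysis, with the transpose-type classes requiring the extra remark that $\widetilde{H}_+^{\rm T}=\widetilde{H}_+^\dagger$ so that the congruence in the proposition agrees with the one coming from the factorization. Once claim (1) is in place, claim (2) is essentially a formal consequence of invertibility of the side factor and its commutation with the chiral operator.
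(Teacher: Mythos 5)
Your proposal follows essentially the same route as the paper's own proof: conjugation by the invertible side factor, a class-by-class check that the congruence preserves the symmetry of $W$ (using block-diagonality of $\widetilde{H}_+$ in the chiral grading, reality for BDI and D, quaternionicity for CII), and transfer of kernels together with their chirality decomposition through $(\widetilde{H}_+^\dagger)^{-1}$. One nuance: the paper actually proves the stronger statement that $W_F$ has finite \emph{support} (on the first $R_w$ sites, since $\widetilde{H}_+^{-1}$ is a function of $T$ only and $(\widetilde{H}_+^\dagger)^{-1}$ of $T^\dagger$ only) — a property needed later in the partition argument of the bulk-boundary theorem — whereas your rank-preservation argument establishes only the finite-rank claim literally asserted in the proposition, which suffices here.
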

\begin{proof}
1. The fact that $W_F$ is of finite support follows from the fact that $\widetilde{H}_+^{-1}$ is a function of $T$ only,
and $(\widetilde{H}_+^{-1})^\dagger$ is a function of $T^\dagger only$. 
Say that $W$ is supported on the sites $j=1,\dots,R_w$, that is, \(\langle j |W = 0\) and \(W|j'\rangle = 0\) for all $j,j' > R_w$. Then,
\[
\langle j |W_F = [\langle j |\widetilde{H}_+^{-1}W] (\widetilde{H}_+^{-1})^\dagger = 0,\quad
W_F|j'\rangle = \widetilde{H}_+^{-1}[W (\widetilde{H}_+^{-1})^\dagger]|j\rangle = 0,
\]
where the terms grouped in square brackets vanish because
\(
\langle j|T = \langle j+1 |\) and \(T^\dagger|j\rangle = |j+1\rangle.
\)
Therefore, ${{W}_F}$ has support at most on the first $R_w$ lattice sites. It is straightforward to see that $W$ and $W_F$ have the same structure, and therefore belong to the same symmetry classes. For example, in the case of chiral classes, $\widetilde{H}_+$ being block-diagonal it commutes with the chiral symmetry, which means that ${W}_F$ anti-commutes 
and hence obeys the chiral symmetry. In classes BDI and D, the entries of $\widetilde{H}_+$ are real, therefore the entries of ${{W}_F}$ are real and imaginary, respectively, as per the requirement of these classes. Similarly, in the case of 
class CII, the entries of $\widetilde{H}_+$, and hence of $W_F$, are quaternionic. 

2. The proof is identical to that of the first part of Theorem \ref{thm:boundary}, except that we use the decomposition
\(
\widetilde{H}+{W} = \widetilde{H}_+ (\widetilde{H}_F+{{W}_F}) \widetilde{H}_+^\dagger.
\)
\end{proof}

We are now ready to prove the main result of this section, namely, a bulk-boundary correspondence for systems with arbitrary BCs.

\begin{thm}[{\bf Bulk-boundary correspondence}]
\label{thm:bbcorr}
Let $H$ be a Bloch Hamiltonian in symmetry class X, $\widetilde{H}$ the associated block-Toeplitz Hamiltonian, and ${W}$ a finite-rank Hermitian operator also in symmetry class X. Then, 
\(
 Q^\partial(\widetilde{H} + {W})=Q^B(H).
\)
\end{thm}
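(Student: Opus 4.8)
The plan is to exhibit the perturbation $W$ as the endpoint of a norm-continuous path of symmetry-respecting Fredholm operators along which $Q^\partial$ cannot change, and then to reduce to the clean bulk-boundary correspondence already recorded in Corollary \ref{coro:bbc}. Concretely, I would introduce the affine family $\widetilde{H}_t \equiv \widetilde{H} + tW$ for $t \in [0,1]$, interpolating between the open-BC Hamiltonian $\widetilde{H}_0 = \widetilde{H}$ and the perturbed Hamiltonian $\widetilde{H}_1 = \widetilde{H}+W$, and show that $t\mapsto Q^\partial(\widetilde{H}_t)$ is constant.

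First I would check that the path stays inside symmetry class X. The defining conditions of each AZ class are real-linear constraints on the matrix of the operator, namely Hermiticity together with the appropriate commutation or anticommutation relations with the unitary and antiunitary symmetry operators; since both $\widetilde{H}$ and the Hermitian operator $W$ satisfy them and $t$ is real, so does $\widetilde{H}_t$. Second, I would show that every $\widetilde{H}_t$ is Fredholm. Because $W$ is finite-rank, hence compact, it does not alter the symbol, so $\widetilde{H}_t$ has the same symbol $H(z)$ as $\widetilde{H}$, which is invertible on the unit circle by the gap hypothesis. Invoking the Fredholm criterion for block-Toeplitz operators (a BBT operator is Fredholm if and only if its symbol is invertible) together with the stability of the Fredholm property under compact perturbations, $\widetilde{H}_t$ is Fredholm for all $t$, with finite-dimensional kernel, so the zero-mode counts entering $Q^\partial$ are well defined throughout.

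The core step is the constancy of $Q^\partial$ along this path, and the key observation, which is essentially the content of the proof of Proposition \ref{thm:boundary}, is that in each class the boundary invariant is a genuine index. For the chiral classes AIII, BDI, CII it is a fixed multiple of the Fredholm index $\dim\,\mathrm{Ker}\,\widetilde{A}-\dim\,\mathrm{Coker}\,\widetilde{A}$ of the off-diagonal block $\widetilde{A}$, cf.\ Eq.\,\eqref{index1}, while for the antisymmetric classes D and DIII it is $(-1)$ raised to the secondary $\mathbb{Z}_2$-index $\dim\,\mathrm{Ker}\bmod 2$ (respectively its DIII analogue). Both the Fredholm index (an integer-valued, locally constant function on the Fredholm operators) and the secondary index (a $\{0,1\}$-valued homotopy invariant of antisymmetric Fredholm operators) are homotopy invariants; since $t\mapsto\widetilde{H}_t$ is norm-continuous and remains Fredholm and in class X, the relevant index, and hence $Q^\partial(\widetilde{H}_t)$, is independent of $t$. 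Therefore $Q^\partial(\widetilde{H}+W)=Q^\partial(\widetilde{H})$, and Corollary \ref{coro:bbc} gives $Q^\partial(\widetilde{H})=Q^B(H)$, completing the argument. An equivalent route, more in the spirit of Proposition \ref{lemmabound}, is to first pass to the spectrally flattened pair $\widetilde{H}_F+W_F$ and run the same index-stability argument on the flat family $\widetilde{H}_F+tW_F$, then use Propositions \ref{thm:boundary} and \ref{thm:bulk} to identify $Q^\partial(\widetilde{H}_F)=Q^B(H)$.

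The main obstacle I anticipate is the uniform, class-by-class verification that $Q^\partial$ really is an index stable under \emph{symmetry-preserving} compact perturbations, most delicately in the antisymmetric classes, where one must ensure that it is the parity $\dim\,\mathrm{Ker}\bmod 2$ controlled by the secondary index, rather than the full non-invariant kernel dimension, that governs $Q^\partial$, and that the straight-line path never leaves the antisymmetric Fredholm stratum. Once the identification of $Q^\partial$ with the appropriate primary or secondary index is in place, homotopy invariance does the rest.
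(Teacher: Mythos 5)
Your proof is correct, but it follows a genuinely different route from the one the paper uses to prove Theorem \ref{thm:bbcorr}. The paper's proof first flattens: by Proposition \ref{lemmabound} it replaces $\widetilde{H}+W$ by $\widetilde{H}_F+W_F$ with the same boundary invariant, and then exploits the dimer structure of $\widetilde{H}_F$ to split off a finite-dimensional subspace $\mathcal{H}_0$ (the first $R_w$ sites together with the perfectly localized zero modes) from an infinite, gapped collection of dimers; the invariance of $Q^\partial$ under $W_F$ then reduces to finite-dimensional linear algebra --- the rank--nullity theorem applied to the rectangular off-diagonal block $J_{21}$ in the chiral classes, and in classes D and DIII the fact that the parity of $\dim\mathcal{H}_0$ fixes the parity of the kernel of a finite antisymmetric matrix. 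Your argument instead runs the straight-line homotopy $\widetilde{H}+tW$ and invokes the homotopy invariance of the analytic index of the off-diagonal block $\widetilde{A}$ (chiral classes) and of the secondary $\mathbb{Z}_2$-index (classes D and DIII); this is precisely the ``advanced tools'' alternative the authors themselves sketch in the remark immediately after their proof, citing \cite{Gohberg91} and \cite{Schulz-Baldes15}, and it is sound provided one also verifies --- as you flag in your final paragraph, and as is indeed straightforward --- that a class-X perturbation $W$ shifts the block $\widetilde{A}$ by a compact operator of the correct symmetry type, so that the path never leaves the (antisymmetric) Fredholm stratum and the identification of $Q^\partial$ with the relevant primary or secondary index persists along it. What your route buys: it dispenses with the flattening step entirely and handles arbitrary compact $W$ in one stroke, whereas the paper first proves the finite-rank case and then takes norm limits. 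What the paper's route buys: a proof using nothing beyond explicit finite-dimensional counting, consistent with the paper's stated goal of bypassing index-theoretic and K-theoretic machinery, plus a concrete picture of how the topological zero modes coexist with the disorder support.
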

\begin{proof}
The challenge is to show that 
\(Q^\partial(\widetilde{H}_F + {{W}_F}) = Q^\partial(\widetilde{H}_F). \)
Once this equality is proved, the theorem follows from Proposition \ref{lemmabound} and Corollary \ref{coro:bbc}.
Our strategy 
is to exploit the flat-band nature of $\widetilde{H}_F$.
As a consequence, there exists a finite-dimensional subspace of the underlying Hilbert space which is decoupled from the rest 
and hosts all the topological ZMs. The rest of the system is simply a collection of dimers (see Fig.\,\ref{fig:dimers}) with no topological ZMs.  
\begin{figure}[t]
\centering
\includegraphics[width=8cm]{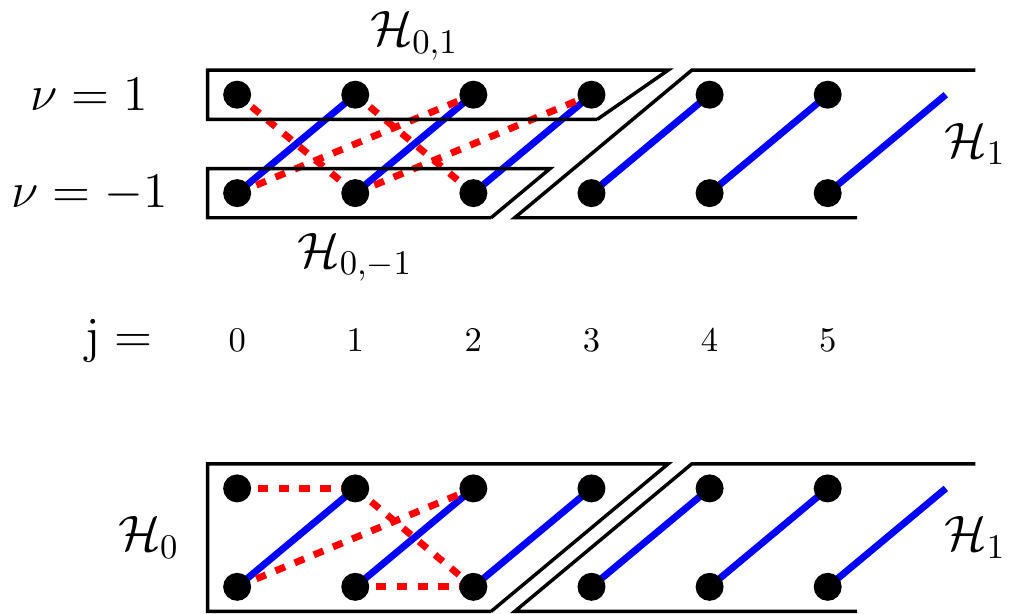}
\caption[Partition of the Hilbert space in the proof of bulk-boundary correspondence]{\small 
Partition of Hilbert space for (top) a system in class AIII
with $d=1$ and $\ind_1 = 1$, and 
(bottom) a system in class D 
with $d=2$ and $\ind_1=1,\ind_2=-1$.
The solid blue lines indicate hopping of the flat-band Hamiltonian
$\widetilde{H}_F$, and the red dashed lines those of ${{W}_F}$
for $R_w = 3$. The partition for classes BDI, CII and DIII 
is similar to the one in the top figure, except that $d_{\rm int}$ 
($2d_{\rm int}$) is necessarily 
greater than $2$ for BDI and CII (DIII).
\label{fig:dimers}}
\end{figure}

\smallskip

\noindent
\underline{AIII}:
We partition the Hilbert space $\mathcal{H}_{\rm SP}$ into three subspaces,
that is, 
\begin{equation}
\label{partition1}
\mathcal{H}_{\rm SP} = \mathcal{H}_{0,1} \oplus \mathcal{H}_{0,-1} \oplus
\mathcal{H}_{1}, \qquad {\rm{where}}
\end{equation}
\[
\mathcal{H}_{0,\nu} \equiv \text{Span}\{\,|j\rangle|\nu,m\rangle, \;
 \j = 0,\dots, R_w+\max{(\nu\,\ind_m,0)},\;
 m = 1,\dots,d \}, 
\]
with $\mathcal{H}_1$ being the orthogonal 
complement of $\mathcal{H}_{0,1} \oplus \mathcal{H}_{0,-1} \equiv \mathcal{H}_{0}$. Let ${P}_{0,\nu}$ and ${P}_1$ 
denote the orthogonal projectors on $\mathcal{H}_{0,\nu}$
and $\mathcal{H}_1$, respectively. 
Notice that the flat-band disordered Hamiltonian $\widetilde{H}_F+{{W}_F}$ is block-diagonal with respect to the partition ${P}_0 \equiv {P}_{0,1} + {P}_{0,-1}$ and $P_1$ (see Fig.\,\ref{fig:dimers}). Thus, we have 
\[
{P}_0(\widetilde{H}_F
+{{W}_F}){P}_1 = 0 = {P}_1(\widetilde{H}_F
+{{W}_F}){P}_0.
\]
Further, the block
describing its action on $\mathcal{H}_1$, namely ${P}_1(\widetilde{H}_F
+{{W}_F}){P}_1$,
 has zero-dimensional kernel and cokernel.
Therefore, the boundary invariant can be computed from the kernel and cokernel of the block $J \equiv {P}_0(\widetilde{H}_F+{{W}_F}){P}_0$ 
acting on $\mathcal{H}_0$, which is finite-dimensional.
Notice that ${P}_0$ commutes with all the 
internal symmetries considered in the AZ classification, 
and therefore $J$ always satisfies the constraints of the symmetry class.

The boundary invariant of the system under open BCs,
${W}=0$, is $Q^\partial(\widetilde{H}_F) = \sum_{m=1}^{}\ind_m$.
Let us denote by $J_{21}$ the lower off-diagonal block of the matrix $J$ acting on $\mathcal{H}_0$.
The size of this block is $\dim \mathcal{H}_{0,-1} \times \dim \mathcal{H}_{0,1}$, where
\[
\dim \mathcal{H}_{0,1} = dR_w +\sum_{m \in \mathcal{M}_+} \ind_m,\quad
\dim \mathcal{H}_{0,-1} = dR_w -\sum_{m \in \mathcal{M}_-} \ind_m .
\]
If the partial indices are non-zero, then $J$ can be a rectangular block. Because of rank-nullity theorem, we have $\dim \Ker J_{21} - \dim \text{coker} J_{21} = \dim \mathcal{H}_{0,1} - \dim \mathcal{H}_{0,-1}$,
which gives us
\[
Q^\partial(\widetilde{H_F}+{W_F}) = \dim \mathcal{H}_{0,1} - \dim \mathcal{H}_{0,-1} = 
\sum_{m=1}^{d} \ind_m.
\]
Therefore, the boundary invariant is unchanged by ${W}$.

\smallskip

\noindent
\underline{BDI}: The proof is identical to that for class AIII, 
except we get 
\[
\dim \mathcal{H}_{0,1} = 2dR_w +2\sum_{m \in \mathcal{M}_+} \ind_m,\quad
\dim \mathcal{H}_{0,-1} = 2dR_w -2\sum_{m \in \mathcal{M}_-} \ind_m 
\]
due to Kramer's degeneracy.
This leads to
\[
Q^\partial(\widetilde{H_F}+{W_F}) = \dim \mathcal{H}_{0,1} - \dim \mathcal{H}_{0,-1} = 
2\sum_{m=1}^{d} \ind_m
\]
as desired.

\smallskip

\noindent
\underline{CII}: The proof is identical to that for class AIII, 
except owing to the fact that each symmetric partial index repeats 
twice in the set of standard partial indices, we have
\[
\dim \mathcal{H}_{0,1} = dR_w +2\sum_{m \in \mathcal{M}_+} \ind_m,\quad
\dim \mathcal{H}_{0,-1} = dR_w -2\sum_{m \in \mathcal{M}_-} \ind_m, 
\]
This leads to
\[
Q^\partial(\widetilde{H_F}+{W_F}) = \dim \mathcal{H}_{0,1} - \dim \mathcal{H}_{0,-1} =
2\sum_{m=1}^{d} \ind_m
\]
as desired.

\smallskip

\noindent
\underline{D}:
We consider the bipartition of $\mathcal{H}_{\rm BdG}$ into 
$\mathcal{H}_{\rm BdG} = \mathcal{H}_{0} \oplus \mathcal{H}_{1}$, where 
\[
\mathcal{H}_{0} \equiv \text{Span}\{|j\rangle|m\rangle, \;
j = 0,\dots, R_w+\max{(\ind_m,0)}, \;\ m=1,\dots,d\},
\]
and $\mathcal{H}_1 = \mathcal{H}_{0}^\perp$.
Let ${P}_0$ denote the orthogonal projector on $\mathcal{H}_0$.
Notice that the dimension $\mathcal{H}_0$, and hence of the matrix 
${P}_0(\widetilde{H}_F +{{W}_F}){P}_0$,
is $\dim \mathcal{H}_{0} = dR_w +\sum_{m \in \mathcal{M}_+} \ind_m$,
which is odd (even) if $\sum_{m \in \mathcal{M}_+}\ind_m$ 
is odd (even). Such a matrix has a kernel of odd (even dimension). 
In other words, the parity of the number of ZMs is the same as the parity of $\sum_{m \in \mathcal{M}_+} \ind_m$. Therefore, we get $(-1)^{\mathcal{N}} = (-1)^{\sum_{m \in \mathcal{M}_+} \ind_m}$, as claimed.

\smallskip

\noindent 
\underline{DIII}:
We consider the partition of $\mathcal{H}_{\rm BdG}$ into 
$\{\mathcal{H}_{0,1}, \mathcal{H}_{0,-1},\mathcal{H}_{1}\}$ 
as in Eq.\,\eqref{partition1}. 
Following similar arguments as in the case of class D, we find that
the dimension of the kernel of the antisymmetric matrix ${P}_{0,-1}(\widetilde{H}_F +{{W}_F}){P}_{0,1}$ is odd (even) if
the $\sum_{m \in \mathcal{M}_+} \ind_m$ is odd (even).
\end{proof}

With advanced mathematical tools, more general statements regarding the boundary invariant can be proved for all classes. For classes AIII, BDI and CII, the boundary invariant is proportional to the analytic index of the off-diagonal block $A$, which is known to be a continuous (hence, constant on connected components) function on the set of Fredholm operators \cite{Gohberg91}. Similarly, for classes D and DIII, the boundary invariant is the secondary index of $A$, which is constant on connected components of antisymmetric Fredholm operators \cite{Schulz-Baldes15}. It follows that
\[
Q^\partial(\widetilde{H} + {W}) = Q^\partial(\widetilde{H}),
\]   
as long as $|| {W} || \ll \Delta E$, where $\Delta E$
is the SP bulk energy gap, ${W}$ satisfies the 
symmetries of the class, and the operator norm is considered.
Furthermore, the analytic (secondary) index is known
to be invariant under the addition of a (antisymmetric) compact 
operator of arbitrary norm. Therefore, one may replace the 
finite-range ${W}$ by a {\it compact} ${W}$ in 
Theorem \ref{thm:bbcorr}. This result also follows from the one for
finite-range operators, since compact operators are limits of
finite-range operators in the operator-norm topology,
and the limit of each boundary invariant is well-defined,
as it depends on the spectral projector of the discrete spectrum.

\subsection{A bulk-boundary correspondence for junctions}
\label{sec:interface}

So far, we have discussed the bulk-boundary correspondence in 
1D systems with one open end. We will now show that our analysis 
extends to interfaces formed by two bulks belonging to the same symmetry 
class. We first provide a definition of bulk and boundary topological invariants for interfaces in all symmetry classes, which has not been made explicit in the literature to the best of our knowledge. In particular, we will show that we can associate the bulk invariant
\begin{align}
\label{interfacebulk}
Q^{B}_I=\sum_{b=1}^nQ^{B}_b\quad (\text{AIII, BDI, CII}), \qquad
Q^{B}_I=\prod_{b=1}^n Q^{B}_b\quad (\text{D, DIII}),
\end{align}
to the junction where $n$ bulk wires meet. Here, $Q^{B}_b$ is the bulk
invariant of the $b$th wire. The boundary invariant for the junction can be 
defined similar to individual wires,
\begin{equation}
Q^\partial_I  = \left\{
\begin{array}{cc}
\mathcal{N}_{\nu=0} -  \mathcal{N}_{\nu=1}& \text{AIII, BDI, CII}\\
(-1)^\mathcal{N} & \text{D}\\   
(-1)^{\mathcal{N}/2} & \text{DIII}
\end{array}\right. .
\end{equation}
With these definitions, we will show that the equality of bulk and boundary
invariants holds for the interface, assuming appropriate symmetry
conditions are satisfied at the interface. Eq.\,\eqref{interfacebulk} implies that the invariants of individual systems combine according to the group
operation (addition/multiplication) of the group of homotopy invariant
(${\mathbb Z}$/${\mathbb Z}_2$).

\begin{figure}
\begin{center}
\includegraphics[width = 12cm] {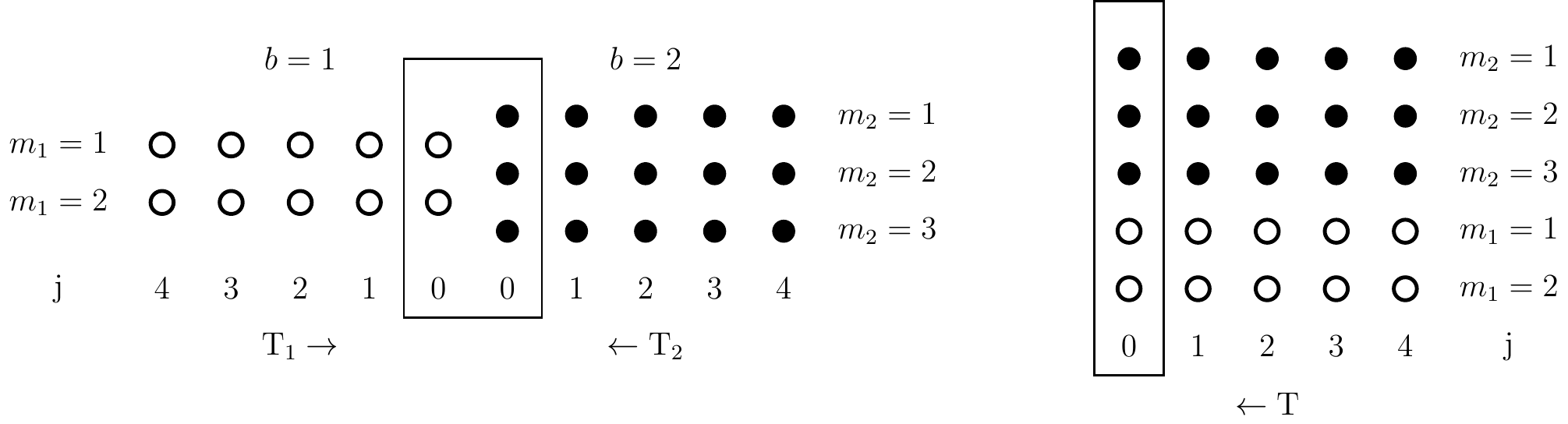}
\caption{The filled and empty circles stand for SP
Hilbert space basis for two different 1D systems in the 
same symmetry class. The sketch on the left side shows an 
interface between them. Visualizing this system in a different way
as shown in the sketch on the right side allows us to describe the system using
a single block-Toeplitz operator.}
\end{center}
\end{figure}

Let us first consider the simplest case of two 1D systems,
both belonging to a given symmetry class, forming a bridge. Both systems
are assumed to extend to infinity in the direction away from the $0$-dimensional interface. Let us label the fermionic degrees in the first 
system by $\mathcal{H}_1 = \text{Span }\{|j\rangle|m_1\rangle,\ j=0,\dots,\infty,\ m_1=1,\dots,d_1 \}$, and the ones in the second system by $\mathcal{H}_2 = \text{Span }\{|\j\rangle|m_2\rangle,\ j=0,\dots,\infty,\ m_2=1,\dots,d_2 \}$, so that the degrees labeled by $\j=0$ in the two systems are adjacent. Here, the labels $m_b$ for $b=1,2$ 
include all internal labels, namely $(\tau_{z,b}, \nu_{z,b}, \sigma_{z,b}, p_b)$. Let $T_b: \mathcal{H}_b \mapsto
\mathcal{H}_b, \ b=1,2$ denote the two shift operators with action 
\[
T_b|\j\rangle|m_b\rangle \equiv  \left\{ \begin{array}{lcl}
|\j-1\rangle|m_b\rangle & \text{if} & \j>0 \\
0 & \text{if} & \j=0 \end{array} \right. .
\]
We define ${T}$ to be the new shift operator on 
$\mathcal{H}=\mathcal{H}_1 \oplus \mathcal{H}_2$, with
action ${T}|\j\rangle|m_i\rangle \equiv T_b|\j\rangle|m_b\rangle$
on the basis. In this representation, the system in bridge configuration
can be described by a single block-Toeplitz operator, with reduced bulk
Hamiltonian
\[
H(z) = \begin{bmatrix}H_1(z) & 0 \\ 0 & H_2(z)\end{bmatrix},
\]
where $H_b(z)$
is the reduced bulk Hamiltonian of system $b$.
The bulk invariant of this system can be computed from
$H(z)$ in the same way asthe  bulk invariants of $H_1(z)$ and $H_2(z)$
are. Similarly, the boundary invariant of $\widetilde{H}+{W}$ follows 
from the appropriate expression for the symmetry class. Here, ${W}$ 
denotes the finite-range disorder at the interface. Note that even though
the two systems forming the interface are assumed to be in the same symmetry class, we have not made any assumption on the two symmetry 
operators $\{S_1\}$ and $\{S_2\}$. In fact, the two systems are allowed, in general, to have a different number of energy bands. However,
for our results on the bulk-boundary correspondence to hold, the operator
${W}$ must satisfy the symmetries $\{S\}$ defined by
\begin{equation}
\label{globalsymmetry}
S\mathcal{H}_b =S_b\mathcal{H}_b,\quad b=1,2.
\end{equation}

We now prove that the $\mathbb{Z}$ ($\mathbb{Z}_2$)
invariant of the interface is obtained by adding (multiplying) the 
invariants of the two bulks forming the interface. 

\smallskip

\noindent{\it Classes characterized by a $\mathbb{Z}$ invariant.---}
The Hamiltonians of classes AIII, BDI and CII possess a chiral symmetry, 
and are characterized by an integer invariant $\mathbb{Z}$. 
For these classes, we can permute the basis by exchanging the
order of sublattice and subsystem spaces, so that the reduced bulk
Hamiltonian takes the form
\[
H(z) = \begin{bmatrix}
0 & 0 & A_1(z) & 0\\
0 & 0 & 0 & A_2(z)\\
A_1^\dagger(z) & 0 & 0 & 0\\
0 & A_2^\dagger(z) & 0 & 0
\end{bmatrix} .
\]
It is now easy to see that the Fredholm index of the block 
$A =  \begin{bmatrix}A_1 & 0\\0 & A_2\end{bmatrix}$ is the sum
of the invidual Fredholm indices of $A_1$ and $A_2$, since
\( \log \det A = \log \det A_1 + \log \det A_2.
\) Therefore, the invariant of $H$ is equal to the sum of the individual invariants of $H_1$ and $H_2$. Note that addition is the group property of 
$\mathbb{Z}$.

\smallskip

\noindent{\it Classes characterized by a $\mathbb{Z}_2$ invariant.---}
For class D, it is easy to check that the Pfaffian invariant for $H$ is the product of Pfaffian invariant for $H_1$ and $H_2$, which follows from the identity 
\(
\text{Pf\,} [H(z=\pm1)] = \text{Pf\,} [H_1(z=\pm1)]\text{\,Pf\,} [H_2(z=\pm1)]. 
\) Using similar arguments, one can derive the same result for class DIII. 
Therefore, for classes D and DIII, the topological invariant of 
$H$ is equal to the product of the individual topological invariants.

\begin{remark}
Similar arguments to those give above allow us to establish a bulk-boundary
correspondence in a 3-way bridge, where one end of three 1D systems in the same symmetry class is connected together. Furthermore, if
a bulk-boundary correspondence is known to hold for a
single bulk in higher dimensions, then the same approach can be
used to extend its validity to interfaces formed by bulks in the same class. 
Consider, for instance, a bridge consisting of two 2D systems, that form a 1D
interface. In this case, we label the fermionic degrees of the two systems 
by $\{|\j_\parallel\rangle|\j\rangle|m_b\rangle,\; \j_\parallel \in \mathbb{Z},\;\j=0,\dots,\infty,\ m_b=1,\dots,2d_{b,{\rm int}}\}$ for $b=1,2$, and proceed along the same lines.
\end{remark}

\begin{figure}[t]
\begin{center}
\includegraphics[width = 8cm] {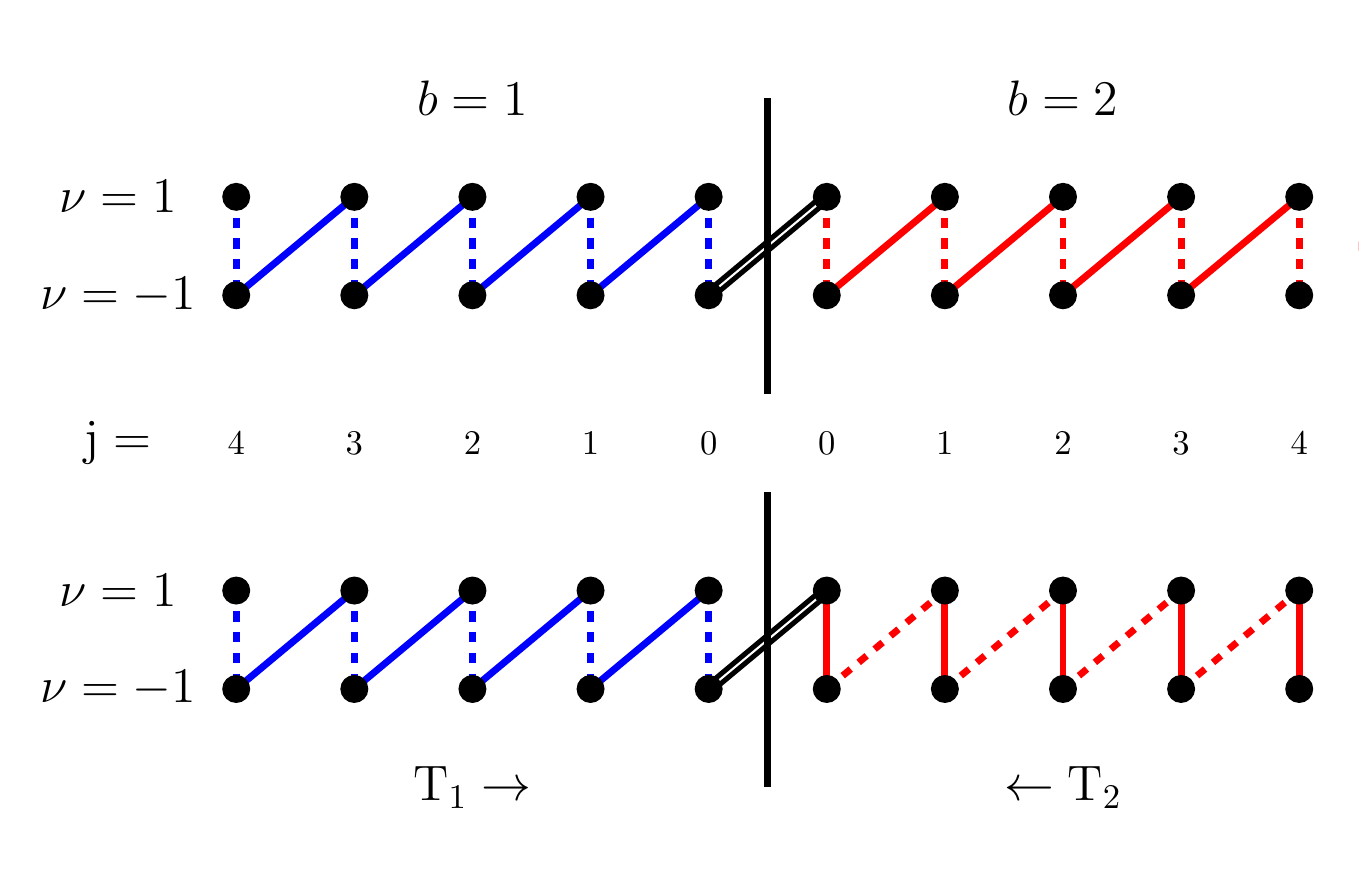}
\vspace*{-4mm}
\caption{Schematic of interfaces formed by two wires individually 
described by the SSH Hamiltonian.
On both sides of the interface (solid black lines),
the dotted lines indicate weaker hopping strength compared to 
the solid lines. The double lines indicate hopping between the 
two wires. Notice the exchange of strong and weak hopping strengths 
for $b=2$ wire in the interface shown in the bottom figure,
compared to the one shown in the top. 
The shift operator $T_1$ implements right shift
on the wire $b=1$, whereas $T_2$ implements left shift on
on the wire $b=2$.
\label{fig:SSHinterface}}
\end{center}
\end{figure}

\smallskip

\begin{exmp} 
Let us look at the bulk-boundary correspondence
of two wires described by the SSH Hamiltonian of Example \ref{exmp:ssh},
connected together at one end. In the interfaces shown in Fig.\,\ref{fig:SSHinterface}),
assuming $\{t_{1,b},t_{2,b},\ b=1,2\}$ to be the strengths of intra-cell
and inter-cell hoppings, the reduced bulk Hamiltonians of the individual wires
are
\begin{eqnarray*}
H_1(z) = 
-\begin{bmatrix} 0 & t_{1,1} + t_{2,1}z^{-1} \\
t_{1,1} + t_{2,1}z & 0\end{bmatrix},\quad 
H_2(z) = 
-\begin{bmatrix} 0 & t_{1,2} + t_{2,2}z^{-1} \\
t_{1,2} + t_{2,2}z & 0\end{bmatrix}.
\end{eqnarray*}
Notice the interchange between $z$ and $z^{-1}$,
which is a result of the convention that $T_1$ implements right shift
on $b=1$, whereas $T_2$ implements left shift
on $b=2$. The hopping between the two wires is such that 
it respects the global chiral symmetry $\nu_{z,1} \oplus \nu_{z,2}$.
For the interface shown in the top panel,
we have $t_{1,b} < t_{2,b}$ for $b=1,2$, which leads to
$Q^B_1 = -1$ and $Q^B_2 = 1$. Therefore, the interface
has $Q^B_I = Q^B_1 + Q^B_2 = 0$ as expected, and 
its boundary invariant $Q^\partial_I = 0$.
For the interface shown in the lower panel, we have
$t_{1,1} < t_{2,1}$ and $t_{1,2} > t_{2,2}$.
This leads to $Q^B_I = Q^B_1 + Q^B_2 = -1+0 = -1$,
which must equal the boundary invariant as well. In this case,
the interface must host at least one edge state of chirality $-1$.
\end{exmp}

\section{The stability and sensitivity of topological zero modes to perturbations}

\subsection{Stability}
\label{sec:stability}

Symmetry conditions protect bulk and boundary invariants against 
compact perturbations of arbitrary strength as well as against 
bulk disorder that is weak enough not to close the energy gap. 
However, for all five non-trivial classes in 1D, the total dimension of the zero-energy SP eigenspace, and hence the degeneracy of the many-body ground state, is {\it not} guaranteed to be invariant under such perturbations \cite{ours2}. 
In this section, we establish necessary and sufficient conditions for the stability of the ground manifold. These conditions may be especially useful for assessing the performance of symmetry-protected quantum memories. A somewhat similar notion was discussed in Sec. 7.4 of Ref.\,\cite{Prodan}
in the context of the integer quantum Hall effect.

The stability of the ground manifold depends entirely on the stability of the zero-energy eigenspace of the SP Hamiltonian, both in the case of topological
insulators as well as topological superconductors. In our analysis we will assume that the system is long enough to ignore finite size effects. Thus we will focus on the zero-energy eigenspace of the semi-infinite system. 
We say that the zero-energy eigenspace of a Hamiltonian ${H}_0$ is stable against perturbations in the symmetry class, if the zero-energy eigenspace is {\it nearly} unchanged
under any such perturbation ${W} = \lambda{\pert}$
for small enough strength $\lambda \in \Rds$. Since we are only interested in the strength of the perturbation ${W}$ relative to that of ${H}_0$, we will assume $\norm{{\pert}} = \norm{{H}_0} = 1$, so that $\lambda = \norm{{W}}/\norm{{H}_0}$ is the dimensionless parameter of interest. Let $P_{\Ker(\cdot)}$ denote the orthogonal projector on $\Ker(\cdot)$.

\begin{defn}
\label{def:stability}
$\Ker({H}_0)$ is said to be ``stable against a perturbation $\pert$\,'' if 
\[
\lim_{\lambda \rightarrow 0} P_{\Ker({H}_0 + \lambda{\pert})} = 
P_{\Ker({H}_0)},
\quad \lambda \in \R.
\]
$\Ker({H}_0)$ is said to be ``stable in the symmetry class of ${H}_0$'' if it is stable against every perturbation $\pert$ of unit norm in the symmetry class of ${H}_0$.
\end{defn}

We will next show that a necessary and sufficient condition for the stability of the zero-energy eigenspace is that its dimension is invariant for small but arbitrary perturbations in the class. Before formally proving this 
statement, some additional tools are needed. 
Let ${H}_0^{(-1)}$ be a ``reflexive generalized inverse'' 
of ${H}_0$, defined by the properties
\[{H}_0^{(-1)}{H}_0{H}_0^{(-1)} = {H}_0^{(-1)},\quad
{H}_0{H}_0^{(-1)}{H}_0={H}_0.
\]
We define the ``generalized condition number'' \cite{Chen03} of ${H}_0$ with respect to ${H}_0^{(-1)}$ by
\begin{equation}
\label{gencond}
{\cond}({H}_0,{H}_0^{(-1)}) \equiv \norm{{H}_0}\norm{{H}_0^{(-1)}}.
\end{equation}
If $H_0$ is invertible, then ${H}_0^{(-1)} = H_0^{-1}$ is unique,
and in this case we will drop the second argument and simply denote
the generalized condition number by ${\cond}({H}_0)$.
Generally, a large value of the generalized condition number implies that the associated system of equations is ill-conditioned. We will see in the next lemma that it also means that the kernel of the operator is very sensitive to perturbations.  

\begin{lem}
\label{lemmastability}
For $\lambda <  1/{\cond}({H}_0,{H}_0^{(-1)})$,
we have $\dim \Ker ({H}_0 + {W}) \le \dim \Ker ({H}_0) 
$.
If, in addition, the equality holds for some ${W}$, then
\begin{equation}
\label{newkernel}
\Ker({H}_0 + {W}) = (\widetilde{I} + {H}_0^{(-1)}{W})^{-1}\Ker({H}_0).
\end{equation}
\end{lem}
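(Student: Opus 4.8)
The plan is to manufacture a single invertible operator that carries one kernel onto the other. Set $M \equiv \widetilde{I} + H_0^{(-1)}\pert\lambda$; more usefully, writing $W=\lambda\pert$, set $M \equiv \widetilde{I} + H_0^{(-1)}W$. First I would verify that $M$ is invertible: the normalizations $\norm{\pert}=\norm{H_0}=1$ turn the hypothesis $\lambda < 1/\cond(H_0,H_0^{(-1)})$ into $\norm{H_0^{(-1)}W} \le \norm{H_0^{(-1)}}\,\norm{W} = \lambda\,\cond(H_0,H_0^{(-1)}) < 1$, so the Neumann series furnishes $M^{-1}$ as a bounded operator.

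The algebraic heart of the argument is the elementary identity
\[
H_0 + W = H_0\,M + (\widetilde{I}-Q)\,W, \qquad Q \equiv H_0 H_0^{(-1)},
\]
obtained by expanding $H_0 M = H_0 + QW$. Using the defining relation $H_0 H_0^{(-1)}H_0 = H_0$ of the reflexive generalized inverse, I would record that $Q$ is idempotent ($Q^2=Q$), that its range equals $\text{Im}(H_0)$, and hence that $\text{Im}(Q)\cap\Ker(Q)=\{0\}$ with $\Ker(Q)=\text{Im}(\widetilde{I}-Q)$.

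Next I would read off the kernel directly from the identity. For $x\in\Ker(H_0+W)$ we get $H_0 M x = -(\widetilde{I}-Q)Wx$; the left-hand side lies in $\text{Im}(H_0)=\text{Im}(Q)$ while the right-hand side lies in $\Ker(Q)$, and since these subspaces intersect trivially, both sides must vanish. In particular $H_0 M x = 0$, i.e.\ $Mx\in\Ker(H_0)$, so $x\in M^{-1}\Ker(H_0)$. This establishes the inclusion $\Ker(H_0+W)\subseteq M^{-1}\Ker(H_0)$; because $M$ is invertible and $\Ker(H_0)$ is finite-dimensional (as $H_0$ is Fredholm, its kernel being the finite zero-mode space), taking dimensions gives $\dim\Ker(H_0+W)\le\dim\Ker(H_0)$. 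When the dimensions coincide, a subspace contained in a finite-dimensional space of equal dimension must fill it, so $\Ker(H_0+W)=M^{-1}\Ker(H_0)=(\widetilde{I}+H_0^{(-1)}W)^{-1}\Ker(H_0)$, which is exactly Eq.~\eqref{newkernel}.

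The step I expect to be the crux is the splitting: recognizing that the two summands of $H_0+W$ land in the complementary ranges of the \emph{possibly non-orthogonal} idempotent $Q$, so that a vanishing sum forces each term to vanish separately. Everything else — the Neumann-series invertibility of $M$ and the finite-dimensional dimension count that upgrades the inclusion to an equality — is routine once this direct-sum observation is in place.
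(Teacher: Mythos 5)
Your proof is correct, and it arrives at the paper's conclusion through a genuinely different decomposition, even though the key operator $M = \widetilde{I} + H_0^{(-1)}W$, its Neumann-series invertibility under $\lambda\,\cond(H_0,H_0^{(-1)})<1$, and the finite-dimensional dimension count are shared. The paper left-multiplies the perturbed operator by the generalized inverse and factors $H_0^{(-1)}(H_0+W) = (H_0^{(-1)}H_0)\,M$, where reflexivity $H_0^{(-1)}H_0H_0^{(-1)}=H_0^{(-1)}$ makes the cross term vanish; it then concludes via $\Ker(H_0+W)\subseteq \Ker H_0^{(-1)}(H_0+W) = M^{-1}\Ker(H_0^{(-1)}H_0) = M^{-1}\Ker H_0$. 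You instead work with $H_0+W$ itself, splitting it as $H_0 M + (\widetilde{I}-Q)W$ along the (generally oblique) idempotent $Q = H_0H_0^{(-1)}$, and force each summand to vanish separately on the kernel because $\mathrm{Im}(Q)\cap\Ker(Q)=\{0\}$. The two routes are algebraically equivalent at bottom --- applying $H_0^{(-1)}$ on the left is precisely what annihilates your $(\widetilde{I}-Q)W$ term, since $H_0^{(-1)}(\widetilde{I}-Q)=0$ by reflexivity --- but your version buys two things: the stronger intermediate fact $H_0Mx=0$ (rather than only $H_0^{(-1)}H_0Mx=0$), and an explicit justification of the identification $\Ker(H_0^{(-1)}H_0)=\Ker H_0$ (via $QH_0=H_0$, using $H_0H_0^{(-1)}H_0=H_0$), a step the paper uses silently. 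Your explicit appeal to Fredholmness to guarantee $\dim\Ker H_0<\infty$ is likewise a point the paper leaves implicit, and both proofs genuinely need it, since the final step --- a subspace contained in one of equal dimension must fill it --- fails in infinite dimensions. The paper's premultiplication, in exchange, is a one-line computation that never mentions ranges of idempotents.
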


\begin{proof}
We recall that the pseudo-inverse ${H}_0^{(-1)}$ of ${H}_0$ satisfies ${H}_0^{(-1)}{H}_0{H}_0^{(-1)}={H}_0^{(-1)}$. With a simple manipulation, one can write ${H}_0^{(-1)}({H}_0 + {W}) = {H}_0^{(-1)}{H}_0(\widetilde{I}+ {H}_0^{(-1)}{W})$.
Due to the restriction on the norm of ${W}$, the term inside the bracket on the right hand-side is invertible, and hence 
\[
\Ker {H}_0^{(-1)}({H}_0 + {W}) 
= (\widetilde{I} + {H}_0^{(-1)}{W})^{-1}
\Ker({H}_0^{(-1)}{H}_0) = (\widetilde{I} + {H}_0^{(-1)}{W})^{-1}\Ker {H}_0.
\]
Since $\dim \Ker {H}_0^{(-1)}({H}_0 + {W}) \ge 
\dim \Ker({H}_0 + {W})$, we have proved the first statement.
In the case of equality of dimensions, we have
$\Ker {H}_0^{(-1)}({H}_0 + {W}) = \Ker({H}_0 + {W})$,
which proves the second statement of the lemma.
\end{proof}

We are now in a position to characterize the stability of the zero-energy eigenspace: 

\begin{prop}
A necessary and sufficient condition for the stability of $\Ker ({H}_0)$ is 
that the number of ZMs is invariant against small but
arbitrary perturbations in the symmetry class, that is,
\[
\lim_{\lambda \mapsto 0} \dim\Ker({H}_0 + \lambda{\pert}) = \dim\Ker({H}_0),
\quad \lambda \in \Rds,
\]
for every ${\pert}$ of unit norm in the symmetry class.
\end{prop}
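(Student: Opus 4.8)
The plan is to prove the two implications separately, observing that the statement is really a pointwise claim about a single fixed perturbation $\pert$ of unit norm in the class, after which one quantifies over all such $\pert$. Throughout I would fix a reflexive generalized inverse $H_0^{(-1)}$, abbreviate $\cond \equiv \cond(H_0,H_0^{(-1)})$, and restrict attention to $|\lambda| < 1/\cond$, so that Lemma \ref{lemmastability} (including its equality case) is available. With the per-$\pert$ equivalence in hand, ``stable in the symmetry class'' (for all $\pert$) and ``dimension invariant for all $\pert$'' coincide because the universal quantifier preserves equivalences.

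For the necessity direction (stability $\Rightarrow$ dimension invariance), I would invoke the elementary fact that two orthogonal projectors $P,Q$ with $\norm{P-Q}<1$ have the same rank. By Definition \ref{def:stability}, stability against $\pert$ means $P_{\Ker(H_0+\lambda\pert)} \to P_{\Ker(H_0)}$ in operator norm; hence for all sufficiently small $\lambda$ the two projectors lie within distance $1$, forcing $\dim\Ker(H_0+\lambda\pert)=\dim\Ker(H_0)$. Since the dimension is integer-valued, its limit as $\lambda\to 0$ then equals $\dim\Ker(H_0)$, which is exactly the asserted invariance.

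For the sufficiency direction (dimension invariance $\Rightarrow$ stability), the first step is to upgrade the hypothesis: because $\dim\Ker(H_0+\lambda\pert)$ takes integer values, the assumption that its limit equals $\dim\Ker(H_0)$ forces it to equal $\dim\Ker(H_0)$ exactly for all small enough $\lambda$. Feeding this equality into the equality case of Lemma \ref{lemmastability} yields the explicit description
\[
\Ker(H_0+\lambda\pert)=(\widetilde{I}+\lambda H_0^{(-1)}\pert)^{-1}\,\Ker(H_0).
\]
The remainder is a continuity argument: as $\lambda\to 0$ one has $\widetilde{I}+\lambda H_0^{(-1)}\pert\to\widetilde{I}$, so its inverse converges to $\widetilde{I}$ in norm. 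Fixing an orthonormal basis $\{v_1,\dots,v_k\}$ of $\Ker(H_0)$, the images $(\widetilde{I}+\lambda H_0^{(-1)}\pert)^{-1}v_i$ span $\Ker(H_0+\lambda\pert)$ and converge to $v_i$; assembling them as the columns of a matrix $B_\lambda$ and using $P_{\Ker(H_0+\lambda\pert)}=B_\lambda(B_\lambda^\dagger B_\lambda)^{-1}B_\lambda^\dagger$, the convergence $B_\lambda\to B_0$ with $B_0^\dagger B_0=I_k$ gives $P_{\Ker(H_0+\lambda\pert)}\to P_{\Ker(H_0)}$, i.e.\ stability.

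I expect the only delicate point to be this last passage from convergence of the spanning vectors to convergence of the orthogonal projector. This is precisely where the equality of dimensions is indispensable: it guarantees that $B_\lambda$ retains full column rank for small $\lambda$, so that $B_\lambda^\dagger B_\lambda$ stays invertible and tends to $I_k$, preventing a rank collapse in the Gram formula. Everything else is bookkeeping resting directly on Lemma \ref{lemmastability} and the integer-valuedness of the kernel dimension.
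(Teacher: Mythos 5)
Your proof is correct and takes essentially the same route as the paper: necessity via the fact that norm-convergent orthogonal projectors must eventually have equal rank (the paper phrases this as the limit being ill-defined when the dimension jumps), and sufficiency via the equality case of Lemma \ref{lemmastability}, i.e.\ $\Ker({H}_0+\lambda\pert)=(\widetilde{I}+\lambda{H}_0^{(-1)}\pert)^{-1}\Ker({H}_0)$, together with norm continuity as $\lambda\to 0$. The paper compresses both directions into one line each, so your version simply supplies the details it leaves implicit — the integer-valuedness upgrading the limit to exact equality of dimensions for small $\lambda$, and the Gram-matrix argument for convergence of the projectors.
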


\begin{proof}
The necessity of this condition is straightforward, since if it is not satisfied, the limit of projectors in the definition of stability is ill-defined. The sufficiency follows Lemma \ref{lemmastability}.
\end{proof}

Our next step is to determine the conditions under which the dimension of
the zero-energy eigenspace is stable for each of the symmetry classes. This condition can be written in terms of dimension of the zero-energy subspace and the topological invariants.

\begin{thm}[Stability of zero modes]
\label{thm:stability}
The zero energy eigenspace of ${H}_0$ is stable 
with respect to symmetry-preserving perturbations if and only if
\begin{equation}
\label{condition}
\dim \Ker ({H}_0) = \left\{
\begin{array}{cc}
|Q^\partial ({H}_0)| & \text{for ${H}_0$ in chiral classes}\\
 (1-Q^\partial ({H}_0))/2 & \text{for ${H}_0$ in class D}\\   
 1-Q^\partial ({H}_0) & \text{for ${H}_0$ in class DIII}
\end{array}\right. .
\end{equation}
\end{thm}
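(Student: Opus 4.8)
The plan is to reduce the claim, via the Proposition immediately preceding the theorem, to the statement that $\dim\Ker(H_0)$ is invariant under all small symmetry-preserving perturbations, and then to show this invariance holds exactly when $\dim\Ker(H_0)$ attains the minimal value compatible with the (perturbation-invariant) boundary index. I would first record two bounds valid for every admissible $W$. The upper bound is Lemma \ref{lemmastability}: for $\lambda$ small enough, $\dim\Ker(H_0+W)\le\dim\Ker(H_0)$. The lower bound comes from the invariance of $Q^\partial$ under symmetry-preserving perturbations (established in Sec.\,\ref{sec:bb} through the Fredholm and secondary indices): in the chiral classes $Q^\partial=\mathrm{index}(A)=\dim\Ker A-\dim\mathrm{Coker}\,A$ while $\dim\Ker(H_0)=\dim\Ker A+\dim\mathrm{Coker}\,A$, so $\dim\Ker(H_0)\ge|Q^\partial|$; in class D the parity of $\dim\Ker(H_0)$ is fixed by $Q^\partial=(-1)^{\mathcal N}$, so $\dim\Ker(H_0)\ge(1-Q^\partial)/2$; and in class DIII, Kramers degeneracy forces $\mathcal N$ even with $Q^\partial=(-1)^{\mathcal N/2}$, so $\dim\Ker(H_0)\ge1-Q^\partial$. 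In every case the right-hand side of \eqref{condition} is precisely this universal lower bound, which I will denote $f(Q^\partial)$.

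Sufficiency then follows by squeezing: if $\dim\Ker(H_0)=f(Q^\partial)$, then for small $W$ one has $f(Q^\partial)=f(Q^\partial(H_0+W))\le\dim\Ker(H_0+W)\le\dim\Ker(H_0)=f(Q^\partial)$, using the lower bound for the perturbed operator together with $Q^\partial(H_0+W)=Q^\partial(H_0)$. This forces $\dim\Ker(H_0+W)=\dim\Ker(H_0)$, and invariance of the dimension gives stability.

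For necessity I would argue the contrapositive: assuming $\dim\Ker(H_0)>f(Q^\partial)$, I exhibit one symmetry-preserving finite-rank $W$ strictly lowering the kernel dimension, so neither the dimension nor the kernel projector is invariant. In the chiral classes the hypothesis means both $\Ker A$ and $\mathrm{Coker}\,A=\Ker A^\dagger$ are nonzero; picking unit $u\in\Ker A$ and $w\in\Ker A^\dagger$ over the appropriate field $\F$ and taking $W$ to be the off-diagonal (chirality-odd) operator built from $B=\lambda\,|w\rangle\langle u|$, one checks $\Ker(A+\lambda B)=\{x\in\Ker A:\langle u|x\rangle=0\}$ because $\mathrm{Range}\,A\perp w$, which drops $\dim\Ker A$ by one (and symmetrically $\dim\Ker A^\dagger$ by one), strictly decreasing $\dim\Ker(H_0)$ while preserving $\mathrm{index}(A)$. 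For class D, where $H_0=iA$ with $A$ real antisymmetric and $\dim\Ker A\ge2$, I would couple two orthonormal real kernel vectors by the real antisymmetric rank-two term $B=|u_1\rangle\langle u_2|-|u_2\rangle\langle u_1|$, the same range-orthogonality argument giving $\dim\Ker(A+\lambda B)=\dim\Ker A-2$. Class DIII is treated by the analogous antisymmetric rank-two coupling on the complex antisymmetric block, reducing $\dim\Ker A$ by two and hence $\mathcal N$ by four.

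The main obstacle I anticipate is the necessity construction in the antisymmetric classes, and DIII in particular. There the pairing controlling whether a rank-two term removes kernel vectors is the \emph{bilinear} form $u^{\rm T}v$ rather than the Hermitian inner product, and this form can be isotropic on $\Ker A$, so a naive choice of the two coupling vectors may fail to reduce the kernel. I would resolve this by coupling the kernel to the cokernel, using the identity $\mathrm{Coker}\,A=\overline{\Ker A}$ for antisymmetric $A$, and selecting the antisymmetric perturbation so that the relevant $2\times2$ pairing is nondegenerate; a rank--nullity count on the perturbed block then yields the strict decrease. A convenient alternative that sidesteps the field-by-field bookkeeping is to pass first to the congruent flat-band representative via $\dim\Ker(H_0+W)=\dim\Ker(\widetilde H_F+W_F)$ with $W_F=\widetilde H_+^{-1}W\widetilde H_+^{-\dagger}$, and to build the destabilizing perturbation directly on the explicit, boundary-localized kernel of $\widetilde H_F$.
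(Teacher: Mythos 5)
Your proposal is correct and follows essentially the same route as the paper's proof: sufficiency by squeezing $\dim\Ker({H}_0+{W})$ between the upper bound of Lemma~\ref{lemmastability} and the lower bound forced by the invariance of $Q^\partial$ (whose minimal compatible kernel dimension is exactly the right-hand side of Eq.~\eqref{condition}), and necessity by exhibiting explicit symmetry-preserving finite-rank perturbations class by class, with your chiral, D, DIII and quaternionic couplings congruent to the paper's $|\eta_+\rangle\langle \eta_-|+\text{H.c.}$, $i|\eta_1\rangle\langle \eta_2|+\text{H.c.}$ and Kramers-pair constructions. Your range-orthogonality verification that the kernel genuinely drops (using $\Ker \widetilde{A}^\dagger \perp \mathrm{Range}\,\widetilde{A}$, and the identification of the cokernel with $\overline{\Ker \widetilde{A}}$ in the antisymmetric classes to evade the isotropy of the bilinear form) is a sound extra step that the paper leaves implicit when it merely asserts that the constructed perturbations split the modes away from zero energy.
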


\begin{proof}
Let \(\mathcal{N}\equiv  
\dim \Ker ({H}_0)\) for short.
We first show that when condition \eqref{condition} is satisfied, the zero-energy eigenspace is stable against small perturbations in the symmetry class. 
Notice that for every symmetry class and for a fixed value of the topological invariant $Q^\partial ({H}_0)$, the dimension $\mathcal{N}$ of the zero-energy
eigenspace is the {\it minimum} allowed for that value of the topological invariant. For example, in the case of chiral classes, we have $\mathcal{N}_{\nu=0} - \mathcal{N}_{\nu=1}
=Q^\partial({H}_0)$. Then $\mathcal{N}=\mathcal{N}_{\nu=0} + \mathcal{N}_{\nu=1} \ge |Q^\partial({H}_0)|$. Because ${W}$ satisfies the symmetries of the class,
the boundary invariant is continuous. Therefore, we have 
$\dim \Ker ({H}_0 + {W}) \ge \dim \Ker {H}_0$.
However, we proved in the previous lemma that for small enough ${W}$, $\dim \Ker ({H}_0 + {W}) \le \dim \Ker {H}_0$.
These two statements lead to the conclusion that
$\dim \Ker ({H}_0 + {W}) = 
\dim \Ker {H}_0$, which is a sufficient 
condition for the stability of $\Ker {H}_0$.

We now provide a proof by contradiction for the necessity of the condition \eqref{condition}. It suffices to construct some perturbation in the class which changes the dimension of the zero-energy eigenspace. This is achieved if we can find a perturbation ${\pert}$
that satisfies the symmetries of the class and has 
non-trivial action in the zero-energy eigenspace,
that is, $P_{\Ker({H}_0)}{W}P_{\Ker({H}_0)} \ne 0$.

\smallskip

\noindent \underline{AIII}: 
Violation of  \eqref{condition}
means that there exist at least two zero-energy eigenstates, say,  
$\{|\eta_+\rangle,|\eta_-\rangle\}$ with chiralities $+1,-1$, respectively.
In this case, the perturbation ${\pert} = |\eta_+\rangle\langle \eta_-| + \text{H.c.}$
satisfies the chiral symmetry, and 
perturbs the two eigenstates away from zero energy, hence reducing the dimension of the zero energy subspace. 

\smallskip

\noindent \underline{BDI}:
The perturbation operator is the same as constructed for class AIII,
that is ${\pert} = |\eta_+\rangle\langle \eta_-| + \text{H.c.}$. Notice that
the extra reality condition is satisfied by ${\pert}$ since the entries of
${H}_0$ and hence of $\{|\eta_+\rangle,|\eta_-\rangle\}$ can be chosen to be real.

\smallskip

\noindent \underline{CII}:
Because of Kramer's degeneracy, we can find at least four zero-energy
eigenstates $\{|\eta_{\nu\sigma}\rangle, \nu=\pm1, \sigma=\pm1\}$ satisfying
$\mathcal{T} |\eta_{\nu\sigma}\rangle = (-1)^\sigma|\eta_{\nu\sigma'}\rangle,
\quad \sigma' \ne \sigma$.
The perturbation operator in this case is
\[
{\pert} = \sum_\sigma |\eta_{\nu\sigma}\rangle\langle\eta_{\nu'\sigma}| + \text{H.c.}.
\]

\noindent \underline{D}:
Violation of the condition \eqref{condition} means that there are at least two 
zero-energy eigenstates $\{|\eta_1\rangle,|\eta_2\rangle\}$
with real entries. The perturbation operator in this case is
${\pert} = i|\eta_1\rangle\langle \eta_2| + \text{H.c.}$.

\smallskip

\noindent \underline{DIII}:
Once again due to Kramer's degeneracy, we have at least four
zero-energy eigenstates $\{|\eta_{\nu\sigma}\rangle, \nu=\pm1, \sigma=\pm1\}$ satisfying
$\mathcal{T} |\eta_{\nu\sigma}\rangle = (-1)^\sigma|\eta_{\nu\sigma'}\rangle,$
$\sigma' \ne \sigma$.
We can use the perturbation 
\[
{\pert} = \sum_{\sigma} |\eta_{\nu\sigma}\rangle\langle \eta_{\nu'\sigma'}|-
|\eta_{\nu\sigma'}\rangle\langle \eta_{\nu'\sigma}| + \text{H.c.}.\]  
\end{proof}

\begin{remark} 
The condition in Eq.\,\eqref{condition} is also a necessary and sufficient condition for protecting the {\em degeneracy} of the many-body ground state, which depends entirely on the number of zero-energy boundary states. 
\end{remark}

For systems with open BCs, we can upper bound the strength of perturbations for which the stable ZMs are not destroyed. We first compute a bound on the generalized condition number ${\cond}$. Computing the condition number for clean systems with open BCs is made possible by the SWH factorization, thanks to the following proposition:

\begin{prop}
Let $\widetilde{H} = \widetilde{H}_+ \widetilde{H}_F \widetilde{H}_+^\dagger$ 
be the SWH
factorization of the Hamiltonian $\widetilde{H}$. Then
its pseudo-inverse is given by
\(
\widetilde{H}^{(-1)} = (\widetilde{H}_+^\dagger)^{-1} \widetilde{H}_F \widetilde{H}_+^{-1}.
\)
\end{prop}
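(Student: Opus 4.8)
The plan is to verify directly that the candidate operator
\[
G \equiv (\widetilde{H}_+^\dagger)^{-1}\,\widetilde{H}_F\,\widetilde{H}_+^{-1}
\]
satisfies the two defining relations of a reflexive generalized inverse, namely $\widetilde{H}\,G\,\widetilde{H} = \widetilde{H}$ and $G\,\widetilde{H}\,G = G$. Two ingredients reduce this to a one-line computation. First, the side factor $\widetilde{H}_+$ is an invertible BBT operator (as established in Sec.~\ref{whbasics} and used throughout Sec.~\ref{sec:symwh}), so $\widetilde{H}_+^{-1}$ and $(\widetilde{H}_+^\dagger)^{-1}=(\widetilde{H}_+^{-1})^\dagger$ exist as bounded operators and $G$ is well defined. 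Second, and crucially, the flat-band middle factor obeys the cubic identity
\begin{equation}
\label{eq:HFcube}
\widetilde{H}_F^{\,3} = \widetilde{H}_F .
\end{equation}

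I would establish Eq.~\eqref{eq:HFcube} first, as it is the only nontrivial input. The cleanest route, consistent with the elementary-algebra spirit of the paper, is to observe that in every symmetry class the diagonal factor $\widetilde{D}$ (respectively $\widetilde{D}_1,\widetilde{D}_2,\widetilde{D}_3,\widetilde{D}_4$) is a direct sum, over the ordered internal basis, of shift powers $T^{\ind_m}$, and that each such power is a \emph{partial isometry}. Indeed, from $TT^\dagger=\mathbb{1}$ one gets $T^{\,n}(T^{\,n})^\dagger=\mathbb{1}$ for $n\ge 0$, hence $T^{\,n}(T^{\,n})^\dagger T^{\,n}=T^{\,n}$, and the analogous identity holds for $n<0$ where $T^{\,n}=(T^\dagger)^{|n|}$ is an isometry. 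Consequently $\widetilde{D}\,\widetilde{D}^\dagger\widetilde{D}=\widetilde{D}$, and a block-wise computation (the nonzero-index blocks are the off-diagonal $2\times2$ pairs $(\lvert m\rangle,\lvert\bar m\rangle)$ built from $T^{\ind_m}$, while the zero-index terms are self-adjoint sign/unit blocks squaring to the identity on their support) gives Eq.~\eqref{eq:HFcube}. Equivalently, $\widetilde{H}_F$ is Hermitian with spectrum contained in $\{-1,0,+1\}$, so that the continuous function $f(x)=x^3-x$ vanishes on its spectrum and $f(\widetilde{H}_F)=0$; either way $\widetilde{H}_F$ is its own reflexive generalized inverse.

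With Eq.~\eqref{eq:HFcube} in hand, the verification is mechanical: substituting the factorizations and cancelling the adjacent invertible pairs $\widetilde{H}_+^\dagger(\widetilde{H}_+^\dagger)^{-1}=\mathbb{1}$ and $\widetilde{H}_+^{-1}\widetilde{H}_+=\mathbb{1}$ yields
\[
\widetilde{H}\,G\,\widetilde{H} = \widetilde{H}_+\,\widetilde{H}_F^{\,3}\,\widetilde{H}_+^\dagger = \widetilde{H}_+\,\widetilde{H}_F\,\widetilde{H}_+^\dagger = \widetilde{H},
\]
and, identically,
\[
G\,\widetilde{H}\,G = (\widetilde{H}_+^\dagger)^{-1}\,\widetilde{H}_F^{\,3}\,\widetilde{H}_+^{-1} = (\widetilde{H}_+^\dagger)^{-1}\,\widetilde{H}_F\,\widetilde{H}_+^{-1} = G .
\]
Both reflexive-generalized-inverse conditions hold, which proves the claim.

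The main obstacle is really just pinning Eq.~\eqref{eq:HFcube} down uniformly across all classes; once the partial-isometry structure of the shift powers $T^{\ind_m}$ is recognized, the rest of the argument is automatic. The only subtlety worth flagging is that the middle factors in the non-chiral classes carry the extra zero-index terms (the $s_m\lvert m\rangle\langle m\rvert\bm{1}_\F$ blocks and their analogues); these are self-adjoint with eigenvalues in $\{\pm1\}$ and decouple from the index-$\ind_m$ blocks, so they are fully consistent with the spectrum $\{-1,0,1\}$ and do not disturb Eq.~\eqref{eq:HFcube}.
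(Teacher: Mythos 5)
Your proof is correct and takes essentially the same route as the paper: the paper's argument is exactly that $\widetilde{H}_F$ has spectrum in $\{1,-1,0\}$, hence $\widetilde{H}_F^{(-1)}=\widetilde{H}_F$, after which the invertibility of $\widetilde{H}_+$ lets the side factors cancel. Your identity $\widetilde{H}_F^{\,3}=\widetilde{H}_F$ is just the operator-algebraic restatement of that spectral fact, and your partial-isometry verification of it (plus the explicit substitution checks) merely fills in details the paper leaves implicit.
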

\begin{proof}
Note that all eigenvalues of $\widetilde{H}_F$ belong to the set $\{1,-1,0\}$,
therefore $\widetilde{H}_F^{(-1)}=\widetilde{H}_F$.
Now the statement follows from the fact that $\widetilde{H}_+$ is invertible.
\end{proof}

\noindent
One can now bound the generalized condition number of $\widetilde{H}$ by
\begin{equation}
\label{conditionnumberbound}
{\cond}(\widetilde{H},\widetilde{H}^{(-1)}) \le {\cond}(\widetilde{H}_+)^2. 
\end{equation}
(See Eq.\,\eqref{gencond} for the definition of ${\cond}(\widetilde{H},\widetilde{H}^{(-1)})$ and ${\cond}(\widetilde{H}_+)$.)
A similar analysis for the spectral flattening (SWH factorization) of a Hamiltonian subject to open BCs leads to the inequality
${\cond}(\widetilde{H},\widetilde{H}^{(-1)}) \le {\cond}(\widetilde{H}_+)^2$.

Importantly, the quantity on the right-hand side is easily computable, since \cite{Halmos}
\begin{eqnarray}
\label{eq:bulksensitivity}
\norm{\widetilde{H}_+} = \max_{k\in[0,2\pi)} (\norm{H_+(e^{ik})}),\quad 
\norm{\widetilde{H}_+^{-1}} = \max_{k\in[0,2\pi)} (\norm{H^{-1}_+(e^{ik})}).
\end{eqnarray}
\noindent
The bound on the condition number of Eq.\,(\ref{conditionnumberbound})
immediately leads to the following corollary of Theorem 
\ref{thm:stability}: 

\begin{coro}
Let $H_0$ be a Hamiltonian in one of the non-trivial 
symmetry classes in 1D, subject to open BCs and obeying Eq.\,\eqref{condition}. Let $H_0 = H_{0,+}H_FH_{0,+}^\dagger$ be a SWH factorization. Then $\mathcal{N}(H_0) = \mathcal{N}(H_0+W)$
for any symmetry-preserving perturbation $W$ satisfying 
$\|W\| \le \|H_{0,+}\|^{-2}$.
\end{coro}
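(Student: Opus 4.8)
The plan is to present this corollary as the explicit, computable version of Theorem~\ref{thm:stability}: hypothesis \eqref{condition} already certifies that $\Ker(H_0)$ is stable, and the SWH factorization $H_0=H_{0,+}H_FH_{0,+}^\dagger$ merely turns the abstract ``small enough'' requirement on $W$ into a norm bound governed by the single side factor $H_{0,+}$. Accordingly I would prove $\mathcal{N}(H_0+W)=\mathcal{N}(H_0)$ by trapping $\dim\Ker(H_0+W)$ between a perturbative upper bound and a topological lower bound, and checking that the quoted threshold on $\norm{W}$ makes both bounds active simultaneously.

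For the upper bound I would feed the reflexive generalized inverse $H_0^{(-1)}=(H_{0,+}^\dagger)^{-1}H_F H_{0,+}^{-1}$, furnished by the pseudo-inverse proposition just above, into Lemma~\ref{lemmastability}. That lemma gives $\dim\Ker(H_0+W)\le\dim\Ker(H_0)$ precisely when $\widetilde{I}+H_0^{(-1)}W$ is invertible, i.e.\ when $\norm{H_0^{(-1)}}\,\norm{W}<1$. Since $\norm{H_F}=1$ (the nonzero eigenvalues of the flat middle factor are $\pm1$ and it is block-diagonal in the internal index), the pseudo-inverse formula yields $\norm{H_0^{(-1)}}\le\norm{H_{0,+}^{-1}}^2$, and the condition-number estimate \eqref{conditionnumberbound} packages this as a bound on $\cond(H_0,H_0^{(-1)})$ through $H_{0,+}$ alone; the hypothesis on $\norm{W}$ is exactly what renders the correction factor invertible.

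For the lower bound I would reuse verbatim the sufficiency argument inside the proof of Theorem~\ref{thm:stability}. Because $W$ respects the symmetries of the class and its norm stays below the bulk gap, which Proposition~\ref{prop:boundgap} also controls through $H_{0,+}$, the relevant Fredholm and gap structure is preserved and the boundary invariant is locked, $Q^\partial(H_0+W)=Q^\partial(H_0)$ (cf.\ the continuity of the analytic and secondary indices discussed after Theorem~\ref{thm:bbcorr}). Hypothesis \eqref{condition} states that $\dim\Ker(H_0)$ already equals the minimal kernel dimension compatible with $Q^\partial(H_0)$---for instance $\mathcal{N}=\mathcal{N}_{\nu=0}+\mathcal{N}_{\nu=1}\ge|\mathcal{N}_{\nu=0}-\mathcal{N}_{\nu=1}|=|Q^\partial|$ in the chiral classes, with the analogous minimality in classes D and DIII---so the invariant being frozen forces $\dim\Ker(H_0+W)\ge\dim\Ker(H_0)$. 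Combining the two inequalities gives equality, hence $\mathcal{N}(H_0+W)=\mathcal{N}(H_0)$.

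The two bounds are each essentially quoted from earlier results, so I do not expect either in isolation to be the obstacle. The delicate point is the uniform bookkeeping: one must verify that the \emph{single} explicit threshold in the statement simultaneously keeps $\widetilde{I}+H_0^{(-1)}W$ invertible (for Lemma~\ref{lemmastability}) and keeps the spectral gap open (so that $Q^\partial$ cannot jump), and that both requirements collapse, via $\norm{H_F}=1$ and the pseudo-inverse formula, to a bound expressed solely through the side factor $H_{0,+}$. I would also treat the boundary case of the norm inequality with some care, since Lemma~\ref{lemmastability} is stated with a strict inequality while the corollary allows equality; there one should argue that the kernel dimension is still pinned because $H_0^{(-1)}$ annihilates $\Ker(H_0)$ while the perturbation acts invertibly on the complementary finite-dimensional block.
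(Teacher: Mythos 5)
Your proposal is correct and takes essentially the same route the paper intends: the corollary is presented there as an immediate consequence of Theorem~\ref{thm:stability} together with Lemma~\ref{lemmastability} fed by the pseudo-inverse $H_0^{(-1)}=(H_{0,+}^\dagger)^{-1}H_F H_{0,+}^{-1}$ and the condition-number bound of Eq.\,\eqref{conditionnumberbound}, with the matching lower bound on $\dim\Ker(H_0+W)$ supplied by the frozen boundary invariant plus the minimality encoded in Eq.\,\eqref{condition} — and note that the paper locks $Q^\partial$ under compact symmetry-preserving perturbations of \emph{arbitrary} norm (index invariance), so your separate gap-preservation worry is unnecessary. One bookkeeping caveat: your chain (like the paper's) actually yields the threshold $\|W\|<\|H_{0,+}^{-1}\|^{-2}$, the gap bound of Proposition~\ref{prop:boundgap}, rather than the stated $\|H_{0,+}\|^{-2}$ (the two are incomparable in general, e.g.\ for the SSH side factor with $t_1=0.1$, $t_2=1$), so your assertion that the stated hypothesis is ``exactly'' what renders $\widetilde{I}+H_0^{(-1)}W$ invertible only holds with $H_{0,+}^{-1}$ in place of $H_{0,+}$ — a slip inherited from the corollary's statement, not a defect of your argument.
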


We now discuss the stability of ZMs of some representative systems in light of our definition and the criterion established in Theorem \ref{thm:stability}. We have already discussed the chiral ZMs of
the SSH Hamiltonian in Example \ref{exmp:sshzm}. For the parameter regime $t_1<t_2$, we have $Q^\partial(\widetilde{H}) = -1$ and $\mathcal{N} = 1$,
which means the condition for stability in Eq.\,\eqref{condition} is satisfied. Therefore, the edge state of the SSH Hamiltonian subject to open BCs is stable. The same can be said about the Majorana ZMs at the end of Kitaev's Majorana chain, discussed in Ref.\,\cite{Kitaev}.

We emphasize that the condition in Eq.\,\eqref{condition} is {\em not} a consequence of  the non-trivial topology of the Hamiltonian. Rather, it should be regarded as an additional constraint, on top of non-trivial topology, that a system's Hamiltonian must meet in order to host {\it stable} ZMs. This becomes evident from the fact that Eq.\,\eqref{condition}
is explicitly violated by some Hamiltonians, despite the latter having non-trivial topology. For instance, if ${H}_0$ represents a stack of three Kitaev chains in a non-trivial regime subject to open BCs, then the system as a whole hosts three Majorana modes on each end. This system, which belongs to class D, does not satisfy Eq.\,\eqref{condition}. The fact that the zero-energy eigenspace is not stable can be seen from the effect of particle-hole symmetric perturbations. If $\hat{\eta}_1$ and $\hat{\eta}_2$ are two of these Majoranas (the hat indicates operators in Fock space),
then a perturbation of arbitrarily small strength of the form $\widehat{W}=i\lambda\hat{\eta}_1\hat{\eta}_2$ splits the two Majoranas away from zero energy, while still satisfying particle-hole symmetry. 

Our definition of stability also necessarily implies that Andreev bound states on a junction that accidentally appear at zero energy are necessarily unstable. Since Andreev bound states obey fermionic statistics, their anti-excitations must also appear at zero energy. This means that ${H}_0$ in this case will have $\mathcal{N}\ge 2$, which is unstable in class D.

\subsection{Sensitivity}
\label{sec:sensitivity}

Having rigorously characterized stability, we now propose a mathematical definition of the sensitivity of ZMs to symmetry-preserving perturbations. Using the SWH factorization, we also derive an upper bound on the strength of the symmetry-preserving perturbations that are guaranteed not to destroy stable ZMs.
If the zero-energy eigenspace is not stable, then it is
extremely sensitive to some perturbations, in particular, the ones that change its dimension. Therefore, we will only 
consider the zero-energy eigenspaces of 1D SPT phases that
are stable according to Definition \ref{def:stability} given below. 

A measure of ``distortion'' of the zero-energy eigenspace may be naturally built by using the linear-algebraic notion of {\em maximal angle} between linear subspaces \cite{Meyer}. Specifically, by focusing on $\Ker({H}_0 + \lambda{\pert})$
and $\Ker{H}_0$, we define
\[
\sin \theta_{\max} ({\pert},\lambda) \equiv 
\norm{P_{\Ker({H}_0 + \lambda{\pert})}-
P_{\Ker({H}_0)}},\quad
\theta_{\max}({\pert},\lambda) \in [0,\pi/2],
\]
\noindent
This motivated our proposed sensitivity definition:

\begin{defn} 
The ``sensitivity $\mathcal{X}(\pert)$ of $\Ker {H}_0$ to a perturbation $\pert$'', where $\pert$ is an operator of unit norm, is given by
\[
\mathcal{X}(\pert) \equiv  
\frac{d}{d\lambda} \theta_{\max} ({\pert},\lambda) \Big\vert_{\lambda = 0^+},
\]
where ``$+$'' indicates the right derivative at $\lambda =0$. The ``sensitivity $\mathcal{X}$ of $\Ker {H}_0$ to perturbations in the symmetry class of ${H}_0$'' is then defined as
\[
\mathcal{X} \equiv \sup_{||\pert||=1}\mathcal{X}(\pert),
\] 
with $\pert$ being in the symmetry class of ${H}_0$.
\end{defn}

\begin{thm}[Sensitivity to perturbations]
For $\lambda < 1/2{\cond}({H}_0,{H}_0^{(-1)})$, the sine of the maximal angle is upper-bounded by
\begin{equation}
\label{upperbound}
\sin \theta_{\max} ({\pert},\lambda) \le 
\frac{\lambda{\cond}({H}_0,{H}_0^{(-1)}) }
{\sqrt{1 - 2\lambda{\cond}({H}_0,{H}_0^{(-1)})}}.
\end{equation}
The sensitivity to symmetry-preserving perturbations is upper-bounded by
\[
\mathcal{X} \le {\cond}({H}_0,{H}_0^{(-1)}).
\]
\end{thm}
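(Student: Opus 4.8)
The plan is to leverage the explicit description of the perturbed kernel supplied by Lemma~\ref{lemmastability} and then translate the resulting change of subspace into a bound on the gap between the two orthogonal projectors. As the surrounding discussion restricts attention to \emph{stable} zero-energy eigenspaces, Lemma~\ref{lemmastability} applies with equality of dimensions: for $\lambda<1/\cond(H_0,H_0^{(-1)})$ one has $\dim\Ker(H_0+\lambda\pert)=\dim\Ker(H_0)$ together with
\[
\Ker(H_0+\lambda\pert)=(\widetilde{I}+\lambda H_0^{(-1)}\pert)^{-1}\Ker(H_0).
\]
Write $P_0\equiv P_{\Ker(H_0)}$ and $P'\equiv P_{\Ker(H_0+\lambda\pert)}$, so that by definition $\sin\theta_{\max}(\pert,\lambda)=\norm{P'-P_0}$. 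As a preliminary step I would record that, because the two kernels have equal finite dimension, the gap coincides with the sine of the largest principal angle, i.e. $\norm{P'-P_0}=\norm{(\widetilde{I}-P_0)P'}$; only the elementary inclusion $(\widetilde{I}-P_0)P'=(P'-P_0)P'$, giving $\norm{(\widetilde{I}-P_0)P'}\le\norm{P'-P_0}$, is needed for the easy direction.

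The core estimate is then a short geometric computation. Inverting the relation above gives $(\widetilde{I}+\lambda H_0^{(-1)}\pert)\,\Ker(H_0+\lambda\pert)=\Ker(H_0)$, so for any unit vector $w$ in the range of $P'$ the vector $(\widetilde{I}+\lambda H_0^{(-1)}\pert)w$ lies in $\Ker(H_0)$ and is annihilated by $\widetilde{I}-P_0$. Expanding the product yields
\[
(\widetilde{I}-P_0)w=-\lambda\,(\widetilde{I}-P_0)H_0^{(-1)}\pert\, w,
\]
whence $\norm{(\widetilde{I}-P_0)w}\le\lambda\norm{H_0^{(-1)}}\,\norm{\pert}=\lambda\,\cond(H_0,H_0^{(-1)})$, using the normalization $\norm{\pert}=\norm{H_0}=1$ so that $\cond(H_0,H_0^{(-1)})=\norm{H_0^{(-1)}}$. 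Taking the supremum over such $w$ gives the sharp bound $\sin\theta_{\max}(\pert,\lambda)\le\lambda\,\cond(H_0,H_0^{(-1)})$. Since $\sqrt{1-2\lambda\cond(H_0,H_0^{(-1)})}\le1$ for $\lambda<1/\bigl(2\cond(H_0,H_0^{(-1)})\bigr)$, this is in turn dominated by $\lambda\cond(H_0,H_0^{(-1)})/\sqrt{1-2\lambda\cond(H_0,H_0^{(-1)})}$, which establishes the stated inequality~\eqref{upperbound} a fortiori.

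For the sensitivity bound I would pass to the right derivative at $\lambda=0^+$. Since $\theta_{\max}(\pert,0)=0$ and $\theta_{\max}=\arcsin(\sin\theta_{\max})\le\arcsin\!\bigl(\lambda\cond(H_0,H_0^{(-1)})\bigr)$ for small $\lambda$, the difference quotient obeys $\theta_{\max}(\pert,\lambda)/\lambda\le\arcsin\!\bigl(\lambda\cond(H_0,H_0^{(-1)})\bigr)/\lambda\to\cond(H_0,H_0^{(-1)})$ as $\lambda\to0^+$. Hence $\mathcal{X}(\pert)\le\cond(H_0,H_0^{(-1)})$ for every admissible $\pert$, and taking the supremum over unit-norm perturbations in the symmetry class gives $\mathcal{X}\le\cond(H_0,H_0^{(-1)})$.

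I anticipate the main obstacle to be the clean identification of $\sin\theta_{\max}$ with the operator norm $\norm{(\widetilde{I}-P_0)P'}$ rather than the geometric estimate itself: this requires the equal-dimension property (guaranteed here by stability together with Lemma~\ref{lemmastability}) and the standard but non-trivial fact that, for equidimensional subspaces, the gap equals the sine of the largest principal angle. A secondary technical point is the interchange of the supremum over $\pert$ with the $\lambda\to0^+$ limit in the definition of $\mathcal{X}$; this is harmless because the finite-$\lambda$ estimate $\lambda\cond(H_0,H_0^{(-1)})$ is uniform over all unit-norm $\pert$ in the class, so the $\limsup$ of every difference quotient is at most $\cond(H_0,H_0^{(-1)})$.
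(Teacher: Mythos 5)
Your proof is correct, and it takes a genuinely different route from the paper's. The paper starts from the variational identity $\sin^2\theta_{\max}(\pert,\lambda) = 1-\min_{|\psi\rangle\in\Ker H_0}\langle\psi|P_{\Ker(H_0+\lambda\pert)}|\psi\rangle$ (cited from Meyer), i.e.\ it bounds the one-sided quantity $\norm{(\widetilde{I}-P')P_0}$ with $P'\equiv P_{\Ker(H_0+\lambda\pert)}$: each $|\psi\rangle\in\Ker H_0$ is written in the normalized form $(\widetilde{I}+H_0^{(-1)}W)|\phi\rangle$ with $|\phi\rangle\in\Ker(H_0+\lambda\pert)$, and $1-|\langle\phi|\psi\rangle|^2$ is estimated as a Rayleigh-type quotient, with separate numerator and denominator bounds producing exactly $\lambda\cond/\sqrt{1-2\lambda\cond}$. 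You instead bound the mirror quantity $\norm{(\widetilde{I}-P_0)P'}$: since $(\widetilde{I}+\lambda H_0^{(-1)}\pert)$ maps $\Ker(H_0+\lambda\pert)$ into $\Ker(H_0)$, the identity $(\widetilde{I}-P_0)w=-\lambda(\widetilde{I}-P_0)H_0^{(-1)}\pert\,w$ gives the \emph{linear} bound $\sin\theta_{\max}\le\lambda\,\cond(H_0,H_0^{(-1)})$ in one line, which dominates into the paper's inequality a fortiori and yields the same sensitivity bound $\mathcal{X}\le\cond(H_0,H_0^{(-1)})$ upon differentiating at $\lambda=0^+$. What each buys: your argument is shorter, avoids the quotient manipulations entirely (the paper's version of those, incidentally, contains typographical slips in its numerator bound), and delivers a strictly sharper finite-$\lambda$ estimate; the paper's computation, while clumsier, works directly with the expression it quotes from Meyer and makes the source of the $\sqrt{1-2\lambda\cond}$ denominator explicit. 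Both proofs rest on the same two pillars: the kernel relation of Lemma~\ref{lemmastability} with equality of dimensions (justified in context by stability, since the boundary invariant forces $\dim\Ker(H_0+\lambda\pert)=\dim\Ker(H_0)$ throughout $\lambda<1/\cond$), and the equidimensional-subspace fact that the projector gap equals the sine of the largest principal angle --- you correctly flag that the direction you need, $\norm{P'-P_0}\le\norm{(\widetilde{I}-P_0)P'}$, is the non-trivial one, and it holds here because both kernels are finite-dimensional of equal dimension, so $\norm{(\widetilde{I}-P_0)P'}=\norm{(\widetilde{I}-P')P_0}$; the paper's invocation of Meyer's formula tacitly relies on the same fact. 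Your handling of the interchange of the supremum over $\pert$ with the $\lambda\to0^+$ limit, via uniformity of the bound over unit-norm perturbations, is also sound and slightly more careful than the paper's.
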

\begin{proof}
It has been shown in chapter 5 of Ref.\,\cite{Meyer} that the maximal angle can be alternatively expressed as
\[
\sin ^2\theta_{\max} ({\pert},\lambda) = 
1-\min_{|\psi\rangle \in \Ker{H}_0}
\langle \psi |P_{\Ker({H}_0 + \lambda{\pert})}|\psi \rangle.
\]
Using Eq.\,\eqref{newkernel}, we can always express any $|\psi\rangle \in \Ker ({H}_0)$ in the form
\begin{equation}
\label{defpsi}
|\psi\rangle = \frac{(\widetilde{I}+
{H}_0^{(-1)}{W})|\phi\rangle}{\sqrt{\langle \phi|
(\widetilde{I}+{H}_0^{(-1)}{W})^\dagger
(\widetilde{I}+{H}_0^{(-1)}{W})|\phi\rangle}} ,
\end{equation}
for some $|\phi\rangle \in \Ker({H}_0 + \lambda{\pert})$.
Further, since $P_{\Ker({H}_0 + \lambda{\pert})} - |\phi\rangle\langle\phi| \ge 0$,
we have
\[
\sin^2 \theta_{\max} ({\pert},\lambda)
 = 1 - \min_{|\psi\rangle \in \Ker{H}_0}
\langle \psi |P_{\Ker({H}_0 + \lambda{\pert})}|\psi \rangle
 \le 1 - \min_{|\psi\rangle \in \Ker{H}_0 }
 |\langle \phi |\psi \rangle|^2.
\]
Using Eq.\,\eqref{defpsi}, we can write
\[
|\langle \phi |\psi \rangle|^2  = \frac{|\langle\phi |(\widetilde{I}+{H}_0^{(-1)}{W})|\phi\rangle|^2}
{\langle \phi|
(\widetilde{I}+{H}_0^{(-1)}{W})^\dagger
(\widetilde{I}+{H}_0^{(-1)}{W})|\phi\rangle}.
\]
After simplification, we can express the right hand-side of the above inequality as
\[
1- |\langle \phi |\psi \rangle|^2 = \lambda^2\left(
\frac{\langle \phi|
({H}_0^{(-1)}{\pert})^\dagger
{H}_0^{(-1)}{\pert}|\phi\rangle - |\langle \phi|
({H}_0^{(-1)}{\pert})^\dagger|\phi\rangle |^2}
{1 +2\lambda{\rm Re}\langle \phi|({H}_0^{(-1)}{\pert})|\phi\rangle 
+ \lambda^2\langle \phi|
({H}_0^{(-1)}{\pert})^\dagger
{H}_0^{(-1)}{\pert}|\phi\rangle}
\right).
\]
Now, the denominator in the above expression can
be bounded from below by
\begin{equation}
\label{denombound}
1 +2\lambda{\rm Re}\langle \phi|({H}_0^{(-1)}{\pert})|\phi\rangle 
+ \lambda^2\langle \phi|
({H}_0^{(-1)}{\pert})^\dagger
{H}_0^{(-1)}{\pert}|\phi\rangle
\ge 
1 +2\lambda{\rm Re}\langle \phi|({H}_0^{(-1)}{\pert})|\phi\rangle
\ge 
1-2\lambda\norm{{H}_0^{(-1)}{\pert}}.
\end{equation}
Similarly, the numerator can be bounded from above by
\begin{equation}
\label{numbound}
\langle \phi|
({H}_0^{(-1)}{\pert})^\dagger
{H}_0^{(-1)}{\pert}|\phi\rangle - |\langle \phi|
({H}_0^{(-1)}{\pert})^\dagger|\phi\rangle |^2
\le
({H}_0^{(-1)}{\pert})^\dagger
{H}_0^{(-1)}{\pert}|\phi\rangle
\le 
\norm{{H}_0^{(-1)}{\pert}}^2.
\end{equation}
Furthermore, 
\begin{equation}
\label{opnormtocond}
\norm{{H}_0^{(-1)}{\pert}} \le \norm{{H}_0^{(-1)}}
\norm{{\pert}} =
{\cond}({H}_0,{H}_0^{(-1)}) \le 1/2\lambda,
\end{equation}
where the last inequality follows from the assumption in the theorem. Finally, using
Equations \ref{denombound}, \ref{numbound}
and \ref{opnormtocond}, we get
\[
1-|\langle \phi |\psi \rangle|^2 \le  \lambda^2
\frac{{\cond}({H}_0,{H}_0^{(-1)})^2}{1 - 2\lambda {\cond}({H}_0,{H}_0^{(-1)})},
\]
which provides the desired bound.
Consequently, the sensitivity $\mathcal{X}$ is bounded by the right derivative of this bound with respect to $\lambda$ at $\lambda = 0$, that yields
\(
\mathcal{X} \le {\cond}({H}_0,{H}_0^{(-1)}),
\)
as claimed.
\end{proof}

We can now use the bound on the generalized condition number 
in Eq.\,\eqref{conditionnumberbound} to derive new bounds on the maximal angle and the sensitivity, namely,
\begin{eqnarray}
\label{upperboundWH}
\sin \theta_{\max} ({\pert},\lambda) \le \frac{\lambda {\cond}(\widetilde{H}_{+})^2}
{\sqrt{1 - 2\lambda {\cond}(\widetilde{H}_{+})^2}},\qquad 
\mathcal{X} \le {\cond}(\widetilde{H}_{+})^2.
\end{eqnarray}
Using Eq.\,\eqref{eq:bulksensitivity}, we also obtain
\begin{equation}
\mathcal{X} \le \left(\max_{k\in[0,2\pi)} (\norm{H_+(e^{ik})})
\max_{k\in[0,2\pi)} (\norm{H_+^{-1}(e^{ik})})\right)^2.
\end{equation}
If we now restrict to Hamiltonians with $\norm{H} \le 1$, we can simplify the above bound to
\begin{equation}
\mathcal{X} \le \left(\max_{k\in[0,2\pi)} (\norm{H_+^{-1}(e^{ik})})\right)^2.
\end{equation}
Since the side factors in the SWH factorization of $\widetilde{H}$ are not necessarily unique, two different valid side factors can lead to two different bounds on the maximal angle and the sensitivity. If $H_k$ is unitary, that is, if 
all energy bands of $H_k$ are flat at energies $\pm 1$, then
we get $\mathcal{X} \le 1$. This means that the ZMs hosted by 
flat-band Hamiltonians are the least sensitive to external perturbations.

\begin{figure}[h]
\centering
\includegraphics[width = 10cm] {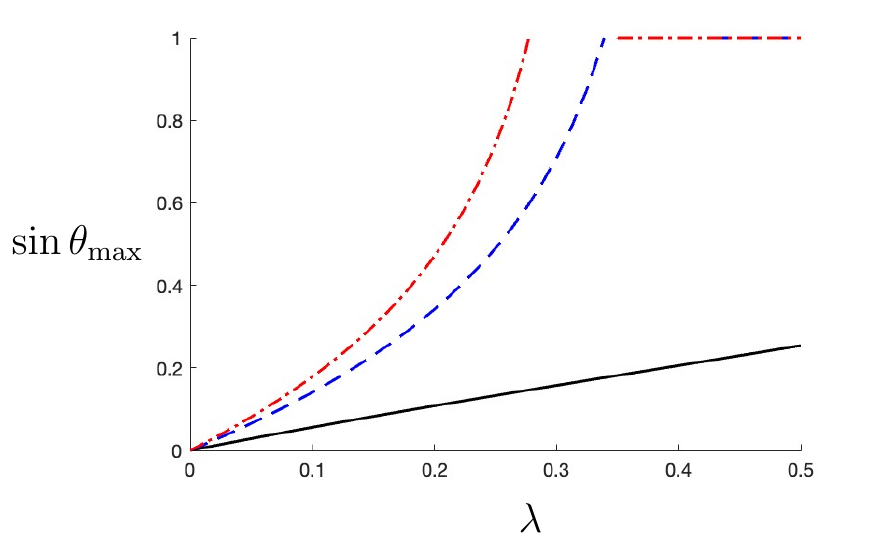}
\caption[Stability of edge states of Su-Schrieffer-Heeger Hamiltonian]{\small The sine of the maximal angle (solid black line), the upper bound in Eq.\,\eqref{upperbound} using full diagonalization (dashed blue line) and the one in Eq.\,\eqref{upperboundWH} using the SWH factorization (dotted red line) vs. perturbation strength $\lambda$ for an arbitrarily chosen ${\pert}$. The SSH Hamiltonian parameters 
are $t_1 = 0.1, t_2 = 1$.
\label{fig:stability}}
\end{figure}

\begin{exmp}
Let us compute a bound on the stability of the ZM of
the SSH Hamiltonian considered in Example \ref{exmp:ssh},
subject to open BCs for $t_1<t_2$, using the SWH 
factorization. Since 
\[
H_+ = \begin{bmatrix}1 & 0 \\ 0 & -(t_1 z + t_2) \end{bmatrix},
\]
we can easily deduce that 
\begin{eqnarray*}
\norm{\widetilde{H}_+} = \norm{H_+(e^{ik})}\vert_{k = 0} =  t_2 + t_1, 
\quad
\norm{\widetilde{H}_+^{-1}} = \norm{H_+^{-1}(e^{ik})}\vert_{k = \pi} =  1/(t_2 - t_1),
\end{eqnarray*}
so that (see also Fig. \ref{fig:stability}))
\[
\mathcal{X} \le {\cond}(\widetilde{H},\widetilde{H}^{(-1)}) \le 
\left(\frac{t_1 + t_2}{t_1 - t_2}\right)^2.
\]
The bound approaches its best possible value $1$ as $t_1 \rightarrow 0$,
which corresponds to the decoupled, flat-band limit.
Near the phase boundary $t_1 \lessapprox t_2$, the value of this bound diverges, indicating instability of the ZM.
\end{exmp}

\section{Conclusions and outlook}
\label{sec:conclusion}

We have investigated the five non-trivial quasi-1D SPT phases of free fermions identified by the tenfold way. 
Our focus was on the impact of BCs on the bulk-boundary correspondence and the expectation that topological ZMs
should be stable in some mathematically precise, quantifiable sense. In an effort to gain concrete, constructive insight into these questions, we avoided the powerful but abstract K theory.
Instead, we used a basic algebraic tool for the analysis of matrix functions of one complex variable: the WH factorization. After adapting the WH factorization to meet the symmetry constraints of the tenfold way in the form of SWH factorizations, we succeeded to 
\begin{itemize}
    \item Prove rigorously a bulk-boundary correspondence with
    basic algebraic tools; and 
    \item Develop a quantitative stability theory for ZMs.
\end{itemize}

Our bulk-boundary correspondence is more basic than the standard bulk-boundary correspondence, in that the BCs are arbitrary (possibly going arbitrarily deep into the bulk, for example), however, the bulk is ultimately disorder-free. Hence, we think of it in physical terms as a bulk-boundary correspondence for the surface physics of very carefully grown and large crystals. We expect that our proof 
will be accessible and illuminating to a large group of physicists who are not familiar with K theory ~\cite{Prodan, alldridge} or operator algebra techniques. 

As for the stability of topologically mandated ZMs, our first task was to identify an appropriate, quantitative notion 
of stability. In our work, the stability is analyzed by characterizing the symmetry-preserving perturbations that induce
a continuous deformation of the space of ZMs. Within this mathematical framework, we established in a rigorous manner that SPT phases, and {\em only} SPT phases, can host stable ZMs. In other words, there is no such a thing as ``non-topological stability'' of ZMs. However, the ZMs of an SPT phase need not be stable without further restrictions. Topology can only guarantee the stability of a specific number of ZMs. To make this point clear, we construct examples of topological bulks that host unstable ZMs. In addition, for a topological Hamiltonian that hosts stable ZMs, we establish an upper bound on the sensitivity of the ZMs to perturbations in terms of a quantity that depends on the bulk Hamiltonian only. 

Our work has two main implications. First, the matrix SWH factorization shows constructively that any SPT phase of free fermions in one dimension can be fully described in terms of a sum of commuting dimer Hamiltonians. Such Hamiltonians emerge naturally in some models; two examples are the sweet spot of the Kitaev chain and the extreme limits of the SSH Hamiltonian. As we showed in this paper, any given SPT Hamiltonian can be mapped to a sum of commuting dimer Hamiltonians by an adiabatic transformation. The bulk-boundary correspondence then follows in a very transparent manner from this adiabatic transformation. Interestingly, a similar principle was recently applied in Ref.\,\cite{zhang20} for constructing models of higher-order topological superconductors, starting from the Kitaev chain as a building block. Second, since our work provides a framework to study quantitatively the stability and the sensitivity of ZMs, it lends itself naturally to tackling the problem of identifying the Hamiltonians with the most robust ZMs. We 
find that SPT Hamiltonians with flat bands host the most stable ZMs. Previous works have asserted that the stability of Majorana ZMs is directly related to their localization length~\cite{boutin18}. Our analysis agrees with the conclusion of Ref.~\cite{boutin18} for local Hamiltonians with flat bands since, in this case, the ZMs are necessarily perfectly localized.

Let us conclude with an outlook on other possible implications of our work. Our bounds on the sensitivity of the ZMs could be immediately useful for assessing their viability in the context of technological applications. For example, as we already mentioned, there are proposals for topological quantum memories that use Majorana ZMs for storing and processing quantum information~\cite{Kitaev,MMilestones}. Likewise, it is possible that our results may lead to improved or new quantum algorithms for simulating topological phases~\cite{huang15}, that rely on the explicit adiabatic transformation between given Hamiltonians, as we have described. 

From a fundamental theory perspective, our work points to several questions for future investigation. On a basic level, it is worth determining what advantages does
the WH factorization offer for the computation of edge modes over alternative, more conventional techniques~\cite{PRB1,istas18}. Another compelling question pertains to whether the fairly simple matrix-function algebra techniques we have used may admit generalizations suitable to study SPT phases in higher dimension. Existing results on the WH factorization of loop groups may provide some hints in this direction~\cite{Pressley86,Giorgadze11,Khimshiashvili07}. Moreover, since the matrix WH factorization technique does not rely on the statistics of fermions, another interesting and timely question is the extent to which this tool be profitably exported to systems of free bosons. Notably, the use of the WH factorization has proved important in establishing no-go theorems for SPT physics in bosonic systems described by a gapped, stable quadratic Hamiltonian~\cite{Xu20}. However, we expect that additional applications will emerge for broader classes of Hamiltonians where stability constraints are relaxed~\cite{Decon}. Even more generally, for both free fermions and bosons, matrix factorization techniques may prove useful to uncover topological physics in dissipative systems, in particular described by quadratic Markovian master equations~\cite{Bardyn,Bosoranas,Barthel}.

\section*{Acknowledgements}

It is a pleasure to thank Vincent Flynn and Mariam Ughrelidze for insightful discussions and a critical reading of the manuscript. A.\,A. gratefully acknowledges insightful discussion with Ilya Spitkovsky and support from a Killam 2020 postdoctoral fellowship. Work at Dartmouth College was partially supported by the US NSF through grants No. PHY-1620541 and No. OIA-1921199.

\appendix

\section{The Wiener-Hopf factorization of matrix Laurent polynomials}
\label{computeWHF}

We provide a constructive proof of Prop.\,\ref{proof:wh} adapted from Ref.\,\cite{gohberg2005convolution}. It amounts to an algorithm for computing the WH factorization of a matrix Laurent polynomial that is invertible on the unit circle.

\begin{prop}[Wiener-Hopf factorization]
Let $A(z,z^{-1})$ denote a matrix Laurent polynomial. If $A$ is invertible on the unit circle, that is, if \(\det A(z,z^{-1})\neq 0\) for all \(z\) with \(|z|=1\), there exists a factorization  of the form
\begin{equation}
A(z,z^{-1}) = A_+(z) D(z,z^{-1}) A_-(z^{-1}) ,
\end{equation}
with the following properties:
\begin{enumerate}
\item
$A_+(z) = \sum_{r= 0}^q  a_{+,r} {z}^\r$ is invertible for all $|z|\le 1$. 
\item
$A_-(z^{-1}) = \sum_{r=p}^0  a_{-,r} {z}^{-|\r|}$ is  is invertible for all $|z^{-1}|\leq 1$. 
\item
$D$ is a diagonal matrix Laurent polynomial of the form
$$D(z,z^{-1})= \sum_{m=1}^{d}z^{\ind_m}|m\rangle\langle m|,$$ where the integers
$\{\ind_m\, |\,  m=1,\dots,d\}$ are reverse-ordered as $\ind_1 \ge \ind_2 \ge \dots \ge \ind_d$.
\end{enumerate}
\end{prop}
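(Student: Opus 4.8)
\emph{Plan of proof.} The ``only if'' implication is immediate: in a factorization $A=A_+DA_-$ of the stated type, $A_+$ and $A_-$ are invertible on $|z|=1$ by hypothesis and $D$ is invertible there because every $z^{\ind_m}\neq 0$, so $\det A\neq 0$ on the unit circle. The whole content lies in the converse, which I would establish by an explicit, terminating reduction --- this is what turns the bare existence statement into the computational algorithm advertised here. First I would strip the Laurent structure. Writing $A=\sum_{r=p}^{q}a_r z^r$, I set $P(z)\equiv z^{-p}A(z,z^{-1})=\sum_{r=0}^{q-p}a_{r+p}z^r$, a genuine matrix polynomial that is invertible precisely where $A$ is, in particular on $|z|=1$. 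Any factorization of $P$ of the desired type yields one for $A$ after absorbing the scalar monomial $z^{p}\mathbb{1}$ into the diagonal factor, which merely shifts every partial index by $p$. Since $\det P(z)$ is a scalar polynomial with no zero on $|z|=1$, it has finitely many zeros, each lying strictly inside or strictly outside the circle; I let $\delta=\deg\det P$ count them with multiplicity. The scalar case is instructive: grouping the roots of $\det P$ by whether they fall inside or outside the circle already displays the factorization, the single partial index being the number of interior roots, i.e.\ the winding number.

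The core is an induction on $\delta$. If $\delta=0$ then $\det P$ is a nonzero constant, so $P^{-1}=\operatorname{adj}(P)/\det P$ is itself a polynomial matrix; $P$ is invertible throughout $\mathbb{C}$ and I take $A_+=P$, $D=\mathbb{1}$, $A_-=\mathbb{1}$. For $\delta>0$ I pick a zero $z_0$ of $\det P$ and a nonzero $v\in\ker P(z_0)$, complete $v$ to an invertible constant matrix $S=[\,v\mid\cdots\,]$, and note that the first column of $P(z)S$ vanishes at $z_0$ and is therefore divisible by $(z-z_0)$. This produces
\[
P(z)=Q(z)\,\operatorname{diag}(z-z_0,1,\dots,1)\,S^{-1},
\]
with $Q$ polynomial and $\deg\det Q=\delta-1$, while $\det Q\neq 0$ on $|z|=1$ because $z_0$ is off the circle. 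I then route the scalar factor to the correct side: for $|z_0|<1$ I split $z-z_0=z\,(1-z_0 z^{-1})$, sending the outside-invertible factor $(1-z_0 z^{-1})$ toward $A_-$ and the monomial $z$ into $D$; for $|z_0|>1$ the factor $z-z_0$ is already invertible in the closed disk and goes toward $A_+$. Iterating drives $\delta$ to zero.

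The remaining work, and the main obstacle, is the assembly. Because matrices do not commute, the elementary factors generated at successive steps do not simply collect into one left factor, one diagonal, and one right factor --- they interleave. Reorganizing the accumulated product into the canonical one-sided form $A_+DA_-$ requires repeatedly commuting unit (unimodular) polynomial matrices, which are invertible throughout the disk and hence legitimately belong to $A_+$, past the monomial factors, and finally permuting the diagonal entries and absorbing constant matrices to enforce the prescribed ordering $\ind_1\ge\cdots\ge\ind_d$; the freedom available at this last step is exactly the one catalogued in Proposition~\ref{lem:uniquepi}. Throughout, I would verify the two invariants that make the scheme work: every intermediate determinant stays zero-free on $|z|=1$ (guaranteed because no extraction ever moves a zero onto the circle), and each side factor remains invertible in its closed region. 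The genuinely delicate point, on which I expect to spend the most effort, is a zero $z_0$ of higher multiplicity, where a single kernel vector does not exhaust the local degeneracy: this is handled by reapplying the kernel-vector extraction to the reduced matrix $Q$ at the same $z_0$, peeling off the local Smith/Jordan structure one step at a time, and one must check that this subprocess terminates and never violates the region constraints.
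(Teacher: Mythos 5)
Your skeleton---strip the Laurent part via $z^{-p}$, peel off one determinant zero per step using a kernel vector and an elementary factor, and induct---is the same root-extraction scheme as the paper's constructive proof in \ref{computeWHF}. However, the step you explicitly defer (``the assembly'') is not a routine afterthought; it is where the proof actually lives, and as sketched it contains a genuine gap. Moving a unimodular polynomial matrix $B(z)$ from the right of a monomial diagonal $D$ to its left replaces $B$ by $DBD^{-1}$, whose $(m,m')$ entry is $z^{\ind_m-\ind_{m'}}B_{mm'}$; this is again polynomial only under the triangular support conditions catalogued in Proposition~\ref{lem:uniquepi}, and nothing in your construction guarantees that the interleaved factors accumulated by your recursion satisfy them. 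The paper never confronts a global reorganization: it maintains the sorted three-factor form $A=A_+^{(i)}D^{(i)}A_-^{(i)}$ as a loop invariant and pushes each elementary factor through the diagonal \emph{in closed form}, via the identity $U_iD^{(i)}=\Pi_i D^{(i-1)}\Pi_i V_i$ with
\[
V_i(z^{-1}) = I - z_i z^{-1}|s\rangle\langle s| + \sum_{m=1}^{s-1}\alpha_m z^{\ind_s^{(i)}-\ind_m^{(i)}}|m\rangle\langle s|,
\]
which is a polynomial in $z^{-1}$ precisely because the diagonal is kept sorted ($\ind_m^{(i)}\ge\ind_s^{(i)}$ for $m<s$), and is invertible for $|z^{-1}|\le 1$ because $\det V_i = 1-z_iz^{-1}$ with $|z_i|<1$. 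Supplying this identity, or an equivalent, is what is needed to close your induction; invoking Proposition~\ref{lem:uniquepi} only describes the freedom in a factorization already known to exist---it does not produce one.

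Two further points. Your treatment of zeros outside the circle would fail as written: the extraction places $\operatorname{diag}(z-z_0,1,\dots,1)S^{-1}$ to the \emph{right} of $Q$, and declaring that it ``goes toward $A_+$'' requires transporting it leftward past all factors extracted later, which runs into the same non-polynomiality obstruction (note that $z-z_0$ cannot instead be routed into $A_-$, since $1-z_0z^{-1}$ vanishes at $z^{-1}=1/z_0$ inside the closed unit disk when $|z_0|>1$). The correct move is to not extract exterior zeros at all: the paper processes only the $n$ interior zeros, after which the terminal left factor $A_+^{(0)}$ is automatically invertible on the closed disk because all remaining zeros of its determinant lie outside; accordingly the induction should run on the number of \emph{interior} zeros, not on $\delta=\deg\det P$, whose base case $\delta=0$ wrongly demands removal of every zero. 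Finally, the point you flag as ``genuinely delicate''---higher multiplicity---is in fact the easy part: listing the interior zeros with multiplicity and removing exactly one per step (revisiting the same $z_i$ as often as needed) handles it with no Smith/Jordan analysis, exactly as in the paper.
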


\begin{proof}
If $p<0$, then one can associate to \(A\)  an invertible matrix polynomial  as
\(
A_{0}(z) = z^{-p}A(z) = \sum_{\r=0}^{q-p}a_{\r}z^{\r}.
\)
A WH factorization for $A_0 = A_+ D_0 A_-$ yields one for $A = A_+ D A_-$ with 
$D(z,z^{-1}) \equiv z^p D_0(z,z^{-1})\). Hence, \(A\) stands for a matrix polynomial in the following. Let 
\(z_\i,\ \i=1,\dots,\n\) denote the zeroes of \(\det A(z,z^{-1})\) 
inside the unit circle, listed in some fixed but arbitrary order and with multiplicity. We will show that there is an associated sequence of  factorizations of $A$ of the form $A= A_+^{(\i)}D^{(\i)}A_-^{(\i)},\ \i=0,\dots, \n,$  and ending at \(\i=0\)
with a WH factorization of \(A\). 
In order to construct the \(\i-1\) factorization out of the \(\i\)th factorization we will need to break $A_+^{(\i)}$ into column vectors polynomials. Our notation is   
\[
A_+^{(\i)}(z)=
\sum_{n=1}^d |A^{(\i)}_{+,n}(z)\rangle\langle n|, \quad 
|A^{(\i)}_{+,n}(z)\rangle\equiv \sum_{m=1}^d A_{+,mn}^{(\i)}(z)|m\rangle.
\]

\begin{enumerate}
\item \label{step1} 
\textit{Launch the sequence of  factorizations.---}
If the are are no roots of \(det A\) inside the unit circle, that is, if \(\n=0\), then  \(A_+=A\) and \(D=A_-=\mathbb{1}\) is a WH factorization of \(A\) and
we are done. Otherwise,  set $A_+^{(\n)} = A$ and $D^{(\n)}=A_-^{(\n)} = \mathbb{1}$.   

\item \label{step2} 
\textit{Remove the zero \(z_i\) from \(A_+^{(\i)}\).---}
Because $\det A_+^{(\i)}(z_\i) = 0$, there exists a smallest \(s\) such that the \(s\)th column 
$|A_{+,s}^{(\i)}(z_\i)\rangle$ 
 is linearly dependent on the previous \(s-1\) columns. If it does not vanish identically, then 
 $|A_{+,s}^{(\i)}(z_\i)\rangle=\sum_{m=1}^{s-1}\alpha_{m}|A_{+,m}^{(\i)}(z_\i)\rangle$. If it vanishes identically, use the formulas below with the \(\alpha_m=0,\ m=1,\dots, s-1\).
In either case, there is a zero at \(z_i\) of the  vector polynomial $A_{+,s}^{(\i)}(z)-\sum_{m=1}^{s-1}\alpha_{m}A_{+,m}^{(\i)}(z)$. 
 We can use this information to define matrix meromorphic matrix function
\begin{align*}
\qquad U_{\i}^{-1}(z) &\equiv Q_s + \frac{1}{(z-z_\i)}P_s
- \sum_{m=1}^{s-1}\frac{\alpha_{m}}{(z-z_\i)}|m\rangle\langle s|,
\end{align*}
in terms of the orthogonal projectors \(P_s=|s\rangle\langle s|\) and its complement \(Q_s=\mathbb{1}-P_s\).
It is invertible on the unit circle because \(|z_i|<1\) by assumption. Its inverse is
\begin{align*}
U_{\i}(z) &\equiv Q_s + (z-z_\i)P_s
+ \sum_{m=1}^{s-1}\alpha_{m}|m\rangle\langle s|.
\end{align*}
Post-multiplying \(A_+^{(\i)}\) by \(U_\i^{-1}\) carries out the column substitution
\[
A_+^{(\i)}(z)U_{\i}^{-1}(z)=A_+^{(\i)}-|A_{+,s}^{(i)}\rangle\langle s|+
\frac{|A_{+,s}^{(\i)}(z)\rangle\langle s| - \sum_{m=1}^{s-1}\alpha_{m}|A_{+,m}^{(\i)}(z)\rangle\langle s|}{z-z_\i}
\]
and removes exactly one zero inside the unit circle from $A_+^{(\i)}$. 

\item \label{step5}
\textit{Add a zero at \(z=0\) to \(D^{(\i)}\) and reshuffle to obtain \(D^{(\i-1)}\).---}
Let's add a zero at zero to \(D^{(\i)}\) to compensate for the one tha we took away from \(A_+^{(\i)}\). In matrix form, \(D^{(\i)}\mapsto 
D^{(\i)}(Q_s+zP_s)\).
Let $\{\ind_m^{(\i-1)},\ m=1,\dots,d\}$ 
as the non-increasing reordering of the non-negative integers
associated to the powers $\{z^{\ind_1^{(\i)}},\dots,z^{\ind_s^{(\i)}}+1,
\dots,z^{\ind_d^{(\i)}}\}$ on the diagonal of \(D^{(\i)}(Q_s+zP_s)\).
The reordering only takes one exchange of two entries. Let $\Pi_\i=\Pi_\i^{-1}$ denote the matrix that performs the corresponding reordering of the diagonal entries of \(D^{(\i)}(Q_s+zP_s)\). Then,  
\[
D^{(\i-1)} \equiv \Pi_i D^{(\i)}(Q_s+zP_s)\Pi_i. 
\]

\item \textit{Update \(A_{\pm}^{(i)}\mapsto A_{\pm}^{(i-1)} \).}---
Having updated the middle factor, the final step is to update \(A_{\pm}^{(i)}\mapsto A_{\pm}^{(i-1)} \). As for \(A_+^{(i)}\), the only task is to shuffle its columns to account for the relocation of the new zero in the middle factor. Hence,
\[
A_+^{(\i-1)}\equiv A_+^{(\i)}U_\i^{-1}\Pi_\i.
\]
To update \(A_-^{(i)}\), define a matrix  polynomial \(V_{\i}(z^{-1})\) such that $U^{(\i)}D^{(\i)} = \Pi_i D^{(\i-1)}\Pi_iV_{\i}$, that is,  
\begin{align*}
V_{\i}(z) &=(Q_s+z^{-1}P_s)D_{\i}^{-1}U_\i D_\i
= I  - z_\i z^{-1}|s\rangle\langle s|
+ \sum_{m=1}^{s-1}\alpha_{m}z^{\ind_s^{(\i)}-\ind_m^{(\i)}}|m\rangle\langle s|
\end{align*}
and let 
\[A_-^{(\i-1)} \equiv \Pi_iV_\i A_-^{(\i)}.\]
\end{enumerate}
By construction, \(A=A_+^{(\i)}D^{(i)}A_-^{(\i)}.\)
Moreover, after \(n\) steps one obtains a matrix polynomial
\(A_+^{(0)}(z)\) with no roots in the closed unit disk and so invertible there. Similarly, since  $\det V_{\i}(z^{-1}) = 1-z_{\i} z^{-1}$, \(V_\i(z^{-1})\) is invertible provided \(|z^{-1}|\leq 1\) and so is \(A_-^{(0)}(z^{-1})\). Finally, 
\(D^{(0)}\) is of the form required by the WH factorization as well.
\end{proof}

It is apparent from the proof why the middle factor is  unique. The non-uniqueness of the side factors can be traced to the fact that we were forced to order the roots \(z_\i\) in some fixed but ultimately arbitrary way. Different ordering will, in general, result in 
different side factors.

\section{The Wiener-Hopf factorization of banded block-Toeplitz operators}
\label{app:whproof}

Here we will construct the WH factorization of a Fredholm BBT operator directly, without reference to the functional calculus of Sec.\,\ref{func}, and then use 
it to derive the WH factorization of a matrix Laurent polynomial invertible on the unit circle. 
This proof of the WH factorization will help us in \ref{app:symwhproofs} to 
prove SWH factorizations under additional symmetry constraints.

Let \(\widetilde{A}=\sum_{r=p}^qT^r\otimes a_r\) denote a fixed but arbitrary BBT operator 
that is also a Fredholm operator. The associated family of BBT operators 
$\mathcal{A} = \{\widetilde{A}_\ind,\ \ind \in \mathbb{Z}\}$ 
is defined as follows: 
\begin{enumerate}
    \item 
 \(\widetilde{A}_{-p}\equiv\widetilde{A}\), and 
 \item 
\(
\widetilde{A}_{\ind+1} = \widetilde{A}_{\ind}T^\dagger\), \ \(
\widetilde{A}_{\ind-1}=T\widetilde{A}_\ind  .
\)
\end{enumerate}
The family \(\mathcal{A}\) is in fact a family of Fredholm operators because $T$, $T^\dagger$, and \(\widetilde{A}\)  are Fredholm and the product of two Fredholm operators is again Fredholm.

\begin{prop}
\label{lem:l1}
The kernel of the \(\widetilde{A}_\ind\) consists of absolutely summable sequences, that is, 
$\Ker\,\widetilde{A}_\ind \subseteq \ell^1(\mathbb{N})\otimes \mathbb{C}^d\subset \ell^{2}(\mathbb{N})\otimes \mathbb{C}^d$ as well.
\end{prop}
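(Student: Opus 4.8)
The plan is to upgrade the a priori $\ell^2$ membership of a kernel vector to $\ell^1$ membership by showing that it decays \emph{geometrically}, with rate controlled by the distance from the unit circle to the nearest zero of $\det A$. Since each $\widetilde{A}_\ind$ is itself a Fredholm BBT operator, it suffices to prove the statement once, for an arbitrary Fredholm BBT operator $\widetilde{B}=\sum_{r=p}^{q}T^r\otimes b_r$ whose symbol $B(z)=\sum_r b_r z^r$ is invertible on the unit circle, and then specialize. This reduction is legitimate because each $\widetilde{A}_\ind$ is obtained from $\widetilde{A}$ by multiplication with powers of the shifts and, using $TT^\dagger=\mathbb{1}$, is again a BBT operator with symbol $z^{-(\ind+p)}A(z)$, which is invertible on the unit circle exactly when $A$ is (this is the established equivalence ``$\widetilde{B}$ Fredholm $\iff$ symbol invertible on the circle'').

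First I would encode a kernel vector $\psi=\sum_{j\ge 0}|j\rangle\otimes\psi_j\in\Ker\widetilde{B}\subseteq\ell^2(\mathbb{N})\otimes\mathbb{C}^d$ by its generating function $\hat{\psi}(z)=\sum_{j\ge 0}\psi_j z^j$, which is holomorphic on the open unit disk because $\psi\in\ell^2$. Using the dictionary $T^\dagger\leftrightarrow$ multiplication by $z$ and $T\leftrightarrow$ multiplication by $z^{-1}$ up to a boundary correction at $j=0$, the equation $\widetilde{B}\psi=0$ translates, for $0<|z|<1$, into the functional identity
\[
B(z^{-1})\,\hat{\psi}(z)=g(z),
\]
where $g$ is a vector Laurent polynomial supported in finitely many negative powers of $z$ and determined by the finitely many boundary values $\psi_0,\dots,\psi_{q-1}$. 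The precise form of $g$ is immaterial; what matters is only that it is holomorphic on $\mathbb{C}\setminus\{0\}$.

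The heart of the argument is then a spectral-gap observation. Because $\widetilde{B}$ is Fredholm, $\det B(z^{-1})$ has no zeros on $|z|=1$, so $B(z^{-1})^{-1}$ is holomorphic on an open annulus containing the circle, and solving the identity gives $\hat{\psi}(z)=B(z^{-1})^{-1}g(z)$ on $0<|z|<1$. Since the left-hand side is already holomorphic throughout the open disk, the only singularities of the right-hand side that survive lie strictly outside the closed disk, namely at the points $z=\zeta^{-1}$ with $\det B(\zeta)=0$ and $|\zeta|<1$. Hence $\hat{\psi}$ continues holomorphically to $|z|<1/\rho$, where $\rho\equiv\max\{|\zeta|:\det B(\zeta)=0,\ |\zeta|<1\}<1$ (and $\hat{\psi}$ is entire if no such $\zeta$ exists). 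Cauchy's estimates on any circle of radius $R\in(1,1/\rho)$ then give $\norm{\psi_j}\le C\,R^{-j}$, so $\sum_j\norm{\psi_j}<\infty$ and $\psi\in\ell^1(\mathbb{N})\otimes\mathbb{C}^d\subset\ell^2(\mathbb{N})\otimes\mathbb{C}^d$, as claimed. Applying this to each $\widetilde{A}_\ind$ finishes the proof.

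The main obstacle I anticipate is purely bookkeeping: deriving the functional identity together with the exact support of $g$, and checking that the apparent poles of $B(z^{-1})^{-1}$ inside the disk are genuinely cancelled so that the continuation across $|z|=1$ is legitimate. Conceptually, the only substantive input is the invertibility of the symbol on the circle---precisely what Fredholmness supplies---and this is what forces the decay to be geometric rather than merely square-summable. A fully equivalent route avoiding generating functions would write $\widetilde{B}\psi=0$ as the constant-coefficient recurrence $\sum_r b_r\psi_{j+r}=0$, valid for $j$ beyond the boundary layer, and invoke the classical theory of such recurrences: its $\ell^2$ solutions are spanned by modes $j^{s}\zeta^{j}$ with $|\zeta|<1$, each of which is absolutely summable.
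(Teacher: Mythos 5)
Your analytic core is correct, and it is a genuinely different route from the paper's. The paper disposes of Proposition~\ref{lem:l1} essentially by citation: it invokes Theorem A.5 of Ref.~\cite{JPA}, which states that any kernel vector of a banded block-Toeplitz operator is a finite combination of finite-support vectors and ``bulk solutions'' $|z,\nu\rangle|u\rangle$ with $|z|<1$, each manifestly in $\ell^1$ --- this is exactly the recurrence-theoretic route you sketch in your closing paragraph. Your main argument instead proves geometric decay directly: the generating function $\hat{\psi}$ of a kernel vector is holomorphic on the open disk, satisfies $B(z^{-1})\hat{\psi}(z)=g(z)$ with $g$ a Laurent polynomial, hence continues holomorphically past the unit circle whenever $\det B(z^{-1})$ has no zeros there, and Cauchy estimates give $\norm{\psi_j}\le C R^{-j}$ for any $R\in(1,1/\rho)$. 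This is self-contained where the paper's proof is not, and it yields a quantitative decay rate for free. Two small slips: your claim that $g$ is supported in negative powers of $z$ fails when $p<0$ (the exponents range over $[-q,\,-p-1]$, so nonnegative powers appear), but as you correctly note only Laurent-polynomiality of $g$ is used; and your reduction of each $\widetilde{A}_\ind$ to a single BBT operator with symbol $z^{-(\ind+p)}A(z)$ is sound, since $TT^\dagger=\mathbb{1}$ makes the products $T\widetilde{A}_\ind$ and $\widetilde{A}_\ind T^\dagger$ exactly banded block-Toeplitz with no boundary corrections.

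The one substantive caveat is logical placement rather than mathematics. Your proof requires invertibility of the symbol on the unit circle, which you obtain from Fredholmness via the equivalence ``Fredholm $\iff$ symbol invertible.'' But within \ref{app:whproof}, the implication Fredholm $\Rightarrow$ symbol invertible is Corollary~\ref{coro:wh} --- a downstream consequence of the Wiener-Hopf construction whose very first ingredient is the proposition you are proving; the appendix is advertised in the main text precisely as the place where that converse direction is established. Substituting your proof verbatim would therefore make the appendix circular unless you import the classical Hartman--Wintner/Gohberg result from the literature rather than from Corollary~\ref{coro:wh}. That import is mathematically defensible (and arguably no worse than the paper's own reliance on Ref.~\cite{JPA}), but it changes what the appendix accomplishes: the chain ``Fredholm $\Rightarrow$ WH factorization $\Rightarrow$ symbol invertible'' would silently become ``symbol invertible $\Rightarrow$ WH factorization.'' The paper's cited decomposition theorem needs only Fredholmness of the operator, which is why its route avoids this issue; to make your argument fit the appendix's architecture you would have to show directly that a zero of $\det B$ on the circle contradicts Fredholmness --- which is the classical equivalence all over again.
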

\begin{proof}
The proof of this lemma relies on Theorem A.5 in the Appendix of Ref.\,\cite{JPA}.
The key idea is that any vector in the Kernel of a banded block-Toeplitz operator
can be expressed as a sum of finitely many ``bulk solution'' vectors. 
Some of these bulk solution vectors have finite support, and therefore belong to
$\ell^1(\mathbb{N})\otimes \mathbb{C}^d$. Each of the remaining 
bulk solution vectors is expressible as 
$|z,\nu\rangle|u\rangle \in \ell^1(\mathbb{N})\otimes \mathbb{C}^d$, with 
\[
|z,\nu\rangle = \frac{\partial^{\nu-1}}{\partial z^{\nu-1}} \sum_{j \in \mathbb{N}}z^j|j\rangle ,
\]
for some $z \in \mathbb{C}, |z|<1$ and $\nu \in \mathbb{N}$.
It is a straightforward exercise in the geometric series to show that $|z,\nu\rangle \in \ell^1(\mathbb{N})$, 
so that all bulk solution vectors belong to $\ell^1(\mathbb{N}) \otimes \mathbb{C}^d$. \end{proof}

The kernels of the operators in $\mathcal{A}$ 
are nested, in a sense that is made precise as follows:
\begin{prop}
\label{lem:inclusions}\ 
\begin{enumerate}
\item $\Ker\,\widetilde{A}_\ind \subseteq \Ker\,\widetilde{A}_{\ind-1}$
\item $T^\dagger \Ker\,\widetilde{A}_\ind \subseteq \Ker\,\widetilde{A}_{\ind-1}$
\item The inclusions in 1. or 2. are in fact equalities if and only if $\Ker\,\widetilde{A}_{\ind}$ is zero-dimensional, that is, $\Ker\,\widetilde{A}_{\ind}=\{0\}$ is trivial.
\end{enumerate}

\end{prop}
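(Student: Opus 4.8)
The plan is to reduce everything to two operator identities for the family $\{\widetilde{A}_\ind\}$ and then, for the third part, to exploit the spectral poverty of the unilateral shift. The starting point is the observation that conjugation by the shifts fixes every BBT operator: since $T\,T^r\,T^\dagger = T^r$ for every integer $r$ (using $TT^\dagger=\mathbb{1}$ in both the $r\ge 0$ and $r<0$ cases), one has $T\widetilde{B}T^\dagger=\widetilde{B}$ for any $\widetilde{B}=\sum_r T^r\otimes b_r$. Applying this with $\widetilde{B}=\widetilde{A}_\ind$ and combining with the defining recursion $\widetilde{A}_{\ind-1}=T\widetilde{A}_\ind$, I obtain the companion identity $\widetilde{A}_{\ind-1}T^\dagger=T\widetilde{A}_\ind T^\dagger=\widetilde{A}_\ind$. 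These two identities, $\widetilde{A}_{\ind-1}=T\widetilde{A}_\ind$ and $\widetilde{A}_{\ind-1}T^\dagger=\widetilde{A}_\ind$, carry all the weight.

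With them in hand, parts 1 and 2 are one line each. If $\psi\in\Ker\widetilde{A}_\ind$ then $\widetilde{A}_{\ind-1}\psi=T\widetilde{A}_\ind\psi=0$, giving $\Ker\widetilde{A}_\ind\subseteq\Ker\widetilde{A}_{\ind-1}$; and $\widetilde{A}_{\ind-1}(T^\dagger\psi)=\widetilde{A}_\ind\psi=0$, giving $T^\dagger\Ker\widetilde{A}_\ind\subseteq\Ker\widetilde{A}_{\ind-1}$. No analysis is needed here beyond the identities just derived.

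For part 3, the substantive direction is to show that saturation of the inclusions forces the kernel to vanish. If both inclusions are equalities, then the two left-hand subspaces coincide, $\Ker\widetilde{A}_\ind=\Ker\widetilde{A}_{\ind-1}=T^\dagger\Ker\widetilde{A}_\ind$; in particular $\Ker\widetilde{A}_\ind$ is invariant under $T^\dagger$. Since $\widetilde{A}_\ind$ is Fredholm this subspace is finite-dimensional, and by Proposition \ref{lem:l1} it is spanned by genuine $\ell^1(\mathbb{N})\otimes\mathbb{C}^d$ (hence $\ell^2$) vectors, on which $T^\dagger$ acts as the isometric unilateral shift. The key fact is that this shift has empty point spectrum: solving $T^\dagger w=\mu w$ componentwise forces $w=0$ for every $\mu\in\mathbb{C}$. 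Because a nonzero finite-dimensional invariant subspace of any operator must contain an eigenvector, $T^\dagger$-invariance together with finite-dimensionality is incompatible with a nonzero kernel, so $\Ker\widetilde{A}_\ind=\{0\}$. The reverse direction is the easy one: once $\Ker\widetilde{A}_\ind=\{0\}$, the subspaces on the left of both inclusions collapse to $\{0\}$, so there is nothing left to enlarge.

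I expect the main obstacle to be precisely this forward direction — pinning down that finite-dimensionality (from Fredholmness) plus $T^\dagger$-invariance cannot coexist with a nontrivial kernel. The clean route is the no-eigenvalue property of the unilateral shift quoted above; an equivalent, more hands-on alternative would track the lowest occupied lattice site of a hypothetical invariant vector and observe that $T^\dagger$ pushes it off to infinity, contradicting finite-dimensionality. I would reach for the latter only to avoid invoking point-spectrum facts directly. Everything else is bookkeeping with the two operator identities.
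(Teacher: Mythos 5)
Your parts 1 and 2 coincide with the paper's proof: the paper reads both inclusions off the two recursion identities $\widetilde{A}_{\ind-1}=T\widetilde{A}_\ind$ and $\widetilde{A}_\ind=\widetilde{A}_{\ind-1}T^\dagger$, exactly as you do (your re-derivation of the second identity from the Toeplitz invariance $T\widetilde{B}T^\dagger=\widetilde{B}$ is fine, and in fact shows the two defining recursions are consistent). In part 3, however, there is a genuine gap in the forward direction: you assume \emph{both} inclusions are equalities, whereas the statement, and the paper's proof, are disjunctive --- equality in \emph{either} inclusion alone must force triviality. The missing ingredient is a dimension count exploiting that $T^\dagger$ is an isometry: $\dim T^\dagger\Ker\widetilde{A}_\ind=\dim\Ker\widetilde{A}_\ind$, and all kernels are finite-dimensional by Fredholmness, so if either inclusion saturates then the three subspaces $\Ker\widetilde{A}_\ind$, $T^\dagger\Ker\widetilde{A}_\ind$ and $\Ker\widetilde{A}_{\ind-1}$ all have the same finite dimension and hence coincide. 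Once that is in place, your conclusion agrees with the paper's: $\Ker\widetilde{A}_\ind$ is a finite-dimensional subspace with $T^\dagger\Ker\widetilde{A}_\ind=\Ker\widetilde{A}_\ind$, any nonzero finite-dimensional invariant subspace contains an eigenvector, and $T^\dagger$ has none, so the kernel vanishes. (Your appeal to Prop.~\ref{lem:l1} is superfluous here: kernel vectors live in $\ell^2$ by definition, which is all the componentwise eigenvalue computation requires.)

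Your converse is a non sequitur. From $\Ker\widetilde{A}_\ind=\{0\}$ you conclude ``there is nothing left to enlarge,'' but $\{0\}\subseteq\Ker\widetilde{A}_{\ind-1}$ is an equality only if the right-hand side also vanishes, and that does not follow: since the kernels are nontrivial for $\ind\le -1$ and trivial for $\ind\ge q-p$ (Prop.~\ref{lem:kernel}), there is a largest $\gamma$ with $\Ker\widetilde{A}_\gamma\ne\{0\}$, and at $\ind=\gamma+1$ one has $\Ker\widetilde{A}_\ind=\{0\}$ while both inclusions are strict. To be fair, the paper's own proof establishes only the forward implication and is silent on the converse, so the ``if and only if'' as literally stated is loose in the source as well; the clean equivalence is between the equalities and triviality of $\Ker\widetilde{A}_{\ind-1}$ (equivalently, of all three subspaces). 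But your one-line justification of the converse should be corrected or removed rather than kept as stated.
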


\begin{proof}
1. holds because $T \widetilde{A}_\ind = \widetilde{A}_{\ind-1}$
while 2. follows from $ \widetilde{A}_\ind = \widetilde{A}_{\ind-1}T^\dagger$. As for
3., if an strict equality holds for either inclusion, then 
\[
\dim \Ker\,\widetilde{A}_\ind = \dim  \Ker\,\widetilde{A}_{\ind-1} =
 \dim T^\dagger \Ker\,\widetilde{A}_{\ind}
\]
(recall that $T^\dagger$ is an isometry), and  so
\[
T^\dagger \Ker\,\widetilde{A}_\ind=\Ker\,\widetilde{A}_\ind ,
\] 
because both spaces coincide with  $\Ker\,\widetilde{A}_{\ind-1}$. Finally,
$\Ker\,\widetilde{A}_\ind$ is finite-dimensional and $T^\dagger$ has no eigenvectors.
Therefore, $\Ker\,\widetilde{A}_\ind$ must be trivial.  
\end{proof}

\begin{figure}
\begin{center}
\includegraphics[width=10cm]{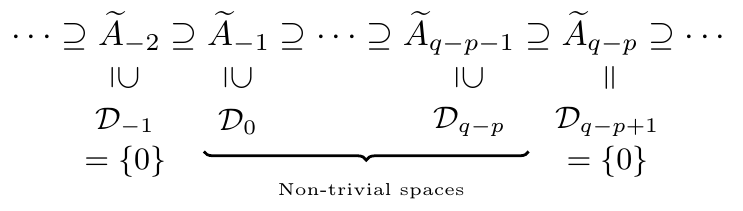}
\end{center}
\vspace*{-5mm}
\caption{ 
The subspace \({\cal D}_\kappa\) of \(\text{Ker}\,\widetilde{A}_{k-1}\) is implicitly defined in Eq.\,\eqref{nesting}.  \(\text{Ker}\,\widetilde{A}_{k-1}\) is partially determined by  
\(\text{Ker}\,\widetilde{A}_{k}\)
 according to the formula \(\text{Ker}\,\widetilde{A}_{\ind-1} \supseteq \text{Ker}\,\widetilde{A}_{\ind}\vee T^\dagger\text{Ker}\,\widetilde{A}_{\ind}\). The difference between these two spaces is quantified by the space \({\cal D}_\kappa\). It is
 automatically trivial for \(\kappa<0\) or \(\kappa>q-p+1\). }
\end{figure}

We will use the notation
\[
V\vee W\equiv \text{Span}\,V\cup W
\]
for \(V, W\) finite-dimensional subspaces of the underlying Hilbert space. For future reference let us point out that, by definition, the span of any set of vectors, finite or infinite, consists only of finite linear combinations of said vectors.  If \(V\cap W=\{0\}\), then \(V\vee W=V\bigoplus W\). The canonical convention for the span of the empty set is that it is the trivial vector space. 

\begin{coro}
\label{coro:inc}
For each $\ind \in \mathbb{Z}$ there exists a finite-dimensional space 
\({\cal D}_\ind\) such that 
\begin{equation}
\label{nesting}
\text{Ker}\,\widetilde{A}_{\ind-1} = \text{Ker}\,\widetilde{A}_{\ind}\vee T^\dagger\text{Ker}\,\widetilde{A}_{\ind} \oplus 
{\cal D}_\ind.
\end{equation}
\end{coro}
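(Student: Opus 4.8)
The plan is to observe that this corollary is essentially immediate once Proposition \ref{lem:inclusions} and the Fredholm property of the family $\mathcal{A}$ are in hand; the work is one of bookkeeping rather than of genuine analysis. First I would invoke parts 1 and 2 of Proposition \ref{lem:inclusions}, which assert respectively that $\text{Ker}\,\widetilde{A}_\ind \subseteq \text{Ker}\,\widetilde{A}_{\ind-1}$ and $T^\dagger\text{Ker}\,\widetilde{A}_\ind \subseteq \text{Ker}\,\widetilde{A}_{\ind-1}$. Since $\text{Ker}\,\widetilde{A}_\ind \vee T^\dagger\text{Ker}\,\widetilde{A}_\ind$ is by definition the span of these two subspaces, it is the smallest subspace containing both and is therefore itself contained in $\text{Ker}\,\widetilde{A}_{\ind-1}$. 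This establishes the inclusion $\text{Ker}\,\widetilde{A}_{\ind-1} \supseteq \text{Ker}\,\widetilde{A}_{\ind}\vee T^\dagger\text{Ker}\,\widetilde{A}_{\ind}$ recorded in the figure caption.

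The second ingredient is finite-dimensionality. Because $\widetilde{A}$ is assumed Fredholm and $T, T^\dagger$ are Fredholm, every member of the family $\mathcal{A}$ is Fredholm, so in particular $\text{Ker}\,\widetilde{A}_{\ind-1}$ is finite-dimensional. I would then apply the elementary fact that any subspace of a finite-dimensional vector space admits a complement: taking $\mathcal{D}_\ind$ to be a complement of $\text{Ker}\,\widetilde{A}_\ind \vee T^\dagger\text{Ker}\,\widetilde{A}_\ind$ inside $\text{Ker}\,\widetilde{A}_{\ind-1}$ yields precisely the direct-sum decomposition \eqref{nesting}, and $\mathcal{D}_\ind$ is finite-dimensional as a subspace of a finite-dimensional space. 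A canonical such choice is the orthogonal complement within $\text{Ker}\,\widetilde{A}_{\ind-1}$, taken with respect to the inner product inherited from $\ell^2(\mathbb{N})\otimes\mathbb{C}^d$.

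There is no real obstacle in this corollary: it merely packages the nesting of kernels established in Proposition \ref{lem:inclusions} into a single displayed formula and names the residual complement $\mathcal{D}_\ind$. The only point worth flagging is that, beyond the canonical orthogonal choice, $\mathcal{D}_\ind$ is not intrinsically determined, so the decomposition is a choice of complement rather than a uniquely specified object. The genuinely substantive content about these defect spaces—namely that $\mathcal{D}_\ind$ is trivial for $\kappa<0$ or $\kappa>q-p+1$, so that the kernels eventually stabilize and the iterative extraction of the partial indices terminates—is \emph{not} asserted here and is to be handled separately when the partial indices are read off from the family $\mathcal{A}$.
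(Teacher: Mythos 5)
Your proposal is correct and follows exactly the route the paper intends: the paper states this corollary without proof precisely because it is the immediate consequence of Proposition \ref{lem:inclusions} (parts 1 and 2 give the inclusion $\text{Ker}\,\widetilde{A}_{\ind}\vee T^\dagger\text{Ker}\,\widetilde{A}_{\ind} \subseteq \text{Ker}\,\widetilde{A}_{\ind-1}$) together with the Fredholm property of the family $\mathcal{A}$, which makes all kernels finite-dimensional and lets one take $\mathcal{D}_\ind$ to be any (e.g., orthogonal) complement. Your closing remarks --- that $\mathcal{D}_\ind$ is a choice of complement rather than canonical, and that its triviality outside the window of indices is established separately (in the paper, part 5 of Proposition \ref{lem:kernel}) --- are also accurate.
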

\noindent
It will be useful in the following to declare some fixed but arbitrary basis of \({\cal D}_\ind\), so let
\(s_\ind=\dim{\cal D}_\ind, 
\)
and
\[
\text{Span}\,\{|\psi_{\ind,1}\rangle,\dots,|\psi_{\ind,s_\ind}\rangle\}= {\cal D}_\ind.\]

\begin{prop}
\label{lem:kernel}
\label{lem:basis}
\label{lem:relabel}   \
\begin{enumerate}
\item $\Ker\, \widetilde{A}_\ind =\{0\}$ for \(\ind \geq q-p\)
\item If \(\kappa< q-p\), then \(\text{Ker}\,\widetilde{A}_{\ind} = \bigoplus_{\ind'=\ind+1}^{q-p}
\bigoplus_{r=0}^{\ind'-\ind-1} T^{\dagger\, r}{\cal D}_{\ind'}\). 

\item There is a \(\ind\) such that \({\cal D}_{\ind}\neq \{0\}\).

\item 
$\operatorname{Closure}\,\operatorname{Span}\,\{T^{\dagger r}|\psi_{\ind' s}\rangle\,|\, \ind'\leq q-p,\ s = 1,\dots,s_{\ind'},\ r\in\mathbb{N}\}
= \ell^2(\mathbb{N})\otimes \mathbb{C}^d$.

\item 
The integer $s_\ind=\text{dim}\,{\cal D}_\k$
vanishes if $\ind >q-p$ or $\ind<0$. 

\item 
\(
\sum_{\ind =0}^{q-p} s_\ind = d.
\)
\end{enumerate}
\end{prop}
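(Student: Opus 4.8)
The plan is to drive the entire proposition from the single nesting identity of Corollary \ref{coro:inc}, feeding in two external facts: the $\ell^1$-decay of kernel vectors (Proposition \ref{lem:l1}, i.e.\ the bulk-solution decomposition of Theorem A.5 of Ref.\ \cite{JPA}) and the Fredholm property of $\widetilde{A}$ (closed range of finite codimension). I would not prove the six items in the order listed, but in the order in which they become available. I would first dispatch item~1 together with the upper half of item~5. Iterating $\widetilde{A}_{\ind+1}=\widetilde{A}_\ind T^\dagger$ and using that $T^\dagger$ is an isometry with $TT^\dagger=\I$, one checks that $x\mapsto (T^\dagger)^{\ind+p}x$ identifies $\Ker\widetilde{A}_\ind$ with the subspace of $\Ker\widetilde{A}$ of vectors supported on sites $j\ge \ind+p$. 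Since a kernel vector obeys the banded recurrence $\sum_{s=p}^{q}a_s x_{j+s}=0$ and, by Proposition \ref{lem:l1}, is a combination of boundary-localized and decaying bulk solutions, no nonzero kernel vector is supported on $[q,\infty)$; hence $\Ker\widetilde{A}_\ind=\{0\}$ as soon as $\ind+p\ge q$, i.e.\ for $\ind\ge q-p$. Then $s_\ind=\dim\mathcal{D}_\ind=0$ for $\ind>q-p$ is immediate from Corollary \ref{coro:inc}, because both $\Ker\widetilde{A}_\ind$ and $\Ker\widetilde{A}_{\ind-1}$ are trivial there.

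Next I would prove item~2 by downward induction on $\ind$, with base case $\ind=q-p$ (both sides $\{0\}$ by item~1). For the step from $\ind$ to $\ind-1$ I substitute the inductive expressions for $\Ker\widetilde{A}_\ind$ and for $T^\dagger\Ker\widetilde{A}_\ind$ into Corollary \ref{coro:inc}; the two $r$-ranges merge to give exactly $\bigoplus_{\ind'\ge\ind}\bigoplus_{r=0}^{\ind'-\ind}T^{\dagger r}\mathcal{D}_{\ind'}$, the $(\ind'=\ind,\,r=0)$ slot being the new summand $\mathcal{D}_\ind$. The directness is carried along because $T^\dagger$ is injective and $\mathcal{D}_\ind$ is complementary by construction. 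Item~3 is then cheap: since $\text{index}\,\widetilde{A}_\ind=\text{index}\,\widetilde{A}-d(\ind+p)\to+\infty$ as $\ind\to-\infty$, some $\Ker\widetilde{A}_\ind$ is nonzero, whence some $\mathcal{D}_{\ind'}\ne\{0\}$ by item~2.

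From item~2 one reads off the telescoping identity $\dim\Ker\widetilde{A}_\ind-\dim\Ker\widetilde{A}_{\ind+1}=\sum_{\ind'\ge \ind+1}s_{\ind'}$. To convert this into the lower half of item~5 and into item~6 I need the mirror of item~1 for the cokernel: applying the same support/recurrence argument to the adjoint $\widetilde{A}^\dagger$ (whose symbol $A^\dagger$ has top degree $-p$) gives $\text{Coker}\,\widetilde{A}_\ind=\Ker\widetilde{A}_\ind^{\dagger}=\{0\}$ for $\ind\le 0$. Hence for every $\ind\le -1$ both cokernels vanish, so $\dim\Ker\widetilde{A}_\ind-\dim\Ker\widetilde{A}_{\ind+1}=\text{index}\,\widetilde{A}_\ind-\text{index}\,\widetilde{A}_{\ind+1}=d$. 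Comparing with the telescoping identity forces $\sum_{\ind'\ge m}s_{\ind'}=d$ for all $m\le 0$; taking consecutive differences yields $s_{\ind'}=0$ for $\ind'<0$ (item~5, lower half), and the value at $m=0$ together with $s_{\ind'}=0$ for $\ind'>q-p$ gives $\sum_{\ind'=0}^{q-p}s_{\ind'}=d$ (item~6).

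Finally, item~4. Using item~2, the span in question equals $\bigcup_\ind \Ker\widetilde{A}_\ind$, and the description $\widetilde{A}_\ind=T^{-(\ind+p)}\widetilde{A}$ for $\ind<-p$ shows that this union is precisely $\widetilde{A}^{-1}(c_{00})$, the preimage of the finitely supported sequences. Density of $\widetilde{A}^{-1}(c_{00})$ in $\ell^2(\mathbb{N})\otimes\mathbb{C}^d$ is where the Fredholm hypothesis does the real work: $\widetilde{A}$ has closed range of finite codimension, so $c_{00}\cap\mathrm{Range}\,\widetilde{A}$ is dense in $\mathrm{Range}\,\widetilde{A}$, and lifting an approximating sequence through the bounded partial inverse on $(\Ker\widetilde{A})^\perp$ produces, for any target vector, nearby elements of $\widetilde{A}^{-1}(c_{00})$. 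I expect this completeness statement to be the main obstacle: unlike the purely combinatorial items it needs genuine operator-theoretic input (closed-range/Fredholm theory and the $\ell^1$-decay of Proposition \ref{lem:l1}), and some care is required so that the direct-sum bookkeeping of item~2 remains compatible with the density, making the count $\sum_{\ind'}s_{\ind'}=d$ exact rather than merely a one-sided bound.
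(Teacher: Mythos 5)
Your architecture matches the paper's (same family $\mathcal{A}$, the nesting Corollary~\ref{coro:inc}, downward induction for item~2 anchored by item~1), but the proposal has a genuine gap at what is in fact the hardest point of the proposition: the directness in item~2. Your one-sentence justification --- ``directness is carried along because $T^\dagger$ is injective and $\mathcal{D}_\ind$ is complementary by construction'' --- does not work. Corollary~\ref{coro:inc} only makes ${\cal D}_\ind$ complementary to the \emph{span} $\Ker\widetilde{A}_{\ind}\vee T^\dagger\Ker\widetilde{A}_{\ind}$, and those two subspaces are in general not independent of each other (the paper explicitly notes that iterating the recursion forces one to weaken most $\bigoplus$'s into $\vee$'s). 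Injectivity of $T^\dagger$ says nothing about the mutual independence of, e.g., ${\cal D}_{\ind'}$ and $T^{\dagger r}{\cal D}_{\ind''}$ for $\ind'\neq\ind''$. The paper closes this with a separate induction: given a vanishing combination of the vectors $T^{\dagger r}|\psi_{\ind's}\rangle$, peel off the top shifts to write it as $|\Psi\rangle=-T^\dagger|\Phi\rangle$ with $|\Psi\rangle\in\Ker\widetilde{A}_{\ind}$, deduce $|\Phi\rangle\in\Ker\widetilde{A}_{\ind+1}$, and use the inductive decomposition of that kernel to force $|\Phi\rangle=0$. Without this, the telescoping identity you use for items 5 and 6 only gives one-sided bounds, so the count $\sum_\ind s_\ind=d$ is not established. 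Ironically, you predicted the main obstacle would be item~4, which in the paper is a one-liner: $|j\rangle|m\rangle\in\Ker\widetilde{A}_{-j-1}$ because the lowest power of $T$ in $\widetilde{A}_{-j-1}$ is $j+1$, so the canonical basis already lies in the span.

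A second, smaller gap is the mechanism in item~1. That a kernel vector is a combination of boundary-localized and decaying bulk solutions (Proposition~\ref{lem:l1}) does not preclude a nonzero kernel vector supported on $[q,\infty)$: the decay structure constrains nothing about the first $q$ entries, and with non-invertible band-edge coefficients $a_p,a_q$ the recurrence does not propagate backward. What the paper actually uses is that $\widetilde{A}_{q-p}$ is a polynomial in $T^\dagger$ alone, hence commutes with $T^\dagger$, so a nonzero kernel vector $|\psi\rangle$ would generate the infinite independent family $(T^\dagger)^r|\psi\rangle$ in the kernel, contradicting Fredholmness; Proposition~\ref{lem:l1} plays no role here (the paper needs it only for the boundedness argument in Proposition~\ref{propAminus}). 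Your cokernel claim for $\ind\le 0$ inherits the same defect but is repaired by the identical commutation trick applied to $\widetilde{A}_\ind^\dagger$. Granting these repairs, the remainder of your route is correct and genuinely different from the paper's in two places, both worth noting: the Fredholm-index bookkeeping $\text{index}(\widetilde{A}_{\ind-1})=\text{index}(\widetilde{A}_\ind)+d$ together with vanishing cokernels for $\ind\le0$ delivers items 3, 5, and 6 more systematically than the paper's direct computation of $\Ker\widetilde{A}_{\ind-1}=T^\dagger\Ker\widetilde{A}_{\ind}\oplus\text{Span}\{|0\rangle|m\rangle\}$; and your item~4 --- identifying $\bigcup_\ind\Ker\widetilde{A}_\ind$ with $\widetilde{A}^{-1}(c_{00})$ and getting density from closed range of finite codimension plus the bounded partial inverse --- is valid, just far heavier machinery than the paper requires.
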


\begin{proof}
1. It is enough to show that \(\Ker\, \widetilde{A}_{q-p}=\{0\}\) since the kernels are nested downwards.
Because $\widetilde{A}_{q-p}$ is a function of $T^\dagger$ only, 
if $|\psi\rangle \in \text{Ker}\,\widetilde{A}_{q-p}$, then 
\[
\widetilde{A}_{q-p} T^\dagger |\psi\rangle = T^\dagger \widetilde{A}_{q-p}|\psi\rangle = 0.
\]
This would imply that the kernel of $\widetilde{A}_{q-p}$ contains infinitely many
linearly indpendent vectors $|\psi\rangle, T^\dagger |\psi\rangle, (T^\dagger)^2 |\psi\rangle,\dots$.
However, $\widetilde{A}_{q-p}$ is a Fredholm operator and so its kernel must be
finite-dimensional. Hence, \(|\psi\rangle=0\).

2. This statement is equivalent to proving that 
$$\text{Ker}\,\widetilde{A}_{\ind} = \bigvee_{\ind'=\ind+1}^{q-p}
\bigvee_{r=0}^{\ind'-\ind-1} T^{\dagger\, r}{\cal D}_{\ind'}$$ and that 
the vector spaces $\{T^{\dagger\, r}{\cal D}_{\ind'}, r=0,\dots,\ind'-\ind-1,
\ind'=\ind+1, \dots, q-p\}$ are linearly independent. We now prove these two statements separately.

First, at this point we know that  
\begin{align*}
  &\ \  \vdots\\
  \text{Ker}\,\widetilde{A}_{q-p}&=\{0\}, \\
  \text{Ker}\,\widetilde{A}_{q-p-1}&={\cal D}_{q-p},\\
  \text{Ker}\,\widetilde{A}_{q-p-2}&={\cal D}_{q-p}\vee T^\dagger {\cal D}_{q-p}\bigoplus {\cal D}_{q-p-1},\\
  \text{Ker}\,\widetilde{A}_{q-p-3}&={\cal D}_{q-p}\vee T^\dagger {\cal D}_{q-p}\vee T^{\dagger\,2}{\cal D}_{q-p}\vee {\cal D}_{q-p-1}\vee T^{\dagger\, 2}{\cal D}_{q-p-1}\bigoplus{\cal D}_{q-p-2}\\
  &\ \  \vdots
\end{align*}
In the process of  rearranging and simplifying the formula for \(\text{Ker}\,\widetilde{A}_\ind\) as \(\ind\) goes down,  one is forced to weaken most of the \(\bigoplus\) into \(\vee\). Now one can guess the formula  \(\text{Ker}\,\widetilde{A}_{\ind} = \bigvee_{\ind'=\ind+1}^{q-p}
\bigvee_{r=0}^{\ind'-\ind-1} T^{\dagger\, r}{\cal D}_{\ind'}\) for \(\ind<q-p\) and complete its proof by an induction argument.
 Let \(\gamma\) denote the largest integer such that \(\text{Ker}\,\widetilde{A}_{\gamma}\neq\{0\}\) (we know from the previous Lemma that \(\gamma<q-p\)). Then,
\[
\text{Ker}\,\widetilde{A}_{\gamma}=\text{Ker}\,\widetilde{A}_{\gamma+1}\vee T^\dagger \text{Ker}\,\widetilde{A}_{\gamma+1}
\oplus{\cal D}_{\gamma+1}=\bigvee_{\ind'>\gamma}\bigvee_{r=0}^{\ind'-\gamma-1}T^{\dagger\,r}{\cal D}_{\ind'} ,
\]
since \(\text{Ker}\,\widetilde{A}_{\gamma+1}=0\) and \({\cal D}_{\ind'}=\{0\}\) if \(\ind'>\gamma+1\). Now suppose that the induction hypothesis is true for some \( \delta \leq \gamma\). Then, 
\begin{align*}
\text{Ker}\,\widetilde{A}_{\delta-1} =
\text{Ker}\,\widetilde{A}_{\delta}\vee T^\dagger \text{Ker}\,\widetilde{A}_{\delta}\oplus{\cal D}_{\delta}=
\bigvee_{\ind'>\delta}\bigvee_{r=0}^{\ind'-\delta}T^{\dagger\,r}{\cal D}_{\ind'}\oplus{\cal D}_{\delta}= 
\bigvee_{\ind'>\delta-1}\bigvee_{r=0}^{\ind'-(\delta-1)-1}T^{\dagger\,r}{\cal D}_{\ind'},
\end{align*}
and the induction argument is complete.

To complete the proof we need to show that 
the vectors spaces $\{T^{\dagger\, r}{\cal D}_{\ind'}, r=0,\dots,\ind'-\ind-1, \ind'=\ind+1, \dots, q-p\}$ are linearly independent.
We prove this by induction. The statement is true for $\ind \ge q-p$, as both the left- and the right-hand sides are trivial vector spaces. We assume that the statement is true for some integer $\ind<q-p$, that is,
the vector spaces $\{T^{\dagger\, r}{\cal D}_{\ind'}, r=0,\dots,\ind'-\ind-1, \ind'=\ind+1, \dots, q-p\}$ are linearly independent for such a $\ind$. To complete the proof by induction, we need to show that $\{T^{\dagger\, r}{\cal D}_{\ind'}, r=0,\dots,\ind'-\ind-1, \ind'=\ind+1, \dots, q-p\}$ are linearly independent for $\ind-1$.

By construction in Corollary \ref{coro:inc}, the space ${\cal D}_{\ind-1}$ is linearly independent
from the span of $\{T^{\dagger\, r}{\cal D}_{\ind'}, r=0,\dots,\ind'-\ind, \ind'=\ind+1, \dots, q-p\}$, and so it suffices to prove linear independence of the latter set of vector spaces. Suppose there exists a 
vanishing linear combination of vectors from these spaces, so that
\[
\sum_{\ind'=\ind+1}^{q-p}\sum_{r=0}^{\ind'-\ind}\sum_{s=1}^{s_{\ind'}}
\alpha_{rs\ind'}(T^\dagger)^r|\psi_{\ind' s}\rangle = 0.
\]
By separating this summation into two parts and transferring the second part to the right-hand side,
we get
\begin{eqnarray*}
\sum_{\ind'=\ind+1}^{q-p}\sum_{r=0}^{\ind'-\ind-1}\sum_{s=1}^{s_{\ind'}}
\alpha_{rs\ind'}(T^\dagger)^r|\psi_{\ind' s}\rangle &=& 
-\sum_{\ind'=\ind+1}^{q-p}\sum_{s=1}^{s_{\ind'}}
\alpha_{rs\ind'}(T^\dagger)^{\ind'-\ind}|\psi_{\ind' s}\rangle\\
&=& -T^\dagger \sum_{\ind'=\ind+1}^{q-p}\sum_{s=1}^{s_{\ind'}}
\alpha_{rs\ind'}(T^\dagger)^{\ind'-\ind-1}|\psi_{\ind' s}\rangle.
\end{eqnarray*}
This is an equation of the form $|\Psi\rangle = -T^\dagger |\Phi\rangle$, and 
$|\Psi\rangle \in \text{Ker}\,\widetilde{A}_{\kappa}$. Therefore, we find that 
$T^\dagger |\Phi\rangle \in \text{Ker}\,\widetilde{A}_{\kappa}$, and consequently
$|\Phi\rangle \in \text{Ker}\,\widetilde{A}_{\kappa+1}$. However, by the induction hypothesis 
for $\kappa$, the terms 
in the expression for $|\Phi\rangle$ are linearly independent from the 
$\text{Ker}\,\widetilde{A}_{\kappa+1}$, as the latter is spanned by 
$\{T^{\dagger\, r}{\cal D}_{\ind'}, r=0,\dots,\ind'-\ind-1,
\ind'=\ind+2, \dots, q-p\}$. Therefore, we find $|\Phi\rangle = 0$, and
consequently $|\Psi\rangle = 0$. We have thus proven the correctness of the statement for $\kappa-1$,
completing the proof.

3. Suppose otherwise. Then, the kernel of any member of \({\cal A}\) is trivial  by 2. above. However, \(\widetilde{A}_{-1}\)
necessarily annihilates the states \(|0\rangle\otimes |m\rangle\). To avoid this contradiction, it must be that \({\cal D}_{\ind}\neq \{0\}\) for some \(\ind\).

4. Let 
\({\cal B}_{C}=\{|j\rangle|m\rangle\,|\, r\in\mathbb{N},\ m=1,\dots, d\}\)
denote the canonical
orthonormal basis of \(\ell^{2}(\mathbb{N})\otimes \mathbb{C}^d\). Then, since \(\widetilde{A}_{-j-1}\) annihilates the \(d\) states \(|j\rangle|m\rangle\), one has that \({\cal B}_C\subset \text{Span}\,{\cal B}_{\widetilde{A}}\). 
The family \(\{T^m\}_{m\in \mathbb{Z}}\) shows that the sets ${\cal B}_C$ and ${\cal B}_{\widetilde{A}}$ might coincide but they do not in general. If \({\cal B}_{C}\) is strictly contained in \({\cal B}_{\widetilde{A}}\),  the vectors in latter basis are not mutually orthogonal. In any case,
\[
\text{Closure}\,\text{Span}\,{\cal B}_{\widetilde{A}}=\text{Closure}\,\text{Span}\,{\cal B}_{C}=\ell^2(\mathbb{N})\otimes \mathbb{C}^d.
\]

5. Since $\Ker\,\widetilde{A}_{\ind}$ is trivial for \(\ind \geq q-p\), it follows that  $s_\ind = 0$ for all $\ind > q-p$. 
For $\ind < 0$, $\widetilde{A}_{\ind}$ is a function of $T$ only, so that 
$\widetilde{A}_{\ind-1} = T\widetilde{A}_{\ind} = \widetilde{A}_{\ind}T$.
Therefore, if \(\kappa<0\), 
\begin{align*}
\Ker \widetilde{A}_{\ind-1} &= T^\dagger \Ker \widetilde{A}_{\ind} \oplus 
\text{Span}\,\{|j=0\rangle|m\rangle,\ m=1,\dots,d\}.
\end{align*}
In addition, for $\kappa <0$, the smallest power of $T$ in the decomposition of 
$\widetilde{A}_{\ind}$ is non-negative, so that
\begin{align*}
\Ker \widetilde{A}_{\ind} &\supseteq \text{Span}\,\{|j=0\rangle|m\rangle,\ m=1,\dots,d\,\}.
\end{align*}
Therefore, 
\[
\text{Ker}\,\widetilde{A}_{\ind-1} = \text{Ker}\,\widetilde{A}_{\ind}\vee
T^\dagger\text{Ker}\,\widetilde{A}_{\ind}
\]
and so 
$s_\ind =0$ as well if $\ind<0$. 

6. It follows from 2. and 4. above that  
\begin{equation}
\label{eqproof3}
\dim \Ker \widetilde{A}_{\ind} = \sum_{\ind'=k+1}^{q-p} (\ind'-\ind)s_{\ind'} ,
\end{equation}
with the canonical convention that a sum over an empty set vanishes. In addition, as we saw in 5. above that, for \(\ind<0\) 
\begin{equation}
\label{eqproof2}
\dim \Ker \widetilde{A}_{\ind-1} = \dim \Ker \widetilde{A}_{\ind} + d \qquad (\ind < 0).
\end{equation}
Hence, for any $\ind < 0$, we have from Eq.\,\eqref{eqproof3} and 5. above that
\[
d=\dim \Ker \widetilde{A}_{\ind-1} - \dim \Ker \widetilde{A}_{\ind} = 
\sum_{\ind'=\ind}^{q-p}s_{\ind'} = \sum_{\ind'=0}^{q-p}s_{\ind'}.
\]
\end{proof}

Let us relabel the vectors 
$\{|\psi_{\ind s}\rangle\,|\ \ind \in \mathbb{Z}, \ s=1,\dots,s_\ind\}$ 
as $\{|\psi_m\rangle\,|\ m=1,\dots,d\}$. 
This is possible since Lemma\,\ref{lem:relabel} guarantees that there are exactly 
$d$ such vectors. We denote the bijective relabeling map by $(\ind,s) \mapsto m(\ind,s)$
and we choose a relabeling such that 
\[
\ind > \ind' \implies m(\ind,s) < m(\ind',s).
\]
The inverse of this map, upon ignoring the number $s$, gives rise to the
(not necessarily bijective)
{\it partial index function}
\(
m \mapsto (\ind,s) \mapsto \ind.
\)

Up to this point we have managed to proceed essentially by algebraic means. In the following, we will need to appeal to some basic facts of functional analysis. 

\begin{prop}
The mapping
\(T^{\dagger\,j} |\psi_{m}\rangle \mapsto |j\rangle |m\rangle\)
induces a bounded, invertible, and lower-triangular BBT  operator \(\widetilde{A}_-\). The inverse mapping \(|j\rangle |m\rangle\mapsto T^{\dagger\,j} |\psi_{m}\rangle\)
also induces a bounded operator; the latter is the inverse of \(\widetilde{A}_-\).
\label{propAminus}
\end{prop}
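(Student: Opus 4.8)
The plan is to build the operator that realizes the \emph{inverse} mapping first, since it is the one written down directly from the vectors $|\psi_m\rangle$, to show it is a bounded lower-triangular block-Toeplitz operator, and only then to prove it is boundedly invertible; the operator $\widetilde{A}_-$ of the statement is then defined as its inverse. First I would set $\widetilde{B}$ to be the linear extension of $|j\rangle|m\rangle\mapsto T^{\dagger\,j}|\psi_m\rangle$. Writing $|\psi_m\rangle=\sum_{l\ge 0}|l\rangle|b_{lm}\rangle$ and collecting the internal components into matrices $b_l$ with $m$-th column $|b_{lm}\rangle$, one has $T^{\dagger\,j}|\psi_m\rangle=\sum_{l\ge 0}|l+j\rangle|b_{lm}\rangle$, so $\widetilde{B}=\sum_{l\ge 0}T^{\dagger\,l}\otimes b_l$ is manifestly lower triangular and block-Toeplitz, and on the canonical basis it acts as multiplication by the generating function $\beta(\zeta)=\sum_{l\ge 0}b_l\,\zeta^l$. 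The first genuine input is Proposition~\ref{lem:l1}: each $|\psi_m\rangle\in\ell^1(\mathbb{N})\otimes\mathbb{C}^d$, whence $\sum_l\|b_l\|<\infty$, so $\|\widetilde{B}\|\le\sum_l\|b_l\|<\infty$ and $\beta$ lies in the (matrix) Wiener algebra, continuous on $\overline{\mathbb{D}}$ and analytic inside.

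Next I would check that the forward mapping is well defined and that $\widetilde{B}$ has dense range. Well-definedness is exactly injectivity of $\widetilde{B}$, i.e. linear independence of $\{T^{\dagger\,j}|\psi_m\rangle\}$: any finite combination involves only $j\le J$ and finitely many $m$, hence lies in a single $\Ker\widetilde{A}_{\kappa_0}$ for $\kappa_0$ small enough (using the downward nesting of kernels and $T^{\dagger\,r}{\cal D}_{\kappa'}\subseteq\Ker\widetilde{A}_{\kappa'-1-r}$), where item~2 of Proposition~\ref{lem:basis} guarantees the spaces $T^{\dagger\,r}{\cal D}_{\kappa'}$ are linearly independent; this forces all coefficients to vanish. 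Density of the range of $\widetilde{B}$ is precisely item~4 of Proposition~\ref{lem:basis}.

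The invertibility of $\widetilde{A}_-=\widetilde{B}^{-1}$ with a \emph{bounded} inverse is the crux, and it is equivalent to $\det\beta(\zeta)\ne 0$ for all $|\zeta|\le 1$ (then $\beta^{-1}$ again lies in the lower-triangular Wiener algebra and the corresponding block-Toeplitz operator is $\widetilde{B}^{-1}$). The interior $|\zeta|<1$ is clean: were $\beta(\zeta)$ singular, choose $w\ne 0$ with $\langle w|\beta(\zeta)=0$; the bounded functional $\langle\overline{\zeta}|\otimes\langle w|$, where $|\overline{\zeta}\rangle=\sum_l\overline{\zeta}^{\,l}|l\rangle$, then annihilates every $T^{\dagger\,j}|\psi_m\rangle$, since $\langle\overline{\zeta}|T^{\dagger\,j}=\zeta^{\,j}\langle\overline{\zeta}|$ gives $\langle\overline{\zeta}|\otimes\langle w|T^{\dagger\,j}|\psi_m\rangle=\zeta^{\,j}\,\langle w|\beta(\zeta)|m\rangle=0$, contradicting the density just proved. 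The boundary $|\zeta|=1$ is the step I expect to be the main obstacle, because there $|\overline{\zeta}\rangle$ is no longer square-summable and the functional argument breaks down; moreover dense range alone cannot exclude a boundary zero (an analytic Toeplitz operator whose symbol vanishes at a point of the circle has dense but non-closed range). I would resolve it by exploiting that, by Proposition~\ref{lem:l1}, the $|\psi_m\rangle$ are finite superpositions of bulk modes $|z_i,\nu\rangle$ with $|z_i|<1$ \emph{strictly}: then $\beta$ is a rational matrix function whose only poles sit at $\{1/z_i\}$, all of modulus $>1$, so $\beta$ is analytic on a disk of radius strictly larger than one, and $\widetilde{A}$ maps each $|\psi_m\rangle$ to a finite-support vector, giving a relation of the form $A\,\beta=G$ with $G$ a matrix Laurent polynomial. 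The Fredholm hypothesis on $\widetilde{A}$ (invertibility of the symbol $A$ on the circle) then transfers through this relation to force $\det\beta\ne 0$ on $|\zeta|=1$; equivalently $\widetilde{B}$ is Fredholm, and being injective with dense range it is invertible.

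Finally I would assemble the pieces. Setting $\widetilde{A}_-:=\widetilde{B}^{-1}$ gives a bounded operator realizing the forward mapping $T^{\dagger\,j}|\psi_m\rangle\mapsto|j\rangle|m\rangle$. On the dense domain $\widetilde{A}_- T^\dagger=T^\dagger\widetilde{A}_-$ holds, so $\widetilde{A}_-$ commutes with the unilateral shift $T^\dagger$ and is therefore a bounded lower-triangular block-Toeplitz operator, its matrix-polynomial (banded) form corresponding to $\beta^{-1}$ being polynomial. Together with the explicit bounded inverse $\widetilde{B}$ from the first paragraph, this yields both assertions of the proposition.
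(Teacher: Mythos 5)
Your route is genuinely different from the paper's (you build the inverse map $\widetilde{B}$ first and try to settle everything at the level of the symbol $\beta$), and much of it is sound: the Wiener-algebra bound $\|\widetilde{B}\|\le\sum_l\|b_l\|$ from Proposition~\ref{lem:l1}, the linear-independence and density inputs from Proposition~\ref{lem:kernel}, and the interior argument with the annihilating functional $\langle\overline{\zeta}|\otimes\langle w|$ are all correct. But the step you yourself flag as the main obstacle is a genuine gap, not a deferred detail. From the kernel relations one does get $A_0(z^{-1})\beta(z)=G(z)$ with $G$ a matrix Laurent polynomial, but taking determinants gives $\det A_0(z^{-1})\,\det\beta(z)=\det G(z)$, and nothing you have established prevents $\det G$ --- hence $\det\beta$ --- from vanishing at a point of the unit circle: invertibility of $A$ on the circle simply does not ``transfer.'' Density of the range cannot rescue the argument either: the scalar symbol $\beta(\zeta)=1-\zeta$ is rational, analytic on a larger disk, nonvanishing for $|\zeta|<1$, injective, has dense (non-closed) range, and satisfies a relation of exactly the same shape with a suitable polynomial $A$ --- so every property you verified for $\widetilde{B}$ is compatible with a boundary zero. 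The fallback ``equivalently $\widetilde{B}$ is Fredholm'' is circular, since Fredholmness of a lower-triangular block-Toeplitz operator is equivalent to nonvanishing of its symbol on the circle, which is the very claim at issue. (A smaller soft spot: linear independence of \emph{finite} combinations gives injectivity only on the algebraic span, not kernel triviality of the bounded extension; this would matter for the Fredholm fallback, though it becomes moot once $\det\beta\neq0$ on the closed disk is known.)

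The missing ingredient is precisely what the paper uses instead, and it also repairs your second, related gap: the statement requires $\widetilde{A}_-$ to be \emph{banded} (BBT), i.e., $\beta^{-1}$ a matrix polynomial of bounded degree, which you assert at the end but never derive --- it follows neither from boundedness, nor from rationality of $\beta$, nor from invertibility within the Wiener algebra. The paper's proof avoids boundary symbol analysis altogether: it defines the forward operator $\widetilde{A}_-\colon T^{\dagger\,j}|\psi_m\rangle\mapsto|j\rangle|m\rangle$ directly and proves it is banded block-Toeplitz via $T\widetilde{A}_-T^\dagger=\widetilde{A}_-$, $[\widetilde{A}_-,T^\dagger]=0$, and $[T^{q-p}\widetilde{A}_-,T]=0$; the delicate $j=0$ case of the last commutator is handled by the kernel structure, namely $\widetilde{A}_0T|\psi_m\rangle=T\widetilde{A}_0|\psi_m\rangle=0$ (using $\ind_m\ge0$), so that Proposition~\ref{lem:kernel} re-expands $T|\psi_m\rangle=\sum\alpha_{r\ind s}(T^\dagger)^r|\psi_{m(\ind,s)}\rangle$ with $r<\ind\le q-p$, killing the offending term. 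Bandedness gives boundedness of $\widetilde{A}_-$ for free; combined with your (correct) bound on $\widetilde{B}$ and the mutual-inverse relation on the dense spans, invertibility follows, and $\det\beta\neq0$ on the closed disk emerges as a corollary rather than a prerequisite. If you wish to keep your symbol-first route, you must import this same kernel re-expansion to show that $\beta^{-1}$ is a polynomial of degree at most $q-p$; that single fact closes both the boundary gap and the bandedness gap at once.
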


\begin{proof}
Consider the operator $\widetilde{A}_-$, defined over the (non-closed) linear space 
\(\text{Span}\,\{T^{\dagger j}|\psi_m\rangle\}_{r=0}^{\infty}\),
obtained from linearly extending the correspondence 
\(T^{\dagger\,j} |\psi_{m}\rangle \mapsto |j\rangle |m\rangle\) to other vectors.
Observe that 
\[
T\widetilde{A}_-T^\dagger (T^\dagger)^j|\psi_{m}\rangle = T\widetilde{A}_- (T^\dagger)^{j+1}|\psi_{m}\rangle
 = T|j+1\rangle|m\rangle = |j\rangle|m\rangle = \widetilde{A}_-(T^\dagger)^j|\psi_{m}\rangle.
\]
This proves the identity $T\widetilde{A}_-T^\dagger = \widetilde{A}_-$, which is 
necessary and sufficient condition for $\widetilde{A}_-$ to be block-Toeplitz. To prove that $\widetilde{A}_-$ is lower-triangular, we show that $[\widetilde{A}_-,T^\dagger] = 0$.
This follows from the simple calculation
\[
\widetilde{A}_-T^\dagger(T^\dagger)^j|\psi_{m}\rangle = |j+1\rangle |m\rangle
= T^\dagger |j\rangle|m\rangle = 
T^\dagger\widetilde{A}_-(T^\dagger)^j|\psi_{m}\rangle.
\]
For bandedness, it suffices to prove that $[T^{q-p}\widetilde{A}_- , T]=0$.
We have
\[
T (T^{q-p}\widetilde{A}_- ) (T^\dagger)^j|\psi_{m}\rangle = 
\left\{\begin{array}{lcl}|j-q+p-1\rangle|m\rangle &
\text{if} & j > q-p\\
0 & \text{if} & j \le q-p \end{array}\right..
\]
For the other term in the commutator, we get
\[
 (T^{q-p}\widetilde{A}_- )T (T^\dagger)^j|\psi_{m}\rangle = 
\left\{\begin{array}{lcl}|j-q+p-1\rangle|m\rangle &
\text{if} & j > q-p\\
0 & \text{if} & 1 \le j \le q-p \end{array}\right..
\]
This calculation does not go through for $j=0$. However, since 
$[\widetilde{A}_0, T] = 0$ and $\ind_m \ge 0 \  \forall m$, we have 
\[
\widetilde{A}_0 T|\psi_m\rangle = T\widetilde{A}_0 |\psi_m\rangle = 0.
\]
Then by Lemma\,\ref{lem:kernel}, we can express $T|\psi_m\rangle$ as
\[
T|\psi_m\rangle = \sum_{r,\ind,s} \alpha_{r\ind s}(T^\dagger)^r|\psi_{m(\ind,s)}\rangle , 
\]
for some complex numbers $\{\alpha_{r\ind s}\}$. In the above summation, $r<\ind \le q-p$. Using this expression, we get
\begin{align*}
 (T^{q-p}\widetilde{A}_- )T|\psi_{m}\rangle =
 \sum_{r,\ind,s} \alpha_{r\ind s} T^{q-p}\widetilde{A}_-  (T^\dagger)^r|\psi_{m(\ind,s)}\rangle 
 =
 \sum_{r,\ind,s} \alpha_{r\ind s} T^{q-p} |j=r\rangle|\psi_{m(\ind,s)}\rangle = 0.
 \end{align*}
Because $\widetilde{A}_{-} $ is banded block-Toeplitz, therefore it is bounded. Since its domain is dense in  $\ell_2(\mathbb{N})\otimes \mathbb{C}^d$
it can be uniquely extended to this space by closed subgraph theorem.

We now prove that $\widetilde{A}_{-}$ is invertible. 
Let $\widetilde{B}$
denote the linear transformation induced on the span of the \(|j\rangle|m\rangle\) by the mapping

\(|j\rangle|m\rangle \mapsto (T^\dagger)^j|\psi_m\rangle.\) 
Our strategy is to show that $\widetilde{B}$ is bounded and therefore
extendible to $\ell_2(\mathbb{N})\otimes\mathbb{C}^d$. Then, since \(\widetilde{B}\) undoes the action of \(\widetilde{A}_-\) on a dense domain, one can conclude that \(\widetilde{B}=\widetilde{A}_-^{-1}\). 

To show that $\widetilde{B}$ is bounded, we first 
consider the case $d=1$ and define $\varrho:\ell_2(\mathbb{N}) \to H_2(\mathbb{T})$, which 
maps vectors in $\ell_2(\mathbb{N})$ to functions in the Hardy space $H_2(\mathbb{T})$ according to $|\alpha\rangle \mapsto \alpha(z)$, with
\begin{equation}
|\alpha\rangle = \sum_j \alpha_j |j\rangle, \;\;  \alpha(z) = \sum_j \alpha_j z^j.
\end{equation}
It is easy to check that 
\begin{equation}
\varrho(\sum_j \alpha_j (T^\dagger)^j|\psi_1\rangle) =  
\varrho(\sum_j \alpha_j|j\rangle)\varrho(|\psi_1\rangle) = \alpha(z)\psi(z).
\end{equation}
Lemma \ref{lem:l1} states that $|\psi_1\rangle \in \ell^1(\mathbb{N})$ 
being the kernel vector of a banded block-Toeplitz operator. Therefore, 
$\psi_1(z)$ has an absolutely convergent Fourier series, and so
$\psi_1(z)$ is well-defined at each point on the unit circle. 
The norm of the vector $\sum_j \alpha_j (T^\dagger)^j|\psi_1\rangle$ 
can then be bounded by
\begin{equation}
\norm{\sum_j \alpha_j (T^\dagger)^j|\psi_1\rangle}^2 = 
\int_{\mathbb{T}} dz |\alpha(z)\psi_1(z)|^2
\le \text{max}_{z \in \mathbb{T}}(|\psi_1(z)|^2) \int dz |\alpha(z)|^2.
\end{equation}
Therefore, $\norm{B|\alpha\rangle}/\norm{|\alpha\rangle} \le 
\text{max}_{z \in \mathbb{T}}|\psi_1(z)|$,
hence $\widetilde{B}$ is bounded. The proof can be extended to $d>1$ with only minor modifications.
\end{proof}

\begin{coro}
\label{coro:linind}
Let $|\alpha\rangle = \sum_{j=0}^{\infty}\sum_{m=1}^{d} \alpha_{jm}|j\rangle|m\rangle 
\in \ell^2(\mathbb{N})\otimes \mathbb{C}^d$ be any vector. Define
$|\psi_{\alpha}\rangle = \sum_{j=0}^{\infty}\sum_{m=1}^{d} \alpha_{jm}(T^\dagger)^j|\psi_m\rangle$. Then $|\psi_{\alpha}\rangle =0$ if and only if $\alpha_{jm}=0$ for all $j,m$.
\end{coro}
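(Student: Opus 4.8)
The plan is to recognize this corollary as nothing more than the injectivity of the operator $\widetilde{A}_-$ (equivalently, of its inverse) already constructed in Proposition \ref{propAminus}, now applied to an arbitrary $\ell^2$ vector rather than to a single basis element. No new analytic input is needed beyond the boundedness and invertibility established there.

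First I would observe that the sum defining $|\psi_\alpha\rangle$ converges in $\ell^2(\mathbb{N})\otimes\mathbb{C}^d$ and in fact equals $\widetilde{A}_-^{-1}|\alpha\rangle$. Indeed, Proposition \ref{propAminus} shows that the assignment $|j\rangle|m\rangle \mapsto (T^\dagger)^j|\psi_m\rangle$ extends to a bounded operator on all of $\ell^2(\mathbb{N})\otimes\mathbb{C}^d$, namely $\widetilde{A}_-^{-1}$. Applying this bounded operator to the convergent expansion $|\alpha\rangle = \sum_{j,m}\alpha_{jm}|j\rangle|m\rangle$, and using continuity to interchange the operator with the limit of partial sums, produces precisely $|\psi_\alpha\rangle$. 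Hence $|\psi_\alpha\rangle = \widetilde{A}_-^{-1}|\alpha\rangle$, and in particular the defining series is well-posed.

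Next I would apply the bounded operator $\widetilde{A}_-$ to this identity. Since $\widetilde{A}_-$ is bounded and satisfies $\widetilde{A}_-(T^\dagger)^j|\psi_m\rangle = |j\rangle|m\rangle$, continuity again allows me to pass $\widetilde{A}_-$ through the sum, giving $\widetilde{A}_-|\psi_\alpha\rangle = \sum_{j,m}\alpha_{jm}|j\rangle|m\rangle = |\alpha\rangle$. The ``if'' direction is then trivial by linearity: $\alpha_{jm}=0$ for all $j,m$ gives $|\alpha\rangle=0$ and hence $|\psi_\alpha\rangle = \widetilde{A}_-^{-1}|\alpha\rangle = 0$. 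For the ``only if'' direction, if $|\psi_\alpha\rangle = 0$ then $|\alpha\rangle = \widetilde{A}_-|\psi_\alpha\rangle = 0$, and since $\{|j\rangle|m\rangle\}$ is an orthonormal basis this forces $\alpha_{jm}=0$ for every $j,m$.

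The argument is essentially immediate once Proposition \ref{propAminus} is in hand, so I do not expect a genuine obstacle. The only point requiring care is the justification that the infinite sum defining $|\psi_\alpha\rangle$ may be identified with $\widetilde{A}_-^{-1}|\alpha\rangle$, and symmetrically that $\widetilde{A}_-$ can be moved inside that sum; both steps rest solely on the boundedness of $\widetilde{A}_-$ and $\widetilde{A}_-^{-1}$ proved in the preceding proposition. Thus the entire content of the corollary reduces to the invertibility, and hence injectivity, of $\widetilde{A}_-$.
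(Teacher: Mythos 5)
Your proof is correct and takes essentially the same route as the paper, whose entire argument is the one-line observation that $|\psi_{\alpha}\rangle$ is the image of $|\alpha\rangle$ under the invertible operator of Proposition~\ref{propAminus} (the paper writes $|\psi_\alpha\rangle=\widetilde{A}_-|\alpha\rangle$, a harmless notational slip; with the convention of that proposition it is $\widetilde{A}_-^{-1}|\alpha\rangle$, exactly as you have it). Your explicit justification that boundedness permits exchanging $\widetilde{A}_-$ and $\widetilde{A}_-^{-1}$ with the limit of partial sums is a minor refinement the paper leaves implicit.
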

\begin{proof}
Note that $|\psi_{\alpha}\rangle = \widetilde{A}_{-}|\alpha \rangle$. Because $\widetilde{A}_{-}$ is invertible, $|\psi_{\alpha}\rangle=0$ if and only if 
$|\alpha\rangle = 0$.
\end{proof}

\begin{thm}
\label{thm:wh} 
The Fredholm BBT operator
$\widetilde{A}_0\in\mathcal{A}$ can be factorized as 
\(
\widetilde{A}_0 = \widetilde{A}_+\widetilde{D}_0\widetilde{A}_-,
\)
with $\widetilde{A}_-$ ($\widetilde{A}_+$) an invertible lower (upper) block-triangular BBT operator and 
\(
\widetilde{D}_0 = \sum_{m=1}^{d} T^{\ind_m} |m\rangle\langle m|.
\)
\end{thm}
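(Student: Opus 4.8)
The plan is to construct the two side factors by hand from the data assembled so far and then to verify the three defining properties. The lower factor is already available from Proposition~\ref{propAminus}: $\widetilde{A}_-$ is an invertible lower-triangular BBT operator with $\widetilde{A}_-^{-1}\,|j\rangle|m\rangle = (T^\dagger)^j|\psi_m\rangle$. Two features of the present setup simplify matters. First, by Lemma~\ref{lem:basis} every partial index $\ind_m$ lies in $\{0,1,\dots,q-p\}$, so the diagonal factor $\widetilde{D}_0 = \sum_{m=1}^d T^{\ind_m}|m\rangle\langle m|$ is itself a function of $T$ alone. Second, applying $\widetilde{A}_-^{-1}$ on the right of the target identity and testing on the canonical basis shows that $\widetilde{A}_0 = \widetilde{A}_+\widetilde{D}_0\widetilde{A}_-$ is equivalent to
\[
\widetilde{A}_+\,\widetilde{D}_0\,|j\rangle|m\rangle \;=\; \widetilde{A}_0\,(T^\dagger)^{j}|\psi_m\rangle ,\qquad j\ge 0 .
\]
Since $\widetilde{D}_0|j\rangle|m\rangle = |j-\ind_m\rangle|m\rangle$ for $j\ge \ind_m$ and vanishes for $0\le j<\ind_m$, this dictates the definition $\widetilde{A}_+|l\rangle|m\rangle \equiv \widetilde{A}_0(T^\dagger)^{\,l+\ind_m}|\psi_m\rangle$ for $l\ge 0$, provided the relation is consistent on the range $0\le j<\ind_m$. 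The latter I would establish first, by noting that $(T^\dagger)^j|\psi_m\rangle\in\Ker\widetilde{A}_{\ind_m-1-j}\subseteq\Ker\widetilde{A}_0$ whenever $j<\ind_m$ (downward nesting of the kernels, Lemma~\ref{lem:inclusions}), so that the right-hand side indeed vanishes there.

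Next I would show that $\widetilde{A}_+$, so defined, is an upper-triangular banded block-Toeplitz operator, in close parallel with the treatment of $\widetilde{A}_-$ in Proposition~\ref{propAminus}. The two algebraic identities to check are $T\widetilde{A}_+T^\dagger=\widetilde{A}_+$ (block-Toeplitz) and $[\widetilde{A}_+,T]=0$ (upper-triangularity, i.e.\ that $\widetilde{A}_+$ is a function of $T$ alone). Both collapse to the single structural fact that $\widetilde{A}_0$ is itself block-Toeplitz, $T\widetilde{A}_0T^\dagger=\widetilde{A}_0$: for $l\ge 1$ one moves one shift across $\widetilde{A}_0$ using this relation, while for the boundary vector $l=0$ one uses instead $\widetilde{A}_{\ind_m}=\widetilde{A}_0(T^\dagger)^{\ind_m}$ together with $T\widetilde{A}_{\ind_m}=\widetilde{A}_{\ind_m-1}$ and the vanishing $\widetilde{A}_{\ind_m-1}|\psi_m\rangle=0$ (which holds because $|\psi_m\rangle\in\mathcal{D}_{\ind_m}\subseteq\Ker\widetilde{A}_{\ind_m-1}$). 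Bandedness would then follow from a commutator identity of the form $[\,T^{\,q-p}\widetilde{A}_+,T^\dagger\,]=0$, exactly as the bandedness of $\widetilde{A}_-$ was obtained, the degree bound being controlled by $q-p$.

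Finally I would verify the factorization and the invertibility of $\widetilde{A}_+$. The identity $\widetilde{A}_0=\widetilde{A}_+\widetilde{D}_0\widetilde{A}_-$ holds on the dense span of $\{(T^\dagger)^j|\psi_m\rangle\}$ (dense by Lemma~\ref{lem:basis}): $\widetilde{A}_-$ returns $|j\rangle|m\rangle$, $\widetilde{D}_0$ produces $|j-\ind_m\rangle|m\rangle$ or $0$, and $\widetilde{A}_+$ sends this to $\widetilde{A}_0(T^\dagger)^j|\psi_m\rangle$ by construction; boundedness of all three factors extends the identity to all of $\ell^2(\mathbb{N})\otimes\mathbb{C}^d$. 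For invertibility, injectivity of $\widetilde{A}_+$ is immediate: $\widetilde{A}_+|\alpha\rangle=0$ forces $\sum_{l,m}\alpha_{lm}(T^\dagger)^{l+\ind_m}|\psi_m\rangle\in\Ker\widetilde{A}_0$, and since $\Ker\widetilde{A}_0$ is spanned by the $(T^\dagger)^{r}|\psi_m\rangle$ with $0\le r<\ind_m$ (Lemma~\ref{lem:basis}) while the displayed vector only involves exponents $r=l+\ind_m\ge\ind_m$, the linear independence of the $\{(T^\dagger)^{r}|\psi_m\rangle\}$ (Corollary~\ref{coro:linind}) forces every $\alpha_{lm}$ to vanish. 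Surjectivity with bounded inverse is the part I expect to be the main obstacle; I would obtain it either by observing that the non-negativity of all partial indices makes $\widetilde{D}_0$, and hence $\widetilde{A}_0$, have trivial cokernel, so that $\mathrm{Range}\,\widetilde{A}_+=\mathrm{Range}\,\widetilde{A}_0$ exhausts the whole space, or, more self-containedly, by the Hardy-space estimate used for $\widetilde{A}_-^{-1}$ in Proposition~\ref{propAminus}, applied to the map $\widetilde{A}_0(T^\dagger)^{l+\ind_m}|\psi_m\rangle\mapsto|l\rangle|m\rangle$. Given $\widetilde{A}_+$ bounded and bijective, the open mapping theorem yields a bounded inverse, completing the factorization $\widetilde{A}_0=\widetilde{A}_+\widetilde{D}_0\widetilde{A}_-$.
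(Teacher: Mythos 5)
Your construction coincides with the paper's: the same defining formula $\widetilde{A}_+\colon|j\rangle|m\rangle\mapsto\widetilde{A}_0(T^\dagger)^{j+\ind_m}|\psi_m\rangle$, the same reduction of the Toeplitz and upper-triangularity checks to $T\widetilde{A}_0T^\dagger=\widetilde{A}_0$ together with $\widetilde{A}_{\ind_m-1}|\psi_m\rangle=0$, and the same trivial-kernel argument via the linear independence of Corollary~\ref{coro:linind}. Your preliminary consistency check on the range $0\le j<\ind_m$ (using $(T^\dagger)^{j}|\psi_m\rangle\in\Ker\widetilde{A}_{\ind_m-1-j}\subseteq\Ker\widetilde{A}_0$) is correct and makes explicit a point the paper leaves tacit. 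One small slip: the bandedness identity should be $[\widetilde{A}_+(T^\dagger)^{q-p},\,T^\dagger]=0$, as in the paper, not $[T^{q-p}\widetilde{A}_+,\,T^\dagger]=0$; the operator $T^{q-p}\widetilde{A}_+$ contains only positive powers of $T$ and cannot commute with $T^\dagger$ unless it is constant, so your stated commutator would force a contradiction rather than bandedness.

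The genuine gap is the surjectivity of $\widetilde{A}_+$, which you correctly flagged as the main obstacle, but neither of your two proposed routes closes it. Route (a) is circular: inferring triviality of the cokernel of $\widetilde{A}_0$ from that of $\widetilde{D}_0$ uses the factorization with \emph{invertible} side factors, i.e.\ the very invertibility of $\widetilde{A}_+$ under proof; at this stage the $\ind_m$ are defined combinatorially through the spaces $\mathcal{D}_\ind$ and nothing yet ties them to $\mathrm{Coker}\,\widetilde{A}_0$. Route (b) also fails: the Hardy-space estimate of Proposition~\ref{propAminus} works because $\widetilde{A}_-^{-1}$ is explicitly multiplication by functions $\psi_m(z)$ with absolutely convergent Fourier series; the map $\widetilde{A}_0(T^\dagger)^{l+\ind_m}|\psi_m\rangle\mapsto|l\rangle|m\rangle$ has no such multiplier form (note $\widetilde{A}_0(T^\dagger)^{l+\ind_m}|\psi_m\rangle=\widetilde{A}_{\ind_m}(T^\dagger)^{l}|\psi_m\rangle\neq(T^\dagger)^{l}\widetilde{A}_{\ind_m}|\psi_m\rangle$), and in any case a multiplier upper bound can only bound an operator from above, never certify invertibility or a lower bound. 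The paper instead writes $\widetilde{A}_+=\widetilde{A}_0\,\widetilde{A}_-^{-1}\bigl(\sum_m(T^\dagger)^{\ind_m}|m\rangle\langle m|\bigr)$, concludes Fredholmness, and computes $\mathrm{index}(\widetilde{A}_+)=\mathrm{index}(\widetilde{A}_0)-\sum_m\ind_m=0$ by additivity, so that your trivial-kernel argument already yields invertibility. The only external input, $\mathrm{index}(\widetilde{A}_0)=\sum_m\ind_m$ (equivalently $\mathrm{Coker}\,\widetilde{A}_0=\{0\}$), follows non-circularly from Lemma~\ref{lem:kernel}: its dimension counts give $\dim\Ker\widetilde{A}_{\ind-1}-\dim\Ker\widetilde{A}_{\ind}=d$ for all $\ind\le 0$, and since $\widetilde{A}_{-k}=T^{k}\widetilde{A}_0$ with $\dim\Ker T^{k}=kd$, this maximal jump forces $\Ker T^{k}\subseteq\mathrm{Range}\,\widetilde{A}_0$ for every $k$; the range thus contains all finitely supported vectors, is dense, and being closed (Fredholm) is everything. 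Your identification $\mathrm{Range}\,\widetilde{A}_+=\mathrm{Range}\,\widetilde{A}_0$ is sound, so substituting this jump argument for your justification in route (a) completes the proof.
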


\begin{proof}

Let $\widetilde{A}_+$ be the operator defined on ${\cal B}_C$ by its action
as follows: for every $j\in \mathbb{N}$ and $m=1,\dots,d$,
\begin{eqnarray}
\label{Aminus}
\widetilde{A}_+: |j\rangle |m\rangle \mapsto \widetilde{A}_0(T^\dagger)^{j+\ind_m} |\psi_{m}\rangle. \nonumber
\end{eqnarray}
The proof of block-Toeplitz property is very similar to the earlier case,
\begin{eqnarray*}
T\widetilde{A}_+T^\dagger |j\rangle |m\rangle  = T\widetilde{A}_+ |j+1\rangle |m\rangle
= T \widetilde{A}_0(T^\dagger)^{j+\ind_m+1} |\psi_{m}\rangle = 
\widetilde{A}_0(T^\dagger)^{j+\ind_m} |\psi_{m}\rangle = \widetilde{A}_+|j\rangle |m\rangle.
\end{eqnarray*}
In the second to last step, we used $T\widetilde{A}_0T^\dagger = \widetilde{A}_0$,
which is true because $\widetilde{A}_0$ is block-Toeplitz.
To prove that $\widetilde{A}_+$ is upper-triangular, we will show that $[\widetilde{A}_+,T]=0$.
For $j\ge 1$, we have
\begin{eqnarray*}
 \widetilde{A}_+ T |j\rangle|m\rangle = \widetilde{A}_+ |j-1\rangle |m\rangle  
 = \widetilde{A}_0(T^\dagger)^{j+\ind_m-1}|\psi_m\rangle 
= T\widetilde{A}_0(T^\dagger)^{j+\ind_m}|\psi_m\rangle
= T\widetilde{A}_+  |j\rangle|m\rangle.
\end{eqnarray*}
For $j=0$, the left hand side yields $\widetilde{A}_+ T |j=0\rangle|m\rangle =0$, whereas right hand side
leads to
\[
T\widetilde{A}_+  |j = 0\rangle|m\rangle = T\widetilde{A}_0(T^\dagger)^{\ind_m}|\psi_m\rangle
= \widetilde{A}_0(T^\dagger)^{\ind_m-1}|\psi_m\rangle = 0.
\]
For bandedness, it suffices to prove that $[\widetilde{A}_+ (T^\dagger)^{q-p},T^\dagger] = 0$. We have
\[
\widetilde{A}_+ (T^\dagger)^{q-p}T^\dagger|j\rangle|m\rangle = 
\widetilde{A}_+ |j +q-p+1\rangle|m\rangle = \widetilde{A}_0(T^\dagger)^{j+q-p+1+\ind_m} |\psi_{m}\rangle.
\]
For the other term of the commutator,
\[
T^\dagger\widetilde{A}_+ (T^\dagger)^{q-p}|j\rangle|m\rangle =
T^\dagger\widetilde{A}_+ |j+q-p\rangle|m\rangle =
T^\dagger\widetilde{A}_0 (T^\dagger)^{j+q-p+\ind_m} |\psi_m\rangle.
\]
Since $\ind_m \ge 0$ and $\widetilde{A}_0 (T^\dagger)^{q-p}$ is a function of $T^\dagger$ only,
we can push $T^\dagger$ to the right to obtain
\[
T^\dagger\widetilde{A}_+ (T^\dagger)^{q-p}|j\rangle|m\rangle =
\widetilde{A}_0(T^\dagger)^{j+q-p+1+\ind_m} |\psi_{m}\rangle,
\]
as desired. Having proved that $\widetilde{A}_+$ is banded block-Toeplitz, it follows that it can be uniquely extended to $\ell_2(\mathbb{N})\otimes \mathbb{C}_d$.

We now prove the invertibility of $\widetilde{A}_+$.
It suffices to show that $\widetilde{A}_+$ is Fredholm 
and $\Ker\,\widetilde{A}_+$, $\Ker\,\widetilde{A}_+^\dagger$ are trivial. 
Note that 
\begin{equation}
\widetilde{A}_+ = \widetilde{A}_0\widetilde{A}_-^{-1}
\left(\sum_{m}(T^\dagger)^{\ind_m}|m\rangle\langle m|\right).
\end{equation}
Since each of the three factors on the right hand-side are Fredholm, $\widetilde{A}_+$ is also Fredholm. 
For any $|\alpha\rangle = \alpha_{jm}|j\rangle|m\rangle \in 
\ell_2(\mathbb{N})\otimes\mathbb{C}^d$, we have 
\[
\widetilde{A}_+ \sum_{j,m}\alpha_{jm}|j\rangle|m\rangle = 
\widetilde{A}_0 (\sum_{j,m} \alpha_{jm}(T^\dagger)^{j+\ind_m}|\psi_m\rangle).
\]
Since the terms in the bracket on the right hand side
are linearly independent as in Corollary \ref{coro:linind}
and do not belong to the kernel of $\widetilde{A}_0$, therefore 
$\widetilde{A}_+$ has trivial kernel. 
We now show that $\widetilde{A}_0^\dagger$ has trivial kernel.
We use the additive property of Fredholm index, we have
\begin{equation}
\text{index}(\widetilde{A}_+) = 
\text{index}(\widetilde{A}_0) + \text{index}(\widetilde{A}_-^{-1}) 
+ \text{index}\left(\sum_{m}(T^\dagger)^{\ind_m}|m\rangle\langle m|\right).
\end{equation}
Now, $\text{index}(\widetilde{A}_0) = \dim\ker\,\widetilde{A_0} - 
\dim\ker\,\widetilde{A_0}^\dagger = \sum_m \ind_m$, using 
Lemma \ref{lem:kernel}. We also have $\text{index}(\widetilde{A}_-^{-1}) = 0$,
due to the invertibility of $\widetilde{A}_-$, whereas 
$\text{index}\left(\sum_{m}(T^\dagger)^{\ind_m}|m\rangle\langle m|\right)
= -\sum_m \ind_m$ follows from the elementary properties of the shift operator.
Therefore, $\text{index}(\widetilde{A}_+) = 0$. Since $\dim\ker\,\widetilde{A}_+ =0$,
we have $\dim\ker\,\widetilde{A}_+^\dagger = \dim\ker\,\widetilde{A}_+
-\text{index}(\widetilde{A}_+) =0$.

Having proved that $\widetilde{A}_+$ and $\widetilde{A}_-$ satisfy the required properties,
the factorization of $\widetilde{A}_0$ is verified easily by noting the action of $\widetilde{D}$,
\[
\widetilde{D}: |j\rangle |m\rangle \mapsto \left\{ \begin{array}{lcl}
|j-\ind_m\rangle|m\rangle & \text{if} & j\ge \ind_m\\
0 & \text{if} & j<\ind_m \end{array}\right..
\]
\end{proof}

Let  $A_+(z^{-1}), D_0(z,z^{-1})$ and $A_-(z)$ stand for the symbols of \(\widetilde{A}_0\),
$\widetilde{A}_+, \widetilde{D}$ and $\widetilde{A}_-$, respectively.

\begin{coro}
\label{coro:wh} 
\label{coro:wh2} 
\
\begin{enumerate}
\item
The symbol $A_0(z,z^{-1})$
of the Fredholm BBT operator \(\widetilde{A}\) is invertible 
 on the unit circle.  
\item
$A_0(z,z^{-1}) = A_+(z) D_0(z,z^{-1}) A_-(z^{-1})$ is a WH  factorization.
\end{enumerate}
\end{coro}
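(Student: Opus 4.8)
The plan is to push the operator factorization of Theorem \ref{thm:wh} through the symbol map and then check that the resulting matrix-function factorization satisfies every clause of Definition \ref{def:wh}; invertibility of $A_0$ on the unit circle (the first assertion) will then fall out as a byproduct of the invertibility of the side factors.

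First I would take symbols. By Theorem \ref{thm:wh}, $\widetilde{A}_+$ is an invertible upper-triangular BBT operator, hence a polynomial in $T$ alone, $\widetilde{A}_-$ is an invertible lower-triangular BBT operator, hence a polynomial in $T^\dagger$ alone, and $\widetilde{D}_0=\sum_{m}T^{\ind_m}|m\rangle\langle m|$ is a genuine BBT operator. This is precisely the ``upper $\cdot$ middle $\cdot$ lower'' configuration for which the symbol map is multiplicative, as recorded in Sec.\,\ref{func}. Applying it to $\widetilde{A}_0=\widetilde{A}_+\widetilde{D}_0\widetilde{A}_-$ gives
\[
A_0(z,z^{-1}) = A_+(z)\, D_0(z,z^{-1})\, A_-(z^{-1}),
\]
with $A_+\in\C_d[z]$, $A_-\in\C_d[z^{-1}]$, and $D_0=\sum_{m}z^{\ind_m}|m\rangle\langle m|$ already reverse-ordered, $\ind_1\ge\cdots\ge\ind_d$, by the relabeling fixed in the proof of Theorem \ref{thm:wh}.

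Next I would show that $A_+(z)$ is invertible on the closed disk $|z|\le1$. For $|z|<1$ the vector $|z\rangle\otimes v$ lies in $\ell^2(\mathbb{N})\otimes\C^d$ and, using $T^r|z\rangle=z^r|z\rangle$, satisfies $\widetilde{A}_+(|z\rangle\otimes v)=|z\rangle\otimes A_+(z)v$; were $A_+(z_0)$ singular for some $|z_0|<1$, this would furnish a nonzero kernel vector of the invertible operator $\widetilde{A}_+$, a contradiction, so $A_+$ is invertible on the open disk. The same identity yields $\widetilde{A}_+^{-1}(|z\rangle\otimes v)=|z\rangle\otimes A_+(z)^{-1}v$, whence $\norm{A_+(z)^{-1}}\le\norm{\widetilde{A}_+^{-1}}$ for every $|z|<1$, i.e.\ $A_+^{-1}$ is uniformly bounded on the open disk. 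The delicate point is the boundary $|z|=1$, where the eigenvector argument fails because $|z\rangle\notin\ell^2$: this is the main obstacle. I would dispatch it by a limiting argument. Since $A_+$ is a matrix polynomial it is continuous, and the invertible matrices form an open set on which inversion is continuous; so if $A_+(z_0)$ were singular at some $|z_0|=1$, picking $z_n\to z_0$ with $|z_n|<1$ would give $A_+(z_n)\to A_+(z_0)$ with each $A_+(z_n)$ invertible, forcing $\norm{A_+(z_n)^{-1}}\to\infty$ and contradicting the uniform bound $\sup_{|z|<1}\norm{A_+(z)^{-1}}\le\norm{\widetilde{A}_+^{-1}}<\infty$. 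Hence $A_+$ is invertible on all of $|z|\le1$.

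Finally, I would obtain the companion statement for $A_-$ by applying the identical argument to $\widetilde{A}_-^\dagger$, which is again an invertible upper-triangular BBT operator whose symbol $A_-^\dagger(z)$ is a polynomial in $z$; invertibility of $A_-^\dagger$ on $|z|\le1$ is equivalent to invertibility of $A_-(z^{-1})$ on $|z^{-1}|\le1$. With all three factors invertible on the circle (the diagonal $D_0$ trivially so, away from $z=0,\infty$), the product $A_0=A_+D_0A_-$ is invertible there, establishing the first assertion; and the displayed factorization now meets properties 1--3 of Definition \ref{def:wh}, establishing the second. The entire difficulty is concentrated in the passage to the boundary, which is resolved by coupling the operator-norm control of $\widetilde{A}_+^{-1}$ to the blow-up of $\norm{A_+(z_n)^{-1}}$ near a hypothetical singular boundary value.
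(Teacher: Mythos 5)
Your proposal is correct, and its skeleton---push the operator factorization of Theorem~\ref{thm:wh} through the restrictedly multiplicative symbol map, then check clauses 1--3 of Definition~\ref{def:wh}, with invertibility of $A_0$ on the circle falling out of invertibility of the factors---is exactly the paper's. Where you genuinely diverge is in how invertibility of the analytic side factor on the \emph{closed} disk is established. The paper dispatches this in one line: $\widetilde{A}_+$ is a matrix polynomial in $T$, the spectrum of $T$ is the closed unit disk, and invertibility of $\widetilde{A}_+$ forces invertibility of $A_+(z)$ there. That is a spectral-mapping argument which, for matrix-valued (rather than scalar) polynomials, silently requires an extra step---e.g.\ passing to $\det A_+$ inside the commutative algebra generated by $T$---that the paper leaves unstated. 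You instead argue by hand: the coherent states $|z\rangle$, $|z|<1$, intertwine $\widetilde{A}_+$ with the matrix $A_+(z)$ via $\widetilde{A}_+(|z\rangle\otimes v)=|z\rangle\otimes A_+(z)v$, which kills singularities in the open disk, and the same intertwining applied to $\widetilde{A}_+^{-1}$ yields the uniform bound $\norm{A_+(z)^{-1}}\le\norm{\widetilde{A}_+^{-1}}$, which rules out boundary singularities because $\norm{A_+(z_n)^{-1}}$ must blow up (continuity of the smallest singular value) as $z_n$ approaches a hypothetical singular point on $|z|=1$. You correctly flag why the eigenvector argument alone cannot reach the boundary---unit-circle points lie in the spectrum of $T$ but are not eigenvalues---which is precisely the gap the paper's spectral one-liner papers over. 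Your handling of $A_-$ via the adjoint $\widetilde{A}_-^\dagger$ mirrors the paper's symmetry between the two side factors. Net assessment: your route is more elementary and more complete at the boundary, at the cost of length; the paper's argument is shorter but leans on an unproved matrix spectral-mapping step.
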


\begin{proof}
The factorization of $A_0$ follows from the multiplicative property of the symbol, therefore we only need to
show that the three factors satisfy the requirements in Definition\,\ref{def:wh}.
Because $\widetilde{A}_+$ is upper-triangular banded block-Toeplitz operator, we have $A_+ \in \mathbb{C}_d[z]$.
$\widetilde{A}_+$ is a polynomial in $T$, whose spectrum coincides with the unit disk. Since $\widetilde{A}_+$
is invertible, $A_+(z)$ is invertible on the unit disk. Similarly, since $\widetilde{A}_-$ is lower-triangular 
banded block-Toeplitz operator, we have $A_- \in \mathbb{C}_d[z^{-1}]$. $\widetilde{A}_-$ is a function
of $T^\dagger$, whose spectrum also coincides with the unit disk. Therefore, $A_-$ is invertible, while
$z^{-1}$ takes values on the unit disk. By definition,
\(
D(z) = \sum_{m=1}^{d} z^{\ind_m} |m\rangle\langle m|.\)
\end{proof}

\section{Proof of the symmetric Wiener-Hopf factorizations}
\label{app:symwhproofs}

In this appendix we collect the proofs of the Theorems \ref{thm:symwh1} - \ref{thm:symwh5}. In the following, $\mathcal{K}$ denotes the set
of symmetric partial indices of $A$ as defined in Sec.~\ref{sec:symwh}.
\smallskip

\noindent
\textbf{Theorem 3.1} (Symmetric factorization over \(\mathbb{R}\) and \(\mathbb{H}\))\textbf{.}
\textit{
Let $A \in\F_d[z,z^{-1}]$ denote a matrix Laurent polynomial with coefficients from
\(\mathbb{F}_d\), $\F=\R$ or \(\H\). If \(A(z)\), regarded as a complex
matrix Laurent polynomial, is invertible on the unit circle, then there exists a factorization 
$A = A_+DA_-$ such that 
\[A_+ \in \F_d[z], \quad
 A_- \in \F_d[z^{-1}], \quad D(\mathcal{K}) = \sum_{m=1}^{d}z^{\ind_m}|m\rangle\langle m|\bm{1}_\F,
\]
in terms of $\{\ind_1,\dots,\ind_d\} = \mathcal{K}$ and $A_+(z)$ ($A_-(z)$) is invertible inside (outside) the unit circle. 
}

\begin{proof}
It suffices to prove that an  
$A_- \in \mathbb{F}_d[z^{-1}]$
constructed following the construction in the proof of Prop.~\ref{propAminus}. 
If this is true, then $A_+ = AA_-^{-1}D^{-1} \in \mathbb{F}_d[z]$ 
(Note that $D$ is invertible on the unit circle).
For the case with real entries, the vectors $\{|\psi_{\ind,s}\rangle,\ s=1,\dots,s_{\ind}\}$
can be chosen to have real entries, since they belong to the kernel of $\widetilde{A}_{\ind-1}$,
which is a real block-Toeplitz operator. This choice of $\{|\psi_{\ind,s}\rangle\}$ ensures that
$\widetilde{A}_-$ in Eq.\,\eqref{Aminus} is real, therefore $A_- \in \mathbb{R}_d[z^{-1}]$.
For the case quaternionic entries, the kernel of quaternionic
block-Toeplitz operators always appear in Kramers' pairs. If
$|\psi\rangle$ is a kernel vector of some quaternionic block-Toeplitz operator,
then so is $i\sigma_y\cc|\psi\rangle$. Following Corollary \ref{coro:inc},
we find that $s_\ind$ is always even for any integer $\ind$.
We may choose a basis 
$\{|\psi_{\ind,s}\rangle,\ s=1,\dots,s_{\ind}\}$ of $\mathcal{V}_\ind$ 
for each $\ind\in\mathbb{Z}$ such that $|\psi_{\ind,2s}\rangle = i\sigma_y\cc|\psi_{\ind,2s-1}\rangle$. 
It is easy to see that $\widetilde{A}_-$ constructed using Eq.\,\eqref{Aminus} has quaternionic entries, so that 
$A_- \in \mathbb{H}_{d}[z^{-1}]$.
\end{proof}

The proof of Theorems \ref{thm:symwh2}-\ref{thm:symwh5} rely on Theorem \ref{lem:uniquepi} and the following Lemma.
The \textit{signature} \(s\) of a Hermitian, invertible, real or complex matrix is the difference between
the number of its positive and its negative eigenvalues. For quaternionic matrices, we define the signature 
to be half of this quantity.

\begin{lem}[Symmetric factorizations of constant matrix Laurent polynomials]
\label{lem:constantfact}
Let \(C\in\F_d\) denote an invertible matrix with \(\F=\R, \C\) or \(\H\).
\begin{enumerate}
\item 
If $C \in \F_d$ is Hermitian ($C=C^\dagger$) with signature
signature $s$, then there exists  $G \in \F_d$ invertible such that
\(
C = G\left(\sum_{m=1}^{d}s_m|m\rangle\langle m|\bm{1}_\F \right)G^\dagger,
\)
$s_m=\pm 1$, $s_m>s_{m'}$ if and only if $m>m'$.

\item 
Let \(d\) be even and \(\F=\mathbb{R}\) or \(\C\). If $C \in \F_d$ is skew-symmetric ($C= - C^{\rm T}$),
then there exists $G \in \F_d$ invertible, such that 
\(
C = G\left(\sum_{m=1}^{d/2}\big(|\bar{m}\rangle\langle m| - 
|m\rangle\langle \bar{m}|\big)\right)G^{\rm T}.
\)

\item 
If $C \in \C_d$ is symmetric ($C=C^{\rm T}$), then there exists 
$G \in \C_d$ invertible, such that 
\(
C = G\left(\sum_{m=1}^{d}|m\rangle\langle m|\right)G^{\rm T}.
\)

\item 
If $C \in \H_d$ is skew-Hermitian ($C=-C^\dagger$), then there exists $G \in \H_d$ invertible, 
such that 
\(
C = G\left(\sum_{m=1}^{d}|m\rangle\langle m|\bm{k}\right)G^\dagger.
\)

\end{enumerate}
\end{lem}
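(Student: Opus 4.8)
The plan is to reinterpret the invertible matrix $C$ as the Gram matrix of a nondegenerate form on $\F^d$ --- Hermitian in (1), alternating in (2), symmetric in (3), and skew-Hermitian in (4) --- and to build the invertible congruence $G$ as the matrix whose columns form a basis adapted to that form. Writing $\langle x,y\rangle \equiv x^\dagger C y$ (or $x^{\rm T} C y$ in the bilinear cases (2) and (3)), the target diagonal matrices in each item are precisely the Gram matrices of the form in a suitably normalized basis, so producing such a basis with entries in $\F$ immediately furnishes $G\in\F_d$ together with the asserted identity $C = G\,(\text{target})\,G^{\dagger/{\rm T}}$. I would run an induction on $d$: peel off a one- or two-dimensional nondegenerate subspace on which the form is already in normal shape, pass to its orthogonal complement (again nondegenerate and of the same type), and recurse. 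Since every step only rescales by $\F$-scalars and solves $\F$-linear equations, the resulting $G$ automatically has entries in $\F$; a final permutation $P$ (with $P^{-1}=P^{\rm T}=P^\dagger$ and entries in $\{0,1\}\subset\F$) reorders the basis as prescribed in each item and is absorbed into $G\mapsto GP$.

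For the non-alternating items (1), (3) and (4), the inductive step rests on finding a vector $v$ with nonvanishing self-pairing $\langle v,v\rangle$, which exists whenever the form is nonzero: were all self-pairings to vanish, polarization together with right multiplication by $\bm i,\bm j,\bm k$ would force $\langle x,y\rangle=0$ for all $x,y$, contradicting invertibility of $C$. Having found $v$, I normalize it. In (1) the quantity $\langle v,v\rangle = v^\dagger C v$ is real (it equals its own conjugate), so a real rescaling makes it $\pm 1$, the sign being a genuine invariant. In (3) over $\C$ one rescales $v$ by a complex square root of $\langle v,v\rangle^{-1}$ to make it exactly $+1$, so no sign survives. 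In (4) the self-pairing is a purely imaginary quaternion, and right multiplication $v\mapsto v\mu$ sends $\langle v,v\rangle\mapsto \bar\mu\,\langle v,v\rangle\,\mu$, so choosing $\mu$ from the transitive action of quaternion conjugation on the unit imaginary quaternions normalizes it to $\bm k$ (cf.\ Eq.\,\eqref{quat}). Splitting $\F^d = \F v \oplus v^\perp$ and noting that the restriction to $v^\perp$ is again nondegenerate of the same type closes the induction; collecting and ordering the signs in (1) reproduces the signed identity there, with $\sum_m s_m$ equal to the signature $s$ by construction (and to half the complex signature in the quaternionic case, since each sign doubles in the $2\times 2$ representation).

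Item (2) is structurally different, because the alternating form satisfies $\langle v,v\rangle \equiv 0$ identically and no self-pairing can be normalized. Here the inductive step produces a hyperbolic pair instead: pick any $e\neq 0$, use nondegeneracy to find $f$ with $\langle e,f\rangle\neq 0$, and rescale $f$ by an $\F$-scalar (valid over both $\R$ and $\C$) so that $\langle e,f\rangle=1$. The plane $\mathrm{Span}\{e,f\}$ is then nondegenerate, $\F^d$ splits as this plane $\oplus$ its orthogonal complement, and the form on the complement is again nondegenerate alternating of dimension $d-2$. Iterating yields a symplectic basis $\{e_1,f_1,\dots,e_{d/2},f_{d/2}\}$, and a final reordering into $\{e_1,\dots,e_{d/2},f_{d/2},\dots,f_1\}$ pairs $m$ with $\bar m=d+1-m$ and reproduces $\sum_{m=1}^{d/2}\big(|\bar m\rangle\langle m| - |m\rangle\langle \bar m|\big)$ exactly.

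The main obstacle I anticipate is the bookkeeping in the quaternionic items (1) and (4), where noncommutativity complicates both the existence argument and the normalization: one must run the polarization identity over $\H$ and exploit right multiplication by $\bm i,\bm j,\bm k$ to exclude an identically real (resp.\ identically zero) form, and in (4) one must check that conjugation $\mu(\cdot)\bar\mu$ acts transitively on the unit imaginary quaternions and carries any imaginary self-pairing to $\bm k$, leaving no residual sign (consistently with $\bm i\,\bm k\,\bm i^{-1}=-\bm k$). A lighter alternative I would note replaces the hand-built induction by standard canonical forms: Sylvester's law of inertia for (1), the Darboux symplectic normal form for (2), the Autonne--Takagi factorization $C=U\Sigma U^{\rm T}$ for (3) with $G=U\Sigma^{1/2}$, and the quaternionic spectral theorem for anti-Hermitian matrices for (4); in each case $G$ is obtained after absorbing a diagonal square root and a permutation.
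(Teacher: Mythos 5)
Your proposal is correct, but it takes a genuinely different route from the paper. The paper proves all four items by citing classical matrix factorizations: unitary/orthogonal diagonalization of Hermitian matrices over \(\R\), \(\C\), \(\H\) for item 1, the Youla decomposition \(C=U\Sigma U^{\rm T}\) with \(2\times 2\) antisymmetric blocks for item 2, the Autonne--Takagi factorization for item 3, and the quaternionic spectral theorem (block-diagonalization with blocks \(\mathrm{diag}(i\lambda,-i\lambda)\), followed by explicit \(2\times2\) identities to handle the sign of \(\lambda\)) for item 4; in each case \(G\) is obtained by absorbing a square root of the eigenvalue or singular-value factor and permuting. Your main argument instead works from first principles with the nondegenerate form \(\langle x,y\rangle\) whose Gram matrix is \(C\), and builds the congruence by induction on \(d\): a normalized anisotropic vector in items 1, 3, 4 (with the polarization-plus-\(\bm{i},\bm{j},\bm{k}\) argument correctly excluding a totally isotropic, respectively identically real, form over \(\H\)) and a hyperbolic pair in item 2, followed by passage to the orthogonal complement. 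Notably, your closing ``lighter alternative'' paragraph is essentially the paper's proof verbatim. What each approach buys: yours is self-contained, exploits only what the lemma actually asserts --- an invertible congruence rather than a unitary one --- and in particular avoids importing the quaternionic spectral theory the paper cites; the paper's route is shorter and delegates the analytic content to standard references. Two small bookkeeping points in your write-up, neither a real gap: the matrix \(G\) of the lemma is the inverse-adjoint (or inverse-transpose) of your basis matrix \(B\), since \(B^{\dagger}CB\) (resp.\ \(B^{\rm T}CB\)) equals the normal form, so \(G=(B^{-1})^{\dagger}\) (resp.\ \((B^{-1})^{\rm T}\)), which stays in \(\F_d\); and in item 2 your ordering \(\{e_1,\dots,e_{d/2},f_{d/2},\dots,f_1\}\) yields Gram entries \(+1\) at positions \((m,\bar m)\), i.e.\ the negative of the stated normal form, which is fixed by swapping the roles of \(e_m\) and \(f_m\) (or negating each \(f_m\)) --- the kind of rescaling your construction already permits.
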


\begin{proof}
1. This factorization follows from the orthogonal or unitary 
diagonalization of $C = UDU^\dagger$ which exists for $\F \in \{\R,\C,\H\}$ \cite{Horn,Loring12}.

2. By the Youla decomposition \cite{Youla}, every real (complex) skew-symmetric matrix $C$ can be decomposed as $C = U\Sigma U^{\rm T}$ where $U$ is a real orthogonal (complex unitary) matrix and $\Sigma$ is a block-diagonal matrix with $2\times 2$ blocks of the form $\begin{bmatrix}0 & \lambda \\ -\lambda & 0\end{bmatrix}$ with $\lambda \ge 0$. The claim follows.

3. By the Autonne-Takagi factorization \cite{Horn},  
every complex symmetric matrix $C$ can be factorized as $C = U\Sigma U^{\rm T}$, where $U$ is unitary
and $\Sigma$ is a diagonal matrix of singular values of $C$. 
Since $C$ is invertible in our case, all singular values are strictly positive and the claim follows. 

4. This result follows from diagonalization of 
quaternionic matrices \cite{Loring12}: every quaternionic skew-Hermitian matrix $C$ can be factorized as $C = UDU^\dagger$, where $U$ is quaternionic 
unitary and $D$ is block-diagonal with $2\times 2$ blocks of the form 
$\begin{bmatrix}i\lambda & 0 \\ 0 & -i\lambda \end{bmatrix}$. To further
factorize $D$, we use the identities
\[
\begin{bmatrix}i\lambda & 0 \\ 0 &-i\lambda \end{bmatrix} = 
\left\{ \begin{array}{ccl}
\begin{bmatrix}0 & i\sqrt{|\lambda|} \\ i\sqrt{|\lambda|} & 0  \end{bmatrix}
\begin{bmatrix}i & 0 \\ 0 &-i \end{bmatrix}
\begin{bmatrix}0 & -i\sqrt{|\lambda|} \\ -i\sqrt{|\lambda|} & 0  \end{bmatrix}
& \text{if} & \lambda < 0 \\[12pt]
\begin{bmatrix}\sqrt{|\lambda|} & 0 \\ 0 & \sqrt{|\lambda|}  \end{bmatrix}
\begin{bmatrix}i & 0 \\ 0 &-i \end{bmatrix}
\begin{bmatrix}\sqrt{|\lambda|} & 0 \\ 0 & \sqrt{|\lambda|}  \end{bmatrix}
& \text{if} & \lambda > 0 \end{array}\right..
\]

\end{proof}
\smallskip

\noindent\textbf{Theorem 3.2} (A  factorization of Hermitian matrix Laurent polynomials)\textbf{.}
{\it 
Let $A=A^\dagger \in\F_d[z,z^{-1}]$ be invertible on the unit circle for \(\F=\R\), \(\C\), or \(\H\).
There exists a factorization 
$A = A_+ D_1 
A_+^\dagger$ such that $A_+ \in \F_d[z]$
is invertible for \(|z|\leq 1\)
and
\begin{align*}
D_1
=\sum_{m \in \mathcal{M}_+}(z^{\ind_m} |\bar{m}\rangle\langle{m} | 
+z^{-\ind_m}|m\rangle\langle \bar{m}|)\bm{1}_\F+ \sum_{m \in \mathcal{M}_0}s_m|m\rangle\langle m|\bm{1}_\F
\end{align*}
in terms of $\{\ind_1,\dots,\ind_d\} = \mathcal{K}$ and a set of signs $\{s_m=\pm 1\}_{m\in{\cal M}_0}$ such that  $s_m>s_{m'}$ for $m>m'$. The signs add up to the signature of \(A\), that is, \(\sum_{m\in\mathcal{M}_0}s_m = s$. 
}

\begin{proof}
1. Let $A = A_+ D A_-$ be a factorization of $A$ as guaranteed by
Theorem \ref{thm:symwh1} if \(\F=\R\) or \(\H\) or simply the classical WH
factorization if \(\F=\C\). The condition \(A  = A^\dagger\) implies that
\(A_+ D A_- = A_-^\dagger D^\dagger A_+^\dagger\).
Moreover, the $\dagger$ operation flips the sign of all the partial indices.
Since partial indices are unique, we conclude that they appear in pairs $(\ind,-\ind)$ 
with equal multiplicities ($s_{{-\ind}} = s_\ind$). By defining the permutation operator 
$X \equiv \sum_{m=1}^{d} |m\rangle \langle \bar{m}|\bm{1}_\F$, \(\bar{m}\equiv d+1-m\),
we can write $D^\dagger = XDX$. It is easy to check that $X$ satisfies $X = X^\dagger = X^{-1}$. 
Then, since $A = A_-^\dagger XDX A_+^\dagger$, we conclude from Lemma 
\ref{lem:uniquepi}
\[
A_-^\dagger X = A_+ B,
\]
for a matrix polynomial $B$ with entries $B_{mm'} \in \F[z]$ of degree $\ind_m- \ind_{m'}$.
Now, $A=A_+BDXA_+^\dagger$, and since $A_+$ is invertible, we get $(CDX)^\dagger = BDX$.
By comparing entries, we get 
\begin{equation}
\label{Centries}
B_{mm'}^* =z^{(\ind_{m'}-\ind_m)}B_{{-\ind_{m'}}{-\ind_m}} \quad 
\forall \ind_m,\ind_{m'} \in \mathcal{K}.
\end{equation}
Also, since 
$\sum_{m,m' \in \mathcal{M}_0}\langle m |BDX|m'\rangle|m\rangle\langle m'|$ is Hermitian, we can use Lemma \ref{lem:constantfact} to express it as 
\[
\sum_{m,m' \in \mathcal{M}_0}\langle m |BDX|m'\rangle|m\rangle\langle m'| = 
G\left(\sum_{m \in \mathcal{M}_0}s_m|m\rangle\langle m|\bm{1}_\F\right)G^\dagger,
\] 
with diagonal entries $\{s_m\}$ satisfying the requirements stated in the theorem, and $G \in \F_{s_0}$ invertible. Let us define a matrix polynomial 
\[
B_+ \equiv G+ \sum_{m \in \mathcal{M}_+}|m\rangle\langle \bar{m}|
\sum_{m\in \mathcal{M}_-} \Big( \sum_{|\ind_{m'}|<{-\ind_m}}|{m'}\rangle
\langle\bar{m}| B_{{m'}m} +
|m \rangle\langle \bar{m}| B_{mm} +
 \frac{1}{2}|\bar{m}\rangle\langle \bar{m} |B_{\bar{m},m}  \Big),
\]
so that
\begin{multline*}
B_+^\dagger = G^\dagger + \sum_{m \in \mathcal{M}_-}|{m}\rangle\langle \bar{m}| +
\sum_{m\in \mathcal{M}_+} \Big( \sum_{|\ind_{m''}|<\ind_m}
|m\rangle\langle m'' | B_{m'',\bar{m}}^* +
|m \rangle\langle \bar{m}| B_{\bar{m},\bar{m}}^*
+\frac{1}{2}|{m}\rangle\langle m |B_{m,\bar{m}}^*  \Big).
\end{multline*}
\begin{multline*}
 \qquad = G^\dagger +  \sum_{m \in \mathcal{M}_-}|m\rangle\langle \bar{m}|+ 
 \sum_{m\in \mathcal{M}_+} \Big( \sum_{|\ind_{m''}|<\ind_m}
 |m\rangle\langle m'' | z^{-\ind_m-\ind_{m''}}B_{m,-m''} +
|m \rangle\langle \bar{m}| B_{mm}
\\ +\frac{1}{2}|m\rangle\langle m |z^{-2\ind_m}B_{m,\bar{m}}  \Big).
\end{multline*}
From this point on, a straightforward if long multiplication yields 
\begin{equation}
\label{eqsym}
BDX = B_+D_1(\mathcal{K},s)B_+^\dagger,
\end{equation}
and therefore 
\(
A = A_+B_+D_1(\mathcal{K},s)(A_+B_+)^\dagger.
\)
It remains to prove that $B$ is invertible inside the unit circle. To see that this is indeed the case, we multiply Eq.\,\eqref{eqsym}
by $z^{-k}$, where $k$ is an integer less than the smallest partial index of $A$. Both the left and right hand-sides of Eq.\,\eqref{eqsym} are well-defined and invertible inside the unit circle (including $z=0$). Therefore, $B$ must be invertible
inside the unit circle.
\end{proof}

\noindent\textbf{Theorem 3.3} (A factorization of antisymmetric matrix Laurent polynomials)\textbf{.}
{\it 
Let 
 $A=-A^{\rm T}\in\mathbb{F}_d[z,z^{-1}]$ be invertible on the unit circle for \(\F=\R\) or \(\F=\C\). There exists a
factorization $A = A_+ D_3 
A_+^{\rm T}$ such that $A_+ \in \F_d[z]$
is invertible for \(|z|\leq1\) and
\begin{align*}
&D_3
=-\sum_{m\in \mathcal{M}_+} \big(z^{\ind_m} |\bar{m}\rangle\langle{m} |\;  
- z^{-\ind_m}|m\rangle\langle{\bar{m}} |\big) 
 +  \sum_{\substack{m\in \mathcal{M}_0\\[1pt] m\le d/2}} \big( |\bar{m}\rangle\langle{m} |\;  
- |m\rangle\langle{\bar{m}} |\big),
\end{align*}
in terms of $\{\ind_1,\dots,\ind_d\} = \mathcal{K}$.
}

\begin{proof}
The proof is similar to that of Theorem \ref{thm:symwh2}.
We replace all $\dagger$ by ${\rm T}$ (transpose) and remove
complex conjugation $*$ in every place they appear in the previous proof.
The main difference is the factorization of $BDX$ for $\mathcal{M}_0$ block, 
which in this case is complex
antisymmetric. We use the factorization 2. from Lemma \ref{lem:constantfact},
which leads to $B_{m m'}^\star =
-z^{(\ind_{m'}-\ind_m)}B_{\bar{m'},\bar{m}}$. The rest of the proof is identical to that of 
Theorem \ref{thm:symwh2}.
\end{proof}


\noindent\textbf{Theorem 3.4} (A factorization of complex symmetric matrix Laurent polynomials)\textbf{.}
{\it Let $A=A^{\rm T} \in\C_d[z,z^{-1}]$ be invertible on the unit circle. There exists a
factorization $A = A_+ D_2 
A_+^{\rm T}$ such that $A_+ \in \mathbb{C}_d[z]$
is invertible for \(|z|\leq 1\)
and 
\begin{align*}
D_2 
= 
\sum_{m \in \mathcal{M}_+}\big(z^{\ind_m}|\bar{m}\rangle\langle{m} |  
+ z^{-\ind_m}|m\rangle\langle{\bar{m}} |\big) 
+ \sum_{m \in \mathcal{M}_0}|m\rangle\langle m| ,
\end{align*}
in terms of $\{\ind_1,\dots,\ind_d\} = \mathcal{K}$. 
}

\begin{proof}
The proof is again similar to that of Theorem \ref{thm:symwh2}.
We replace all $\dagger$ by ${\rm T}$ and remove
complex conjugation $*$ in every place they appear in the previous proof.
The factorization 3. from Lemma \ref{lem:constantfact} of $BDX$
in this case leads to is $B_{m m'}^\star =
z^{(\ind_{m'}+\ind_m)}B_{\bar{m'},\bar{m}}$. 
\end{proof}


\noindent\textbf{Theorem 3.5} (A factorization of quaternionic anti-Hermitian matrix Laurent Polynomials)\textbf{.}
{\it 
Let $A=-A^\dagger \in\H_d[z,z^{-1}]$ be invertible on the unit circle.
There exists a
factorization $A = A_+ D_4
A_+^\dagger$ such that $A_+ \in \H_d[z]$
is invertible for \(|z|\leq 1\) and
\begin{align*}
D_4 
= -\sum_{m\in \mathcal{M}_+} \big(z^{\ind_m} |\bar{m}\rangle\langle{m} |\;  
- z^{-\ind_m}|m\rangle\langle{\bar{m}} |\big)\bm{1}_\H 
+  \sum_{m\in \mathcal{M}_0} \big( |m\rangle\langle m |\big)
\bm{k},
\end{align*}
in terms of $\{\ind_1,\dots,\ind_d\} = \mathcal{K}$.
}

\begin{proof}
The proof is again very similar to those of the three theorems above. The factorization 4. from Lemma \ref{lem:constantfact} of $BDX$ in this case leads to $B_{m m'}^\star =
z^{(\ind_{m'}+\ind_m)}B_{\bar{m'},\bar{m}}$. 
\end{proof}


\bibliography{References}

\end{document}